\newenvironment{tbs}{%
   \small\tt
   \begin{itemize}}{\end{itemize}}
\newcommand{\btbs}{\begin{tbs}}                                                                      
\newcommand{\etbs}{\end{tbs}}
\newcommand{\hide}[1]{}
\newtheorem{theorem}{Theorem}[section]
\newtheorem{fact}[theorem]{Fact}
\newtheorem{proposition}[theorem]{Proposition}
\newtheorem{lemma}[theorem]{Lemma}
\newtheorem{corollary}[theorem]{Corollary}
\newtheorem{defi}[theorem]{Definition}
\newtheorem{conv}[theorem]{Convention}
\newtheorem{rema}[theorem]{Remark}
\newtheorem{exam}[theorem]{Example}
\newenvironment{definition}{\begin{defi}\rm}{\end{defi}}
\newenvironment{remark}{\begin{rema}\rm}{\hfill $\lhd$\end{rema}}
\newenvironment{proof}{\begin{trivlist}\item[]{\bf
Proof.}}{\hfill {\sc qed}\end{trivlist}}
\newenvironment{proofsketch}{\begin{trivlist}\item[]{\bf
Proof sketch.}}{\hfill {\sc qed}\end{trivlist}}
\newenvironment{proofof}[1]{\begin{trivlist}\item[\hskip\labelsep{\bf
Proof~of~{#1}.\ }]}{\hspace*{\fill} {\sc qed}\end{trivlist}}
\newtheorem{claim2}{\sc Claim}
\newenvironment{claim}{\begin{claim2}\rm}{\end{claim2}\rm}
\newenvironment{claimfirst}{\setcounter{claim2}{0}
               \begin{claim2}\rm}{\end{claim2}\rm}
\newenvironment{pfclaim}{\begin{trivlist}\item[]{\sc Proof of
Claim.}}{\hfill {\mbox{$\blacktriangleleft$}}\end{trivlist}}
\newtheorem{exer}{Exercise}[section]
\newcommand{\mathstr}[1]{\mathbb{#1}}
\newcommand{\struc}[1]{(#1)}
\newcommand{\bbA}{\mathstr{A}}
\newcommand{\bbN}{\mathstr{N}}
\newcommand{\bbT}{\mathstr{T}}
\newcommand{\bis}{\mathrel{\mathchoice%
{\raisebox{.3ex}{$\,
  \underline{\makebox[.7em]{$\leftrightarrow$}}\,$}}%
{\raisebox{.3ex}{$\,
  \underline{\makebox[.7em]{$\leftrightarrow$}}\,$}}%
{\raisebox{.2ex}{$\,
  \underline{\makebox[.5em]{\scriptsize$\leftrightarrow$}}\,$}}%
{\raisebox{.2ex}{$\,
  \underline{\makebox[.5em]{\scriptsize$\leftrightarrow$}}\,$}}}}
\newcommand{\Prop}{\ensuremath{\mathsf{P}}}        
\newcommand{\ML}{\ensuremath{\mathrm{ML}}}       
\newcommand{\muML}{\ensuremath{\mu\ML}}    
\newcommand{\mso}{\mathrm{MSO}}       
\newcommand{\yvF}{\contAFMC}
\newcommand{\yvAut}{\mathit{Aut}}
\newcommand{\yvwAut}{\mathit{Aut}_{w}}
\newcommand{\yvcwAut}{\mathit{Aut}_{cw}}
\newcommand{\yvMSO}{\ensuremath{\mathrm{MSO}}}
\newcommand{\yvWMSO}{\ensuremath{\mathrm{WMSO}}}
\newcommand{\yvWFMSO}{\ensuremath{\mathrm{WFMSO}}}
\newcommand{\yvL}{\mathcal{L}}
\newcommand{\yvLo}{\yvL_{1}}
\newcommand{\yvmLo}{\yvL_{1}^{+}}
\newcommand{\yvdual}[1]{#1^{\delta}}
\renewcommand{\phi}{\varphi} 
\newcommand{\q}{a}
\newcommand{\ai}{\q_{I}}
\newcommand{\eloi}{\exists}
\newcommand{\abel}{\forall}
\newcommand{\haak}[1]{[\![#1]\!]}
\newcommand{\sse}{\subseteq}
\newcommand{\pw}{\wp}
\newcommand{\De}{\Delta}
\newcommand{\Om}{\Omega}
\newcommand{\al}{\alpha}
\newcommand{\om}{\omega}
\newcommand{\isdef}{\mathrel{:=}}
\newcommand{\Dom}{\mathsf{Dom}}
\newcommand{\Ran}{\mathsf{Ran}}
\newcommand{\nat}{\ensuremath{\mathbb{N}}\xspace}
\newcommand{\tup}[1]{\langle{#1}\rangle}
\newcommand{\ext}[1]{\llbracket#1\rrbracket}
\newcommand{\prop}{\mathsf{P}}
\newcommand{\qprop}{\mathsf{Q}}
\newcommand{\model}{\mathbb{T}}
\newcommand{\osmodel}{\mathbf{D}}
\newcommand{\compset}[1]{\{#1\}}
\newcommand{\mc}{\mathcal}
\newcommand{\mmodels}{\Vdash}
\newcommand{\umods}{\mathfrak{M}_1}
\newcommand{\rest}{|}
\newcommand{\omegaunrav}[1]{{#1}^\omega}
\newcommand{\unravel}[1]{{#1}^e}
\newcommand{\MC}{\ensuremath{\mu\ML}\xspace}
\newcommand{\AFMC}{\ensuremath{\mathrm{AFMC}}\xspace}
\newcommand{\contAFMC}{\ensuremath{\mu_{c}\ML}\xspace}
\newcommand{\WMSO}{\wmso}
\newcommand{\wmso}{\ensuremath{\mathrm{WMSO}}\xspace}
\newcommand{\owmso}{\ensuremath{\mathrm{WMSO}_1}\xspace}
\newcommand{\fo}{\ensuremath{\mathrm{FO}}\xspace}
\newcommand{\ofo}{\ensuremath{\fo_1}\xspace}
\newcommand{\foe}{\ensuremath{\mathrm{FOE}}\xspace}
\newcommand{\ofoe}{\ensuremath{\foe_1}\xspace}
\newcommand{\cont}[2]{#1\mathrm{C}_{#2}}
\newcommand{\cocont}[2]{#1\overline{\mathrm{C}}_{#2}}
\newcommand{\monot}[2]{#1\mathrm{M}_{#2}}
\newcommand{\mclass}{K}
\newcommand{\llang}{\ensuremath{\mathcal{L}}\xspace}
\newcommand{\trees}{\ensuremath{\mathcal{T}}}
\newcommand{\qu}{\ensuremath{\exists^\infty}\xspace}
\newcommand{\dqu}{\ensuremath{\forall^\infty}\xspace}
\newcommand{\wqu}{\ensuremath{\mathbf{W}}\xspace}
\newcommand{\lqu}{\ensuremath{\fo^\infty}\xspace}
\newcommand{\lque}{\ensuremath{{\foe}^\infty}\xspace}
\newcommand{\mlque}{\ensuremath{\mu\lque}\xspace}
\newcommand{\clque}{\ensuremath{\mu_{c}\lque}\xspace}
\newcommand{\olque}{\ensuremath{\lque_1}\xspace}
\newcommand{\dbnf}{\nabla}										
\newcommand{\dbnfofo}[1]{\dbnf_{\fo}(#1)}						
\newcommand{\dbnfofoe}[2]{\dbnf_{\foe}(#1,#2)}					
\newcommand{\dbnfolque}[3]{\dbnf_{\lque}(#1,#2,#3)}				
\newcommand{\dbnfinf}[1]{\dbnf_\infty(#1)}
\newcommand{\mondbnfofo}[2]{\dbnf^{#2}_\fo(#1)}					
\newcommand{\mondbnfofoe}[3]{\dbnf^{#3}_{\foe}(#1,#2)}			
\newcommand{\mondbnfolque}[4]{\dbnf^{#4}_{\lque}(#1,#2,#3)}		
\newcommand{\mondbnfinf}[2]{\dbnf^{#2}_\infty(#1)}
\newcommand{\posdbnfofo}[1]{\dbnf^+_{\fo}(#1)}					
\newcommand{\posdbnfofoe}[2]{\dbnf^+_{\foe}(#1,#2)}				
\newcommand{\posdbnfolque}[3]{\dbnf^+_{\lque}(#1,#2,#3)}		
\newcommand{\posdbnfinf}[1]{\dbnf^+_\infty(#1)}
\newcommand{\fovar}{\mathsf{iVar}}
\newcommand{\ass}{g}
\newcommand{\val}{V}
\newcommand{\tscolors}{\sigma}
\newcommand{\tsval}{\tscolors^\flat}
\newcommand{\aut}{\ensuremath\mathbb{A}\xspace}
\newcommand{\tmap}{\Delta}
\newcommand{\pmap}{\Omega}
\newcommand{\ord}{\preceq}
\newcommand{\lthen}{\to}
\newcommand{\inc}{\sqsubseteq}
\newcommand{\here}[1]{{\Downarrow}#1}
\newcommand{\arediff}[1]{\mathrm{diff}(#1)}
\newcommand{\foeq}{\approx}
\newcommand{\tcont}{\vartriangle}
\newcommand{\tmono}{\circ}
\newcommand{\vlist}[1]{\vv{#1}}
\newcommand{\efgame}{\mathrm{EF}}
\newcommand{\egame}{\mathcal{E}}
\newcommand{\eloise}{\ensuremath{\exists}\xspace}
\newcommand{\abelard}{\ensuremath{\forall}\xspace}
\newcommand{\pmatches}[2]{\mathsf{PM}^{#1}_{#2}}
\newcommand{\win}{\text{\sl Win}}
\renewcommand{\sc}{\scshape}
\newenvironment{Iff-RL}{\textbf{($\Rightarrow$)} }{\bigskip}
\newenvironment{Iff-LR}{\textbf{($\Leftarrow$)} }{}
\newcommand{\m}{\mathit}
\newcommand{\mb}{\mathbb}
\def \: {\colon}
\def \p {\wp}
\newcommand{\lift}[1]{{#1}^{\wp}}
\newcommand{\shA}{A^{\sharp}}
\newcommand{\shai}{a_{I}^{\sharp}}
\newcommand{\shDe}{\Delta^{\sharp}}
\newcommand{\V}{\tscolors}
\newcommand{\R}[1]{R[#1]}
\newcommand{\DeltaProj}{\widetilde{\Delta}} 
\newcommand{\MSO}{\mso}
\newcommand{\WFMSO}{\mathrm{WFMSO}}
\newcommand{\f}{F} 
\newcommand{\mgFOETrsym}{\star} 
\newcommand{\mgFOETr}[1]{(#1)^\mgFOETrsym} 
\title{%
Weak MSO: Automata and Expressiveness Modulo Bisimilarity%
\footnote{Emails: \texttt{fcarreiro@dc.uba.ar}, \texttt{facchini@mimuw.edu.pl}, \texttt{y.venema@uva.nl}, \texttt{fabio.zanasi@ens-lyon.fr}.}%
}
\author{%
    Facundo Carreiro\\%
    \emph{ILLC, Universiteit van Amsterdam, The Netherlands}\\ \\
    Alessandro Facchini\\%
    \emph{University of Warsaw, Poland}\\ \\
    Yde Venema\\%
    \emph{ILLC, Universiteit van Amsterdam, The Netherlands}\\ \\
    Fabio Zanasi\\%
    \emph{ENS Lyon, U. Lyon, CNRS, INRIA, UCBL, France}
}
\date{\small Last revision: \today}
\begin{document}

\maketitle


\begin{abstract}
We prove that the bisimulation-invariant fragment of weak monadic second-order logic (WMSO) is equivalent to the fragment of the modal $\mu$-calculus where the application of the least fixpoint operator $\mu p.\varphi$ is restricted to formulas $\varphi$ that are continuous in $p$. Our proof is automata-theoretic in nature; in particular, we introduce a class of automata characterizing the expressive power of WMSO over tree models of arbitrary branching degree. The transition map of these automata is defined in terms of a logic $\olque$ that is the extension of first-order logic with a generalized quantifier $\qu$, where $\qu x. \phi$ means that there are infinitely many objects satisfying $\phi$. An important part of our work consists of a model-theoretic analysis of $\olque$.
\end{abstract}


\section{Introduction}
\label{sec:intro}

\paragraph{Expressiveness modulo bisimilarity.}
This paper concerns the relative expressive power of some languages used for
describing properties of pointed labelled transitions systems, or Kripke
models.
The interest in such expressiveness questions stems from applications where
these structures model computational processes, and bisimilar pointed
structures represent the \emph{same} process.
Seen from this perspective, properties of transition structures are relevant
only if they are invariant under bisimilarity.
This explains the importance of bisimulation invariance results of the form
\begin{equation*}
M \equiv L / {\bis} \text{ (over $K$)}
\end{equation*}
stating that,  if one restricts attention to a certain class $K$ of transition
structures, one language $M$ is expressively complete with respect to the
relevant (i.e., bisimulation-invariant) properties that can be formulated in
another language $L$.
In this setting, generally $L$ is some rich yardstick formalism such as
first-order or monadic second-order logic, and $M$ is some modal-style
fragment of $L$, usually displaying much better computational behavior
than the full language $L$.

A seminal result in the theory of modal logic is van Benthem's Characterization
Theorem~\cite{vanBenthemPhD}, stating that every bisimulation-invariant
first-order formula $\alpha(x)$ is actually equivalent to (the standard
translation of) a modal formula:
\begin{equation*}
\ML \equiv \m{FO}/{\bis} \text{ (over the class of all LTSs)}.
\end{equation*}
Over the years, a wealth of variants of the Characterization Theorem have been
obtained.
For instance, Rosen proved that van Benthem's theorem is one of the few
preservation results that transfers to the setting of finite
models~\cite{rose:moda97}; for a recent, rich source of van Benthem-style
characterization results, see Dawar \& Otto~\cite{DawarO09}.
In this paper we are mainly interested is the work of Janin \&
Walukiewicz~\cite{Jan96}, who extended van Benthem's result to the setting
of fixpoint logics, by proving that the modal $\mu$-calculus ($\MC$) is the
bisimulation-invariant fragment of monadic second-order
logic (\yvMSO):
\begin{equation*}
\MC \equiv \yvMSO/{\bis} \text{ (over the class of all LTSs)}.
\end{equation*}

\paragraph{Bisimulation invariance for $\yvWMSO$.}
The yardstick logic that we consider in this paper is \emph{weak} monadic
second-order logic (\yvWMSO), a variant of monadic second-order logic where
the second-order quantifiers range over \emph{finite} subsets of the
transition structure rather than over arbitrary ones.
Our target will be to identify the bisimulation-invariant fragment of this
logic \yvWMSO.

Before moving on, we should stress the role of the ambient class $K$ in
bisimulation-invariance results.
Of particular importance in the setting of weak monadic second-order logic is
the difference between structures of finite versus arbitrary branching degree.
In the case of finitely branching models, it is not very hard to show that
$\yvWMSO$ is a (proper) fragment of \yvMSO, and it seems to be folklore that
$\yvWMSO/{\bis}$ corresponds to \AFMC, the alternation-free fragment of the
modal $\mu$-calculus.
For binary trees, this result was proved by Arnold \& Niwi{\'n}ski in
\cite{ArnoldN01}.
In the case of structures of arbitrary branching degree, however, $\yvWMSO$
and $\yvMSO$ have \emph{incomparable} expressive power.
The fact that, in particular, $\yvWMSO$ does not correspond to a fragment of
\yvMSO, is witnessed by the class of infinitely branching structures, which
is clearly \yvWMSO-definable, cannot be defined in \yvMSO, since every
\yvMSO-definable class of trees contains a finitely branching
tree.\footnote{As remarked in \cite{CateF11}, this follows from the automata characterization of MSO given in \cite{Walukiewicz96}.} 
For this reason, the relative expressive power of $\yvWMSO/{\bis}$ and
$\yvMSO/{\bis}$ is not a priori clear.
However, it is reasonable to think that $\yvWMSO/{\bis}$ is strictly \emph{weaker} than \AFMC: the class of well-founded trees, which is definable in \AFMC by
the simple formula $\mu p. \Box p$, is not definable in \yvWMSO.\footnote{This follows from the fact that $\yvWMSO$ can only define properties of trees that, from a topological point of view, are Borel, which is not the case of the class of trees defined by $\mu p. \Box p$--- see e.g. \cite{CateF11}.}
Incidentally, the question whether, conversely, there is a natural logic of
which the
bisimulation-invariant fragment corresponds to \AFMC was answered positively
by three of the present authors in~\cite{DBLP:conf/lics/FacchiniVZ13}, where they introduced
another variant of \yvMSO, called well-founded $\yvMSO$ (\yvWFMSO), and proved
that $\yvWFMSO/{\bis} \equiv \AFMC$ (over the class of all LTSs).

The main result that we shall prove in this paper states that the
bisimulation-invariant fragment of $\yvWMSO$ is equivalent to a certain,
fragment $\yvF$ of the modal $\mu$-calculus.
\begin{equation}
\label{eq-main}
\yvF \equiv \yvWMSO/{\bis}  \text{ (over the class of all LTSs)}.
\end{equation}
This fragment $\contAFMC$, which is strictly weaker than the alternation-free fragment
of $\muML$, is characterized by a certain restriction on the application of
fixpoint operators, which involves the notion of \emph{(Scott) continuity}.

Continuity, an interesting property that features naturally in the semantics
of many (fixpoint) logics, in fact plays a key role throughout this paper.
For its definition, we consider how the meaning $\haak{\phi}^{\model} \sse T$
of a formula $\phi$ in some structure $\model$ (with domain $T$) depends on the meaning of a fixed
proposition letter or monadic predicate symbol $p$.
This dependence can be formalized as a map $\phi^{\model}_{p}: \wp(T) \to
\wp (T)$, and if this map satisfies the condition
\begin{equation}
\label{eq-Sc}
\phi^{\model}_{p}(X) = \bigcup \Big\{ \phi^{\model}_{a}(X') \mid X'
\text{ is a finite subset of } X \Big\},
\end{equation}
we say that $\phi$ is \emph{continuous in $p$}.
The topological terminology stems from the observation that \eqref{eq-Sc}
expresses the continuity of the map $\phi^{\model}_{p}$ with respect to the Scott
topology on $\pw(T)$. 
If we look at concrete cases, this definition can be given a different reading:
if $\phi$ is a formula of the
modal $\mu$-calculus, \eqref{eq-Sc} means that $\phi$ holds at some state $s$
of $\model$ iff we can shrink the interpretation of the proposition letter $p$
to some finite subset of the original interpretation, in such a way that
$\phi$ holds at $s$ in the modified version of $\model$.


A syntactic \emph{characterization} of this property for the modal $\mu$-calculus
was obtained by Fontaine~\cite{Fontaine08,FV12}, and the definition of our fragment $\yvF$
uses this characterization as follows:
whereas in the full language of $\muML$ the only syntactic condition on the
formation of a formula $\mu p. \phi$ is that $\phi$ is \emph{positive} in $p$,
for the fragment $\contAFMC$ this condition is strengthened to the requirement that
$\phi$ is (syntactically) \emph{continuous} in $p$.
More precisely, the fragment $\yvF$ is defined as follows:

\begin{definition}
For each set $\qprop$ of
proposition letters, the fragment $\cont{\MC}{\qprop}$ of $\MC$ which is \emph{continuous in $\qprop$}
is given by the simultaneous induction
\begin{equation*}
   \varphi ::= q
   \mid \psi
   \mid \varphi \lor \varphi
   \mid \varphi \land \varphi
   \mid \Diamond \varphi
   \mid \mu p.\alpha
\end{equation*}
where $p\in\prop$, $q \in \qprop$, $\psi$ is a $\qprop$-free $\MC$-formula, and
$\alpha \in \cont{\MC}{\qprop\cup\{p\}}$.
The formulas of the fragment $\contAFMC$ are then given by the following induction:
\begin{equation*}
   \varphi ::= p \mid \lnot \varphi
    \mid \varphi \lor \varphi
    \mid  \Diamond \varphi
    \mid \mu p.\alpha
\end{equation*}
where $p \in \prop$, and $\alpha \in \cont{\MC}{p}$.
\end{definition}

In fact we will prove, analogous to the result by Janin \& Walukiewicz,
the following strong version of the characterization result~\eqref{eq-main},
which provides an explicit translation, mapping any
bisimulation-invariant formula $\phi$ in $\yvWMSO$ to an equivalent formula
$\phi^{\bullet}$ in $\contAFMC$.

\begin{theorem}
\label{t:m1}
There are effective translations $(-)^{\bullet}: \yvWMSO \to \contAFMC$ and
$(-)_{\bullet}: \contAFMC \to \yvWMSO$ such that 
\begin{enumerate}
\item A formula $\phi$ of $\yvWMSO$ is
bisimulation invariant if and only if $\phi \equiv \phi^{\bullet}$, and
\item $\psi \equiv \psi_{\bullet}$ for every formula $\psi \in \contAFMC$.
\end{enumerate}
\end{theorem}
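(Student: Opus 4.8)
The plan is to prove the theorem by an automata-theoretic detour, in the spirit of Janin and Walukiewicz. First I would fix two automaton models, expressively complete for the two logics in play, together with effective, truth-preserving translations between formulas and automata in each case: a class of \emph{weak} automata whose one-step transition maps are formulas of $\olque$, which I would show captures exactly $\yvWMSO$ over trees of arbitrary branching degree, and a class of \emph{continuous modal} automata whose transition maps are one-step modal formulas subject to a continuity constraint, capturing $\contAFMC$. Since every $\contAFMC$-formula is a $\muML$-formula and hence bisimulation invariant, the direction $(\Leftarrow)$ of item~1 is immediate once $\phi^{\bullet}$ is known to lie in $\contAFMC$; the real work is the forward direction of item~1 and the inclusion recorded in item~2.

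For item~2 I would define $(-)_{\bullet}$ as the composite of the translation of a $\contAFMC$-formula into a continuous modal automaton, an embedding of that automaton into the class of weak $\olque$-automata, and the translation of the latter back into $\yvWMSO$. The embedding rests on two observations at the automaton level: every one-step modal formula is an $\olque$-formula, and the continuous acceptance condition is a special case of the weak parity condition used by the WMSO-automata. As each translation preserves truth on the nose, $\psi \equiv \psi_{\bullet}$ follows. For the forward translation, given $\phi \in \yvWMSO$ I would pass to an equivalent weak $\olque$-automaton $\aut_\phi$ and then apply a uniform construction $\aut \mapsto \aut^{\bullet}$ that replaces each $\olque$ transition formula by a canonical continuous modal one-step companion, yielding a continuous automaton; $\phi^{\bullet}$ is then the $\contAFMC$-translation of $\aut_\phi^{\bullet}$. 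By construction $\phi^{\bullet}$ always lies in $\contAFMC$, and $\aut_\phi^{\bullet}$ is designed to accept exactly the bisimulation-invariant behaviour of $\aut_\phi$, so that $\phi \equiv \phi^{\bullet}$ holds precisely when $\phi$ is bisimulation invariant, which is the content of item~1.

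The technical heart, and the main obstacle, is the one-step model theory of $\olque$ that justifies the companion construction. I would isolate the appropriate notion of one-step equivalence for $\olque$ --- one that, unlike the plain modal case, records not merely which colours occur among the successors but also whether each occurs infinitely often, since the quantifier $\qu$ is sensitive to exactly this information --- and prove a normal-form theorem writing every $\olque$-formula as a disjunction of basic one-step formulas over these data. Using the normal form I would characterize the one-step formulas invariant under the coarser, modal one-step equivalence that forgets multiplicities, showing they coincide with the one-step continuous modal formulas; \emph{continuity} is precisely what forbids an essential positive use of $\qu$, mirroring the syntactic restriction in the definition of $\contAFMC$. Transferring this local result to the global statement then requires the standard but delicate argument that bisimulation invariance of $\aut_\phi$ over all pointed Kripke models is equivalent to one-step invariance of its transition maps, via unravelling and a strategy-transfer argument between bisimilar models in the associated acceptance games.

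Beyond the $\olque$-analysis I expect the remaining subtlety to concentrate in two places. First, establishing that the weak $\olque$-automata are genuinely expressively complete for $\yvWMSO$ over arbitrary branching degree --- where $\yvWMSO$ and $\yvMSO$ are incomparable --- so that quantification over \emph{finite} sets is faithfully reflected in the weak acceptance condition rather than lost or over-counted. Second, matching the acceptance conditions so that the companion $\aut^{\bullet}$ lands in the fragment translatable into $\contAFMC$ and not merely into the full alternation-free $\muML$. Getting the interaction between the Scott-continuity constraint on fixpoints and the weak parity condition exactly right is where I would expect to spend the most care.
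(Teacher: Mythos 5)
Your overall architecture does match the paper's (automata-theoretic detour, a one-step translation, correctness via the $\om$-unravelling), but there is a genuine gap at the very first step: you define the $\yvWMSO$-side automata as the \emph{weak} automata over $\olque$, impose continuity only on the modal side, and even describe the continuity condition as ``a special case of the weak parity condition''. In the paper both automaton classes carry \emph{both} conditions ($\yvWMSO \equiv \yvcwAut(\olque)$ and $\contAFMC \equiv \yvcwAut(\ofo)$), and weakness alone is provably insufficient: since $\olque$ extends $\ofoe$, we have $\yvwAut(\olque) \supseteq \yvwAut(\ofoe) \equiv \WFMSO$, and $\WFMSO$ already contains $\AFMC$, which defines the class of well-founded trees via $\mu p.\Box p$ --- a property that is not $\yvWMSO$-definable. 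So your claimed characterization ``weak $\olque$-automata capture $\yvWMSO$'' is false, and the failure propagates: take a weak automaton $\bbA$ equivalent to $\mu p.\Box p$; it is bisimulation invariant, so any companion $\bbA^{\bullet}$ with your stated property would be equivalent to $\bbA$ on all models, yet it cannot be translated into $\contAFMC$, because $\contAFMC \leq \yvWMSO$. Hence no companion construction on merely-weak automata can land in the fragment you need; the continuity condition (the ``horizontal'' finiteness, reflecting that finite sets live inside finitely branching subtrees) must be imposed on the $\yvWMSO$-automata from the outset, with weakness accounting only for the ``vertical'' finiteness. Continuity and weakness are orthogonal constraints --- one on the transition formulas relative to same-SCC states, the other on the parity map --- not instances of one another.

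A secondary problem is that your one-step analysis misidentifies the key model-theoretic property. The paper's translation $(-)^{\bullet}\colon {\olque}^+(A) \to \ofo^+(A)$ is not a characterization of the one-step formulas invariant under forgetting multiplicities, nor do such formulas coincide with the continuous ones: the correctness condition is $(D,V) \models \al^{\bullet}$ iff $(D\times\om,V_{\pi}) \models \al$, i.e.\ invariance under the $\om$-fold product, which is genuinely different from continuity --- $\forall x.\,a(x)$ is product-invariant but not continuous in $a$, while $\exists x.\,a(x) \land \qu x.\,b(x)$ is continuous in $a$ but not product-invariant. Continuity in same-SCC predicates is a separate syntactic constraint, characterized through the normal forms (Corollary~\ref{cor:olquecontinuousnf}), that must hold on \emph{both} automaton classes and be shown to be \emph{preserved} by $(-)^{\bullet}$; it does double duty and cannot be recovered from product-invariance. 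Finally, a minor divergence on item~2: the paper does not route $(-)_{\bullet}$ through automata at all, but composes the standard translation of $\contAFMC$ into $\clque$ with the translation $\mgFOETr{-}$ into $\yvWMSO$, whose soundness rests on the constructivity of continuous fixpoints (Proposition~\ref{prop:cor_constructivity}); your proposed detour would additionally require a formula-to-automaton construction for $\contAFMC$ verifying both conditions, extra work the paper's direct syntactic route avoids.
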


To see how this theorem implies \eqref{eq-main}, observe that part (i)
shows that $\yvWMSO/{\bis} \leq \yvF$.
Part (ii) states that $\yvF \leq \WMSO$, so combined with the fact that
every formula in $\yvF \sse \MC$ is bisimulation invariant, this gives the
converse, $\yvF \leq \yvWMSO/{\bis}$.

\paragraph{Automata for $\yvWMSO$.}
As usual in this research area, our proof will be automata-theoretic in
nature.
More specifically, as the second main contribution of this paper, we
introduce a new class of parity automata that exactly captures the expressive
power of $\yvWMSO$ over the class of tree models of arbitrary branching degree.

Before we turn to a description of these automata, we first have a look at the
automata, introduced by Walukiewicz~\cite{Walukiewicz96}, corresponding to $\yvMSO$
(over tree models).
Fixing the set of proposition letters of our models as $\Prop$, we think of
$\wp(\Prop)$ as an \emph{alphabet} or set of \emph{colors}.
We can then define an \yvMSO-automaton as a tuple $\bbA = \tup{A, \De,
\Om, \ai}$, where $A$ is a finite set of states, $\ai$ an initial state, and $\Om:
A \to \bbN$ is a parity function.
The transition function $\De$ maps a pair $(a,c) \in A \times \p(\Prop)$ to a
sentence in the first-order language (with equality) $\ofoe(A)$, of which the
state space $A$ provides the set of (monadic) predicates.
For a more precise definition, let $\ofoe^+(A)$ denote the set of
those sentences in $\ofoe(A)$ where all predicates in $A$ occur only positively;
we require that $\De: A \times \wp(\Prop) \to \ofoe^+(A)$.

We shall refer to $\ofoe$ as the \emph{one-step language} of $\yvMSO$-automata,
and denote the class of $\yvMSO$-automata with $\yvAut(\ofoe)$.
The automata that we consider in this article run on labelled transition systems
and decide wether to accept or reject them. To take such decision we associate
an acceptance game for an
$\yvMSO$-automaton $\bbA$ and a transition system $\model$.
A match of this game consists of two players, $\eloi$ and $\abel$, moving a
token from one position to another.
When such a match arrives at a so-called \emph{basic} position, i.e., a
position of the form $(a,t) \in A \times T$, the players consider the
sentence $\De(a,c_{t}) \in \ofoe^+(A)$, where $c_{t} \in \wp(\Prop)$ is the color
of $t$ (that is, the set of proposition letters true at $t$).
At this position $\eloi$ has to turn the set $R[t]$ of successors of
$s$ into a \emph{model} for the formula $\De(a,c_{t})$ by coming up with an
interpretation $I$ of the monadic predicates $a \in A$ as subsets of $R[s]$,
so that the resulting first-order structure $(R[s],I)$ makes the
formula $\De(a,c_{t})$ true.


Walukiewicz's key result linking $\yvMSO$ to $\yvAut(\ofoe)$ states that
\begin{equation}
\yvMSO \equiv \yvAut(\ofoe)
 \text{ (over tree models)},
\end{equation}
and the proof of this result proceeds by inductively showing that every formula
$\phi$ in $\yvMSO$ can be effectively transformed into an equivalent
automaton $\bbA_{\phi} \in \yvAut(\ofoe)$.
For the details of this construction, a fairly intricate analysis of the
one-step logic $\ofoe$ is required, crucially involving various normal
forms of the sentences of $\ofoe(A)$.

In order to adapt this approach to the setting of \WMSO, observe that by
K\"onig's lemma, a subset of a tree $\model$ is finite iff it is both a subset of
a finitely branching subtree of $\model$ and \emph{noetherian}, that is, a subset
of a subtree of $\model$ that has no infinite branches.
This suggests that we may change the definition of $\yvMSO$-automata into one
of $\yvWMSO$-automata via two kinds of modifications, roughly speaking
corresponding to a horizontal and a vertical `dimension' of trees.

For the `vertical modification' we may turn to the literature on weak automata~\cite{MullerSaoudiSchupp92}.
The acceptance condition $\Om$ of a parity automaton $\bbA =
\tup{A, \De, \Om, \ai}$ is \emph{weak} if $\Om(a) = \Om(a')$ whenever
the states $a$ and $a'$ belong to the same strongly connected component
(SCC) of the automaton. To see that the notion of connected component is well-defined
observe that for $\bbA$ we can associate a directed graph on $A$
such that $a,b \in A$ are connected iff $b$ occurs in $\Delta(a,c)$ for some $c \in \wp(\prop)$.
Let $\yvwAut(\ofoe)$ denote the set of $\yvMSO$-automata with a weak parity
condition.
It was proved in \cite{Zanasi:Thesis:2012} (see also \cite{DBLP:conf/lics/FacchiniVZ13}) that
\begin{equation*}
\yvWFMSO \equiv \yvwAut(\ofoe) \text{ (over the class of all trees)},
\end{equation*}
with $\yvWFMSO$ denoting the earlier mentioned variant of $\yvMSO$ 
where second-order quantification is restricted to noetherian subsets of trees.
From this it easily follows that
\begin{equation*}
\yvWMSO \equiv \yvwAut(\ofoe) \text{ (over the class of finitely
branching trees)},
\end{equation*}
since the noetherian subsets of a finitely branching trees correspond to the
finite ones.
Over the class of all tree models, however, $\yvWMSO$ is \emph{not} equivalent
to $\yvwAut(\ofoe)$, 
as is witnessesed by the earlier mentioned
class of well-founded trees, which can be defined in $\AFMC \leq \yvWFMSO$,
but not in $\yvWMSO$.

The hurdle to take, in order to find automata for WMSO on trees of
\emph{arbitrary} branching degree, concerns the horizontal dimension; the
main problem lies in finding the right one-step language for
$\yvWMSO$-automata.
An obvious candidate for this language would be weak monadic second-order logic
itself, or more precisely, its variant $\owmso$ over the signature of monadic
predicates (corresponding to the automata states).
A very helpful observation, made by V\"a\"an\"anen~\cite{vaananen77}, states that
\[
\owmso \equiv \olque,
\]
where $\olque$ is the extension of $\ofoe$ with the generalized quantifier
$\qu$, where $\qu x. \phi$ meaning that there are \emph{infinitely} many
objects satisfying $\phi$.
Taking the \emph{full} language of $\owmso$ or $\olque$ as our one-step language
would give too much expressive power: since $\olque$ extends $\ofoe$,
we would find that, over tree models, $\yvwAut(\olque)$ extends
$\yvwAut(\ofoe)$, whereas we already saw that $\yvwAut(\ofoe) \equiv
\yvWFMSO$ is incomparable to $\yvWMSO$.
It is here that we will crucially involve the notion of \emph{continuity}.
The automata corresponding to $\yvWMSO$ will be of the form $\bbA = \tup{A, \De, \Om, \ai}$,
where the transition map $\De: A \times \wp(\Prop) \to
{\olque}^+(A)$ is subject to the following two constraints, for all $a,a' \in A$
belonging to the same strongly connected component of $A$:
\begin{description}
\itemsep 0 pt
\item[(weakness)] $\Om(a) = \Om(a')$, and
\item[(continuity)]
if $\Om(a)$ is odd (resp. even), then for each colour $c\in \wp(\Prop)$,
   $\De(a,c)$ is continuous (resp. co-continuous) in $a'$,
\end{description}
where co-continuity is a dual notion to continuity.
The class of these automata is denoted by $\yvcwAut(\olque)$.
Consequently, for a proper definition of these automata we need a
\emph{syntactic} characterization of the $\olque(A)$-sentences that are
(co-)continuous in one (or more) monadic predicates of $A$.

For this purpose, we conduct a fairly detailed model-theoretic study of the
logic $\olque$ which we consider to be the third main
contribution of our work.
Similar to the results for $\ofoe$, we provide normal forms for the
sentences of $\olque(A)$, and syntactic characterizations of the fragments
whose sentences are monotone (respectively continuous) in some monadic predicate $a \in A$.

To finish, we give constructions transforming \yvWMSO-formulas to
\yvWMSO-automata and vice-versa, witnessing that
\begin{equation}
\label{eq:m3}
\yvWMSO \equiv \yvcwAut(\olque) \text{ (over tree models)}.
\end{equation}

\paragraph{Proof of main result.}

To conclude our introduction we briefly sketch the proof of our main result, 
Theorem~\ref{t:m1}(1).
Roughly speaking, we follow the bisimulation-invariance proof by Janin \& 
Walukiewicz, which revolves around relating two distinct types of automata, 
which correspond, respectively, to the logics $\yvMSO$ and $\MC$.
More precisely, these two automaton types are given as $\yvAut(\ofoe)$ and
$\yvAut(\ofo)$, where the one-step languages are first-order 
logic respectively with and without equality.
What we will add to their proof is the insight from~\cite{Venxx} that the
required relation between $\yvAut(\ofoe)$ and $\yvAut(\ofo)$ already follows
from results relating the respective one-step languages.

In our setting, we need to identify automata corresponding to the fragment
$\yvF$.
For this purpose we introduce the class $\yvcwAut(\ofo)$ consisting of those
automata in $\yvAut(\ofo)$ that satisfy similar weakness and continuity 
conditions as the ones in $\yvcwAut(\olque)$:
\begin{equation}
\label{eq:autF}
\yvF \equiv \yvcwAut(\ofo) \text{ (over the class of all LTSs)}.
\end{equation}

As the key step in our proof then, we will provide a translation 
$(-)^{\bullet}: \olque \to \ofo$ which naturally induces a transformation 
$(-)^{\bullet}: \yvcwAut(\olque) \to \yvcwAut(\ofo)$.
As a consequence of the nice model-theoretic properties of the translation at 
the one-step level, the automaton transformation satisfies, for all 
transition systems $\bbT$:
\begin{equation}
\label{eq:crux-i}
\bbA^{\bullet} \text{ accepts } \bbT \text{ iff } \bbA \text{ accepts 
} \omegaunrav{\bbT}
\end{equation}
where $\omegaunrav{\bbT}$ is the $\om$-unravelling of $\bbT$.
It easily follows from \eqref{eq:crux-i} that a \yvWMSO-automaton $\bbA$
is bisimulation invariant iff $\bbA\equiv \bbA^{\bullet}$, and so 
Theorem~\ref{t:m1}(1) follows by \eqref{eq:m3} and \eqref{eq:autF}.

\paragraph{Overview of paper.}
In the next section we give a precise definition of the preliminaries required to understand this article. In Section~\ref{sec:onestep} we define the one-step logics that will be used through the paper and give normal forms and syntactic characterizations of their monotone and (co-)continuous fragments. In Section~\ref{sec:aut} we formally define \WMSO-automata and show that from every \WMSO-formula we can construct an equivalent \WMSO-automaton. In Section~\ref{sec:aut-to-formula_wmso} we prove the converse, that is, for every \WMSO-automaton we can construct an equivalent \WMSO-formula, this finishes the automata characterization of \WMSO over tree models. Finally, in Section~\ref{sec:char} we prove the main result of the paper, namely that the fragment $\contAFMC$ is the bisimulation-invariant fragment of \WMSO.

\section{Preliminaries}\label{sec:prel}

\subsection{Transition systems and trees} \label{ssec:prelim_trees}

Throughout this article we fix a set $\prop$ of elements that will be called
\emph{proposition letters} and denoted with small Latin letters $p, q, \ldots$ .
We denote with $C$ the set $\wp (\prop)$ of \emph{labels} on $\prop$; it will be
convenient to think of $C$ as an \emph{alphabet}.
Given a binary relation $R \subseteq X \times Y$, for any element $x \in X$,
we indicate with $R[x]$ the set $\compset{ y \in Y \mid (x,y) \in R}$ while $R^+$
and $R^{*}$ are defined respectively as the transitive closure of~$R$ and
the reflexive and transitive closure of~$R$. The set $\Ran(R)$ is defined as $\bigcup_{x\in X}R[x]$.

A \emph{$C$-labeled transition system} (LTS) is a tuple $\model = \tup{T,R,\tscolors,s_I}$ where
$T$ is the universe or domain of $\model$, $\tscolors:T\to\wp(\prop)$ is a marking,
$R\subseteq T^2$ is the accessibility relation and $s_I \in T$ is a distinguished node.
We use $|\model|$ to denote the domain of $\model$.
%
Observe that the marking ${\tscolors:T\to\wp(\prop)}$ can be seen as a valuation $\tsval:\prop\to\wp (T)$ given by $\tsval(p) = \{s \in T \mid p\in \tscolors(s)\}$.

%
A \emph{$C$-tree} is a LTS in which every node can
be reached from $s_I$, and every node except $s_I$ has a unique predecessor;
the distinguished node $s_I$ is called the \emph{root} of $\model$.
Each node $s \in T$ uniquely defines a subtree of $\model$ with carrier
$R^{*}[s]$ and root $s$. We denote this subtree by ${\model.s}$.
We use the term \emph{tree language} as a synonym of class of $C$-trees.

The tree unravelling of an LTS $\model$ is given by $\unravel{\model} := \tup{T_P,R_P,\tscolors',s_I}$ where $T_P$ is the set of finite paths in $\model$ stemming from $s_I$, $R_P(t,t')$ iff $t'$ is an extension of $t$ and the color of a path $t\in T_P$ is given by the color of its last node in $T$. The $\omega$-unravelling $\omegaunrav{\model}$ of $\model$ is an unravelling which has $\omega$-many copies of each node different from the root.

A \emph{$p$-variant} of a transition system $\model = \tup{T,R,\tscolors,s_I}$
is a $\p (\prop\cup\{p\})$-transition system $\tup{T,R,\tscolors',s_I}$
such that $\tscolors'(s)\setminus\{p\} = \tscolors(s)$ for all $s \in T$.
Given a set $S \subseteq T$, we let $\model[p\mapsto S]$ denote the $p$-variant
where $p \in \tscolors'(s)$ iff $s \in S$.

Let $\varphi \in \llang$ be a formula of some logic $\llang$,
we use $\ext{\varphi} = \compset{\model \mid \model \models \varphi}$ to denote the class
of transition systems that make $\varphi$ true.
A class $\mclass$ of transition systems is said to be \emph{$\llang$-definable} if there
is a formula $\varphi \in \llang$ such that $\ext{\varphi} = \mclass$.
We use the notation $\varphi \equiv \psi$ to mean that $\ext{\varphi} = \ext{\psi}$ and given two logics
$\llang, \llang'$ we use $\llang \equiv \llang'$ when the $\llang$-definable and $\llang'$-definable
classes of models coincide.

\textit{Convention.}
Throughout this paper, we will only consider transition systems $\model$
in which $R[s]$ is non-empty for every node $s \in T$.
In particular this means that every tree we consider is \emph{leafless}.
All our results, however, can easily be lifted to the general case.

\subsection{Games}

We introduce some terminology and background on infinite games.
All the games that we consider involve two players called \emph{\'Eloise}
($\exists$) and \emph{Abelard} ($\forall$).
In some contexts we refer to a player $\Pi$ to specify a
a generic player in $\{\exists,\forall\}$.
Given a set $A$, by $A^*$ and $A^\omega$ we denote respectively the set of
words (finite sequences) and streams (or infinite words) over $A$.

A \emph{board game} $\mc{G}$ is a tuple $(G_{\exists},G_{\forall},E,\win)$,
where $G_{\exists}$ and $G_{\forall}$ are disjoint sets whose union
$G=G_{\exists}\cup G_{\forall}$ is called the \emph{board} of $\mc{G}$,
$E\subseteq G \times G$ is a binary relation encoding the \emph{admissible
moves}, and $\win \subseteq G^{\omega}$ is a \emph{winning condition}.
An \emph{initialized board game} $\mc{G}@u_I$ is a tuple
$(G_{\exists},G_{\forall},u_I, E,\win)$ where
$u_I \in G$ is the
\emph{initial position} of the game.
When $\win$ is  given by a parity function
$\pmap: G \to \omega$ we say that $\mc{G}$ is a parity game and sometimes
simply write $\mc{G}=(G_{\exists},G_{\forall},E,\pmap)$.

Given a board game $\mc{G}$, a \emph{match} of $\mc{G}$ is simply a path
through the graph $(G,E)$; that is, a sequence $\pi = (u_i)_{i< \alpha}$ of
elements of $G$, where $\alpha$ is either $\omega$ or a natural number,
and $(u_i,u_{i+1}) \in E$ for all $i$ with $i+1 < \alpha$.
A match of $\mc{G}@u_{I}$ is supposed to start at $u_{I}$.
Given a finite match $\pi = (u_i)_{i< k}$ for some $k<\omega$, we call
$\m{last}(\pi) := u_{k-1}$ the \emph{last position} of the match; the
player $\Pi$ such that $\m{last}(\pi) \in G_{\Pi}$ is supposed to move
at this position, and if $E[\m{last}(\pi)] = \emptyset$, we say that
$\Pi$ \emph{got stuck} in $\pi$.
A match $\pi$ is called \emph{total} if it is either finite, with one of the
two players getting stuck, or infinite. Matches that are not total are called
\emph{partial}.
Any total match $\pi$ is \emph{won} by one of the players:
If $\pi$ is finite, then it is won by the opponent of the player who gets stuck.
Otherwise, if $\pi$ is infinite, the winner is $\exists$ if $\pi \in
\win$, and $\forall$ if $\pi \not\in \win$.

Given a board game $\mc{G}$ and a player $\Pi$, let $\pmatches{G}{\Pi}$ denote
the set of partial matches of $\mc{G}$ whose last position belongs to player
$\Pi$.
A \emph{strategy for $\Pi$} is a function $f:\pmatches{G}{\Pi}\to G$.
A match $\pi  = (u_i)_{i< \alpha}$ of $\mc{G}$ is
\emph{$f$-guided} if for each $i < \alpha$ such that $u_i \in G_{\Pi}$ we
have that $u_{i+1} = f(u_0,\dots,u_i)$.
Let $u \in G$ and a $f$ be a strategy for $\Pi$.
We say that $f$ is a \emph{surviving strategy} for $\Pi$ in $\mc{G}@u$ if
\begin{enumerate}
  \item[(i)] For each $f$-guided partial match $\pi$ of $\mc{G}@u$, if $\m{last}(\pi)$
  is in $G_{\Pi}$ then $f(\pi)$ is legitimate, that is,
  $(\m{last}(\pi),f(\pi)) \in E$.
\end{enumerate}
We say that $f$ is a \emph{winning strategy} for $\Pi$ in $\mc{G}@u$ if, additionally, 
%
\begin{enumerate}
  \item[(ii)] $\Pi$ wins each $f$-guided total match of $\mc{G}@u$.
\end{enumerate}
If $\Pi$ has a winning winning strategy for $\mc{G}@u$ then $u$ is called a \emph{winning position} for $\Pi$ in $\mc{G}$.
The set of positions of $\mc{G}$ that are winning for $\Pi$ is denoted by $\win_{\Pi}(\mc{G})$.
A strategy $f$ is called \emph{positional} if $f(\pi) = f(\pi^{\prime})$ for each $\pi,\pi^{\prime} \in \Dom(f)$ with $\m{last}(\pi) = \m{last}(\pi^{\prime})$.
A board game $\mc{G}$ with board $G$ is \emph{determined} if $G = \win_{\exists}(\mc{G}) \cup \win_{\forall}(\mc{G})$, that is, each $u \in G$ is a winning position for one of the two players.

\begin{fact}[Positional Determinacy of Parity Games~\cite{EmersonJ91,Mostowski91Games}]
\label{THM_posDet_ParityGames}
For each parity game $\mc{G}$, there are positional strategies $f_{\exists}$
and $f_{\forall}$ respectively for player $\exists$ and $\forall$, such that
for every position $u \in G$ there is a player $\Pi$ such that $f_{\Pi}$ is a
winning strategy for $\Pi$ in $\mc{G}@u$.
\end{fact}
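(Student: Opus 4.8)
The plan is to derive the Fact from the stronger assertion that the board partitions as $G = W_{\exists} \sqcup W_{\forall}$, together with positional strategies $f_{\exists}, f_{\forall}$ such that $f_{\Pi}$ is winning for $\Pi$ from every position in $W_{\Pi}$. Granting this, the Fact follows immediately: extend each $f_{\Pi}$ arbitrarily (but positionally) over $W_{\bar\Pi}$, and observe that an arbitrary position $u \in G$ lies in exactly one region $W_{\Pi}$, on which $f_{\Pi}$ is by construction winning. So it suffices to establish this ``partition plus uniform positional strategies'' statement.

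I would prove that statement by induction on the number $n$ of distinct values taken by the priority function $\pmap$. The main tool is the notion of an attractor: for a player $\Pi$ and a target set $U \subseteq G$, the set $\Attr_{\Pi}(U)$ of positions from which $\Pi$ can force a visit to $U$ in finitely many moves, computed as an ordinal-indexed monotone fixpoint, together with the associated positional attractor strategy. In the base case ($n \le 1$) the winning condition degenerates to a pure safety/reachability condition --- with a single even (resp.\ odd) priority every infinite match is won by $\exists$ (resp.\ $\forall$) --- so the two winning regions and their positional strategies are obtained directly by an attractor computation toward the dead-end positions of the relevant player.

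For the inductive step, let $d$ be the maximal priority and assume without loss of generality that $d$ is even, so that $d$ favours $\exists$ (the odd case is symmetric, interchanging the two players). First remove $A := \Attr_{\exists}(\pmap^{-1}(d))$; the subboard $G \setminus A$ uses only priorities strictly below $d$, hence at most $n-1$ values, and the induction hypothesis applies to it. To assemble the global regions I would run the Zielonka-style fixpoint: iteratively take $\forall$'s winning region in the reduced subgame, pull it back under a $\forall$-attractor, delete it from the board, and recurse, collecting these pieces along an ordinal-indexed increasing sequence to obtain $W_{\forall}$; the complement is $W_{\exists}$. On $W_{\forall}$ a single positional strategy for $\forall$ is glued together from the attractor strategies and the induction-hypothesis strategies, layer by layer. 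On $W_{\exists}$, $\exists$'s positional strategy combines her induction-hypothesis winning strategy on $G \setminus A$ with the attractor strategy toward $\pmap^{-1}(d)$ on $A$; the correctness argument then shows that every infinite $f_{\exists}$-guided match either eventually stays inside the reduced subgame, where the induction-hypothesis strategy wins, or visits $\pmap^{-1}(d)$ infinitely often, in which case $d$ --- being even and maximal --- is the largest priority seen infinitely often, so $\exists$ wins.

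The main obstacle is the infinite-board bookkeeping in the inductive step: (i) showing that the ordinal-indexed removal process stabilises and that the resulting $W_{\forall}$ really is the full winning region of $\forall$, and (ii) verifying that the layer-by-layer gluing produces a single \emph{positional} strategy admitting no exploitable infinite play --- that is, that the transfinite stacking of attractor strategies and induction-hypothesis strategies cannot be diverted by the opponent into an infinite match that avoids the favourable top priority forever. This positional-combination lemma, rather than bare determinacy, is where the real work lies; once it is in place, the two regions and their uniform positional strategies drop out, yielding the Fact.
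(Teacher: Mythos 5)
The paper offers no proof of this statement at all: it is imported as a \textbf{Fact}, with the citations to Emerson--Jutla and Mostowski doing the work, so there is no internal argument to compare yours against. What you have sketched is essentially the standard Zielonka-style proof of exactly the cited result --- induction on the number of priorities, attractor decompositions, the transfinite accumulation of $\forall$'s winning region, and the trap argument for the complement --- so your proposal is a correct route to the Fact, just one the authors deliberately black-box. Two remarks on the details. First, your induction is on the number of distinct priority values, which presupposes that $\pmap$ has \emph{finite} range; the paper's definition only says $\pmap: G \to \omega$, and with infinitely many priorities the winning condition as stated (largest priority occurring infinitely often) need not even be well defined. This is harmless here because every parity game the paper actually plays (acceptance games, whose priorities lie in the finite set $\pmap[A] \cup \{\max(\pmap[A])\}$, and evaluation games for $\MC$) has finitely many priorities, but your proof should state the finite-range hypothesis explicitly. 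Second, the ``positional-combination lemma'' you defer is indeed the crux on infinite boards, and it does go through by the argument you gesture at: the ordinal-indexed removal stabilises by cardinality; under the glued $\forall$-strategy the layer index of the current position never increases and strictly decreases only when $\exists$ jumps into an earlier layer, so by well-foundedness every match eventually settles in a single layer, where the attractor-then-induction-hypothesis strategy wins; dually, the fixpoint being reached means the residual subgame has empty $\forall$-region, so the induction hypothesis makes the complement (a trap for $\forall$) uniformly winning for $\exists$ via her attractor toward $\pmap^{-1}(d)$ recomputed \emph{inside} $W_{\exists}$ --- note that this last relativisation is needed and is slightly glossed in your step ``first remove $\Attr_{\exists}(\pmap^{-1}(d))$'', since all attractors in the iteration must be taken relative to the board remaining at that stage.

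One further small point: in your base case the winning condition does not fully ``degenerate'' --- finite matches are still lost by the player who gets stuck, so the attractor computation toward the opponent's dead ends (and the complementary trap strategy avoiding one's own) is genuinely needed, as you indicate; since the paper's convention elsewhere guarantees $R[s] \neq \emptyset$ only for transition systems, not for game boards, this case cannot be waved away.
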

From now on, we always assume that each strategy we work with in parity games
is positional. Moreover, we think of a positional strategy $f_\Pi$ for player $\Pi$
as a function $f_\Pi:G_\Pi\to G$.


\subsection{Parity automata}

We recall the definition of a parity automaton, adapted to our setting.
Since we will be comparing parity automata defined in terms of various
one-step languages, it makes sense to make the following abstraction.

\begin{definition}
Given a set $A$, we define an \emph{$A$-structure} to be a pair $\struc{D,\val}$
consisting of a domain $D$ and a valuation $\val: A \to \wp D$.
Depending on context, elements of $A$ will be called \emph{monadic predicates}
or \emph{propositional variables}.

A \emph{one-step language} is a map $\yvLo$ assigning to each set $A$ a set
$\yvLo(A)$ of objects called \emph{one-step formulas} over $A$.
We require that $\yvLo(\bigcap_{i} A_{i}) = \bigcap_{i} \yvLo(A_{i})$,
so that for each $\phi \in \yvLo(A)$ there is a smallest $A_{\phi} \sse A$ such
that $\phi \in \yvLo(A_{\phi})$; this $A_{\phi}$ is the set of propositional
variables that \emph{occur} in $\phi$.

We assume that such a one-step language $\yvLo$ comes with a \emph{truth}
relation: given an $A$-structure $\struc{D,\val}$, a formula $\phi \in \yvLo$
is either \emph{true} or \emph{false} in $\struc{D,\val}$, denoted by,
respectively, $\struc{D,\val} \models \phi$ and $\struc{D,\val} \not\models \phi$.
We also assume that $\yvLo$ has a \emph{positive fragment} $\yvmLo$
characterizing monotonicity in the sense that a formula $\phi \in \yvLo(A)$ is
(semantically) monotone iff it is equivalent to a formula $\phi' \in
\yvmLo(A)$.
\end{definition}

The one-step languages $\yvLo$ featuring in this paper all are induced by
well-known logics.
Examples include first-order logic (with and without equality), first-order
logic extended with the infinity quantifier, and fragments of these
languages.

\begin{definition}
\label{def:parity_aut}
Let $\yvLo$ be some one-step language.
A \emph{parity automaton} based on $\yvLo$ and 
alphabet $C$ is a tuple $\bbA = \tup{A,\tmap,\pmap,a_I}$ such that $A$ is a
finite set of states of the automaton, $a_I \in A$ is the initial state,
$\tmap: A\times C \to \yvmLo(A)$
is the transition map, and $\pmap: A \to \nat$ is the parity map.
The collection of such automata will be denoted by $\yvAut(\yvLo)$.
\end{definition}

Acceptance and rejection of a transition system by an automaton is defined
in terms of the following parity game.

\begin{definition}
Given $\bbA = \tup{A,\tmap,\pmap,a_I}$ in $\yvAut(\yvLo)$ and a transition
system $\bbT = \tup{T,R,\tscolors,s_I}$, the \emph{acceptance game}
$\mathcal{A}(\bbA,\bbT)$ of $\bbA$ on $\model$ is the parity game defined
according to the rules of the following table.
%
\begin{center}
\begin{tabular}{|l|c|l|c|} \hline
Position & Player & Admissible moves & Parity
\\\hline
    $(a,s) \in A \times T$
  & $\exists$
  & $\{\val : A \to \wp(R[s]) \mid (R[s],\val) \models \tmap (a, \tscolors(s)) \}$
  & $\pmap(a)$
\\
    $\val : A \rightarrow \wp(T)$
  & $\forall$
  & $\{(b,t) \mid t \in \val(b)\}$
  & $\max(\pmap[A])$
\\ \hline
 \end{tabular}
\end{center}

A transition system $\bbT$ is \emph{accepted} by $\bbA$ if $\exists$ has
a winning strategy in $\mathcal{A}(\bbA,\model)@(a_I,s_I)$, and \emph{rejected}
if $(a_I,s_I)$ is a winning position for $\abel$.
\end{definition}


Many properties of parity automata are determined at the one-step level.
An important example concerns the notion of complementation.

\begin{definition}
\label{d:bdual1}
Two one-step formulas $\phi$ and $\psi$ are each other's \emph{Boolean dual}
if for every structure $\struc{D,\val}$ we have
\[
\struc{D,\val} \models \phi \text{ iff } \struc{D,\val^{c}} \not\models \psi,
\]
where $\val^{c}$ is the valuation given by $\val^{c}(a) \mathrel{:=} D
\setminus \val(a)$, for all $a$.
A one-step language $\yvLo$ is \emph{closed under Boolean duals} if for every
set $A$, each formula $\phi \in \yvLo(A)$ has a Boolean dual $\yvdual{\phi}
\in \yvLo(A)$.
\end{definition}

Following ideas by~\cite{DBLP:conf/calco/KissigV09}\fzwarning{Yde, is this the right reference?}, we can use Boolean duals, together with a
\emph{role switch} between $\abel$ and $\eloi$, in order to define a
negation or complementation operation on automata.

\begin{definition}
\label{d:caut}
Assume that, for some one-step language $\yvLo$, the map $\yvdual{(-)}$
provides, for each set $A$, a Boolean dual $\yvdual{\phi} \in \yvLo(A)$ for each
$\phi \in \yvLo(A)$.
Given $\bbA = \tup{A,\tmap,\pmap,a_I}$ in $\yvAut(\yvLo)$ we define its
\emph{complement} $\overline{\bbA}$ as the automaton
$\tup{A,\yvdual{\De},\yvdual{\Om},\ai}$,
where $\yvdual{\De}(a,c) \isdef \yvdual{(\De(a,c))}$, and $\yvdual{\Om}(a)
\isdef 1 + \Om(a)$, for all $a \in A$ and $c \in C$.
\end{definition}

\begin{proposition}
\label{PROP_complementation}
Let $\yvLo$ and $\yvdual{(-)}$ be as in the previous definition.
For each automaton $\bbA \in \yvAut(\yvLo)$ and each transition structure
$\bbT$ we have that
\begin{equation*}
\overline{\bbA} \text{ accepts } \bbT \text{ iff }
\bbA \text{ rejects } \bbT.
\end{equation*}
\end{proposition}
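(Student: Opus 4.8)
The plan is to prove Proposition~\ref{PROP_complementation} by
relating, via the Boolean-dual construction, the acceptance game
$\mathcal{A}(\overline{\bbA},\bbT)$ for the complement automaton with the
acceptance game $\mathcal{A}(\bbA,\bbT)$ for the original one, and then to
invoke the positional determinacy of parity games
(Fact~\ref{THM_posDet_ParityGames}). Concretely, I want to show that
\eloi{} has a winning strategy in $\mathcal{A}(\overline{\bbA},\bbT)@(a_I,s_I)$
if and only if \abel{} has a winning strategy in
$\mathcal{A}(\bbA,\bbT)@(a_I,s_I)$; by determinacy the latter says precisely
that $(a_I,s_I)$ is a winning position for \abel{} in the game for $\bbA$,
i.e.\ that $\bbA$ rejects $\bbT$, which is the desired equivalence.

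First I would set up the correspondence between the two games at the level of
positions and moves. The two games share the same board and the same basic
positions $(a,s)\in A\times T$; they differ only in the transition map (each
$\De(a,c)$ is replaced by its Boolean dual $\yvdual{\De(a,c)}$) and in the
parity map (each $\Om(a)$ is shifted to $1+\Om(a)$). The key one-step
observation, coming straight from Definition~\ref{d:bdual1}, is that for any
basic position $(a,s)$ and any valuation $\val:A\to\wp(R[s])$ we have
$(R[s],\val)\models\yvdual{\De(a,\tscolors(s))}$ iff
$(R[s],\val^{c})\not\models\De(a,\tscolors(s))$, where $\val^{c}(b)=R[s]\setminus\val(b)$.
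This lets me transform a move of \eloi{} in $\mathcal{A}(\overline{\bbA},\bbT)$,
namely a witnessing valuation $\val$ for the dual formula, into a choice that
exactly foils every valuation \eloi{} could legally play for the original
formula in $\mathcal{A}(\bbA,\bbT)$. In this way a strategy for one player in one
game is converted into a strategy for the opposite player in the other game: an
\eloi{}-strategy in the complement game becomes an \abel{}-strategy in the
original game by sending each $\val$ to a ``role-switched'' response, and
conversely.

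The heart of the argument is then to check that this role switch preserves the
outcome of every total match. For finite matches the getting-stuck condition
transfers because the existence of a legal successor valuation for $\De$ and the
nonexistence of one for $\yvdual{\De}$ are complementary, by the dual-equivalence
above. For infinite matches I need the parity shift: along a run the sequence of
states visited is the same in both games, and since $\yvdual{\Om}(a)=1+\Om(a)$,
the maximum priority occurring infinitely often changes parity, so the match is
won by \eloi{} in one game exactly when it is won by \abel{} in the other. Combining
the finite and infinite cases shows that \eloi{} wins $\mathcal{A}(\overline{\bbA},\bbT)@(a_I,s_I)$
iff \abel{} wins $\mathcal{A}(\bbA,\bbT)@(a_I,s_I)$; applying determinacy to close
the equivalence finishes the proof.

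I expect the main obstacle to be the bookkeeping of the role switch across the
two player-types of positions: the games alternate between basic positions (owned
by \eloi) and valuation positions (owned by \abel), and one must be careful that
complementing the formula turns \eloi's existential choice of a satisfying
valuation into \abel's universal choice, and correspondingly that \abel's move
$(b,t)$ with $t\in\val(b)$ in one game is mirrored by the complemented valuation
$\val^{c}$ in the other. Making the strategy translation explicit and verifying
that it sends legitimate moves to legitimate moves (so that no player gets stuck
spuriously) is the delicate, though ultimately routine, part; everything else
follows from Definition~\ref{d:bdual1}, the parity shift, and
Fact~\ref{THM_posDet_ParityGames}.
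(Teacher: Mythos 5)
Your proposal is correct and takes essentially the same route as the paper, which proves this proposition by exactly the observation you elaborate: that the Boolean dual of the transition map together with the parity shift swaps the powers of $\eloi$ and $\abel$ between the two acceptance games (the paper records only this one-line remark). One detail worth making explicit in the step you label routine: to find $\abel$'s legitimate mirrored response one needs the monotonicity of one-step formulas (recall $\tmap$ maps into $\yvmLo(A)$), since when $\eloi$ plays $U$ with $(R[s],U)\models\De(a,c)$ while your $\val$ satisfies the dual, it is monotonicity applied to $(R[s],\val^{c})\not\models\De(a,c)$ that guarantees some $b$ and $t$ with $t\in U(b)\cap\val(b)$, so that both shadow matches can advance through the same position $(b,t)$.
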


The proof of Proposition~\ref{PROP_complementation} is based on the fact
that the power\fcwarning{`power' undefined.} of $\eloi$ in $\mathcal{A}(\overline{\bbA},\bbT)$ is the same
as that of $\abel$ in $\mathcal{A}(\bbA,\bbT)$.

As an immediate consequence of this proposition, one may show that if the
one-step language $\yvLo$ is closed under Boolean duals, then the class
$\yvAut(\yvLo)$ is closed under taking complementation.
Further on we will use Proposition~\ref{PROP_complementation} to show that
the same may apply to some subsets of $\yvAut(\yvLo)$.

\subsection{Weak monadic second-order logic}

\begin{definition}
The \emph{weak monadic second-order logic} on a set of predicates $\prop$ is given by
$$\varphi ::= \here{p} \mid p \inc q \mid R(p,q) \mid \lnot\varphi \mid \varphi\lor\varphi \mid \exists p.\varphi$$
where $p,q \in \prop$. We denote this logic by $\wmso(\prop)$ and omit $\prop$ when the set of proposition letters is clear from context.
We  adopt the standard convention that no letter is both free and bound in $\varphi$.
\end{definition}

\begin{definition}\label{def:wmso}
Let $\model = \tup{T,R,\tscolors, s_I}$ be a LTS, the semantics of \wmso is defined as follows
\begin{align*}
\model \models \here{p} & \quad\text{ iff }\quad  \tsval(p) = \compset{s_I} \\
\model \models p \inc q & \quad\text{ iff }\quad  \tsval(p) \subseteq \tsval(q) \\
\model \models R(p,q) & \quad\text{ iff }\quad  \text{for every $s\in \tsval(p)$ there exists $t\in \tsval(q)$ such that $sRt$} \\
\model \models \lnot\varphi & \quad\text{ iff }\quad  \model \not\models \varphi \\
\model \models \varphi\lor\psi & \quad\text{ iff }\quad  \model \models \varphi \text{ or } \model \models \psi \\
\model \models \exists p.\varphi & \quad\text{ iff }\quad  \text{there is a \emph{finite} set $X \subseteq_\omega T$ such that $\model[p\mapsto X] \models \varphi$}
\end{align*}
\end{definition}


\begin{remark}
The reader may have expected a more standard two-sorted language for second-order logic, for example given by
$$
\varphi ::= p(x)
\mid R(x,y)
\mid x \foeq y
\mid \neg \varphi
\mid \varphi \lor \varphi
\mid \exists x.\varphi
\mid \exists p.\varphi
$$%
where $p \in \prop$, $x,y \in \fovar$ (individual variables), 
and $\foeq$ is the symbol for equality.
Both definitions can be proved to be equivalent, however, we choose to keep Definition~\ref{def:wmso} as it will be better suited to work with in the context of automata.
\end{remark}

\subsection{First-order logic with generalized quantifiers}
In this subsection we introduce an extension of first-order logic with so called generalized quantifiers. Our interest in this extension stems from the fact that it will allow us to define a variant of first-order logic that is expressively equivalent to weak monadic second order (see Section~\ref{sec:onestep}) and has nice technical features such as a normal form theorem.

Mostowski~\cite{Mostowski1957} defined (unary) generalized quantifiers as follows: a (unary) generalized quantifier ${\mathcal Q}$ is a collection of pairs $(J, X)$ with $X \subseteq J$, and satisfying the following condition
$$\text{If } \big( (J,X)\in {\mathcal Q}, \ |X|=|Y| \ \land \ | J \setminus X|=|K \setminus Y|\big) \text{ then } (K,Y)\in {\mathcal Q}.$$

%
\noindent The semantics of ${\mathcal Q}$ is defined by the following condition
$$
\model\models {\mathcal Q}x. \phi(x, \vlist{s}) \text{ iff } (T,\{t \mid \model\models \phi(t,\vlist{s}) \}) \in {\mathcal Q}.
$$
where we use $\vlist{s} := [s_1,\dots,s_n]$ to denote sequences or vectors of elements.

For the rest of this article we will focus on the generalized quantifier $\qu$ expressing that there exist infinitely many elements satisfying a certain condition. Formally, it is defined as
\[ \qu := \{(J,X) \mid |X| \geq \aleph_0\}.\]
The dual of $\qu$ is $\dqu =\{(J,X) \mid |J\setminus X| < \aleph_0\}$. It is worth observing what is the intended meaning of this quantifier: $\dqu x.\varphi$ expresses that there are \emph{at most finitely many} elements \emph{falsifying} the formula $\varphi$.
The extension of first-order logic with equality ($\foe$) or without equality ($\fo$)
obtained by adding $\qu$ to the corresponding first-order language is denoted respectively by $\lqu$ and $\lque$.



\subsection{The Modal $\mu$-Calculus.}\label{subsec:mu}
The language of the modal $\mu$-calculus ($\MC$) on $\prop$ is given by the following grammar:
\begin{equation*}
    \varphi\ ::= q \mid \neg q \mid \varphi \land \varphi \mid
    \varphi \lor \varphi \mid  \Diamond \varphi \mid \Box \varphi \mid
    \mu p.\varphi \mid \nu p.\varphi
\end{equation*}
%
where $p,q \in \prop$ and $p$ is positive in $\varphi$ (i.e., $p$ is not negated).
We use the standard convention that no variable is both free and bound in a formula and that every bound variable is fresh.
Let $p$ be a bound variable occuring in some formula $\varphi \in \MC$, we use $\delta_p$ to denote the binding definition of $p$, that is, the formula such that either $\mu p.\delta_p$ or $\nu p.\delta_p$ are subformulas of $\varphi$.

The semantics of this language is completely standard. Let $\model = \tup{T,R,\tscolors, s_I}$ be a transition system and $\varphi \in \MC$. We inductively define the \emph{meaning} $\ext{\varphi}^{\model}$ which includes the following clauses for the least $(\mu)$ and greatest ($\nu$) fixpoint operators:
\begin{align*}
  \ext{\mu x.\psi}^{\model}  & :=   \bigcap \{S \subseteq T \mid S \supseteq \ext{\psi}^{\model[x\mapsto S]} \}  \\
  \ext{\nu x.\psi}^{\model}  & :=   \bigcup \{S \subseteq T \mid S \subseteq \ext{\psi}^{\model[x\mapsto S]} \}
\end{align*}
We say that $\varphi$ is \emph{true} in $\model$ (notation $\model \mmodels \varphi$) iff $s_I \in \ext{\varphi}^{\model}$.

We will now describe the semantics defined above in game-theoretic terms. That is,
we will define the evaluation game $\egame(\varphi,\model)$ associated with a formula $\varphi \in \MC$ and a transition system $\model$. This game is played by two players (\eloise and \abelard) moving through positions $(\xi,s)$ where $\xi$ is a subformula of $\varphi$ and $s \in T$.
\begin{table}[h]
\centering
\begin{tabular}{|l|c|l|c|}
  \hline
  Position & Player & Admissible moves\\
  \hline
  $(\psi_1 \vee \psi_2,s)$ & $\exists$ & $\{(\psi_1,s),(\psi_2,s) \}$ \\
  $(\psi_1 \wedge \psi_2,s)$ & $\forall$ & $\{(\psi_1,s),(\psi_2,s) \}$ \\
  $(\Diamond\varphi,s)$ & $\exists$ & $\{(\varphi,t)\ |\ t \in R[s] \}$ \\
  $(\Box\varphi,s)$ & $\forall$ & $\{(\varphi,t)\ |\ t \in R[s] \}$ \\
  $(\mu p.\varphi,s)$ & $-$ & $\{(\varphi,s) \}$ \\
  $(\nu p.\varphi,s)$ & $-$ & $\{(\varphi,s) \}$ \\
  $(p,s)$ with $p$ bound in $\varphi$ & $-$ & $\{(\delta_p,s) \}$ \\
  $(\lnot q,s) \in \prop \times T$, $q$ free in $\varphi$ and $q \notin \tscolors(s)$ & $\forall$ & $\emptyset$\\
  $(\lnot q,s) \in \prop \times T$, $q$ free in $\varphi$ and $q \in \tscolors(s)$ & $\exists$ & $\emptyset$\\
  $(q,s) \in \prop \times T$, $q$ free in $\varphi$ and $q \in \tscolors(s)$ & $\forall$ & $\emptyset$\\
  $(q,s) \in \prop \times T$, $q$ free in $\varphi$ and $q \notin \tscolors(s)$ & $\exists$ & $\emptyset$\\
  \hline
\end{tabular}
\caption{Evaluation game for the modal $\mu$-calculus}
\label{egame_mucalc}
\end{table}
In an arbitrary position $(\xi,s)$ it is useful to think of
\eloise trying to show that $\xi$ is true at $s$, and of \abelard of trying to convince her that $\xi$ is false at $s$. The rules of the evaluation game are given in Table~\ref{egame_mucalc}.
Every finite match of this game is lost by the player that got stuck. To give a winning condition for an infinite match let $p$ be, of the bound variables of $\varphi$ that get unravelled infinitely often, the one that is the highest in the syntactic tree of $\varphi$. The winner of the match is \abelard if $p$ is a $\mu$-variable and \eloise if $p$ is a $\nu$-variable. We say that $\varphi$ is true in $\model$ iff \eloise has a winning strategy in $\egame(\varphi,\model)$.

\bigskip
Formulas of the modal $\mu$-calculus are classified according to their
\emph{alternation depth}, which roughly is given as the maximal length of
a chain of nested alternating least and greatest fixpoint operators~\cite{Niwinski86}.
The \emph{alternation-free fragment} of the modal $\mu$-calculus~($\AFMC$) is the collection of
$\MC$-formulas without nesting of least and greatest fixpoint operators.

\begin{definition}
  Let $\varphi$ be a formula of the modal $\mu$-calculus. We say that $\varphi\in\AFMC$ iff for all subformulas $\mu p.\psi_1$ and $\nu q.\psi_2$ we have that $p$ is not free in $\psi_2$ and $q$ is not free in $\psi_1$.
\end{definition}

It is not difficult to see that, over arbitrary transition systems, this fragment is
less expressive than the whole $\MC$. That is, there is a $\MC$-formula
$\varphi$ such that $\ext{\varphi}$ is not $\AFMC$-definable~\cite{Park79}.

In order to properly define the fragment $\contAFMC \subseteq \AFMC$ which is of critical importance in this article, we are particularly interested in the \emph{continuous} fragment of the modal $\mu$-calculus. As observed in Section~\ref{sec:intro}, the abstract notion of continuity can be given a concrete interpretation in the context of $\mu$-calculus.
\begin{definition}
Let $\varphi \in \MC$, and $q$ be a propositional variable. We say that \emph{$\varphi$ is continuous in $q$} iff for every transition system $\model$ there exists some finite $S \subseteq_\omega \tsval(q)$ such that
$$
\model \mmodels \varphi \quad\text{iff}\quad \model[q \mapsto S] \mmodels \varphi .
$$
\end{definition}

We can give a syntactic characterization of the fragment of $\MC$ that captures this property. Given a set $\qprop$ of propositional variables, we define the fragment of \MC \emph{continuous} in $\qprop$, denoted by $\cont{\MC}{\qprop}$, by induction in the following way
\begin{equation*}
   \varphi ::= q
   \mid \psi
   \mid \varphi \lor \varphi
   \mid \varphi \land \varphi
   \mid \Diamond \varphi
   \mid \mu p.\alpha
\end{equation*}
where $q \in \qprop$, $p \in \prop \setminus \qprop$, $\alpha \in \cont{\MC}{\qprop\cup\{p\}}$, and $\psi$ is a $\qprop$-free $\MC$-formula.

\begin{proposition}[\cite{Fontaine08,FV12}]\label{prop:FVcont}
A $\MC$-formula is continuous in $q$ iff it is equivalent to a formula in the fragment $\cont{\MC}{q}$.
\end{proposition}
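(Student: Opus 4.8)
The plan is to prove the two implications separately: \emph{soundness} (every $\cont{\MC}{q}$-formula is continuous in $q$), by induction on the grammar of the fragment, and \emph{completeness} (every formula continuous in $q$ is equivalent to a fragment formula), which is the harder direction.

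For soundness, I would first record that every formula generated by the grammar of $\cont{\MC}{q}$ is positive, hence monotone, in $q$. For a monotone formula the continuity condition simplifies to a \emph{finite-witness} property: $\model\mmodels\varphi$ implies there is a finite $S\subseteq\tsval(q)$ with $\model[q\mapsto S]\mmodels\varphi$ (the converse implication being automatic from monotonicity, since $S\subseteq\tsval(q)$ gives $\ext{\varphi}^{\model[q\mapsto S]}\subseteq\ext{\varphi}^{\model}$, and the case $\model\not\mmodels\varphi$ is handled by taking $S=\varnothing$). I would then establish this property by induction. The atomic cases ($q$, and a $q$-free $\psi$) and the cases $\varphi_1\lor\varphi_2$, $\varphi_1\land\varphi_2$ and $\Diamond\varphi$ are routine: finitely-witnessable sets are closed under finite unions, and a diamond needs only a single successor, so finiteness is preserved. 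The essential case is $\mu p.\alpha$ with $\alpha\in\cont{\MC}{\qprop\cup\{p\}}$. Here I would use that continuity of $\alpha$ in $p$ makes the least fixpoint \emph{constructive}, i.e.\ $\ext{\mu p.\alpha}^{\model}=\bigcup_{n<\omega}(\alpha^{\model}_{p})^{n}(\varnothing)$, so that membership in the fixpoint is always witnessed at a finite approximant stage $n$. As a function of the interpretation of $q$, each approximant is a finite composition of maps that are continuous in $q$ (by the induction hypothesis on $\alpha$, monotonicity letting one pass to the diagonal), hence itself continuous in $q$; and a countable increasing union of continuous maps stays continuous. This yields continuity of $\mu p.\alpha$ in $q$, and explains why the grammar insists that $\alpha$ be continuous in the \emph{new} variable $p$ as well.

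For completeness, I would start from a $\varphi$ that is continuous, hence monotone, in $q$, so that without loss of generality $\varphi$ is positive in $q$. The strategy is to bring $\varphi$ into a disjunctive (guarded) normal form --- equivalently, to pass to its parity automaton --- and then to read off the syntactic constraints forced by continuity. Semantically, continuity in $q$ means that the truth of $\varphi$ depends on only finitely many $q$-nodes; translated to the evaluation game $\egame(\varphi,\model)$, this should say that in every winning strategy for \eloise the $q$-literals are consulted only along finitely many, finitely deep plays. I would argue that such a \emph{finite use} of $q$ is incompatible with $q$ occurring in the scope of a greatest fixpoint or in the universal ($\Box$-)part of a cover modality, and that after eliminating these occurrences the normal form falls exactly within the grammar of $\cont{\MC}{q}$ (disjunctions, conjunctions, diamonds, and $\mu$-bindings with continuous bodies).

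The main obstacle is this completeness direction, and specifically the step that converts the purely semantic continuity hypothesis into the syntactic confinement of $q$. The clean way to organize it is a \emph{one-step} analysis: characterize which monotone one-step (modal) formulas are continuous in $q$ --- these are exactly the ones in which $q$ occurs only under existential, never universal, modal quantification --- and then \emph{lift} this one-step characterization through the fixpoint alternation of the automaton, in the same spirit as the one-step-to-automata transfer that the rest of this paper develops for $\lque$. Establishing this lifting, together with the accompanying game-theoretic pruning argument showing that finitely many $q$-nodes always suffice, is the technical heart; the remaining bookkeeping --- guardedness, substitution of approximants, and the translation between automata and fragment formulas --- is routine.
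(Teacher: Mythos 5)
The paper offers no proof of Proposition~\ref{prop:FVcont}: it is imported verbatim from Fontaine and Fontaine--Venema \cite{Fontaine08,FV12}, so your attempt can only be measured against that cited work and against the fragments of the argument the paper itself reuses elsewhere (Lemma~\ref{lem:colqueiscont_mu}, Propositions~\ref{prop:constructivity} and~\ref{prop:cor_constructivity}).

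Your soundness direction is essentially correct and matches the standard argument. One remark on the setup: the paper's Definition~2.13 literally states only the finite-witness biconditional, under which even $\neg q$ would qualify as ``continuous in $q$'' (take $S = \{s_I\}$ or $S = \varnothing$ as appropriate), making the proposition false as literally stated; you are right to work instead with the Scott notion from the introduction, which bundles in monotonicity --- exactly as the paper's own one-step definition in Section~\ref{subsec:one-stepcont} does. Given that, your reduction of continuity to the finite-witness property, the routine clauses, and the $\mu p.\alpha$ case via constructivity of the fixpoint (continuity in $p$ forces closure at stage $\omega$), continuity of each finite approximant as a function of $q$ (separate continuity plus monotonicity giving joint continuity on the diagonal), and closure of Scott-continuous maps under pointwise unions, is precisely Fontaine's Lemma~1, which the paper invokes for Lemma~\ref{lem:colqueiscont_mu}.

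The completeness direction, however, is a plan rather than a proof, and the decisive step is missing. Two concrete points. First, ``continuous, hence monotone, hence WLOG positive in $q$'' silently invokes the Lyndon-type theorem for $\MC$ (d'Agostino--Hollenberg); that is a substantial imported result and should be named. Second, and more seriously, the claim that continuity is ``incompatible with $q$ occurring in the scope of a greatest fixpoint or in the universal part of a cover modality'' cannot be read off the given formula: a $q$-continuous formula may perfectly well contain such occurrences vacuously (e.g.\ $\Box q \lor \Diamond \top$, equivalent to $\top$ under the paper's leafless convention). What must actually be proved is an effective \emph{transformation}: starting from the disjunctive normal form or parity automaton for $\varphi$, one needs a one-step continuity characterization (the modal analogue of Theorem~\ref{thm:ofocont} and the translation $(-)^{\tcont}$) together with a lifting of it through the fixpoint structure of the automaton, including the game-theoretic pruning argument that winning strategies for a $q$-continuous formula can be modified to visit only finitely many $q$-positions, and that this modification is reflected by a syntactic translation landing in $\cont{\MC}{q}$ which preserves equivalence \emph{exactly when} $\varphi$ is continuous. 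That lifting is the entire content of the cited proof in \cite{FV12}, and you explicitly defer it (``establishing this lifting \ldots is the technical heart''). So you have identified the correct route --- it is in fact the route of \cite{FV12}, one-step characterization plus automata-theoretic transfer, in the same spirit as Sections~\ref{sec:onestep} and~\ref{sec:char} of this paper --- but as submitted the hard direction remains unestablished.
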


Finally, we define $\contAFMC$ to be the fragment of $\MC$ where the use of the least fixed point operator is restricted to the continuous fragment. Formally, it is defined as follows.

\begin{definition}
Formulas of the fragment $\contAFMC$ are given by the following induction:
\begin{equation*}
   \varphi ::= q \mid \lnot \varphi
    \mid \varphi \lor \varphi
    \mid \Diamond \varphi
    \mid \mu p.\alpha
\end{equation*}
where $p,q \in \prop$, and $\alpha \in \cont{\MC}{p}$.
\end{definition}

\begin{proposition}
Let $\varphi \in \contAFMC$, the following hold
\begin{enumerate}[(1)]
\itemsep 0pt
\item $\varphi$ is an $\AFMC$-formula,
\item Every $\mu$-variable in $\varphi$ is existential (i.e., is only in the scope of diamonds), and dually every $\nu$-variable in $\varphi$ is universal (i.e., is only in the scope of boxes).
\end{enumerate}
\end{proposition}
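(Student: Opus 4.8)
The plan is to prove both parts at once by a simultaneous induction that follows the mutually recursive definitions of $\contAFMC$ and of the continuous fragments $\cont{\MC}{\qprop}$. Since the grammar of $\contAFMC$ produces greatest fixpoints only implicitly, through negation, I would first record the De Morgan dual picture: alongside $\cont{\MC}{\qprop}$ I introduce the \emph{co-continuous} fragment $\cocont{\MC}{\qprop}$, given by the symmetric grammar in which $\Diamond$ is replaced by $\Box$ and $\mu p.\alpha$ by $\nu p.\alpha$ (with $\alpha \in \cocont{\MC}{\qprop\cup\{p\}}$), and I note that negation maps $\cont{\MC}{\qprop}$ into $\cocont{\MC}{\qprop}$ and back. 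Equivalently, one may first drive $\varphi$ to negation normal form, observing that a subformula $\lnot\mu p.\alpha$ becomes $\nu p.\alpha'$ with $\alpha'$ co-continuous in $p$; the $\mu$- and $\nu$-binders mentioned in the proposition then refer to this normal form.

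The statement I would actually carry through the induction is a strengthening: for $\varphi \in \cont{\MC}{\qprop}$ (and dually for $\cocont{\MC}{\qprop}$), $\varphi$ is alternation free, every $\mu$-variable is existential and every $\nu$-variable is universal, \emph{and} moreover every free occurrence of a variable in $\qprop$ lies in the scope of diamonds only (dually, of boxes only). The base cases (atoms and literals) and the Boolean and modal cases are immediate; the point worth isolating is that the only way the continuous grammar exposes a $\qprop$-variable to a modality is through its $\Diamond$ production, boxes being confined to the $\qprop$-free blocks $\psi$, so the scope condition on $\qprop$ is preserved. The heart of the argument is the fixpoint case $\mu p.\alpha$ with $\alpha \in \cont{\MC}{p}$: by the induction hypothesis applied to $\alpha$ (with continuity set $\{p\}$), the formula $\alpha$ is alternation free, its fixpoint variables are correctly polarized, and $p$ occurs only in the scope of diamonds. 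The last fact makes the newly bound variable $p$ existential, which is part (2); and since $p$ never enters a $\qprop$-free block, it cannot be free inside any greatest-fixpoint subformula of $\alpha$, so prefixing $\mu p$ introduces no $\mu/\nu$ alternation and part (1) is maintained. The negation step merely transports the continuous statement to its co-continuous dual, interchanging $\mu$ with $\nu$ and existential with universal, alternation freeness being a self-dual condition.

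The step I expect to be the main obstacle is the treatment of the $\qprop$-free blocks $\psi$, which are the only gateway through which greatest fixpoints (equivalently, after normalisation, $\nu$-binders) enter a continuous formula. Two things must be checked there. First, each $\psi$ must itself be alternation free with correctly polarized variables; this is exactly the induction hypothesis for $\psi$, which applies because $\psi$ is a $\qprop$-free formula of the fragment under construction. Second, and crucially for part (1), because $\psi$ is $\qprop$-free it captures none of the surrounding continuity variables, so no $\mu$-variable bound outside $\psi$ can become free under a $\nu$ inside $\psi$; this $\qprop$-freeness is precisely what seals the absence of alternation across the $\psi$-boundary. Organising the simultaneous induction over the three families $\contAFMC$, $\cont{\MC}{\qprop}$ and $\cocont{\MC}{\qprop}$ so that the negation and $\psi$ bookkeeping closes, and verifying that the invariant ``$p$ occurs only under diamonds'' is genuinely maintained by every production of $\cont{\MC}{p}$, is where the real work lies.
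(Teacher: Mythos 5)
Your proposal is correct and is essentially the paper's own argument in expanded form: the paper disposes of both points by ``an easy induction on the complexity of a formula'' whose entire content is your fixpoint case --- namely that, by the grammar of $\cont{\MC}{q}$, the variable $q$ can occur neither inside a $q$-free block $\psi$ nor (in particular) inside any $\nu$-subformula or under a box, so that $\mu q.\varphi \in \AFMC$ whenever $\varphi \in \cont{\MC}{q} \cap \AFMC$, with the bound variable existential. Your co-continuous dual fragment, the negation-normal-form step, and the strengthened invariant that free $\qprop$-variables sit only under diamonds are precisely the bookkeeping the paper leaves implicit in that one-line induction; note only that, exactly as in the paper's own proof, your treatment of the blocks $\psi$ tacitly reads them as coming from the fragment under construction rather than as arbitrary $\qprop$-free $\MC$-formulas (the reading the definition literally states), since under the literal reading the induction hypothesis would not apply to $\psi$ and both claims would in fact fail.
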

\begin{proof}
Both points are proved by an easy induction on the complexity of a formula. For the first one,  it is enough to notice that if $\varphi \in \cont{\MC}{q} \cap \AFMC$, then $\mu q. \varphi \in \AFMC$ by definition of $\cont{\MC}{q} $.
\end{proof}

As an immediate consequence of Proposition \ref{prop:FVcont} we have the following:

\begin{corollary}\label{cor:cont}
For every $\contAFMC$-formula $\mu p. \varphi$, $\varphi$ is continuous in $p$.
\fznote{can we say also that $\nu p. \varphi$ is cocontinuous in $p$, on the base of the second part of Prop. 2.14(2)?}
\end{corollary}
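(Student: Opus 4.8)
The plan is to read the result directly off the syntactic shape of $\contAFMC$-formulas, combined with the syntactic characterization of continuity supplied by Proposition~\ref{prop:FVcont}. The key observation is that, in the inductive definition of $\contAFMC$, a formula of the form $\mu p.\varphi$ can be produced \emph{only} via the grammar clause $\mu p.\alpha$, whose side condition requires $\alpha \in \cont{\MC}{p}$. Hence the mere fact that $\mu p.\varphi$ is a well-formed $\contAFMC$-formula forces its body $\varphi$ to belong to the fragment $\cont{\MC}{p}$.

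With this in hand, I would invoke the easy (right-to-left) direction of Proposition~\ref{prop:FVcont}: any $\MC$-formula that is equivalent to a member of $\cont{\MC}{p}$ is continuous in $p$. Since $\varphi \in \cont{\MC}{p}$ is trivially equivalent to itself, which is a member of $\cont{\MC}{p}$, we conclude that $\varphi$ is continuous in $p$, as claimed.

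There is essentially no obstacle here; the statement is an immediate corollary, and no genuine calculation is involved. The only point deserving a moment's care is the matching of metavariables: one must verify that the $\varphi$ occurring in the corollary's ``$\mu p.\varphi$'' is exactly the $\alpha$ constrained by the grammar rule, so that the side condition $\alpha \in \cont{\MC}{p}$ really does apply to it. Once that identification is made explicit, Proposition~\ref{prop:FVcont} closes the argument outright, with no further work needed.
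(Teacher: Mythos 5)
Your proposal is correct and matches the paper exactly: the paper gives no separate argument, stating the corollary as an immediate consequence of Proposition~\ref{prop:FVcont}, precisely because the grammar of $\contAFMC$ admits $\mu p.\varphi$ only through the clause $\mu p.\alpha$ with $\alpha \in \cont{\MC}{p}$. Your explicit note that $\varphi$ must coincide with the grammar's $\alpha$, and that the right-to-left direction of Proposition~\ref{prop:FVcont} then applies trivially, is just the intended reading spelled out.
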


\subsection{Bisimulation}
Bisimulation is a notion of behavioral equivalence between processes.
For the case of  transition systems, it is formally defined as follows.

\begin{definition}
Let $\model = \tup{T, R, \tscolors, s_I}$ and
$\model' = \tup{T', R', \tscolors', s'_I}$ be transition systems.
A \emph{bisimulation} is a relation $Z \subseteq T \times T'$
such that for all $(t,t^{\prime}) \in Z$ the following holds:
\begin{description}
  \itemsep 0 pt
  \item[(atom)] $p \in \tscolors(t)$ iff $p \in \tscolors^{\prime}(t')$ for all $p\in\prop$;
  \item[(forth)] for all $s \in R[t]$ there is $s^{\prime} \in R^{\prime}[t^{\prime}]$ such that $(s,s^{\prime}) \in Z$;
  \item[(back)] for all $s^{\prime} \in R^{\prime}[t^{\prime}]$ there is $s \in R[t]$ such that $(s,s^{\prime}) \in Z$.
\end{description}
Two pointed transition systems $\model$ and $\model^{\prime}$ are
\emph{bisimilar} (denoted $\model \bis \model^{\prime}$) if there is a
bisimulation $Z \subseteq T \times T^{\prime}$ containing $(s_I,s'_I)$.
\end{definition}

The following fact about tree unravellings will allow us to provide a proof of
Theorem~\ref{t:m1} by just proving it for tree languages.

\begin{fact}\label{prop:tree_unrav}
$\model$ and $\model^e$ are bisimilar, for every transition system $\model$.
\end{fact}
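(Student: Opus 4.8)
The plan is to exhibit an explicit bisimulation between $\model = \tup{T,R,\tscolors,s_I}$ and its tree unravelling $\unravel{\model} = \tup{T_P,R_P,\tscolors',s_I}$. The natural candidate relates a node of $\model$ to those paths of $\unravel{\model}$ that end in it: define
\[
Z \isdef \compset{ (s,\pi) \in T \times T_P \mid s = \last(\pi) }.
\]
First I would check the initialization. The root of $\unravel{\model}$ is the trivial one-node path $\langle s_I\rangle$, whose last node is $s_I$, so $(s_I,\langle s_I\rangle) \in Z$ and $Z$ contains the pair of distinguished nodes, as required.

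Next I would verify the three clauses of the definition of bisimulation for an arbitrary pair $(s,\pi) \in Z$, i.e. with $s = \last(\pi)$. For \textbf{(atom)}, note that by the definition of the unravelling the colour of a path is the colour of its last node, so $\tscolors'(\pi) = \tscolors(\last(\pi)) = \tscolors(s)$, and hence $p \in \tscolors(s)$ iff $p \in \tscolors'(\pi)$ for every $p \in \prop$. For \textbf{(forth)}, given $u \in R[s]$, I would take the one-step extension $\rho \isdef \pi \cdot u$; since $\last(\pi) = s$ and $s R u$, this $\rho$ is a legitimate path with $R_P(\pi,\rho)$, and $\last(\rho) = u$ gives $(u,\rho) \in Z$. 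For \textbf{(back)}, given $\rho \in R_P[\pi]$, I would use that $\rho$ is a one-step extension $\pi \cdot u$ of $\pi$ with $\last(\pi)\, R\, u$; then $u \in R[s]$ and $\last(\rho) = u$, so again $(u,\rho) \in Z$. Concluding, $Z$ is a bisimulation linking $s_I$ to the root of $\unravel{\model}$, whence $\model \bis \unravel{\model}$.

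This is a routine verification rather than a genuinely hard argument, so I do not expect a real obstacle. The only point requiring a moment of care is the \textbf{(back)} clause, where one must read the edge relation $R_P$ of the unravelling as one-step extension (each successor of a path $\pi$ appends exactly one node $u$ with $\last(\pi)\,R\,u$); under that reading the matching successor in $\model$ is forced and the clause is immediate. Our standing convention that every node has a non-empty set of successors plays no essential role here, since both (forth) and (back) are verified pointwise for each existing edge.
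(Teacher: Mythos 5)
Your proof is correct and is exactly the standard argument; the paper itself states this as a Fact without proof, treating it as folklore, and your last-node relation $Z = \{(s,\pi) \mid s = \last(\pi)\}$ is the canonical witnessing bisimulation. Your reading of $R_P$ as one-step extension is indeed the intended one (otherwise $\unravel{\model}$ would not satisfy the unique-predecessor condition for trees), so the one point of care you flag is resolved correctly.
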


A class of transition systems $\mclass$ is \emph{bisimulation closed} if $\model
\bis \model^{\prime}$ implies that $\model \in \mclass$ iff $\model^{\prime}
\in \mclass$, for all $\model$ and $\model^{\prime}$.
A formula $\varphi \in \llang$ is \emph{bisimulation-invariant} if $\model \bis
\model^{\prime}$ implies that $\model \mmodels \varphi$ iff $\model^{\prime}
\mmodels \varphi$, for all $\model$ and $\model^{\prime}$.

\begin{fact}
Each $\MC$-definable class of transition systems is bisimulation closed.
\end{fact}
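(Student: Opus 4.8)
The plan is to derive this from the stronger statement that every formula $\varphi \in \MC$ is bisimulation-invariant, which is exactly what is needed: if $\mclass$ is $\MC$-definable then $\mclass = \ext{\varphi}$ for some $\varphi$, and $\model \in \mclass$ iff $\model \mmodels \varphi$, so bisimulation-closure of $\mclass$ amounts to bisimulation-invariance of $\varphi$. I would argue through the evaluation game $\egame(\varphi,\model)$ introduced above. Fixing a bisimulation $Z \subseteq T \times T'$ between $\model = \tup{T,R,\tscolors,s_I}$ and $\model' = \tup{T',R',\tscolors',s_I'}$ with $(s_I,s_I') \in Z$, the goal is to show that $\eloise$ has a winning strategy in $\egame(\varphi,\model)@(\varphi,s_I)$ iff she has one in $\egame(\varphi,\model')@(\varphi,s_I')$; since parity games are determined (Fact~\ref{THM_posDet_ParityGames}), this yields $\model \mmodels \varphi$ iff $\model' \mmodels \varphi$.

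The heart of the proof is a strategy-transfer by lockstep play. Suppose $\eloise$ wins the $\model$-game via a strategy $f$; I would define a strategy $f'$ for her in the $\model'$-game by maintaining, alongside each $f'$-guided real match in $\model'$, a shadow $f$-guided match in $\model$, under the invariant that whenever the real match sits at $(\xi,s')$ and the shadow match at $(\xi,s)$, the subformula components agree and $(s,s') \in Z$. This invariant propagates through every rule of Table~\ref{egame_mucalc}: at the Boolean, fixpoint-unfolding and variable positions the state is left untouched, so $(s,s') \in Z$ persists; at a diamond position $(\Diamond\xi,s')$ the successor $t \in R[s]$ prescribed by $f$ in the shadow is transferred, via the \textbf{(forth)} clause, to a witness $t' \in R'[s']$ with $(t,t') \in Z$; and at a box position $(\Box\xi,s')$, after $\abelard$ selects some $t' \in R'[s']$, the \textbf{(back)} clause supplies a shadow successor $t \in R[s]$ with $(t,t') \in Z$ against which $f$ continues to prescribe moves.

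It then remains to verify that $f'$ is winning. For a finite match this is immediate from the \textbf{(atom)} clause: an atomic terminal $(q,s')$ or $(\lnot q,s')$ is won by the same player as its shadow, because $q \in \tscolors'(s')$ iff $q \in \tscolors(s)$. For an infinite match the decisive observation is that the lockstep construction forces the sequences of subformula components of the real and shadow matches to coincide; hence exactly the same bound variables are unfolded infinitely often, the highest among them is the same, and the parity winner agrees. Thus $\eloise$ wins every $f'$-guided total match. Running the same construction on the converse relation $Z^{-1}$, which is a bisimulation from $\model'$ to $\model$, yields the opposite implication and hence the full equivalence.

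I expect the main obstacle to be bookkeeping rather than mathematical depth: one must check that the shadow match stays legitimate and $f$-guided throughout, so that $f$'s winning guarantee actually applies to it, and that matching the subformula sequences genuinely forces the same parity outcome in the infinite case. A minor point is that $f'$ as constructed is history-dependent, since it consults the shadow match; this is harmless, as the existence of any winning strategy already certifies a win and, by Fact~\ref{THM_posDet_ParityGames}, coincides with the existence of a positional one.
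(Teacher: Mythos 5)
Your proposal is correct. Note that the paper itself offers no proof of this Fact: it is stated as standard folklore (bisimulation invariance of $\MC$), so there is no ``paper argument'' to match against. Your lockstep strategy-transfer through the evaluation game $\egame(\varphi,\model)$ is the standard way to prove it, and it fits the paper particularly well because the paper takes the game-theoretic semantics as its working definition of truth; the alternative classical route --- induction on the structure of $\varphi$, handling $\mu p.\psi$ via ordinal approximants of the fixpoint --- would require more machinery than the paper sets up. All the delicate points are handled: the invariant $(s,s')\in Z$ propagates through every row of Table~\ref{egame_mucalc} (with \textbf{(forth)} at $\Diamond$, \textbf{(back)} at $\Box$, and identity of the formula component elsewhere), stuckness and atomic wins agree by \textbf{(atom)} (and note that \textbf{(forth)}/\textbf{(back)} even force $R[s]=\emptyset$ iff $R'[s']=\emptyset$, so the paper's leafless convention is not essential here), and equality of the formula-component sequences forces the same highest infinitely-unfolded variable, hence the same winner of infinite matches. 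Two small remarks: the appeal to Fact~\ref{THM_posDet_ParityGames} is superfluous, since the two directions of the equivalence follow directly from running your construction on $Z$ and on $Z^{-1}$ (the truth of $\varphi$ is \emph{defined} via \'Eloise having a winning strategy, so no determinacy is needed); and your observation that the history-dependence of $f'$ is harmless is accurate for the same reason.
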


\section{One-step logics, normal forms and continuity}\label{sec:onestep}
  Given a set of names $A$ we use $\llang_1(A)$ to denote an arbitrary one-step language based on $A$.
  One-step formulas are interpreted over unary (or one-step) models $\osmodel = (D,\val:A\to\wp D)$.
  We use $\umods(A)$ to denote the class of all one-step models based on the set of names $A$.

In this section we define the one-steps logics that we use in the rest of the article, namely: one-step weak monadic second-order logic ($\owmso$), one-step first-order logic with and without equality ($\ofoe$, $\ofo$) and one-step first-order logic with the $\qu$ quantifier ($\olque$). The main theorems of this section prove normal forms for these logics and give syntactical characterizations of the monotonic and continuous fragments that we use in later sections.

\begin{definition}
Let $\fovar$ be a set of (individual) variables. 
Given a set of names $A$ we define the set $\owmso(A)$ of one-step weak monadic second-order formulas as the \emph{sentences} given by the following grammar
%
\begin{align*}
\varphi \ ::=  & \ a(x)
\mid x \foeq y
\mid \neg \varphi
 \mid \varphi \lor \varphi
\mid \exists x.\varphi
\mid \exists a.\varphi 
\end{align*}
where $x,y \in \fovar$, 
$a \in A$.
\end{definition}

\begin{definition}
The set $\ofoe(A)$ of one-step first-order sentences (with equality) is given by the sentences formed by
$$
\varphi ::= a(x)
\mid x \foeq y
\mid \neg \varphi
\mid \varphi \lor \varphi
\mid \exists x.\varphi
$$
where $x,y\in \fovar$, $a \in A$. The one-step logic $\ofo(A)$ is as $\ofoe(A)$ but without equality.
The set $\olque(A)$ of one-step first-order sentences with generalized quantifier $\qu$ (with equality)
is defined analogously by just adding the clauses $\qu x. \varphi$ and $\dqu x. \varphi$.
\end{definition}

Without loss of generality, from now on we always assume that every bounded variable occurring in a sentence is bounded by an unique quantifier (generalized or not).
%
%
Recall that given a one-step logic $\llang_1$ we write $\llang_1^+(A)$ to denote the fragment
where every predicate $a\in A$ occurs only positively. The following observation will allow us to
work with the (one-sorted) language $\ofoe$ instead of the (two-sorted) language $\owmso$.

\begin{fact}[\cite{vaananen77}]
$\owmso(A) \equiv \olque(A)$.
\end{fact}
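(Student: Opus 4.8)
The plan is to establish the two inclusions between definable classes separately. Since any single sentence mentions only finitely many names, I may assume throughout that $A$ is finite, so that the set $\Pi := \wp(A)$ of \emph{types} is finite: an element $d$ of a one-step model $\osmodel = (D,\val)$ has type $\pi \subseteq A$ iff $\{a : d \in \val(a)\} = \pi$. For $\pi \in \Pi$ write $\tau_\pi(x) := \bigwedge_{a \in \pi} a(x) \wedge \bigwedge_{a \in A \setminus \pi} \neg a(x)$, and let $n_\pi$ denote the number of elements of $D$ of type $\pi$, an element of $\nat \cup \{\infty\}$. The observation driving both directions is that, over monadic one-step models, everything reduces to \emph{counting} how many elements realise each type, with only the finite thresholds and the finite/infinite distinction being visible to either logic.

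For the inclusion $\olque(A) \subseteq \owmso(A)$ I would give a direct translation. First-order logic with equality is a common fragment of both, so it suffices to express the quantifier $\qu$ (and dually $\dqu$). For this I use that, since the weak second-order quantifier ranges over \emph{finite} subsets, a set is finite iff it is contained in some finite set; hence, for any already-translated $\varphi$,
\[
\qu x.\, \varphi \;\equiv\; \neg \exists S.\, \forall x.\, (\varphi \to S(x)),
\]
where $S$ is an auxiliary name bound by the weak quantifier; only bound occurrences of $S$ appear, so this does not affect which class of $A$-models is defined. Indeed the right-hand side asserts that the set of witnesses of $\varphi$ is not contained in any finite set, i.e.\ is infinite. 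The case of $\dqu$ is symmetric, and the remaining first-order clauses translate identically.

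For the converse inclusion $\owmso(A) \subseteq \olque(A)$ --- the substantive direction --- I would prove a \emph{counting normal form}: every $\owmso(A)$-sentence is equivalent, over all one-step models, to a Boolean combination of the atoms ``at least $k$ elements have type $\pi$'' and ``infinitely many elements have type $\pi$'', for $\pi \in \Pi$ and $k \in \nat$. Both atoms lie in $\olque(A)$ --- the first as $\exists x_1 \cdots x_k\,(\bigwedge_{i<j}\neg(x_i \foeq x_j) \wedge \bigwedge_i \tau_\pi(x_i))$ and the second as $\qu x.\,\tau_\pi(x)$ --- so the normal form immediately yields $\olque(A)$-definability. The proof is by induction on formulas, carried out for formulas with free first-order variables whose normal form additionally records the types of and the equalities among the free variables; the Boolean and first-order-quantifier cases are the standard Ehrenfeucht--Fra\"iss\'e type-counting computation for monadic first-order logic.

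The main obstacle is the inductive step for the weak second-order quantifier $\exists a.\,\varphi$, which I would handle as follows. Over a model interpreting the names $A \setminus \{a\}$, choosing a \emph{finite} value for $a$ amounts to choosing, for each type $\pi \in \wp(A \setminus \{a\})$, a finite number $m_\pi \leq n_\pi$ of type-$\pi$ elements to place into $a$; these split $\pi$ into the refined types $\pi \cup \{a\}$ (count $m_\pi$) and $\pi$ (count $n_\pi - m_\pi$). Applying the induction hypothesis to $\varphi$ and using the finiteness of $a$, every ``infinitely many of refined type $\pi \cup \{a\}$'' atom becomes \emph{false}, while ``infinitely many of refined type $\pi$'' collapses to the atom $n_\pi = \infty$; thus $\varphi$ becomes a Boolean combination of the constraints $m_\pi \geq k$, $\ n_\pi - m_\pi \geq k$, and $n_\pi = \infty$. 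Since constraints for different $\pi$ involve the disjoint unknowns $m_\pi$, putting $\varphi$ in disjunctive normal form lets the existential over the finite tuple $(m_\pi)_\pi$ distribute into a disjunction whose conjuncts consist of literals that mention only $n_\pi$ (already counting atoms) together with a per-type solvability question: does there exist a finite $m_\pi$ with $k^\ast \leq m_\pi$ and $n_\pi - m_\pi \geq l^\ast$? Elementary arithmetic shows this is solvable iff $n_\pi \geq k^\ast + l^\ast$ (reading $\infty$ as exceeding every threshold), which is itself a threshold/infinity atom on $n_\pi$. Hence $\exists a.\,\varphi$ is again a Boolean combination of counting atoms, closing the induction and completing the proof.
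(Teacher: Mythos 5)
The paper gives no proof of this Fact at all---it is imported from V\"a\"an\"anen's work by citation---so your proposal is not paralleling a proof in the text but supplying one. Both of your directions are sound. The translation $\qu x.\varphi \equiv \neg\exists S.\forall x.(\varphi\to S(x))$ is the standard reduction (modulo the mild bookkeeping point that the bound set-name $S$ must be fresh, which the paper's convention on bound letters permits). For the substantive direction, note that your ``counting normal form'' is essentially the basic form the paper itself proves for $\olque$ in Theorem~\ref{thm:bfolque}, obtained there via the relation $\sim^\infty_k$ and Ehrenfeucht--Fra\"iss\'e games; you re-derive it by direct induction on $\owmso$-formulas instead. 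The genuinely new content of your argument is the elimination step for the weak second-order quantifier, reducing $\exists a.\varphi$ to per-type arithmetic in the unknowns $m_\pi$, with the finiteness of $a$ killing every ``infinitely many of type $\pi\cup\{a\}$'' atom. This buys a self-contained and effective proof where the paper has only a pointer to the literature, and it makes visible exactly where weakness of the quantifier is used.

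Two steps need patching, though neither threatens the structure of the argument. First, after putting the inductive normal form of $\varphi$ in disjunctive normal form, \emph{negated} counting atoms survive as upper-bound literals $m_\pi < k$ and $n_\pi - m_\pi < l$; your per-type solvability question only quantifies lower bounds $k^\ast \leq m_\pi$, $n_\pi - m_\pi \geq l^\ast$, and the criterion $n_\pi \geq k^\ast + l^\ast$ is incomplete for the general system. The correct condition asks whether the interval constraints $k^\ast \leq m_\pi \leq k^{\ast\ast}$ and $l^\ast \leq n_\pi - m_\pi \leq l^{\ast\ast}$ are simultaneously satisfiable by a finite $m_\pi$, which is again a Boolean combination of threshold and infinity atoms on $n_\pi$ (for instance, a finite upper bound $l^{\ast\ast}$ forces $n_\pi$ finite with $n_\pi \leq k^{\ast\ast} + l^{\ast\ast}$). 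Second, since $\exists a$ may occur below first-order quantifiers, the elimination must also branch, by a finite disjunction, on which equality-classes of the free individual variables are placed into $a$, updating their recorded types accordingly; your normal form records this data but your $\exists a$ step does not act on it. Both repairs are elementary arithmetic within your method, so I would classify the proposal as correct in approach with routine gaps, rather than flawed.
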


In the following subsections we provide a detailed model theoretic analysis of the one-step logics that we use in this article, specifically, we give
\begin{itemize}
	\itemsep 0 pt
	\item Normal forms for arbitrary formulas of $\ofo$, $\ofoe$ and $\olque$.
	\item Strong forms of syntactic characterizations for the monotone and continuous fragments of several of the mentioned logics. Namely, for $\llang_1 \in \{\ofo,\ofoe,\olque\}$ we provide
		\begin{enumerate}[(a)]
			\item A fragment $\monot{\llang_1}{a}$ and a translation $(-)^\tmono:\llang_1(A)\to\monot{\llang_1}{a}(A)$ such that for every $\varphi \in\llang_1$ we have $\varphi\equiv\varphi^\tmono$ iff $\varphi$ is monotone in $a \in A$,
		\end{enumerate}
		for $\llang_1 \in \{\ofo,\olque\}$ we provide
		\begin{enumerate}[(a)]
			\item[(b)] A fragment $\cont{\llang_1}{a}$ and a translation $(-)^\tcont:\llang_1(A)\to\cont{\llang_1}{a}(A)$ such that for every $\varphi \in\llang_1$ we have $\varphi\equiv\varphi^\tcont$ iff $\varphi$ is continuous in $a \in A$.\fcwarning{The real cont-translation is from $\monot{\llang_1}{a}(A)$}
		\end{enumerate}
		Moreover, we show that the latter translation also restricts to the fragment $\llang^+_1$, i.e.,
		\begin{enumerate}[(a)]
			\item[(c)] The restriction $(-)^\tcont_+:\llang^+_1(A)\to\cont{\llang^+_1}{a}(A)$ of $(-)^\tcont$ is such that for every $\varphi \in\llang^+_1$ we have $\varphi\equiv\varphi^\tcont_+$ iff $\varphi$ is continuous in $a \in A$.
		\end{enumerate}
	\item Syntactic characterizations of the co-continuous fragments of $\ofo$ and $\olque$.
	\item Normal forms for the monotone and continuous fragments.
\end{itemize}

\subsection{Normal forms}\label{subsec:normalforms}

Given a set of names $A$ and $S \subseteq A$, we introduce the notation
$$
  \tau_{S}(x) := \bigwedge_{a\in S}a(x) \land \bigwedge_{a\in A\setminus S}\lnot a(x).
$$
The formula $\tau_{S}(x)$ is called an \emph{$A$-type}, we usually blur the distinction between $\tau_{S}(x)$ and $S$ and call $S$ an $A$-type as well.
A \emph{positive} $A$-type is defined as $\tau_{S}^+(x) := \bigwedge_{a\in S}a(x)$.
We use the convention that, if $S$ is the empty set, then $\tau_S^+(x)$ is $\top$ and we call it an \emph{empty} positive $A$-type.
Given a one-step model $\osmodel$ we use $|S|_\osmodel$ to denote the number of elements that realize the $A$-type $\tau_S$ in $\osmodel$. Formally, it is defined as $|S|_\osmodel := |\{d\in |\osmodel| : \osmodel \models \tau_S(d) \}|$.

A \emph{partial isomorphism} between two one-step models $\osmodel = (D,\val)$ and $\osmodel' = (D',\val')$ is a \emph{partial} function $f: D \to D'$ which is injective and satisfies $d \in \val(a) \Leftrightarrow f(d) \in \val'(a)$ for all $a\in A$ and $ d\in \Dom(f)$. Given two sequences $[d_1,\dots,d_k] \in D^k$ and $[d'_1,\dots,d'_k] \in {D'}^k$ 
we use
$f: [d_1,\dots,d_k] \mapsto [d'_1,\dots,d'_k]$ to denote the partial function $f:D\to D'$ defined as $f(d_i) := d'_i$. If there exist $d_i,d_j$ such that $d_i = d_j$ but $d'_i \neq d'_j$ then the result is undefined.


\begin{definition}
The quantifier rank $qr(\varphi)$ of $\varphi \in \olque$ (hence also for $\ofo$ and $\ofoe$) is defined as follows
\begin{itemize}
	\itemsep 0 pt
	\item If $\varphi$ is atomic $qr(\varphi) = 0$,
	\item If $\varphi = \lnot\psi$ then $qr(\varphi) = qr(\psi)$,
	\item If $\varphi = \psi_1 \land \psi_2$ or $\varphi = \psi_1 \lor \psi_2$ then $qr(\varphi) = \max\{qr(\psi_1),qr(\psi_2)\}$,
	\item If $\varphi = Qx.\psi$ for $Q \in \{\exists,\forall,\qu,\dqu\}$ then $qr(\varphi) = 1+qr(\psi)$.
\end{itemize}
Given a one-step logic $\llang_1$ we write $\osmodel \equiv_k^{\llang} \osmodel'$ to indicate that the one-step models $\osmodel$ and $\osmodel'$ satisfy exactly the same formulas $\varphi \in \llang_1$ with $qr(\varphi) \leq k$. The logic $\llang$ will be omitted when it is clear from context.
\end{definition}

\subsubsection{Normal form for $\ofo$}

We start by stating a normal form for one-step first-order logic without equality. A formula in \emph{basic form} gives a complete description of the types that are satisfied in a one-step model.

\begin{definition}
A formula $\varphi \in \ofo(A)$ is in \emph{basic form} if $\varphi = \bigvee \dbnfofo{\Sigma}$
where each disjunct is of the form
$$
\dbnfofo{\Sigma} = \bigwedge_{S\in\Sigma} \exists x. \tau_S(x) \land \forall x. \bigvee_{S\in\Sigma} \tau_S(x)
$$
for some set of types $\Sigma\subseteq \wp A$.
\end{definition}

It is easy to prove, using Ehrenfeucht-Fra\"iss\'e games, that every formula of first-order logic without equality over a unary signature (i.e., $\ofo$) is equivalent to a formula in basic form. Proof sketches can be found in~\cite[Lemma 16.23]{ALG02} and~\cite[Proposition 4.14]{Venxx}. We omit a full proof because it is very similar to the following cases.

\begin{fact}
Every formula of $\ofo(A)$ is equivalent to a formula in basic form.
\end{fact}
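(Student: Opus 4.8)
The plan is to show that every $\ofo(A)$-formula is equivalent to one in basic form by an Ehrenfeucht--Fra\"iss\'e argument. The key observation is that $\ofo$ is first-order logic over a purely \emph{unary} signature \emph{without equality}, so the only thing a formula can detect about a one-step model $\osmodel = (D,\val)$ is \emph{which} $A$-types are realized, i.e.\ the set
\[
  \Sigma_{\osmodel} \isdef \{ S \sse A \mid \text{there is } d \in D \text{ with } \osmodel \models \tau_S(d) \}.
\]
First I would make precise the claim that $\Sigma_{\osmodel}$ is a complete invariant for $\ofo$-equivalence: namely, $\Sigma_{\osmodel} = \Sigma_{\osmodel'}$ implies $\osmodel \equiv^{\ofo} \osmodel'$ (full elementary equivalence in $\ofo$, not merely up to some quantifier rank). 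This is exactly the kind of statement proved by a back-and-forth / Ehrenfeucht--Fra\"iss\'e game, and it is where the absence of equality is crucial: since no formula can assert that two witnesses are distinct, duplicator never needs to keep track of whether spoiler reuses an element, only of the type of each chosen point.

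The main step is the EF-game argument. I would set up the $k$-round game on $(\osmodel, \osmodel')$ and describe duplicator's strategy: whenever spoiler picks an element $d$ of some type $S$ in one model, duplicator responds with \emph{any} element $d'$ of the same type $S$ in the other model; such an element exists precisely because $\Sigma_{\osmodel} = \Sigma_{\osmodel'}$. One then checks that after each round the chosen tuples induce a partial isomorphism in the sense defined above (matching unary predicates), and—because equality is not in the language—that this suffices for duplicator to win. The standard Ehrenfeucht--Fra\"iss\'e theorem then yields $\osmodel \equiv^{\ofo}_k \osmodel'$ for every $k$, hence $\osmodel \equiv^{\ofo} \osmodel'$.

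Given the invariant, the translation into basic form is essentially a bookkeeping step. For a fixed set of types $\Sigma \sse \wp A$, the sentence $\dbnfofo{\Sigma} = \bigwedge_{S \in \Sigma} \exists x.\, \tau_S(x) \land \forall x.\, \bigvee_{S \in \Sigma} \tau_S(x)$ is true in $\osmodel$ exactly when $\Sigma_{\osmodel} = \Sigma$: the first conjunct forces every type in $\Sigma$ to be realized, and the second forbids any type outside $\Sigma$. Thus the disjuncts $\dbnfofo{\Sigma}$ partition all one-step models according to their type-set, and any $\ofo$-sentence $\varphi$ is equivalent to the disjunction $\bigvee \{ \dbnfofo{\Sigma} \mid \Sigma \sse \wp A,\ \osmodel \models \varphi \text{ for some (equivalently all) } \osmodel \text{ with } \Sigma_{\osmodel} = \Sigma \}$. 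By the invariant, whether $\varphi$ holds depends only on $\Sigma_{\osmodel}$, so this disjunction is well-defined and equivalent to $\varphi$. Since $A$ is finite there are only finitely many candidate $\Sigma$, so the disjunction is a genuine finite $\ofo(A)$-formula in basic form.

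The main obstacle is the EF-game / back-and-forth argument, and specifically making clear why the no-equality restriction lets duplicator get away with matching \emph{only} types and ignoring collisions. Everything else—the semantics of $\dbnfofo{\Sigma}$, finiteness of the disjunction—is routine. Since the authors explicitly note that this proof is very similar to the equality and $\qu$ cases treated in detail later, I would keep the EF-argument brief and refer forward, as they do.
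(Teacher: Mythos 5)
Your proof is correct and takes exactly the route the paper indicates for this Fact: an Ehrenfeucht--Fra\"iss\'e argument showing that the set of realized $A$-types is a complete invariant for $\ofo$-equivalence, followed by the finite disjunction of the sentences $\dbnfofo{\Sigma}$ over the relevant type-sets, which is the same template the paper carries out in detail for $\ofoe$ (via $\sim^=_k$ and the characteristic formulas $\varphi^=_E$) and for $\olque$. Your observation that, in the absence of equality, no rank stratification or counting ``modulo $k$'' is needed --- so the type-set already yields full elementary equivalence --- is precisely the expected simplification of that template.
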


\subsubsection{Normal form for $\ofoe$}

When considering a normal form for $\ofoe$, the fact that we can `count types' using equality yields a more involved basic form.

\begin{definition}
A formula $\varphi \in \ofoe(A)$ is in \emph{basic form} if $\varphi = \bigvee \dbnfofoe{\vlist{T}}{\Pi}$ where each disjunct is of the form
$$
\dbnfofoe{\vlist{T}}{\Pi} = \exists \vlist{x}.\big(\arediff{\vlist{x}} \land \bigwedge_i \tau_{T_i}(x_i) \land \forall z.(\arediff{\vlist{x},z} \lthen \bigvee_{S\in \Pi} \tau_S(z))\big)
$$
for some set of types $\Pi \subseteq \wp A$ and where each $T_i \subseteq A$. The predicate $\arediff{\vlist{y}}$, which states that the given elements are different, is defined as $\arediff{y_1,\dots,y_n} := \bigwedge_{1\leq m < m^{\prime} <n} (y_m \not\approx y_{m^{\prime}})$.
\end{definition}

We prove that every formula of monadic first-order logic with equality (i.e., $\ofoe$) is equivalent to a formula in basic form. This result seems to be folklore, however, we give a detailed proof because some of its ingredients are used in the following section, when giving a normal form for $\olque$. We start by defining the following relation between one-step models.

\begin{definition}
	Let $\osmodel$ and $\osmodel'$ be one-step models. For every $k \in \nat$ we define 
\begin{eqnarray*}
	\osmodel \sim^=_k \osmodel' & \Longleftrightarrow & \forall S\subseteq A \ \big(
	   |S|_\osmodel = |S|_{\osmodel'} < k \\
	&& \qquad\qquad \text{or } |S|_\osmodel,|S|_{\osmodel'} \geq k \big)
\end{eqnarray*}
\end{definition}

Intuitively, two models are related by $\sim^=_k$ when their type information coincides `modulo~$k$'. Later we will prove that this is the same as saying that they cannot be distinguished by a formula of $\ofoe$ with quantifier rank lower or equal to $k$. For the moment, we prove the following properties of $\sim^=_k$.

\begin{proposition}\label{prop:eqrelofoe} The following hold
	\begin{enumerate}[(i)]
		\itemsep 0 pt
		\item $\sim^=_k$ is an equivalence relation,
		\item $\sim^=_k$ has finite index,
		\item Every $E \in \umods/{\sim^=_k}$ is characterized by a formula $\varphi^=_E \in \ofoe(A)$ with $qr(\varphi^=_E) = k$.
	\end{enumerate}
\end{proposition}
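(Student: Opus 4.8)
The plan is to derive (i) and (ii) at once by exhibiting $\sim^=_k$ as the kernel of a map into a finite set. For a one-step model $\osmodel$ and a type $S \subseteq A$, let $t_k(|S|_\osmodel)$ denote $|S|_\osmodel$ if $|S|_\osmodel < k$ and $k$ otherwise, so that $t_k$ truncates the (possibly infinite) cardinality $|S|_\osmodel$ at $k$. Reading off the definition, $\osmodel \sim^=_k \osmodel'$ holds exactly when $t_k(|S|_\osmodel) = t_k(|S|_{\osmodel'})$ for every $S \subseteq A$; that is, $\sim^=_k$ is the kernel of the assignment $\osmodel \mapsto \big(S \mapsto t_k(|S|_\osmodel)\big)$ sending a model to its \emph{truncated type profile} in $\{0,\dots,k\}^{\wp A}$. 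Since the kernel of any function is an equivalence relation, this yields (i) immediately; and since $A$ is finite, the codomain $\{0,\dots,k\}^{\wp A}$ has only $(k+1)^{2^{|A|}}$ elements, so $\sim^=_k$ has at most that many classes, giving (ii).

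For (iii) I would first fix, for every type $S \subseteq A$ and every $m \in \nat$, the counting sentence
\[
  \lambda^{\ge m}_S := \exists x_1 \dots x_m \Big( \arediff{x_1,\dots,x_m} \wedge \bigwedge_{i=1}^{m} \tau_S(x_i) \Big),
\]
which lies in $\ofoe(A)$, has quantifier rank $m$, and---since distinct types are mutually exclusive---expresses that at least $m$ elements realize $S$. Now fix a class $E \in \umods/{\sim^=_k}$ and let $f : \wp A \to \{0,\dots,k\}$ be the truncated type profile common to all members of $E$. For each $S$ I set $\chi_S := \lambda^{\ge f(S)}_S \wedge \neg\,\lambda^{\ge f(S)+1}_S$ when $f(S) < k$ (the statement ``exactly $f(S)$ realizations of $S$'') and $\chi_S := \lambda^{\ge k}_S$ when $f(S) = k$, and define $\varphi^=_E := \bigwedge_{S \subseteq A} \chi_S$.

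The correctness claim $\osmodel \models \varphi^=_E \iff \osmodel \in E$ is then routine: a model satisfies $\chi_S$ precisely when $t_k(|S|_\osmodel) = f(S)$, so it satisfies the whole conjunction precisely when its truncated profile equals $f$, which by definition of $E$ means $\osmodel \in E$. The one delicate point, and the step I expect to require the most care, is the quantifier-rank bookkeeping: each conjunct has rank at most $k$ (the clause $\lambda^{\ge f(S)+1}_S$ has rank $f(S)+1 \le k$, and $\lambda^{\ge k}_S$ has rank $k$), so $qr(\varphi^=_E) \le k$ always. To secure equality I would observe that whenever some $f(S) = k$ the bound $k$ is already attained, while in the remaining case $E$ consists solely of finite models having fewer than $k$ realizations of each type, for which I would pad $\varphi^=_E$ with a rank-$k$ validity such as $\forall x_1 \dots x_k.\,(x_1 \approx x_1)$; this leaves the defined class unchanged but raises the rank to exactly $k$.
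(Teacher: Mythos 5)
Your proposal is correct and takes essentially the same route as the paper: points (i) and (ii) are immediate (the paper does not even spell them out), and for (iii) the paper likewise builds $\varphi^=_E$ as a conjunction over types of counting formulas saying ``exactly $n_i$ realizations of $\tau_{S_i}$'' when $n_i < k$ and ``at least $k$ realizations'' otherwise, merely expressing ``exactly'' by the universal clause $\forall z.(\arediff{x_1,\dots,x_{n_i},z} \lthen \lnot\tau_{S_i}(z))$ where you use $\neg\lambda^{\ge f(S)+1}_S$. If anything you are slightly more careful than the paper, which asserts $qr(\varphi^=_E)=k$ as ``easy to see'' although its construction in general only guarantees $qr(\varphi^=_E)\le k$ (which is all the subsequent normal-form argument needs); your padding with a rank-$k$ validity cleanly secures the exact equality.
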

\begin{proof}
	We only prove the last point. Let $E \in \umods/{\sim^=_k}$ and let $\osmodel \in E$ be a representative. Call $S_1,\dots,S_n \subseteq A$ to the types such that $|S_i|_\osmodel = n_i < k$ and $S'_1,\dots,S'_m \subseteq A$ to those satisfying $|S'_i|_\osmodel \geq k$. Now define
	\begin{align*}
	\varphi^=_E :=& \bigwedge_{i\leq n} \big(\exists x_1,\dots,x_{n_i}.\arediff{x_1,\dots,x_{n_i}} \land \bigwedge_{j\leq n_i} \tau_{S_i}(x_j) \land \forall z. \arediff{x_1,\dots,x_{n_i},z} \lthen \lnot\tau_{S_i}(z)\big)\ \land \\
	            & \bigwedge_{i\leq m} \big(\exists x_1,\dots,x_k.\arediff{x_1,\dots,x_k} \land \bigwedge_{j\leq k} \tau_{S'_i}(x_j) \big)
	\end{align*}
	It is easy to see that $qr(\varphi^=_E) = k$ and that $\osmodel' \models \varphi^=_E$ iff $\osmodel' \in E$. Observe that $\varphi^=_E$ gives a specification of $E$ ``type by type''.
\end{proof}

In the following definition we recall a (standard) notion of Ehrenfeucht-Fra\"iss\'e game for $\ofoe$ which will be used to establish the connection between ${\sim^=_k}$ and $\equiv_k^\foe$.

\begin{definition}
	Let $\osmodel_0 = (D_0,\val_0)$ and $\osmodel_1 = (D_1,\val_1)$ be one-step models. We define the game $\efgame^=_k(\osmodel_0,\osmodel_1)$ between \abelard and \eloise. If $\osmodel_i$ is one of the models we use $\osmodel_{-i}$ to denote the other model. A position in this game is a pair of sequences $\vlist{s_0} \in D_0^n$ and $\vlist{s_1} \in D_1^n$ with $n \leq k$. The game consists of $k$ rounds where in round $n+1$ the following steps are made
	\begin{enumerate}[1.]
		\itemsep 0 pt
		\parsep 0 pt
		\item \abelard chooses an element $d_i$ in one of the $\osmodel_i$,
		\item \eloise responds with an element $d_{-i}$ in the model $\osmodel_{-i}$.
		\item Let $\vlist{s_i} \in D_i^n$ be the sequences of elements chosen up to round $n$, they are extended to ${\vlist{s_i}' := \vlist{s_i}\cdot d_i}$. Player \eloise survives the round iff she does not get stuck and the function $f_{n+1}: \vlist{s_0}' \mapsto \vlist{s_1}'$ is a partial isomorphism of one-step models.
	\end{enumerate}
	Player \eloise wins iff she can survive all $k$ rounds.
	%
	%
	Given $n\leq k$ and $\vlist{s_i} \in D_i^n$ such that $f_n:\vlist{s_0}\mapsto\vlist{s_1}$ is a partial isomorphism, we use $\efgame_{k}^=(\osmodel_0,\osmodel_1)@(\vlist{s_0},\vlist{s_1})$ to denote the (initialized) game where $n$ moves have been played and $k-n$ moves are left to be played.
\end{definition}

\begin{lemma}\label{lem:connofoe}
	The following are equivalent
	\begin{enumerate}
		\itemsep 0 pt
		\item\label{lem:connofoe:i} $\osmodel_0 \equiv_k^\foe \osmodel_1$,
		\item\label{lem:connofoe:ii} $\osmodel_0 \sim_k^= \osmodel_1$,
		\item\label{lem:connofoe:iii} \eloise has a winning strategy in $\efgame_k^=(\osmodel_0,\osmodel_1)$.
	\end{enumerate}
\end{lemma}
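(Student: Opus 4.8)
The plan is to establish the cycle of implications $(\ref{lem:connofoe:i}) \Rightarrow (\ref{lem:connofoe:ii}) \Rightarrow (\ref{lem:connofoe:iii}) \Rightarrow (\ref{lem:connofoe:i})$. The first implication is immediate from the characteristic formulas of Proposition~\ref{prop:eqrelofoe}, the last is the routine ``soundness'' direction of the Ehrenfeucht-Fra\"iss\'e method, and the middle one --- exhibiting a winning strategy for \eloise out of the counting information recorded by $\sim^=_k$ --- is where the real work lies.

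For $(\ref{lem:connofoe:i}) \Rightarrow (\ref{lem:connofoe:ii})$ I would argue as follows. Let $E \in \umods/{\sim^=_k}$ be the $\sim^=_k$-class of $\osmodel_0$, and let $\varphi^=_E$ be its characteristic formula from Proposition~\ref{prop:eqrelofoe}(iii), which has $qr(\varphi^=_E) = k$. Since $\osmodel_0 \models \varphi^=_E$ and $\osmodel_0 \equiv_k^\foe \osmodel_1$, also $\osmodel_1 \models \varphi^=_E$; as $\varphi^=_E$ characterizes $E$ this gives $\osmodel_1 \in E$, i.e. $\osmodel_0 \sim^=_k \osmodel_1$.

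For $(\ref{lem:connofoe:iii}) \Rightarrow (\ref{lem:connofoe:i})$ I would prove the standard strengthening by induction on $m$: if \eloise wins $\efgame^=_m(\osmodel_0,\osmodel_1)@(\vlist{s_0},\vlist{s_1})$ (so $\vlist{s_0} \mapsto \vlist{s_1}$ is already a partial isomorphism and $m$ rounds remain), then $\osmodel_0 \models \psi[\vlist{s_0}]$ iff $\osmodel_1 \models \psi[\vlist{s_1}]$ for every $\foe$-formula $\psi$ with $qr(\psi) \le m$. The base case $m=0$ holds because a quantifier-free $\psi$ is a Boolean combination of atoms $a(x)$ and equalities, whose truth values on $\vlist{s_0}$ and $\vlist{s_1}$ agree since $\vlist{s_0}\mapsto\vlist{s_1}$ is a partial isomorphism. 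In the inductive step the Boolean cases are immediate from the induction hypothesis, so the only interesting case is $\psi = \exists y.\chi$ with $qr(\chi) \le m-1$: a witness in one model is played by \abelard, \eloise's winning strategy supplies a matching element in the other preserving the partial isomorphism, and the induction hypothesis applied to the resulting $(m-1)$-round position closes the case. Taking $m=k$ and empty tuples yields $\osmodel_0 \equiv_k^\foe \osmodel_1$.

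The heart of the argument is $(\ref{lem:connofoe:ii}) \Rightarrow (\ref{lem:connofoe:iii})$. Assuming $\osmodel_0 \sim^=_k \osmodel_1$, I would let \eloise play the natural ``copy the type'' strategy and maintain the invariant that after each round the chosen elements of $\osmodel_0$ and $\osmodel_1$ are matched by a type-preserving bijection; in particular, for every type $S$ the numbers of already-chosen elements of type $S$ agree in the two models. Concretely, when \abelard selects an element $d$ of type $S$ in $\osmodel_i$ at round $n \le k$, \eloise answers with the existing partner if $d$ was chosen before, and otherwise with a \emph{fresh} element of type $S$ in $\osmodel_{-i}$; this keeps $f_{n+1}$ injective and type-preserving, hence a partial isomorphism. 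The one point to check --- and the main obstacle --- is that such a fresh element always exists. At this point at most $u \le n-1 \le k-1$ elements of type $S$ have been used in each model (equal counts, by the invariant), and \abelard's move witnesses $|S|_{\osmodel_i} > u$. If $|S|_{\osmodel_0}=|S|_{\osmodel_1}<k$ then $|S|_{\osmodel_{-i}} = |S|_{\osmodel_i} > u$, and if both counts are $\ge k$ then $|S|_{\osmodel_{-i}} \ge k > u$; in either case a type-$S$ element of $\osmodel_{-i}$ remains unused, so \eloise survives the round. Since she survives all $k$ rounds, this is a winning strategy, completing the cycle.
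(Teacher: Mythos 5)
Your proposal is correct and follows essentially the same route as the paper: the cycle $(1)\Rightarrow(2)\Rightarrow(3)\Rightarrow(1)$, with $(1)\Rightarrow(2)$ via the characteristic formulas $\varphi^=_E$ of Proposition~\ref{prop:eqrelofoe}, $(2)\Rightarrow(3)$ via the same round-survival claim (fresh-witness counting split into the cases $|S|<k$ with equal counts and $|S|\geq k > $ number of used elements), and $(3)\Rightarrow(1)$ via the standard loaded induction on quantifier rank. Your explicit invariant that the chosen elements form a type-preserving bijection is just a slightly more verbose phrasing of what the paper extracts from the injectivity of the partial isomorphism $f_n$.
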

\begin{proof}
	Step~(\ref{lem:connofoe:i}) to~(\ref{lem:connofoe:ii}) is direct by Proposition~\ref{prop:eqrelofoe}. For~(\ref{lem:connofoe:ii}) to~(\ref{lem:connofoe:iii}) we give a winning strategy for \eloise in $\efgame_k^=(\osmodel_0,\osmodel_1)$. We do it by showing the following claim
	\begin{claimfirst}
	Let $\osmodel_0 \sim_k^= \osmodel_1$ and $\vlist{s_i} \in D_i^n$ be such that $n<k$ and $f_n:\vlist{s_0}\mapsto\vlist{s_1}$ is a partial isomorphism; then \eloise can survive one more round in $\efgame_{k}^=(\osmodel_0,\osmodel_1)@(\vlist{s_0},\vlist{s_1})$.
	\end{claimfirst}
	\begin{pfclaim}
		Let \abelard pick $d_i\in D_i$ such that the type of $d_i$ is $T \subseteq A$. If $d_i$ had already been played then \eloise picks the same element as before and $f_{n+1} = f_n$. If $d_i$ is new and $|T|_{\osmodel_i} \geq k$ then, as at most $n<k$ elements have been played, there is always some new $d_{-i} \in D_{-i}$ that \eloise can choose that matches $d_i$. If $|T|_{\osmodel_i} = m < k$ then we know that $|T|_{\osmodel_{-i}} = m$. Therefore, as $d_i$ is new and $f_n$ is injective, there must be a $d_{-i} \in D_{-i}$ that \eloise can choose. 
	\end{pfclaim}
	
	Step~(\ref{lem:connofoe:iii}) to~(\ref{lem:connofoe:i}) is a standard result~\cite[Corollary 2.2.9]{fmt} which we prove anyway because we will need to extend it later. We prove the following loaded statement
	\begin{claim}
		Let $\vlist{s_i} \in D_i^n$ and $\varphi(z_1,\dots,z_n) \in \ofoe(A)$ be such that $qr(\varphi) \leq k-n$. If \eloise has a winning strategy in $\efgame_k^=(\osmodel_0,\osmodel_1)@(\vlist{s_0},\vlist{s_1})$ then $\osmodel_0 \models \varphi(\vlist{s_0})$ iff $\osmodel_1 \models \varphi(\vlist{s_1})$.
	\end{claim}
	\begin{pfclaim}
		If $\varphi$ is atomic the claim holds because of $f_n:\vlist{s_0}\mapsto \vlist{s_1}$ being a partial isomorphism. Boolean cases are straightforward.
		Let $\varphi(z_1,\dots,z_n) = \exists x. \psi(z_1,\dots,z_n,x)$ and suppose $\osmodel_0 \models \varphi(\vlist{s_0})$. Hence, there exists $d_0 \in D_0$ such that $\osmodel_0 \models \psi(\vlist{s_0},d_0)$.
		By hypothesis we know that \eloise has a winning strategy for $\efgame_k^=(\osmodel_0,\osmodel_1)@(\vlist{s_0},\vlist{s_1})$. Therefore, if \abelard picks $d_0\in D_0$ she can respond with some $d_1\in D_1$ and has a winning strategy for $\efgame_{k}^=(\osmodel_0,\osmodel_1)@(\vlist{s_0}{\cdot}d_0,\vlist{s_1}{\cdot}d_1)$.
		By induction hypothesis, because $qr(\psi) \leq k- (n+1)$, we have that $\osmodel_0 \models \psi(\vlist{s_0},d_0)$ iff $\osmodel_1 \models \psi(\vlist{s_1},d_1)$ and hence $\osmodel_1 \models \exists x.\psi(\vlist{s_1},x)$. The other direction is symmetric. 
		\end{pfclaim}
		Combining these claims finishes the proof of the lemma.
\end{proof}

\begin{theorem}\label{thm:bnfofoe}
Every $\psi \in \ofoe(A)$ is equivalent to a formula in basic form.
\end{theorem}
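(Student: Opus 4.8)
The plan is to reduce the statement to the model-theoretic analysis already carried out for $\sim^=_k$. I would fix $\psi \in \ofoe(A)$ and set $k := qr(\psi)$. By the equivalence of items~(\ref{lem:connofoe:i}) and~(\ref{lem:connofoe:ii}) in Lemma~\ref{lem:connofoe}, any two one-step models with $\osmodel_0 \sim^=_k \osmodel_1$ satisfy exactly the same $\ofoe(A)$-sentences of quantifier rank at most $k$, and in particular they agree on $\psi$. Hence $\ext{\psi}$ is a union of $\sim^=_k$-equivalence classes, which by Proposition~\ref{prop:eqrelofoe}(ii) is a \emph{finite} union. Consequently it suffices to show that each single class $E \in \umods/{\sim^=_k}$ is defined by one basic-form disjunct $\dbnfofoe{\vlist{T}}{\Pi}$: for then $\psi$ is equivalent to the finite disjunction of the basic-form disjuncts defining those classes $E$ with $E \subseteq \ext{\psi}$, which is by definition a formula in basic form. (If $\psi$ is unsatisfiable this disjunction is empty, and one takes any fixed unsatisfiable basic-form formula instead.)

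For the key step I would fix a class $E$ together with a representative $\osmodel \in E$. For every type $S \subseteq A$ set $n_S := \min(|S|_{\osmodel},k)$, and let $\Pi := \{ S \subseteq A \mid |S|_{\osmodel} \geq k \}$ be the collection of ``large'' types. Choose $\vlist{T} = [T_1,\dots,T_n]$ to be any enumeration that lists each type $S$ exactly $n_S$ times. The claim to prove is that the disjunct $\dbnfofoe{\vlist{T}}{\Pi}$ defines $E$, that is, $\osmodel' \models \dbnfofoe{\vlist{T}}{\Pi}$ iff $\osmodel' \sim^=_k \osmodel$.

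To verify this I would simply read off the two conjuncts of $\dbnfofoe{\vlist{T}}{\Pi}$. The existential block $\exists\vlist{x}.(\arediff{\vlist{x}} \land \bigwedge_i \tau_{T_i}(x_i))$ guarantees, for each $S$, at least $n_S$ pairwise distinct witnesses of type $S$; the universal block $\forall z.(\arediff{\vlist{x},z} \lthen \bigvee_{S\in\Pi}\tau_S(z))$ forces every element outside these witnesses to realize a type in $\Pi$, i.e. a type whose $\osmodel$-count is at least $k$. Thus in any $\osmodel' \models \dbnfofoe{\vlist{T}}{\Pi}$ a small type $S \notin \Pi$ admits no elements beyond its $n_S = |S|_{\osmodel}$ witnesses, whence $|S|_{\osmodel'} = |S|_{\osmodel} < k$, while a large type $S \in \Pi$ retains its $k$ witnesses, whence $|S|_{\osmodel'} \geq k$; taken together these are exactly the conditions $\osmodel' \sim^=_k \osmodel$. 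For the converse, given $\osmodel' \sim^=_k \osmodel$ one picks $n_S$ distinct elements of each type $S$ as witnesses (available since $|S|_{\osmodel'}$ equals $|S|_{\osmodel}$ for small $S$ and is $\geq k$ for large $S$); any remaining element must then carry a large type, so the universal block holds and $\osmodel' \models \dbnfofoe{\vlist{T}}{\Pi}$.

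The only delicate point, and hence the main obstacle, is getting the interaction between $\vlist{T}$ and $\Pi$ right inside a \emph{single} disjunct: the universal conjunct must simultaneously cap each small type at its exact multiplicity $n_S$ (by keeping it out of $\Pi$) and leave each large type free to grow above the threshold $k$ (by putting it into $\Pi$), while $\vlist{T}$ must supply precisely the witnesses realizing the lower bounds. Once this bookkeeping is set up, everything else rests directly on the Ehrenfeucht--Fra\"iss\'e analysis of Lemma~\ref{lem:connofoe} and the finite-index property of Proposition~\ref{prop:eqrelofoe}.
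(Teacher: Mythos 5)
Your proof is correct and takes essentially the same route as the paper's: both reduce the statement, via Lemma~\ref{lem:connofoe} and the finite-index property of Proposition~\ref{prop:eqrelofoe}, to writing $\psi$ as a finite disjunction over $\sim^=_k$-classes, and both define each class by a single disjunct $\dbnfofoe{\vlist{T}}{\Pi}$ where $\vlist{T}$ lists each small type with its exact multiplicity and each large type $k$ times, with $\Pi$ the set of large types. The only (cosmetic) difference is that you verify directly that this disjunct defines the class, whereas the paper routes through the characteristic formula $\varphi^=_E$ and declares that equivalence straightforward.
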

\begin{proof}
	Let $qr(\psi) = k$ and let $\ext{\psi}$ be the models satisfying $\psi$. As $\umods/{\equiv_k^\foe}$ is the same as $\umods/{\sim_k^=}$ by Lemma~\ref{lem:connofoe}, it is easy to see that $\psi \equiv \bigvee \{ \varphi^=_E \mid E \in \ext{\psi}/{\sim_k^=} \}$. Now it only remains to see that each $\varphi^=_E$ is equivalent to $\dbnfofoe{\vlist{T}}{\Pi}$ for some $\Pi \subseteq \wp A$ and $T_i \subseteq A$.

	The crucial observation is that we will use $\vlist{T}$ and $\Pi$ to give a specification of the types ``element by element''. Let $\osmodel \in E$ be a representative. Call $S_1,\dots,S_n \subseteq A$ to the types such that $|S_i|_\osmodel = n_i < k$ and $S'_1,\dots,S'_m \subseteq A$ to those satisfying $|S'_i|_\osmodel \geq k$. The size of the sequence $\vlist{T}$ is defined to be $(\sum_{i=1}^n n_i) + k\times m$ where $\vlist{T}$ is contains exactly $n_i$ occurrences of type $T_i$ and $k$ occurrences of each $S'_j$. On the other hand $\Pi = \{S'_1,\dots,S'_m\}$. It is straightforward to check that $\varphi^=_E$ is equivalent to $\dbnfofoe{\vlist{T}}{\Pi}$, however, the quantifier rank of the latter is only bounded by $k\times 2^{|A|} + 1$.
\end{proof}

\subsubsection{Normal form for $\olque$}

The logic $\olque$ basically extends $\ofoe$ with the capacity to tear apart finite and infinite sets of elements. This is reflected in the normal form for $\olque$ by adding extra constraints to the normal form of $\ofoe$.

\begin{definition}\label{def:basicform_folque}
A formula $\varphi \in \olque(A)$ is in \emph{basic form} if $\varphi = \bigvee \dbnfolque{\vlist{T}}{\Pi}{\Sigma}$ where each disjunct is of the form
$$
\dbnfolque{\vlist{T}}{\Pi}{\Sigma} = \dbnfofoe{\vlist{T}}{\Pi \cup \Sigma} \land \dbnfinf{\Sigma}
$$
where
$$
\dbnfinf{\Sigma} := \bigwedge_{S\in\Sigma} \qu y.\tau_S(y) \land \dqu y.\bigvee_{S\in\Sigma} \tau_S(y)
$$
for some set of types $\Pi,\Sigma \subseteq \wp A$ and each $T_i \subseteq A$.
\end{definition}

A small argument reveals that, intuitively, every disjunct expresses that each one-step model satisfying it admits a partition of its domain in three parts:
\begin{enumerate}[(i)]
\itemsep 0 pt
\item Distinct elements $t_1,\dots,t_n$ with type $T_1,\dots,T_n$,
\item Finitely many elements whose types belong to $\Pi$, and
\item For each $S\in \Sigma$, infinitely many elements with type $S$.
\end{enumerate}

In the same way as before, we define a relation $\sim^\infty_k$ which can be clearly seen to be a refinement of $\sim^=_{k}$ which adds information about the (in-)finiteness of the types.

\begin{definition}
	Let $\osmodel$ and $\osmodel'$ be one-step models. For every $k\in\nat$ we define 
\begin{eqnarray*}
	\osmodel \sim^\infty_0 \osmodel' & \Longleftrightarrow & \text{always}\\
	\osmodel \sim^\infty_{k+1} \osmodel' & \Longleftrightarrow & \osmodel \sim^=_{k+1} \osmodel' \text{ and }\\
	&& \forall S\subseteq A \ \big(
	|S|_\osmodel,|S|_{\osmodel'} < \omega \text{ or } |S|_\osmodel,|S|_{\osmodel'} \geq \omega \big)
\end{eqnarray*}
\end{definition}

\begin{proposition}\label{prop:eqrelolque} The following hold
	\begin{enumerate}[(i)]
		\itemsep 0 pt
		\item $\sim^\infty_k$ is an equivalence relation,
		\item $\sim^\infty_k$ has finite index,
		\item $\sim^\infty_k$ is a refinement of $\sim^=_k$,
		\item Every $E \in \umods/{\sim^\infty_k}$ is characterized by a formula $\varphi^\infty_E \in \olque(A)$ with $qr(\varphi) = k$.
	\end{enumerate}
\end{proposition}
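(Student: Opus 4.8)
The plan is to prove Proposition~\ref{prop:eqrelolque} in close analogy with the corresponding result for $\ofoe$, namely Proposition~\ref{prop:eqrelofoe}, since $\sim^\infty_k$ is designed as a refinement of $\sim^=_k$ that additionally records finiteness information per type. Points (i)--(iii) are routine, so I would dispatch them quickly. For (i), reflexivity, symmetry and transitivity of $\sim^\infty_k$ follow immediately from the fact that it is defined as a conjunction of $\sim^=_k$ (already an equivalence relation by Proposition~\ref{prop:eqrelofoe}) together with the condition that, for every type $S\subseteq A$, the finiteness status (whether $|S|_\osmodel<\omega$ or $|S|_\osmodel\geq\omega$) agrees across the two models; this extra condition is plainly an equivalence relation since ``lying in the same one of the two cells $\{<\omega\}$, $\{\geq\omega\}$'' is. For (iii), the refinement claim is immediate from the definition, as $\osmodel\sim^\infty_{k+1}\osmodel'$ is by fiat a conjunct containing $\osmodel\sim^=_{k+1}\osmodel'$ (and for $k=0$ the relation $\sim^\infty_0$ is total, matching $\sim^=_0$).

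For (ii), finite index, I would argue as follows. An $\sim^\infty_k$-class is determined by the following finite data: for each of the $2^{|A|}$ types $S$, one records the value $\min(|S|_\osmodel,k)$ (a number in $\{0,1,\dots,k\}$), which captures the $\sim^=_k$-information, together with one bit recording whether $|S|_\osmodel$ is finite or infinite. Since $A$ is finite, there are only finitely many types, and each contributes finitely many possible values, so the total number of equivalence classes is bounded by $\big((k+1)\cdot 2\big)^{2^{|A|}}$. (One should note the consistency constraint that an infinite type has $\min(|S|_\osmodel,k)=k$, but this only reduces the count.) Alternatively, and perhaps more cleanly, (ii) follows from (iv) once we observe that each class is characterized by a single formula of quantifier rank $k$ and that there are only finitely many $\olque(A)$-formulas of quantifier rank at most $k$ up to logical equivalence; but proving this directly by counting data is more elementary.

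The substantive point is (iv): constructing, for each class $E\in\umods/{\sim^\infty_k}$, a characteristic formula $\varphi^\infty_E\in\olque(A)$ of quantifier rank $k$. I would mirror the construction of $\varphi^=_E$ in Proposition~\ref{prop:eqrelofoe}, fixing a representative $\osmodel\in E$ and splitting the types into three groups according to $\osmodel$: the types $S$ with $|S|_\osmodel=m<k$ and $|S|_\osmodel<\omega$ (finite and below threshold), the types $S$ with $k\le|S|_\osmodel<\omega$ (finite but at or above threshold), and the types $S$ with $|S|_\osmodel\geq\omega$ (infinite). For each type in the first group I reuse the exact-count conjunct from $\varphi^=_E$ asserting there are precisely $m$ elements of that type, which has quantifier rank $m+1\le k$. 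For each infinite type I add the conjunct $\qu y.\,\tau_S(y)$, asserting infinitely many elements of type $S$, which has quantifier rank~$1$. For each finite type in the second group, I must assert ``at least $k$ but only finitely many''; the ``at least $k$'' part is the rank-$k$ conjunct already present in $\varphi^=_E$, and ``finitely many'' is expressed by $\lnot\qu y.\,\tau_S(y)$, equivalently $\dqu y.\,\lnot\tau_S(y)$, of quantifier rank~$1$. Taking the conjunction of all these conjuncts yields a formula whose quantifier rank is the maximum over the conjuncts, which is $k$ (attained by the exact-count or at-least-$k$ conjuncts). By construction $\osmodel'\models\varphi^\infty_E$ precisely when $\osmodel'$ has the same per-type counts modulo $k$ and the same per-type finiteness status as $\osmodel$, i.e.\ exactly when $\osmodel'\sim^\infty_k\osmodel$, so $\varphi^\infty_E$ characterizes $E$.

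The main obstacle is purely bookkeeping: keeping the quantifier rank exactly at $k$ while simultaneously encoding the finiteness bits via $\qu$ and $\dqu$ (each of rank~$1$) without letting them inflate the rank beyond the rank-$k$ counting conjuncts, and making sure the three-way case split on types is exhaustive and consistent (e.g.\ that an infinite type is automatically at or above any finite threshold, so no conflicting conjuncts are generated). I expect no conceptual difficulty beyond what is already handled in Proposition~\ref{prop:eqrelofoe}; the infinity conjuncts only add rank-$1$ clauses and hence do not disturb the rank count. As with Proposition~\ref{prop:eqrelofoe}, I would prove only point (iv) in full and leave (i)--(iii) to the reader as routine.
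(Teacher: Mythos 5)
Your proposal is correct and matches the paper's proof in essence: the paper also dispatches (i)--(iii) as routine and proves only (iv), defining $\varphi^\infty_E$ as the conjunction of the $\sim^=_k$-characteristic formula $\varphi^=_{E'}$ with $\qu x.\tau_{S}(x)$ for each infinite type $S$ and $\dqu x.\lnot\tau_{S'}(x)$ for each finite type $S'$, observing as you do that these rank-$1$ conjuncts leave the quantifier rank at $k$. The only (immaterial) difference is bookkeeping: the paper attaches the finiteness conjunct to \emph{every} finite type rather than using your three-way split that omits it for below-threshold types where the exact-count conjunct already implies it.
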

\begin{proof}
	We only prove the last point, for $k>0$. Let $E \in \umods/{\sim^\infty_k}$ and let $\osmodel \in E$ be a representative of the class. Let $E' \in \umods/{\sim^=_k}$ be the eqivalence class of $\osmodel$ with respect to $\sim^=_k$.
	Call $S_1,\dots,S_n \subseteq A$ to the types such that $|S_i|_\osmodel \geq \omega$ and $S'_1,\dots,S'_m \subseteq A$ to those satisfying $|S'_i|_\osmodel < \omega$. Now define
	$$
	\varphi^\infty_E := \varphi^=_{E'}\ \land \\
		\bigwedge_{i\leq n} \qu x.\tau_{S_i}(x) \land \bigwedge_{i\leq m} \dqu x.\lnot\tau_{S'_i}(x) .
	$$
	It is easy to see that $qr(\varphi^\infty_E) = k$ and that $\osmodel' \models \varphi^\infty_E$ iff $\osmodel' \in E$. 
\end{proof}

Now we give a notion of Ehrenfeucht-Fra\"ss\'e game for $\olque$. In this case the game extends $\efgame^=_k$ with a move for $\qu$.

\begin{definition}
	Let $\osmodel_0 = (D_0,\val_0)$ and $\osmodel_1 = (D_1,\val_1)$ be one-step models. We define the game $\efgame^\infty_k(\osmodel_0,\osmodel_1)$ between \abelard and \eloise. A position in this game is a pair of sequences $\vlist{s_0} \in D_0^n$ and $\vlist{s_1} \in D_1^n$ with $n \leq k$. The game consists of $k$ rounds where in round $n+1$ the following steps are made. First \abelard chooses to perform one of the following types of moves:
	\begin{enumerate}[(a)]
		\itemsep 0 pt
		\parsep 0 pt
		\item Second-order move
		\begin{enumerate}[1.]
			\itemsep 0 pt
			\parsep 0 pt
			\item \abelard chooses an infinite set $X_i \subseteq D_i$,
			\item \eloise responds with an infinite set $X_{-i} \subseteq D_{-i}$,
			\item \abelard chooses an element $x_{-i} \in X_{-i}$,
			\item \eloise responds with an element $x_i \in X_i$.
		\end{enumerate}
		\item First-order move
		\begin{enumerate}[1.]
			\itemsep 0 pt
			\parsep 0 pt
			\item \abelard chooses an element $d_i \in D_i$,
			\item \eloise responds with an element $d_{-i} \in D_{-i}$.
		\end{enumerate}
	\end{enumerate}
	Let $\vlist{s_i} \in D_i^n$ be the sequences of elements chosen up to round $n$, they are extended to ${\vlist{s_i}' := \vlist{s_i}\cdot d_i}$. \eloise survives the round iff she does not get stuck and the function $f_{n+1}: \vlist{s_0}' \mapsto \vlist{s_1}'$ is a partial isomorphism of one-step models.
\end{definition}

This game can be seen as an adaptation of the Ehrenfeucht-Fra\"ss\'e game for monotone generalized quantifiers found in~\cite{Kolaitis199523} to the case of full monadic first-order logic. 

\begin{lemma}\label{lem:connolque}
	The following are equivalent
	\begin{enumerate}
		\itemsep 0 pt
		\item\label{lem:connolque:i} $\osmodel_0 \equiv_k^{\lque} \osmodel_1$,
		\item\label{lem:connolque:ii} $\osmodel_0 \sim_k^\infty \osmodel_1$,
		\item\label{lem:connolque:iii} \eloise has a winning strategy in $\efgame_k^\infty(\osmodel_0,\osmodel_1)$.
	\end{enumerate}
\end{lemma}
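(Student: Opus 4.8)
The plan is to follow verbatim the three-step cycle used for Lemma~\ref{lem:connofoe}, replacing $\sim^=_k$ by $\sim^\infty_k$, $\efgame^=_k$ by $\efgame_k^\infty$, and $\ofoe$ by $\olque$; the only genuinely new work concerns the generalized quantifier $\qu$ and the corresponding second-order move of the game. For the implication (\ref{lem:connolque:i})$\Rightarrow$(\ref{lem:connolque:ii}) I would argue exactly as before: by Proposition~\ref{prop:eqrelolque} each class $E \in \umods/{\sim^\infty_k}$ is characterised by a single formula $\varphi^\infty_E \in \olque(A)$ of quantifier rank $k$, so if $\osmodel_0 \equiv_k^{\lque} \osmodel_1$ then $\osmodel_1$ must satisfy the characterising formula of the $\sim^\infty_k$-class of $\osmodel_0$, whence $\osmodel_0 \sim^\infty_k \osmodel_1$.

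For (\ref{lem:connolque:ii})$\Rightarrow$(\ref{lem:connolque:iii}) I would prove, in analogy with the Claim in Lemma~\ref{lem:connofoe}, that whenever $\osmodel_0 \sim^\infty_k \osmodel_1$ and $f_n \colon \vlist{s_0} \mapsto \vlist{s_1}$ is a partial isomorphism with $n < k$, then \eloise can survive one more round of $\efgame_k^\infty(\osmodel_0,\osmodel_1)@(\vlist{s_0},\vlist{s_1})$; iterating this claim then yields a full winning strategy. Since $\sim^\infty_k$ refines $\sim^=_k$ (Proposition~\ref{prop:eqrelolque}(iii)), a \emph{first-order} move of \abelard is answered exactly as in the $\ofoe$ case, using only the type counts modulo $k$. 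The new case is a \emph{second-order} move in which \abelard plays an infinite set $X_i \subseteq D_i$. Here I would use the pigeonhole principle: as $A$ has only finitely many types, some type $T$ is realised by infinitely many elements of $X_i$, so $|T|_{\osmodel_i} \geq \omega$; by the infinite/finite clause of $\sim^\infty_k$ we get $|T|_{\osmodel_{-i}} \geq \omega$ as well. \eloise then answers with an infinite set $X_{-i}$ consisting of type-$T$ elements of $\osmodel_{-i}$ not yet played (infinitely many remain, as only $n$ elements have been used). Whatever $x_{-i} \in X_{-i}$ \abelard selects, it is a fresh element of type $T$, and \eloise replies with a fresh $x_i \in X_i$ of type $T$ (again available since infinitely many type-$T$ elements lie in $X_i$); both new elements carry the same type and are distinct from all earlier ones, so $f_{n+1}$ remains a partial isomorphism.

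For (\ref{lem:connolque:iii})$\Rightarrow$(\ref{lem:connolque:i}) I would extend the loaded induction of Lemma~\ref{lem:connofoe}: assuming \eloise wins $\efgame_k^\infty(\osmodel_0,\osmodel_1)@(\vlist{s_0},\vlist{s_1})$, I show for every $\varphi(z_1,\dots,z_n) \in \olque(A)$ with $qr(\varphi) \leq k-n$ that $\osmodel_0 \models \varphi(\vlist{s_0})$ iff $\osmodel_1 \models \varphi(\vlist{s_1})$. The atomic, Boolean and $\exists$ cases are handled precisely as before. For the new case $\varphi = \qu x.\psi(\vlist{z},x)$, suppose $\osmodel_0 \models \qu x.\psi(\vlist{s_0},x)$, so that $X_0 := \{ d \in D_0 \mid \osmodel_0 \models \psi(\vlist{s_0},d) \}$ is infinite. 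Let \abelard open a second-order move with this $X_0$ and let $X_1 \subseteq D_1$ be \eloise's (infinite) reply according to her winning strategy. I claim every $x_1 \in X_1$ satisfies $\psi(\vlist{s_1},x_1)$: if \abelard now picks such an $x_1 \in X_1$, \eloise's winning strategy supplies $x_0 \in X_0$ with $f_{n+1} \colon \vlist{s_0}{\cdot}x_0 \mapsto \vlist{s_1}{\cdot}x_1$ a partial isomorphism and a remaining winning strategy; since $qr(\psi) \leq k-(n+1)$ and $x_0 \in X_0$, the induction hypothesis gives $\osmodel_1 \models \psi(\vlist{s_1},x_1)$. As $X_1$ is infinite, $\osmodel_1 \models \qu x.\psi(\vlist{s_1},x)$; the converse direction is symmetric. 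The quantifier $\dqu$ is handled by duality, since $\dqu x.\psi \equiv \lnot\qu x.\lnot\psi$ has the same quantifier rank and the negation case is already available.

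I expect the main obstacle to be exactly the interface between the $\qu$ quantifier and the second-order move of $\efgame_k^\infty$: one must check that \eloise's control over the responding infinite set $X_{-i}$ (in the strategy direction) and \abelard's subsequent freedom to probe $X_{-i}$ (in the truth-transfer direction) together force the counts of infinitely-realised types to agree, which is precisely what the infinite/finite clause of $\sim^\infty_k$ records. The finiteness of the type space $\wp A$, which underwrites the pigeonhole step, is what makes this work for the unary signature.
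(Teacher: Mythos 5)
Your proposal is correct and follows the paper's overall architecture exactly: the cycle (\ref{lem:connolque:i})$\Rightarrow$(\ref{lem:connolque:ii})$\Rightarrow$(\ref{lem:connolque:iii})$\Rightarrow$(\ref{lem:connolque:i}), with Proposition~\ref{prop:eqrelolque} for the first step, a one-more-round survival claim for the second, and the loaded induction for the third. Where you genuinely diverge is in \eloise's response to \abelard's second-order move. The paper has her construct a response set $X_{-i}$ satisfying three conditions --- compatibility with the partial isomorphism $f_n$, the equivalence $\osmodel_0{\rest}X_0 \sim^=_{m+1} \osmodel_1{\rest}X_1$ between the restricted models, and infiniteness --- and then delegates the subsequent element exchange to the first claim of Lemma~\ref{lem:connofoe} applied to $\osmodel_0{\rest}X_0$ and $\osmodel_1{\rest}X_1$. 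You instead have her answer with a homogeneous set: fresh elements of a single type $T$ that is infinitely realized in $X_i$ (pigeonhole over the finitely many $A$-types), with $|T|_{\osmodel_{-i}} \geq \omega$ guaranteed by the infinite/finite clause of $\sim^\infty_k$. This simplification is sound, for three reasons you implicitly rely on and could make explicit: the protocol extracts only a \emph{single} element from each of $X_i$ and $X_{-i}$ before the round closes; the sets themselves are forgotten afterwards, so the invariant carried between rounds is only the partial isomorphism together with the unchanging global relation $\sim^\infty_k$; and since \eloise picks $X_{-i}$ herself, shrinking it to one type merely restricts \abelard's options. What the paper's heavier construction buys is modularity (the $\ofoe$ survival claim is reused as a black box) and robustness (it would survive a stronger protocol in which several elements are exchanged within the chosen sets); what yours buys is brevity and the observation that the game as actually defined demands far less of \eloise. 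Your explicit treatment of $\dqu$ via $\dqu x.\psi \equiv \lnot\qu x.\lnot\psi$ is also fine --- the paper leaves it implicit --- since negation preserves quantifier rank, so the rewriting stays within the rank budget of the induction.
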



\begin{proof}
	Step~(\ref{lem:connolque:i}) to~(\ref{lem:connolque:ii}) is direct by Proposition~\ref{prop:eqrelolque}. For~(\ref{lem:connolque:ii}) to~(\ref{lem:connolque:iii}) we show
	\begin{claimfirst}
	Let $\osmodel_0 \sim_k^\infty \osmodel_1$ and $\vlist{s_i} \in D_i^n$ be such that $n<k$ and $f_n:\vlist{s_0}\mapsto\vlist{s_1}$ is a partial isomorphism; then \eloise can survive one more round in $\efgame_{k}^\infty(\osmodel_0,\osmodel_1)@(\vlist{s_0},\vlist{s_1})$.
	\end{claimfirst}
	\begin{pfclaim}
		We focus on the second-order moves because the first-order moves are the same as in the corresponding Claim of Lemma~\ref{lem:connofoe}. Let \abelard choose an infinite set $X_i \subseteq D_i$, we would like \eloise to choose a set $X_{-i} \subseteq D_{-i}$ such that the following conditions hold:
		\begin{enumerate}[(a)]
			\parskip 0pt
			\item\label{it:piso} Let $\vlist{r_j} := \vlist{s_j}\cap X_j$ be the restriction of $\vlist{s_j}$ to the elements of $X_j$. We want $f':\vlist{r_0}\mapsto\vlist{r_1}$ to be a well-defined partial isomorphism between the restricted models $\osmodel_0{\rest}X_0$ and $\osmodel_1{\rest}X_1$,
			%
			\item\label{it:equiv} $\osmodel_0{\rest}X_0 \sim_{m+1}^= \osmodel_1{\rest}X_1$ where $m = |r_i|$. That is, for all $S \subseteq A$,
			\begin{enumerate}[(i)]
				\itemsep 0pt
				\item if $|S|_{X_i} < m+1$ then $|S|_{X_{-i}} = |S|_{X_i}$,
				\item if $|S|_{X_i} \geq m+1$ then $|S|_{X_{-i}} \geq m+1$,
			\end{enumerate}
			\item\label{it:inf} $X_{-i}$ is infinite.
		\end{enumerate}
		First we prove that such a set exists and then we will use it to prove the claim.

		To satisfy item~\ref{it:piso} she just needs to add to $X_{-i}$ the elements connected to $X_i$ by $f_n$; this is not a problem. For item~\ref{it:equiv} the only hinder is that, while adding elements to $X_{-i}$, we cannot make use of those elements of $D_{-i}$ which are in $\vlist{s_{-i}}$ but not in $\vlist{r_{-i}}$, since that would break condition~\ref{it:piso}. It is not difficult to see that because $m+1 \leq k$ and $\osmodel_0 \sim_k^\infty \osmodel_1$ implies $\osmodel_0 \sim_k^= \osmodel_1$ then we always have enough elements to satisfy the required constraints. For the last item observe that as $X_{i}$ is infinite but there are only finitely many types, hence there must be some $S$ such that $|S|_{X_i} \geq \omega$. It is then safe to add infinitely many elements for $S$ in $X_{-i}$ while considering point~\ref{it:equiv}. Moreover, the existence of infinitely many elements satisfying $S$ in $D_{-i}$ are guaranteed by $\osmodel_0 \sim_k^\infty \osmodel_1$.

		Having shown that \eloise can choose a set $X_{-i}$ satisfying the above conditions, we continue the proof of the claim as follows: as $\osmodel_0{\rest}X_0 \sim_{m+1}^= \osmodel_1{\rest}X_1$ and $f':\vlist{r_0}\mapsto\vlist{r_1}$ is a partial isomorphism between them (with sequences of length $m$) then by the first claim of Lemma~\ref{lem:connofoe}, we know that \eloise can survive one round in $\efgame_{m+1}^=(\osmodel_0{\rest}X_0,\osmodel_1{\rest}X_1)@(\vlist{r_0},\vlist{r_1})$. In particular, she can survive the ``first-order part'' of the second-order move we were considering. This finishes the proof of the claim.
	\end{pfclaim}
	Going back to the proof of the lemma, for step~(\ref{lem:connolque:iii}) to~(\ref{lem:connolque:i}) we prove the following claim.
	\begin{claim}
		Let $\vlist{s_i} \in D_i^n$ and $\varphi(z_1,\dots,z_n) \in \olque(A)$ be such that $qr(\varphi) \leq k-n$. If \eloise has a winning strategy in $\efgame_k^\infty(\osmodel_0,\osmodel_1)@(\vlist{s_0},\vlist{s_1})$ then $\osmodel_0 \models \varphi(\vlist{s_0})$ iff $\osmodel_1 \models \varphi(\vlist{s_1})$.
	\end{claim}
	\begin{pfclaim}
		All the cases involving operators of \ofoe are the same as in Lemma~\ref{lem:connofoe}. We prove the inductive case for the generalized quantifier. Let $\varphi(z_1,\dots,z_n)$ be of the form $\qu x.\psi(z_1,\dots,z_n,x)$ and let $\osmodel_0 \models \varphi(\vlist{s_0})$. Hence, there is an \emph{infinite} set $X_0 \subseteq D_0$ such that
		\begin{equation}\label{eq:ind0ok}
		\osmodel_0 \models \psi(\vlist{s_0},x_0) \text{ if and only if } x_0\in X_0 .
		\end{equation}
		By hypothesis we know that \eloise has a winning strategy for $\efgame_k^\infty(\osmodel_0,\osmodel_1)@(\vlist{s_0},\vlist{s_1})$. Therefore, if \abelard plays a second-order move by picking $X_0 \subseteq D_0$ she can respond with some infinite set $X_1 \subseteq D_1$. We claim that $\osmodel_1 \models \psi(\vlist{s_1},x_1)$ for every $x_1\in X_1$. First observe that if this holds then $X'_1 := \{ d_1 \in D_1 \mid \osmodel_1 \models \psi(\vlist{s_1},d_1)\}$ must be infinite, and hence $\osmodel_1 \models \qu x.\psi(\vlist{s_1},x)$.
		
		Assume, for a contradiction, that $\osmodel_1 \not\models \psi(\vlist{s_1},x'_1)$ for some $x'_1\in X_1$. Let \abelard play that $x'_1$ as the second part of his move. Then, as \eloise has a winning strategy, she will respond with some $x'_0 \in X_0$ such that she has a winning strategy for $\efgame_{k}^\infty(\osmodel_0,\osmodel_1)@(\vlist{s_0}{\cdot}x'_0,\vlist{s_1}{\cdot}x'_1)$. By induction hypothesis, as $qr(\psi) \leq k-(n+1)$, this means that $\osmodel_0 \models \psi(\vlist{s_0},x'_0)$ iff $\osmodel_1 \models \varphi(\vlist{s_1},x'_1)$ which contradicts~(\ref{eq:ind0ok}). The other direction is symmetric.
	\end{pfclaim}
	Combining the claims finishes the proof of the lemma.
\end{proof}

\begin{theorem}\label{thm:bfolque}
Every $\varphi \in \olque(A)$ is equivalent to a formula in basic form.
\end{theorem}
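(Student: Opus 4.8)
The plan is to mirror the proof of Theorem~\ref{thm:bnfofoe}, replacing the relation $\sim^=_k$ by its infinitary refinement $\sim^\infty_k$ and the characterizing formulas $\varphi^=_E$ by the formulas $\varphi^\infty_E$ supplied by Proposition~\ref{prop:eqrelolque}. First I would fix $\varphi \in \olque(A)$ and set $k := qr(\varphi)$. Every sentence of quantifier rank at most $k$ is invariant under $\equiv_k^{\lque}$, and by Lemma~\ref{lem:connolque} this relation coincides with $\sim^\infty_k$, so $\ext{\varphi}$ is a union of $\sim^\infty_k$-classes. As $\sim^\infty_k$ has finite index (Proposition~\ref{prop:eqrelolque}(ii)) and each class $E$ is characterized by $\varphi^\infty_E$ (Proposition~\ref{prop:eqrelolque}(iv)), I obtain the finite equivalence
\[
\varphi \equiv \bigvee\{\varphi^\infty_E \mid E \in \ext{\varphi}/{\sim^\infty_k}\}.
\]
It therefore suffices to prove that each $\varphi^\infty_E$ is equivalent to a single basic-form disjunct $\dbnfolque{\vlist{T}}{\Pi}{\Sigma}$.

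Fix a representative $\osmodel \in E$ and classify the $A$-types by their multiplicity in $\osmodel$: let $\Sigma := \{S \mid |S|_\osmodel \geq \omega\}$ be the infinite types and $\Pi := \{S \mid k \leq |S|_\osmodel < \omega\}$ the finite types of multiplicity at least $k$; the remaining types have multiplicity $< k$. Then I would take $\vlist{T}$ to list each type of multiplicity $< k$ with its exact multiplicity and each type of multiplicity $\geq k$ (that is, each $S \in \Pi \cup \Sigma$) exactly $k$ times. With these choices $\Pi \cup \Sigma$ is precisely the set of types realised at least $k$ times in $\osmodel$, so by the construction in the proof of Theorem~\ref{thm:bnfofoe} I have $\dbnfofoe{\vlist{T}}{\Pi \cup \Sigma} \equiv \varphi^=_{E'}$, where $E'$ is the $\sim^=_k$-class of $\osmodel$.

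It remains to match the infinity constraints. By definition, $\varphi^\infty_E = \varphi^=_{E'} \land \bigwedge_{S\in\Sigma}\qu x.\tau_S(x) \land \bigwedge_{S\notin\Sigma}\dqu x.\lnot\tau_S(x)$. The first two conjuncts already appear in $\dbnfofoe{\vlist{T}}{\Pi\cup\Sigma} \land \dbnfinf{\Sigma}$, so the only thing to check is that $\bigwedge_{S\notin\Sigma}\dqu x.\lnot\tau_S(x)$ is equivalent to the clause $\dqu y.\bigvee_{S\in\Sigma}\tau_S(y)$ of $\dbnfinf{\Sigma}$. This holds because $A$, and hence the set of types, is finite: only finitely many elements fail to have a type in $\Sigma$ exactly when each of the finitely many types outside $\Sigma$ is realised finitely often. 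Combining these observations gives $\varphi^\infty_E \equiv \dbnfofoe{\vlist{T}}{\Pi\cup\Sigma} \land \dbnfinf{\Sigma} = \dbnfolque{\vlist{T}}{\Pi}{\Sigma}$, which completes the argument.

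I expect the main obstacle to be the bookkeeping in the second and third steps, namely verifying that the first-order part and the infinitary part interact correctly so that together they capture exactly the $\sim^\infty_k$-class. In particular, finiteness of the types in $\Pi$ (multiplicity $\geq k$ but finite) is enforced not by the first-order part, which only guarantees a lower bound of $k$, but by the clause $\dqu y.\bigvee_{S\in\Sigma}\tau_S(y)$; the lower bound together with that clause pins those types down to ``finite and $\geq k$''. Dually, the types in $\Sigma$ are permitted to be infinite because they lie in the second argument $\Pi\cup\Sigma$ of the first-order part and are then forced infinite by $\dbnfinf{\Sigma}$. Everything else is a routine transcription of the argument for $\ofoe$, and, exactly as there, the quantifier rank of the resulting basic form need not equal $k$ (it is only bounded by roughly $k\cdot 2^{|A|}+1$), which is harmless since we only claim semantic equivalence.
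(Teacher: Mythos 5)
Your proposal is correct and follows essentially the same route as the paper's proof: decompose $\varphi$ via Lemma~\ref{lem:connolque} and Proposition~\ref{prop:eqrelolque} into the disjunction of the characteristic formulas $\varphi^\infty_E$ of the finitely many $\sim^\infty_k$-classes meeting $\ext{\varphi}$, and then rewrite each $\varphi^\infty_E$ as a single disjunct $\dbnfolque{\vlist{T}}{\Pi}{\Sigma}$ using exactly the type-multiplicity bookkeeping ($\vlist{T}$ for multiplicities below $k$, second argument for types realized at least $k$ times, $\Sigma$ for the infinite ones) from the proof of Theorem~\ref{thm:bnfofoe}. Your handling of the final step is in fact marginally cleaner than the paper's: you note that $\bigwedge_{S\notin\Sigma}\dqu x.\lnot\tau_S(x)$ and $\dqu y.\bigvee_{S\in\Sigma}\tau_S(y)$ are unconditionally equivalent because the full $A$-types partition the domain and there are only finitely many of them, whereas the paper establishes the corresponding equivalence by a contrapositive argument that additionally appeals to the $\foe$ part of the formula.
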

\begin{proof}
	This can be proved using the same technique as in Theorem~\ref{thm:bnfofoe}. Hence we only focus on showing that $\varphi_E^\infty \equiv \dbnfolque{\vlist{T}}{\Pi}{\Sigma}$ for some $\Pi,\Sigma \subseteq \wp A$ and $T_i \subseteq A$. Recall that
	$$
	\varphi^\infty_E = \varphi^=_{E'}\ \land \\
		\bigwedge_{i\leq n} \qu x.\tau_{S_i}(x) \land \bigwedge_{i\leq m} \dqu x.\lnot\tau_{S'_i}(x)
	$$
	and from Theorem~\ref{thm:bnfofoe} we know that this is equivalent to
	$$
	\varphi^\infty_E = \dbnfofoe{\vlist{T}'}{\Pi'} \land \\
		\bigwedge_{i\leq n} \qu x.\tau_{S_i}(x) \land \bigwedge_{i\leq m} \dqu x.\lnot\tau_{S'_i}(x)
	$$
	for some $\Pi', \subseteq \wp A$ and $T'_i \subseteq A$. Now separate $\Pi'$ as $\Pi' = \Pi \uplus \Sigma$ where $\Pi$ is composed of the finite types (i.e., the $S'_i$) and $\Sigma$ is composed of the infinite types (i.e., the $S_i$). We claim that $\varphi_E^\infty \equiv \dbnfolque{\vlist{T}'}{\Pi}{\Sigma}$. After a minor rewritting, we have to show that the following two formulas are equivalent:
	\begin{eqnarray}
	\label{eq:classolque}
	\dbnfofoe{\vlist{T}'}{\Pi'} &\land& \bigwedge_{S \in \Sigma} \qu x.\tau_{S}(x) \land \bigwedge_{S\in \Pi} \dqu x.\lnot\tau_{S}(x) \\
	\label{eq:nfolque}
	\dbnfofoe{\vlist{T}'}{\Pi\cup\Sigma} &\land&	\bigwedge_{S\in\Sigma} \qu y.\tau_S(y) \land \dqu y.\bigvee_{S\in\Sigma} \tau_S(y)
	\end{eqnarray}
	It is clear that $\dbnfofoe{\vlist{T}'}{\Pi'}$ is equivalent to $\dbnfofoe{\vlist{T}'}{\Pi\cup\Sigma}$ and that the infinite-existential parts are also equivalent. Therefore we focus on the last part of both formulas.
	
	\medskip
	\noindent \fbox{$\Rightarrow$} 
	We show that~(\ref{eq:classolque}) implies $\dqu y.\bigvee_{S\in\Sigma} \tau_S(y)$. By contrapositive assume that there are infinitely many elements satisfying $\bigwedge_{S\in\Sigma} \lnot\tau_S(y)$, that is, infinitely many elements with their type outside $\Sigma$. Because of the \foe part, these elements should have their type in $\Pi' = \Pi\cup\Sigma$. Therefore, as the number of types is finite, there must be some type $P \in \Pi$ such that $\qu x.\tau_P(x)$ which contradicts the last part of~(\ref{eq:classolque}).

	\medskip
	\noindent \fbox{$\Leftarrow$}
	We show that~(\ref{eq:nfolque}) implies $\bigwedge_{S\in \Pi} \dqu x.\lnot\tau_{S}(x)$. By contrapositive assume there is some type $P\in \Pi$ such that $\qu x.\tau_P(x)$. Then, in particular, these elements do not have their type in $\Sigma$ and hence there are infinitely many elements satisfying $\bigwedge_{S\in\Sigma} \lnot\tau_S(y)$ which contradicts the last part of~(\ref{eq:nfolque}).
\end{proof}

\subsection{One-step monotonicity}

Given a one-step logic $\llang_1(A)$ and formula $\varphi \in \llang_1(A)$.
We say that $\varphi$ is \emph{monotone in $a\in A$} if for every one step model $(D,\val:A\to\wp D)$ and assignment $\ass:\fovar\to D$,
$$\text{If } (D,\val),\ass \models \varphi \text{ and } \val(a) \subseteq E \text{ then } (D,\val[a\mapsto E]),\ass \models \varphi.$$
We use $\llang_1^+(A)$ to denote the fragment of $\llang_1(A)$ composed of formulas monotone in all $a\in A$.

Monotonicity is usually tightly related to positivity. If the quantifiers are well-behaved (i.e., monotone) then a formula $\varphi$ will usually be monotone in $a \in A$ iff $a$ has positive polarity in $\varphi$, that is, if it occurs under an even number of negations. This is the case for all one-step logics considered in this article. In this section we give a syntactic characterization of monotonicity for several one-step logics.

\medskip\noindent
\textit{Convention}.
	Given $S\cup\{a\} \subseteq A$ we use $\tau^a_S$ to denote the \emph{$a$-positive} $A$-type defined as
	$$
	\tau^a_S(x) := \bigwedge_{b\in S} b(x) \land \bigwedge_{b\in A\setminus (S\cup\{a\})}\lnot b(x).
	$$
	Intuitively, $\tau^a_S$ is an $A$-type where $a$ can only occur positively and all other predicates occur either positively or negatively. 


\subsubsection{Monotone fragment of $\ofo$}

\begin{theorem}\label{thm:ofomonot}
\fcwarning{Cite van Benthem? check}A formula of $\ofo(A)$ is monotone in ${a\in A}$ iff it is equivalent to a sentence given by
$$
\varphi ::= \psi \mid a(x) \mid \exists x.\varphi(x) \mid \forall x.\varphi(x) \mid \varphi \land \varphi \mid \varphi \lor \varphi
$$
where 
$\psi \in \ofo(A\setminus \{a\})$. We denote this fragment as $\monot{\ofo}{a}(A)$.
\end{theorem}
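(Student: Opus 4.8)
The plan is to prove the two implications separately, using the normal-form result for $\ofo$ (every $\ofo(A)$-formula is equivalent to one in basic form $\bigvee\dbnfofo{\Sigma}$) as the main tool for the harder direction.

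For the direction $(\Leftarrow)$, that every formula of the displayed grammar is monotone in $a$, I would argue by a straightforward induction on the construction of the fragment. The two base cases are immediate: a formula $\psi\in\ofo(A\setminus\{a\})$ does not mention $a$ and is therefore trivially monotone in $a$, while $a(x)$ is monotone by definition. For the inductive step it suffices to observe that each of the connectives $\land,\lor,\exists x,\forall x$ preserves monotonicity in $a$: if $\varphi_1,\varphi_2$ are monotone in $a$ and we enlarge $\val(a)$, then every witness for $\exists$ and every instance checked by $\forall$ is preserved, and likewise for the Boolean connectives.

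For the direction $(\Rightarrow)$ I would first record the semantic content of the basic form: a disjunct $\dbnfofo{\Sigma}$ holds in a one-step model $(D,\val)$ exactly when the set of $A$-types realized in $(D,\val)$ equals $\Sigma$. Hence a basic form $\bigvee_{\Sigma\in\mathcal T}\dbnfofo{\Sigma}$ defines precisely the class of models whose realized-type set lies in the family $\mathcal T\subseteq\wp(\wp A)$, and $\ofo$-truth depends only on this realized-type set. The key step is then to describe how enlarging $\val(a)$ acts on realized-type sets: an element of type $S$ with $a\notin S$ may be pushed up to type $S\cup\{a\}$, while types already containing $a$ are frozen. I would make this precise through a preorder $\leq_a$ on $\wp(\wp A)$ (``$\Sigma'$ is obtained from $\Sigma$ by pushing some elements up along $a$''), and show that, because $\ofo$ cannot count, every $\Sigma'$ with $\Sigma\leq_a\Sigma'$ is realizable by suitably enlarging $a$ in a sufficiently ``fat'' model realizing $\Sigma$. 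Monotonicity of $\varphi$ then translates exactly into the statement that the defining family $\mathcal T$ is upward closed under $\leq_a$.

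To finish, I would exhibit a positive rewriting of each disjunct. Using the $a$-positive types $\tau^a_S$ from the Convention (which force $a(x)$ when $a\in S$ and leave $a$ unconstrained when $a\notin S$), set
$$(\dbnfofo{\Sigma})^{\tmono} \;:=\; \bigwedge_{S\in\Sigma}\exists x.\,\tau^a_S(x)\ \land\ \forall x.\bigvee_{S\in\Sigma}\tau^a_S(x).$$
This sentence is visibly in the fragment $\monot{\ofo}{a}(A)$ (each $\tau^a_S$ is a conjunction of $a(x)$ and $\ofo(A\setminus\{a\})$-literals, combined only by $\land,\lor,\exists,\forall$), and a short verification shows that it defines exactly the $\leq_a$-upward closure of $\{\Sigma\}$. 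Consequently, when $\mathcal T$ is upward closed, the formulas $\bigvee_{\Sigma\in\mathcal T}\dbnfofo{\Sigma}$ and $\bigvee_{\Sigma\in\mathcal T}(\dbnfofo{\Sigma})^{\tmono}$ define the same class, so $\varphi$ is equivalent to the latter, which lies in $\monot{\ofo}{a}(A)$. I expect the main obstacle to be precisely the combinatorial heart of the $(\Rightarrow)$ direction: correctly identifying the reachability relation $\leq_a$ on realized-type sets (in particular handling multiplicities, where one exploits that $\ofo$ only sees which types occur, not how often) and verifying that $(\dbnfofo{\Sigma})^{\tmono}$ captures exactly its upward closure; the two implications and the translation $(-)^{\tmono}$ then assemble routinely.
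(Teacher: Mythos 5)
Your proposal is correct, and it shares the paper's overall skeleton: the easy direction by induction on the grammar, and for the hard direction the basic normal form together with \emph{exactly} the translation the paper uses, namely $(\dbnfofo{\Sigma})^{\tmono} = \mondbnfofo{\Sigma}{a}$ obtained by replacing full types $\tau_S$ with $a$-positive types $\tau^a_S$. Where you genuinely diverge is in \emph{verifying} that a monotone $\varphi$ is equivalent to $\varphi^{\tmono}$. The paper argues model-by-model: given $(D,\val) \models \mondbnfofo{\Sigma}{a}$, it \emph{shrinks} the valuation of $a$ (removing the chosen witnesses $d_S$ for the types $S \in \Sigma$ with $a \notin S$, after a normalizing assumption on which $a$-positive types are realized and by which witnesses) so that the resulting $\val' \subseteq \val$ satisfies the full basic form $\dbnfofo{\Sigma}$, and then invokes monotonicity once to climb back from $\val'$ to $\val$; the converse implication $\varphi \Rightarrow \varphi^{\tmono}$ is immediate since $\tau_S$ entails $\tau^a_S$. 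You instead prove a semantic invariant: $\ofo$-truth depends only on the set of realized $A$-types, enlarging $\val(a)$ acts on realized-type sets by your push-up preorder $\leq_a$ (with types containing $a$ frozen), monotonicity of $\varphi$ is equivalent to upward closure of the defining family $\mathcal{T}$ under $\leq_a$, and $\mondbnfofo{\Sigma}{a}$ defines precisely the $\leq_a$-upward closure of $\{\Sigma\}$ --- all of which checks out, provided you carry through the ``fat model'' realizability step (every $\Sigma' \geq_a \Sigma$ is reachable from a model realizing each type in $\Sigma$ sufficiently often). The trade-off: the paper's shrinking argument is shorter and needs monotonicity only at the very last step, but it leans on the somewhat delicate ``it is safe to assume'' normalization --- which is exactly the multiplicity issue your fat-model construction makes explicit; your route is heavier (it requires constructing models rather than modifying a given one) but isolates a clean reusable characterization of monotonicity, and it yields the normal-form statement of Corollary~\ref{cor:ofopositivenf}(i) directly rather than as an afterthought.
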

The result will follow from the following two lemmas. 

\begin{lemma}\label{lem:monofoismonot}
If $\varphi \in \monot{\ofo}{a}(A)$ then $\varphi$ is monotone in $a$.
\end{lemma}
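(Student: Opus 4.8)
The plan is to prove the statement by a routine structural induction on the way $\varphi$ is generated by the grammar defining $\monot{\ofo}{a}(A)$. The one point of care is to phrase the induction hypothesis uniformly over all assignments $\ass\colon\fovar\to D$, since the quantifier cases require applying the hypothesis to modified assignments $\ass[x\mapsto d]$. The definition of monotonicity already quantifies over all assignments, so strengthening in this way costs nothing.

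For the base cases there are two subcases. If $\varphi = \psi$ with $\psi \in \ofo(A\setminus\{a\})$, then $a$ does not occur in $\varphi$, so the truth of $\varphi$ in $(D,\val),\ass$ is independent of the value $\val(a)$; replacing $\val(a)$ by any $E$ leaves the truth value unchanged, and monotonicity is immediate. If $\varphi = a(x)$, then $(D,\val),\ass\models a(x)$ means $\ass(x)\in\val(a)$; since $\val(a)\subseteq E$ we get $\ass(x)\in E = \val[a\mapsto E](a)$, hence $(D,\val[a\mapsto E]),\ass\models a(x)$.

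For the inductive step I would treat the four combinators $\land$, $\lor$, $\exists x$, $\forall x$. The Boolean cases are immediate: if a conjunction (resp.\ disjunction) holds under $\val$, then both (resp.\ at least one) of its subformulas hold, and by the induction hypothesis they continue to hold under $\val[a\mapsto E]$. For $\varphi = \exists x.\varphi_1$, a witness $d\in D$ with $(D,\val),\ass[x\mapsto d]\models\varphi_1$ remains a witness under $\val[a\mapsto E]$ by the hypothesis applied to the assignment $\ass[x\mapsto d]$. For $\varphi = \forall x.\varphi_1$, applying the hypothesis to each $\ass[x\mapsto d]$ shows the universal statement is preserved.

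The point worth stressing --- and essentially the only content of the lemma --- is that the grammar of $\monot{\ofo}{a}(A)$ admits no negation around subformulas that mention $a$: every occurrence of the predicate $a$ is positive, and the only negations live inside the $a$-free base formulas $\psi$, where they are harmless. There is no genuine obstacle here; the proof is a direct induction. The substantive half of Theorem~\ref{thm:ofomonot} is the converse direction, namely that every $\ofo(A)$-formula monotone in $a$ is equivalent to one in $\monot{\ofo}{a}(A)$, which will be handled separately.
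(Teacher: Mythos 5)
Your proof is correct and matches the paper's intent exactly: the paper dispatches this lemma as ``a routine argument by induction on the complexity of $\varphi$,'' and the induction you carry out --- strengthened to open formulas under arbitrary assignments $\ass$, with the $a$-free and $a(x)$ base cases and the four positive combinators --- is precisely the argument the paper has in mind (and spells out in the analogous Lemma~\ref{lem:monolqueismonot} for $\olque$). Nothing further is needed.
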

\begin{proof}
	The proof is a routine argument by induction on the complexity of $\varphi$.
\end{proof}

To prove that the fragment is complete we need to show that every formula which is monotone in $a$ is equivalent to some formula in $\monot{\ofo}{a}$. We prove a stronger result: we give a translation that constructively maps arbitrary formulas into $\monot{\ofo}{a}$. The interesting observation is that the translation will preserve truth iff the given formula is monotone in $a$.

\begin{lemma}
There exists a translation $(-)^\tmono:\ofo(A) \to \monot{\ofo}{a}(A)$ such that
a formula ${\varphi \in \ofo(A)}$ is monotone in $a$ if and only if $\varphi\equiv \varphi^\tmono$.
\end{lemma}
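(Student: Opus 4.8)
The plan is to reduce to the normal form and then relax types. First I would invoke the basic-form theorem for $\ofo$ to write $\varphi \equiv \bigvee_{\Sigma \in \mathcal{F}} \dbnfofo{\Sigma}$, so that $\varphi$ accepts a one-step model precisely when its set of realized $A$-types belongs to a family $\mathcal{F} \subseteq \wp(\wp A)$. I then define $\varphi^\tmono$ disjunct by disjunct, relaxing every \emph{negative} occurrence of $a$ while keeping the positive ones: in each $\dbnfofo{\Sigma}$ I replace a type $\tau_S$ with $a \in S$ by itself (it already contains $a$ only positively, since $\tau_S = a(x) \land \tau^a_{S\setminus\{a\}}(x)$) and a type $\tau_S$ with $a \notin S$ by the $a$-positive type $\tau^a_S$, which simply drops the conjunct $\lnot a(x)$. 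Writing $\widetilde{\tau}_S$ for this relaxation,
\[
(\dbnfofo{\Sigma})^\tmono \;:=\; \bigwedge_{S \in \Sigma} \exists x.\, \widetilde{\tau}_S(x) \;\land\; \forall x.\, \bigvee_{S \in \Sigma} \widetilde{\tau}_S(x),
\qquad
\varphi^\tmono \;:=\; \bigvee_{\Sigma \in \mathcal{F}} (\dbnfofo{\Sigma})^\tmono.
\]
Every $\widetilde{\tau}_S$ contains $a$ only positively, so $\varphi^\tmono$ fits the grammar of $\monot{\ofo}{a}(A)$, and by Lemma~\ref{lem:monofoismonot} it is monotone in $a$. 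This already yields the `only if' direction: if $\varphi \equiv \varphi^\tmono$ then $\varphi$ is monotone. It also gives one half of the `if' direction for free, since any model of $\varphi$ has its type-set in $\mathcal{F}$ and hence satisfies the weaker formula $\varphi^\tmono$.

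For the converse inclusion, assume $\varphi$ is monotone in $a$ and let $\osmodel = (D,\val)$ satisfy $(\dbnfofo{\Sigma})^\tmono$ for some $\Sigma \in \mathcal{F}$. The relaxed cover clause forces every realized type of $\osmodel$ to lie in $\Sigma^{+a} := \Sigma \cup \{S \cup \{a\} \mid S \in \Sigma\}$, and the existential clauses guarantee, for each $S \in \Sigma$, a witness of type $S$ or $S \cup \{a\}$. I would then exhibit a subset $E \subseteq \val(a)$ such that the restricted valuation $\val[a \mapsto E]$ realizes \emph{exactly} the types in $\Sigma$: elements whose type lies in $\Sigma^{+a} \setminus \Sigma$ have $a$ removed (dropping them back into $\Sigma$), while $a$ is kept on enough of the remaining elements to realize all of $\Sigma$. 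Then $(D, \val[a\mapsto E]) \models \dbnfofo{\Sigma}$, hence $\models \varphi$; since $E \subseteq \val(a)$, the model $\osmodel$ is an $a$-enlargement of $(D, \val[a\mapsto E])$, so monotonicity of $\varphi$ transfers truth upward and gives $\osmodel \models \varphi$. Together with the previous paragraph this establishes $\varphi \equiv \varphi^\tmono$.

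The step that needs care — and which I expect to be the main obstacle — is producing an $E$ realizing exactly $\Sigma$. The difficulty is purely one of multiplicities: a single element of type $S \cup \{a\}$ can be pushed down to $S$ or kept as $S \cup \{a\}$, but not both, so if $\Sigma$ demands both $S$ and $S\cup\{a\}$ while only one such element is present, no restriction reaches $\Sigma$ (the naive construction already fails on, e.g., a one-point model of $\exists x.\,a(x)$). Since $\ofo$ has no equality and cannot count, I would remove this obstruction by first replacing $\osmodel$ with an $\ofo$-equivalent \emph{saturated} model in which every realized type occurs infinitely often; this is harmless because $(\dbnfofo{\Sigma})^\tmono$, $\dbnfofo{\Sigma}$ and $\varphi$ are all $\ofo$-sentences, and it is justified by the analysis underlying the basic-form theorem (two models with the same type-set are $\ofo$-equivalent). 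In a saturated model each type that must be split supplies infinitely many witnesses, so $a$ can be dropped on some copies and kept on others, and the required $E$ is then routine to construct. The remaining verifications are straightforward inductions and bookkeeping on types.
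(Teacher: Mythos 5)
Your proof is correct and takes essentially the same route as the paper's: the identical relaxation of full $A$-types to $a$-positive types $\tau^a_S$ in the basic form (the paper's $\mondbnfofo{\Sigma}{a}$), the easy direction via $\tau_S \models \tau^a_S$ together with Lemma~\ref{lem:monofoismonot}, and the converse by stripping $a$ from suitably chosen witnesses to land in a model of $\dbnfofo{\Sigma}$ and then invoking monotonicity of $\varphi$. The multiplicity obstruction you isolate is genuine, and your saturation step (passing to an $\ofo$-equivalent model where every realized type has infinitely many witnesses, harmless since $\ofo$-truth depends only on the set of realized types) is precisely what the paper compresses into its unargued remark that ``it is safe to assume'' each $a$-positive type in $\Sigma$ has a single \emph{distinct} witness.
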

\begin{proof}
To define the translation we assume that $\varphi$ is in a normal form $\bigvee \dbnfofo{\Sigma}$ where 
$$
\dbnfofo{\Sigma} := \bigwedge_{S\in\Sigma} \exists x. \tau_S(x) \land \forall x. \bigvee_{S\in\Sigma} \tau_S(x)
$$
for some types $\Sigma \subseteq \wp A$.
For the translation, let
$(\bigvee \dbnfofo{\Sigma})^\tmono:= \bigvee \mondbnfofo{\Sigma}{a}$ and
$$
\mondbnfofo{\Sigma}{a} := \bigwedge_{S\in\Sigma} \exists x. \tau^a_S(x) \land \forall x. \bigvee_{S\in\Sigma} \tau^a_S(x)
$$

From the construction it is clear that $\varphi^\tmono \in \monot{\ofo}{a}(A)$ and therefore the right-to-left direction of the lemma is immediate by Lemma~\ref{lem:monofoismonot}. For the left-to-right direction assume that $\varphi$ is monotone in $a$, we have to prove that $(D,\val) \models \varphi$ iff $(D,\val) \models \varphi^\tmono$.

\bigskip
\noindent \fbox{$\Rightarrow$} This direction is trivial.

\bigskip
\noindent \fbox{$\Leftarrow$} Assume $(D,\val) \models \varphi^\tmono$ and let $\mondbnfofo{\Sigma}{a}$ be such that $(D,\val) \models \mondbnfofo{\Sigma}{a}$. It is safe to assume that the \emph{only} ($a$-positive) types realized in $(D,\val)$ are exactly those in $\Sigma$ and that every type has a (single) distinct witness.
%
%
For every $S \in \Sigma$, let $d_S$ be the witness of the $a$-positive type $\tau^a_S(x)$ in $(D,\val)$. Let $U := \{d_S \mid S\in \Sigma, a\notin S\}$ and $\val' := \val[a \mapsto \val(a) \setminus U]$. 

\begin{claimfirst}
	$(D,\val') \models \dbnfofo{\Sigma}$.
\end{claimfirst}

\begin{pfclaim}
First we show that the existential part of the normal form is satisfied. That is, that for every $S\in \Sigma$ we have a witness for the \emph{full} type $\tau_S(x)$. If $a\in S$ the witness is given by $\varphi^\tmono$, that is, $d_S$. If $a \notin S$ then we specially crafted $d_S$ to be a witness. The universal part is clearly satisfied.
\end{pfclaim}
To finish it is enough to observe that, by monotonicity of $\varphi$, we get $(D,\val) \models \varphi$. 
\end{proof}

Putting together the above lemmas we obtain Theorem~\ref{thm:ofomonot}. Moreover, a careful analysis of the translation gives us the following corollary, providing normal forms for the monotone fragment of $\ofo$.

\begin{corollary}\label{cor:ofopositivenf} Let $\varphi \in \ofo(A)$, the following hold:
	\begin{enumerate}[(i)]
		\item The formula $\varphi$ is monotone in $a \in A$ iff it is equivalent to a formula in the basic form $\bigvee \mondbnfofo{\Sigma}{a}$ for some types $\Sigma \subseteq \wp A$.
		\item The formula $\varphi$ is monotone in every $a\in A$ (i.e., $\varphi\in\ofo^+(A)$) iff $\varphi$ is equivalent to a formula in the basic form $\bigvee \posdbnfofo{\Sigma}$ for some types $\Sigma \subseteq \wp A$, where
		%
		%
		$$
		\posdbnfofo{\Sigma} := \bigwedge_{S\in\Sigma} \exists x. \tau^+_S(x) \land \forall x. \bigvee_{S\in\Sigma} \tau^+_S(x) .
		$$
	\end{enumerate}
\end{corollary}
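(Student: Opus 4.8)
The plan is to read part~(i) directly off the translation underlying Theorem~\ref{thm:ofomonot}, and to obtain part~(ii) by running the same construction simultaneously over all the predicates of $A$.

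For~(i), recall that the translation of the preceding lemma sends a basic form $\bigvee \dbnfofo{\Sigma}$ to $\bigvee \mondbnfofo{\Sigma}{a}$, and that every $\ofo(A)$-formula is equivalent to some basic form $\bigvee\dbnfofo{\Sigma}$. Hence, if $\varphi$ is monotone in $a$, the lemma gives $\varphi \equiv \varphi^{\tmono} = \bigvee \mondbnfofo{\Sigma}{a}$, which is the required shape. Conversely, since in each $\tau^{a}_{S}$ the predicate $a$ occurs only positively, the formula $\bigvee \mondbnfofo{\Sigma}{a}$ lies in $\monot{\ofo}{a}(A)$, so it is monotone in $a$ by Lemma~\ref{lem:monofoismonot}; as monotonicity is preserved under logical equivalence, any $\varphi \equiv \bigvee\mondbnfofo{\Sigma}{a}$ is monotone in $a$ as well.

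For~(ii) I would first lift the whole single-predicate development from one name $a$ to a set $B \subseteq A$. Writing $\tau^{B}_{S}(x) := \bigwedge_{b\in S} b(x) \land \bigwedge_{b \in A\setminus(S\cup B)} \lnot b(x)$ for the $B$-positive type and $\mondbnfofo{\Sigma}{B} := \bigwedge_{S\in\Sigma}\exists x.\tau^{B}_{S}(x) \land \forall x.\bigvee_{S\in\Sigma}\tau^{B}_{S}(x)$ for the associated basic form, one has $\mondbnfofo{\Sigma}{\{a\}} = \mondbnfofo{\Sigma}{a}$ and, crucially, $\tau^{A}_{S} = \tau^{+}_{S}$, so that $\mondbnfofo{\Sigma}{A} = \posdbnfofo{\Sigma}$. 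The statement to establish is that $\varphi$ is monotone in every $b\in B$ iff $\varphi \equiv \bigvee \mondbnfofo{\Sigma}{B}$ for some $\Sigma\subseteq\wp A$; instantiating $B := A$ then gives exactly~(ii), since monotonicity in every $b\in A$ is precisely $\varphi\in\ofo^+(A)$. The proof runs verbatim as that of the preceding lemma: the right-to-left direction follows from the routine induction of Lemma~\ref{lem:monofoismonot}, a $B$-positive basic form being positive, hence monotone, in each $b\in B$; for the converse one writes $\varphi \equiv \bigvee\dbnfofo{\Sigma}$, notes that $\varphi \models \bigvee\mondbnfofo{\Sigma}{B}$ is trivial because $\tau_{S}\models\tau^{B}_{S}$, and for the reverse implication takes a one-step model $(D,\val)$ satisfying a disjunct $\mondbnfofo{\Sigma}{B}$ and shrinks the valuation in all coordinates of $B$ at once, by $\val'(b) := \val(b)\setminus\{d_{S} \mid b\in B,\ b\notin S\}$ (so $\val'(b)=\val(b)$ for $b\notin B$), where $d_S$ is the witness of $\tau^B_S$. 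Each $d_{S}$ then acquires the full type $\tau_{S}$, whence $(D,\val')\models \dbnfofo{\Sigma}$ and so $(D,\val')\models\varphi$; as $\val'(b)\subseteq\val(b)$ for every $b$, monotonicity of $\varphi$ in each $b\in B$ yields $(D,\val)\models\varphi$.

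The main obstacle is the bookkeeping of this simultaneous shrinking, together with the reduction (already used in the lemma) to a model whose realized types are exactly those of $\Sigma$, each carried by a single distinct witness. For $|B|>1$ this reduction is genuinely more delicate than in the single-predicate case, because the $B$-positive types $\tau^{B}_{S}$ are no longer almost-disjoint: one element may satisfy several of them, so the construction must first assign to every realized element a single target full type in $\Sigma$ before deleting it from the predicates of $B$ not in that type. Once each element is pinned to one type of $\Sigma$, both the existential and the universal clauses of $\dbnfofo{\Sigma}$ survive the deletion, and the rest is inherited from Theorem~\ref{thm:ofomonot}; this type-assignment step is where I would be most careful.
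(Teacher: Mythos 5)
Your proposal is correct and takes essentially the same route as the paper: part (i) simply reads the claim off the translation lemma for $\monot{\ofo}{a}(A)$, and part (ii) is exactly the paper's (unwritten) ``careful analysis of the translation'', namely rerunning the same argument with the $a$-positive types $\tau^a_S$ replaced by the fully positive types $\tau^+_S$, which is your $B$-parametrized statement instantiated at $B=A$. The reduction you rightly defer is justified just as in the paper's lemma, by the fact that equality-free monadic first-order truth is invariant under duplicating elements (e.g.\ passing to $(D\times\omega,\val_\pi)$), which yields pairwise distinct witnesses $d_S$ and lets every element be pinned to a single target type in $\Sigma$ --- without this duplication your displayed in-place shrinking would fail (a single element can witness two types of $\Sigma$ at once), so it is indeed the step to be careful about, and with it your argument goes through.
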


\subsubsection{Monotone fragment of $\olque$}

\begin{theorem}\label{thm:olquemonot}
A formula of $\olque(A)$ is monotone in ${a\in A}$ iff it is equivalent to a sentence given by
$$
\varphi ::= \psi \mid a(x) \mid \exists x.\varphi(x) \mid \forall x.\varphi(x) \mid \varphi \land \varphi \mid \varphi \lor \varphi \mid \qu x.\varphi(x) \mid \dqu x.\varphi(x)
$$
where $\psi \in \olque(A\setminus \{a\})$. We denote this fragment as $\monot{\olque}{a}(A)$.
\end{theorem}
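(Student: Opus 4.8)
The plan is to follow the template of Theorem~\ref{thm:ofomonot} almost verbatim, splitting the biconditional into a soundness part (every formula of the displayed fragment is monotone in $a$) and a completeness part that exhibits a translation $(-)^\tmono\colon \olque(A)\to\monot{\olque}{a}(A)$ which preserves truth \emph{exactly} on the formulas monotone in $a$. Given these, the right-to-left direction of the theorem follows from soundness, and the left-to-right direction from the translation. For soundness I would argue by induction on the complexity of $\varphi\in\monot{\olque}{a}(A)$: the base case $\psi\in\olque(A\setminus\{a\})$ is vacuous since $a$ does not occur, the atom $a(x)$ is immediate, and the Boolean and ordinary first-order quantifier cases are exactly as in Lemma~\ref{lem:monofoismonot}. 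The only genuinely new cases are the generalized quantifiers: if $\varphi(x)$ is monotone in $a$, then enlarging $\val(a)$ can only enlarge the witness set $\{d : (D,\val)\models\varphi(d)\}$, and since enlarging an infinite set keeps it infinite while enlarging a cofinite set keeps it cofinite, both $\qu x.\varphi$ and $\dqu x.\varphi$ remain monotone in $a$.

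For completeness I would first put $\varphi$ in basic form $\bigvee\dbnfolque{\vlist{T}}{\Pi}{\Sigma}$ using Theorem~\ref{thm:bfolque}, and define $\varphi^\tmono:=\bigvee\mondbnfolque{\vlist{T}}{\Pi}{\Sigma}{a}$, where each disjunct is obtained by replacing every full type $\tau_S$ (and $\tau_{T_i}$) by its $a$-positive version $\tau^a_S$; concretely $\mondbnfolque{\vlist{T}}{\Pi}{\Sigma}{a}=\mondbnfofoe{\vlist{T}}{\Pi\cup\Sigma}{a}\wedge\mondbnfinf{\Sigma}{a}$. Since $a$ occurs only positively in every $\tau^a_S$, and the residual material ($\arediff{\vlist{x}}$, the guarded $\forall z$, the equalities) lies in $\olque(A\setminus\{a\})$, this formula belongs to $\monot{\olque}{a}(A)$; together with soundness this already delivers the right-to-left direction. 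Moreover the implication $\varphi\models\varphi^\tmono$ holds \emph{unconditionally}, because $\tau_S\models\tau^a_S$ and every type occurs positively in the basic form (under $\exists$, in the succedent of the guarded $\forall$, and under the monotone quantifiers $\qu,\dqu$); so the forward half of $\varphi\equiv\varphi^\tmono$ comes for free.

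The crux is the converse half. Assuming $\varphi$ monotone in $a$ and $(D,\val)\models\mondbnfolque{\vlist{T}}{\Pi}{\Sigma}{a}$, I would produce a valuation $\val'$ agreeing with $\val$ off $a$ and with $\val'(a)\subseteq\val(a)$, such that $(D,\val')\models\dbnfolque{\vlist{T}}{\Pi}{\Sigma}$; then $(D,\val')\models\varphi$, and monotonicity in $a$ (lifting $\val'(a)$ back up to $\val(a)$) gives $(D,\val)\models\varphi$. To build $\val'$ I would assign to each element $d$ a target full type $U(d)\in\Pi\cup\Sigma$ whose restriction to $A\setminus\{a\}$ coincides with the fixed $(A\setminus\{a\})$-type of $d$, and set $\val'(a):=\{d : a\in U(d)\}$. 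Keeping $d$ in $\val'(a)$ only when $a\in U(d)$ ensures $\val'(a)\subseteq\val(a)$, since $\tau^a_S$ already forces $a(x)$ whenever $a\in S$. The $n$ witnesses receive $U(d_i):=T_i$, and each remaining $d$ (realizing some $\tau^a_S$ with $S\in\Pi\cup\Sigma$) receives a type compatible with shrinking.

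I expect the main obstacle to be the bookkeeping for the infinitary part $\dbnfinf{\Sigma}$: the assignment $U$ must \emph{simultaneously} realize infinitely many elements of each full type $S\in\Sigma$ (to satisfy every $\qu y.\tau_S(y)$) and leave only finitely many elements outside $\Sigma$ (to satisfy $\dqu y.\bigvee_{S\in\Sigma}\tau_S(y)$). The key is that $(D,\val)\models\mondbnfinf{\Sigma}{a}$ supplies, for each $S\in\Sigma$, an infinite pool of elements realizing $\tau^a_S$, and cofinitely many elements realizing $\bigvee_{S\in\Sigma}\tau^a_S$; when two target types $S$ and $S\cup\{a\}$ of $\Sigma$ project to the same type on $A\setminus\{a\}$, their shared infinite pool (the elements of $\val(a)$ with the matching reduct) can be split into infinitely many infinite pieces, distributing the $a$-membership so that each target type is realized infinitely often. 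As a cleaner alternative to this hands-on splitting, one may reduce to a canonical representative of the $\sim^\infty_k$-class via Lemma~\ref{lem:connolque}, on which the realized $a$-positive types and their finite/infinite multiplicities are normalized, and then perform the shrink exactly as in the $\ofo$ case, after checking that both $\varphi$ and $\varphi^\tmono$ are invariant under $\sim^\infty_k$. Finally, reading off the form of the translation yields, as in Corollary~\ref{cor:ofopositivenf}, the corresponding basic-form normal forms for the monotone fragment of $\olque$.
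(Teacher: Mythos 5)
Your proposal matches the paper's proof essentially step for step: the same two-lemma decomposition (soundness of $\monot{\olque}{a}(A)$ by induction, with the $\qu$/$\dqu$ cases handled exactly as you describe), the same translation $(-)^\tmono$ replacing full types by $a$-positive types in the basic form of Theorem~\ref{thm:bfolque}, and the same shrinking argument defining $\val'(a)$ from a fixed type assignment $S_d$ so that $\val'(a)\subseteq\val(a)$ and monotonicity lifts the valuation back up. Your explicit splitting of shared infinite pools for $S$ and $S\cup\{a\}$ is in fact slightly more careful than the paper, which simply asserts the existence of disjoint infinite witness sets $D_S$ in its partition.
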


Observe that $x \foeq y$ and $x \not\foeq y$ are included in the case $\psi \in \olque(A\setminus \{a\})$. The result will follow from the following two lemmas.

\begin{lemma}\label{lem:monolqueismonot}
If $\varphi \in \monot{\olque}{a}(A)$ then $\varphi$ is monotone in $a$.
\end{lemma}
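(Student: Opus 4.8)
The goal is to show that membership in the syntactic fragment $\monot{\olque}{a}(A)$ guarantees semantic monotonicity in $a$.

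The plan is to proceed by a straightforward structural induction on the complexity of $\varphi \in \monot{\olque}{a}(A)$, showing at each step that if $(D,\val),\ass \models \varphi$ and $\val(a) \subseteq E$, then $(D,\val[a\mapsto E]),\ass \models \varphi$. First I would fix a one-step model $(D,\val)$, an assignment $\ass$, and a set $E \supseteq \val(a)$, and write $\val' := \val[a\mapsto E]$ throughout; note that $\val'$ and $\val$ agree on every predicate in $A\setminus\{a\}$, and $\val(a)\subseteq\val'(a)$.

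The base cases split according to the grammar. For $\psi \in \olque(A\setminus\{a\})$, the predicate $a$ does not occur in $\psi$, so its truth value is unaffected by changing the interpretation of $a$; hence $(D,\val),\ass \models \psi$ iff $(D,\val'),\ass \models \psi$, giving monotonicity trivially. For the atom $a(x)$, if $(D,\val),\ass \models a(x)$ then $\ass(x) \in \val(a) \subseteq E = \val'(a)$, so $(D,\val'),\ass \models a(x)$. The Boolean cases $\varphi_1 \lor \varphi_2$ and $\varphi_1 \land \varphi_2$ are immediate from the induction hypothesis, since monotonicity is preserved under conjunction and disjunction.

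The remaining cases are the quantifiers $\exists x.\varphi$, $\forall x.\varphi$, $\qu x.\varphi$, and $\dqu x.\varphi$, and here the key point is simply that all four quantifiers are themselves monotone operators, so they propagate the inductive monotonicity of the matrix $\varphi$. Concretely, for $\exists x.\varphi$, if some $d\in D$ witnesses $(D,\val),\ass[x\mapsto d]\models\varphi$, then by the induction hypothesis the same $d$ witnesses $(D,\val'),\ass[x\mapsto d]\models\varphi$. For $\forall x.\varphi$ the induction hypothesis applies uniformly to every $d\in D$. For the generalized quantifiers, the induction hypothesis gives $\{d \mid (D,\val),\ass[x\mapsto d]\models\varphi\} \subseteq \{d \mid (D,\val'),\ass[x\mapsto d]\models\varphi\}$; since a superset of an infinite set is infinite (handling $\qu$) and a superset of a cofinite set is cofinite (handling $\dqu$, whose defining collection $\dqu$ is upward closed), the truth of the quantified formula is preserved. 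I expect no genuine obstacle here: the only subtlety worth stating explicitly is the upward-closure of the collections defining $\qu$ and $\dqu$, which is exactly what makes these generalized quantifiers monotone and is what distinguishes this case from a naive first-order argument.
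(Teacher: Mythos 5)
Your proof is correct and follows essentially the same route as the paper: a structural induction on the fragment's grammar in which the base and Boolean cases are routine and the generalized-quantifier cases go through because the induction hypothesis preserves every witness, so the (infinite, resp.\ cofinite) witness set under $\val$ is contained in the witness set under $\val[a\mapsto E]$. Your explicit remark on the upward closure of the collections defining $\qu$ and $\dqu$ is exactly the observation the paper's proof uses implicitly.
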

\begin{proof}
The proof is basically the same as Lemma~\ref{lem:monofoismonot}.
That is, we show by induction, that any one-step formula $\varphi$ in the fragment (which may not be a sentence) satisfies, for every one-step model $(D,\val:A\to\wp D)$ and assignment ${\ass:\fovar\to D}$, 
$$\text{If } (D,\val),\ass \models \varphi \text{ and } \val(a) \subseteq E \text{ then } (D,\val[a\mapsto E]),\ass \models \varphi.$$ 
We focus on the cases for the generalized quantifiers. Let $(D,\val),\ass \models \varphi$ and $\val(a) \subseteq E$.
\begin{enumerate}[$\bullet$]
\item Case $\varphi = \qu x.\varphi'(x)$. By definition there exists an infinite set $I\subseteq D$ such that for all $d\in I$ we have $(D,\val),\ass[x\mapsto d] \models \varphi'(x)$. By induction hypothesis $(D,\val[a\mapsto E]),\ass[x\mapsto d] \models \varphi'(x)$ for all $d \in I$. Therefore $(D,\val[a\mapsto E]),\ass \models \qu x.\varphi'(x)$.

\item Case $\varphi = \dqu x.\varphi'(x)$. By definition there exists $I\subseteq D$ such that for all $d\in I$ we have $(D,\val),\ass[x\mapsto d] \models \varphi'(x)$ and $D\setminus I$ is \emph{finite}. By induction hypothesis $(D,\val[a\mapsto E]),\ass[x\mapsto d] \models \varphi'(x)$ for all $d \in I$. Therefore $(D,\val[a\mapsto E]),\ass \models \dqu x.\varphi'(x)$.
\end{enumerate}
\end{proof}

\begin{lemma}
There exists a translation $(-)^\tmono:\olque(A) \to \monot{\olque}{a}(A)$ such that
a formula ${\varphi \in \olque(A)}$ is monotone in $a$ if and only if $\varphi\equiv \varphi^\tmono$.
\end{lemma}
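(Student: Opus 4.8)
The plan is to mimic the translation used for $\ofo$, the extra work being concentrated entirely in the infinite part of the basic form. By Theorem~\ref{thm:bfolque} we may assume $\varphi = \bigvee \dbnfolque{\vlist{T}}{\Pi}{\Sigma}$ is already in basic form, and I would define $(-)^\tmono$ by $a$-positivising every type occurring in it:
\[
\Big(\bigvee \dbnfolque{\vlist{T}}{\Pi}{\Sigma}\Big)^\tmono := \bigvee \mondbnfolque{\vlist{T}}{\Pi}{\Sigma}{a},
\]
where $\mondbnfolque{\vlist{T}}{\Pi}{\Sigma}{a} = \mondbnfofoe{\vlist{T}}{\Pi\cup\Sigma}{a} \land \mondbnfinf{\Sigma}{a}$ is obtained from $\dbnfolque{\vlist{T}}{\Pi}{\Sigma}$ by replacing each full type $\tau_S$ by the $a$-positive type $\tau^a_S$. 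Each $\tau^a_S$ is a conjunction of (possibly) $a(x)$ with a formula of $\olque(A\setminus\{a\})$, so $\varphi^\tmono$ is built from such pieces using only $\exists,\forall,\qu,\dqu,\land,\lor$; hence $\varphi^\tmono\in\monot{\olque}{a}(A)$. By Lemma~\ref{lem:monolqueismonot} this already gives the right-to-left direction of the stated equivalence: if $\varphi\equiv\varphi^\tmono$ then $\varphi$ is monotone in $a$.

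For the converse I would assume $\varphi$ monotone and prove $\varphi\equiv\varphi^\tmono$. Since $\tau_S$ entails $\tau^a_S$, each disjunct entails its translate, so $\varphi\models\varphi^\tmono$ holds unconditionally; the content is the reverse entailment. Given a one-step model $(D,\val)$ with $(D,\val)\models\varphi^\tmono$, fix a disjunct $\mondbnfolque{\vlist{T}}{\Pi}{\Sigma}{a}$ that it satisfies. I would then exhibit a valuation $\val'$ with $\val'(a)\subseteq\val(a)$ and $(D,\val')\models\dbnfolque{\vlist{T}}{\Pi}{\Sigma}$; monotonicity in $a$ applied to $(D,\val')$ and $E := \val(a)$ yields $(D,\val)=(D,\val'[a\mapsto\val(a)])\models\varphi$, as desired. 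The key point making this feasible is that $\tau^a_S$ is realised exactly by elements of full type $S$ or $S\cup\{a\}$, so passing from the $a$-positive profile to the full profile only ever requires \emph{removing} $a$ from some elements, never adding it, which is precisely what keeps $\val'(a)\subseteq\val(a)$.

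The finite part of $\val'$ is handled exactly as in the proof of the corresponding statement for $\ofo$ (Theorem~\ref{thm:ofomonot}): for each witness of $\tau^a_{T_i}$ with $a\notin T_i$ delete $a$ so that it realises the full type $T_i$, and reassign the $\Pi$-witnesses analogously. The main obstacle, and the genuinely new step, is the infinite part: for every $S\in\Sigma$ with $a\notin S$ the translate supplies only $\qu y.\tau^a_S(y)$, i.e.\ infinitely many elements of full type $S$ \emph{or} $S\cup\{a\}$, whereas $\dbnfolque{\vlist{T}}{\Pi}{\Sigma}$ demands infinitely many of full type exactly $S$. If infinitely many already have full type $S$ I keep them; otherwise infinitely many have full type $S\cup\{a\}$, and I delete $a$ from infinitely many of these. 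The delicate compatibility is that this must be done simultaneously for all types without depleting any required class: whenever $S\cup\{a\}\in\Sigma$, the clause $\qu y.\tau^a_{S\cup\{a\}}(y)=\qu y.\tau_{S\cup\{a\}}(y)$ forces full type $S\cup\{a\}$ to be infinite, so that infinite set can be split into two infinite halves, one feeding $S$ and one retained for $S\cup\{a\}$. Finally, the cofiniteness clause $\dqu y.\bigvee_{S\in\Sigma}\tau^a_S(y)$ guarantees that all but finitely many elements receive a full $\Sigma$-type under this reassignment, so both $\dqu y.\bigvee_{S\in\Sigma}\tau_S(y)$ and the universal closure $\forall z.(\arediff{\vlist{x},z}\lthen\bigvee_{S\in\Pi\cup\Sigma}\tau_S(z))$ hold in $\val'$; this yields $(D,\val')\models\dbnfolque{\vlist{T}}{\Pi}{\Sigma}$ and completes the argument. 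A corollary analogous to Corollary~\ref{cor:ofopositivenf}, recording the resulting basic form of the monotone (and fully positive) fragment of $\olque$, should fall out of the same construction.
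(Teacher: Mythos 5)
Your proposal is correct and takes essentially the same route as the paper's proof: reduce to the basic form of Theorem~\ref{thm:bfolque}, define $(-)^\tmono$ by $a$-positivising every type, obtain one direction from Lemma~\ref{lem:monolqueismonot}, and for the converse construct a valuation $\val'$ with $\val'(a)\subseteq\val(a)$ that realises the \emph{full} types and then invoke monotonicity --- indeed your explicit splitting of an infinite class of full type $S\cup\{a\}$ into two infinite halves is precisely the justification that the paper's terse ``the elements of $D$ can be partitioned'' observation leaves implicit. One wording should be tightened: when $S\in\Sigma$ but $S\cup\{a\}\notin\Pi\cup\Sigma$, you must delete $a$ from \emph{all} unreserved elements of full type $S\cup\{a\}$ (as the paper does via its total assignment $d\mapsto S_d$ with $\val'(a)=\{d \mid a\in S_d\}$), not merely from infinitely many of them, since any untouched such element would realise no type in $\Pi\cup\Sigma$ under $\val'$ and thus violate the clauses $\forall z.(\arediff{\vlist{x},z}\lthen\bigvee_{S'\in\Pi\cup\Sigma}\tau_{S'}(z))$ and $\dqu y.\bigvee_{S'\in\Sigma}\tau_{S'}(y)$ of the target disjunct.
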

\begin{proof}
We assume that $\varphi$ is in a normal form $\bigvee \dbnfolque{\vlist{T}}{\Pi}{\Sigma}$ where 
$$
\dbnfolque{\vlist{T}}{\Pi}{\Sigma} = \dbnfofoe{\vlist{T}}{\Pi \cup \Sigma} \land \dbnfinf{\Sigma} .
$$
for some sets of types $\Pi,\Sigma \subseteq \wp A$ and each $T_i \subseteq A$.
For the translation, we define
$$(\bigvee \dbnfolque{\vlist{T}}{\Pi}{\Sigma})^\tmono:= \bigvee \mondbnfolque{\vlist{T}}{\Pi}{\Sigma}{a}$$
where
\begin{eqnarray*}
\mondbnfolque{\vlist{T}}{\Pi}{\Sigma}{a} &:=& \mondbnfofoe{\vlist{T}}{\Pi \cup \Sigma}{a} \land \mondbnfinf{\Sigma}{a}\\
\mondbnfofoe{\vlist{T}}{\Lambda}{a} &:=& \exists \vlist{x}.\big(\arediff{\vlist{x}} \land \bigwedge_i \tau^a_{T_i}(x_i) \land \forall z.(\arediff{\vlist{x},z} \lthen \bigvee_{S\in \Lambda} \tau^a_S(z))\big) \\
\mondbnfinf{\Sigma}{a} &:=& \bigwedge_{S\in\Sigma} \qu y.\tau^a_S(y) \land \dqu y.\bigvee_{S\in\Sigma} \tau^a_S(y) .
\end{eqnarray*}
From the construction it is clear that $\varphi^\tmono \in \monot{\olque}{a}(A)$ and therefore the right-to-left direction of the lemma is immediate by Lemma~\ref{lem:monolqueismonot}. For the left-to-right direction assume that $\varphi$ is monotone in $a$, we have to prove that $(D,\val) \models \varphi$ iff $(D,\val) \models \varphi^\tmono$.

\bigskip
\noindent \fbox{$\Rightarrow$} This direction is trivial.

\bigskip
\noindent \fbox{$\Leftarrow$}
Assume $(D,\val) \models \varphi^\tmono$.
Observe that the elements of $D$ can be partitioned as follows:
\begin{enumerate}[(a)]
	\itemsep 0 pt
	\item Elements $t_i \in D$ witnessing the $a$-positive type $T_i \in \vlist{T}$,
	\item\label{it:dpi} Disjoint sets $\compset{D_S \subseteq D \mid S \in \Sigma}$ such that each $D_S$ is \emph{infinite} and every $d \in D_S$ is a witness for the $a$-positive type $S \in \Sigma$,
	\item\label{it:ds} A \emph{finite} set $D_\Pi \subseteq D$ of witnesses of the $a$-positive types in $\Pi$.
\end{enumerate}
In what follows we assume that every $d\in D$ is the witness of some fixed $a$-positive type $S_d \in \vlist{T}\cup \Pi \cup \Sigma$.
We define a new valuation $\val'$ as $\val'(b) = \val(b)$ for all $b \in A\setminus\{b\}$ and $\val'(a) := \compset{d\in D \mid a \in S_d}$. Observe that $\val'(a) \subseteq \val(a)$.
\begin{claimfirst}
	$(D,\val') \models \varphi$.
\end{claimfirst}
\begin{pfclaim}
	First we check that $(D,\val') \models \dbnfofoe{\vlist{T}}{\Pi \cup \Sigma}$. It is easy to see that the elements $t_i$ work as witnesses for the \emph{full} types $T_i$. That is $(D,\val') \models \tau_{T_i}(t_i)$ for every $i$. To prove the universal part of the formula it is enough to show that
	\begin{enumerate}
		\itemsep 0 pt
		\item Every element $d\in D_\Pi$ realizes the full type $S_d \in \Pi$,
		\item For all $S\in \Sigma$, every element of $D_S$ realizes the full type $S$.
	\end{enumerate}
	Let $d$ be an element of either $D_\Pi$ or any of the $D_S$. By~\ref{it:dpi} and~\ref{it:ds} we know $(D,\val) \models \tau_{S_d}^a(d)$. If $a\in S_d$ we can trivially conclude $(D,\val') \models \tau_{S_d}(d)$. If $a\notin S_d$, by definition of $\val'$ we know that $d \notin \val'(a)$ and hence we can also conclude that $(D,\val') \models \tau_{S_d}(d)$.

	To prove that $(D,\val') \models \bigwedge_{S\in\Sigma} \qu y.\tau_S(y) \land \dqu y.\bigvee_{S\in\Sigma} \tau_S(y)$ we only need to observe that the existential part is satisfied because each $D_S$ is infinite by~\ref{it:ds} and the universal part is satisfied because the set $D_\Pi \cup \vlist{T}$ is finite by~\ref{it:dpi}.
\end{pfclaim}
To finish the proof of the lemma, note that by monotonicity of $\varphi$ we get $(D,\val) \models \varphi$.
\end{proof}

Putting together the above lemmas we obtain Theorem~\ref{thm:olquemonot}. Moreover, a careful analysis of the translation gives us the following corollary, providing normal forms for the monotone fragment of $\olque$.

\begin{corollary}\label{cor:olquepositivenf}
	Let $\varphi \in \olque(A)$, the following hold:
	\begin{enumerate}[(i)]
		\item The formula $\varphi$ is monotone in $a \in A$ iff it is equivalent to a formula in the basic form $\bigvee \mondbnfolque{\vlist{T}}{\Pi}{\Sigma}{a}$ for some types $\Pi,\Sigma \subseteq \wp A$ and each $T_i \subseteq A$.
		\item The formula $\varphi$ is monotone in every $a\in A$ (i.e., $\varphi\in{\olque}^+(A)$) iff it is equivalent to a formula in the basic form $\bigvee \posdbnfolque{\vlist{T}}{\Pi}{\Sigma}$ for types $\Pi,\Sigma \subseteq \wp A$ and $T_i \subseteq A$, where
		%
		%
		\begin{align*}
			\posdbnfolque{\vlist{T}}{\Pi}{\Sigma} :=\ & \mondbnfofoe{\vlist{T}}{\Pi \cup \Sigma}{+} \land \posdbnfinf{\Sigma} \\
			\posdbnfofoe{\vlist{T}}{\Lambda} :=\ & \exists \vlist{x}.\big(\arediff{\vlist{x}} \land \bigwedge_i \tau^+_{T_i}(x_i) \land \forall z.(\arediff{\vlist{x},z} \lthen \bigvee_{S\in \Lambda} \tau^+_S(z))\big) \\
			\posdbnfinf{\Sigma} :=\ & \bigwedge_{S\in\Sigma} \qu y.\tau^+_S(y) \land \dqu y.\bigvee_{S\in\Sigma} \tau^+_S(y) .
		\end{align*}
	\end{enumerate}
\end{corollary}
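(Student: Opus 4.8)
The plan is to obtain both items by feeding the basic form theorem (Theorem~\ref{thm:bfolque}) into the monotone translation $(-)^\tmono$ constructed in the translation lemma used to prove Theorem~\ref{thm:olquemonot}, and then reading off the shape of the resulting formula. The two equivalences are essentially bookkeeping on top of those two results, with the real content already contained in them.

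For item (i), the right-to-left direction is immediate: a disjunction $\bigvee \mondbnfolque{\vlist{T}}{\Pi}{\Sigma}{a}$ lies in the fragment $\monot{\olque}{a}(A)$ by inspection, since it is built from the $a$-positive types $\tau^a_S$ (in which $a$ occurs only positively), the connectives $\land,\lor$, and the quantifiers $\exists,\forall,\qu,\dqu$; hence it is monotone in $a$ by Lemma~\ref{lem:monolqueismonot}. For the converse I would first apply Theorem~\ref{thm:bfolque} to rewrite $\varphi$ as a basic form $\bigvee \dbnfolque{\vlist{T}}{\Pi}{\Sigma}$, and then invoke the translation lemma: since $(\bigvee \dbnfolque{\vlist{T}}{\Pi}{\Sigma})^\tmono = \bigvee \mondbnfolque{\vlist{T}}{\Pi}{\Sigma}{a}$ by definition of $(-)^\tmono$, and $\varphi$ is monotone in $a$, the lemma yields $\varphi \equiv \varphi^\tmono = \bigvee \mondbnfolque{\vlist{T}}{\Pi}{\Sigma}{a}$, which is the required basic form.

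For item (ii) the right-to-left direction is again read off directly: in $\bigvee \posdbnfolque{\vlist{T}}{\Pi}{\Sigma}$ every type is a positive type $\tau^+_S$, so no predicate is negated; hence the formula lies in $\monot{\olque}{a}(A)$ for every $a\in A$ and is monotone in each of them by Lemma~\ref{lem:monolqueismonot}, i.e. $\varphi\in{\olque}^+(A)$. For the left-to-right direction the cleanest route is to positivize all predicates simultaneously rather than to iterate item (i): starting from a basic form $\bigvee \dbnfolque{\vlist{T}}{\Pi}{\Sigma}$ for $\varphi$, I would replace every full type $\tau_S$ by the positive type $\tau^+_S$, producing $\bigvee \posdbnfolque{\vlist{T}}{\Pi}{\Sigma}$, and then re-run the valuation construction from the translation lemma for all names at once. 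Concretely, assuming $(D,\val)$ satisfies the positivized disjunct, I would fix for every $d\in D$ a positive type $S_d$ that $d$ witnesses (consistently with the partition of $D$ into the finitely many witnesses $t_i$ of the $T_i$, the infinite sets $D_S$ for $S\in\Sigma$, and the finite set $D_\Pi$ for $\Pi$), set $\val'(b):=\compset{d\in D\mid b\in S_d}$ for every $b\in A$, and check that $(D,\val')$ satisfies the original basic form. Since $\val'(b)\subseteq\val(b)$ for all $b$, monotonicity of $\varphi$ in every predicate then gives $(D,\val)\models\varphi$, while the trivial implication $\tau_S\models\tau^+_S$ supplies the other direction.

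The main obstacle is the overlap of positive types: unlike the mutually exclusive full types $\tau_S$, distinct $\tau^+_S$ are not disjoint (for instance $\tau^+_\emptyset=\top$ holds everywhere), so the assignment $d\mapsto S_d$ is not canonical and must be chosen by hand, matching each element to the witness-, infinite-, or finite-block of the partition so that the existential, universal, $\qu$- and $\dqu$-clauses of the original basic form all become true under $\val'$. This is exactly the bookkeeping carried out for a single predicate in the translation lemma; the only new point is that doing it for all names at once forces $\val'(b)\subseteq\val(b)$ simultaneously, which is what licenses the final appeal to monotonicity in all of $A$. I would also note that the naive iteration of item (i) over $a_1,\dots,a_n$ is more delicate, because re-expressing an intermediate $a_i$-positive form in basic form (as required before applying $(-)^\tmono$ for $a_{i+1}$) could reintroduce negative occurrences of the already-positivized predicates; the simultaneous construction sidesteps this entirely.
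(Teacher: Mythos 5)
Your proposal is correct and takes essentially the same approach as the paper: the paper derives this corollary from Theorem~\ref{thm:bfolque} together with the monotone translation $(-)^\tmono$, summarizing the argument as ``a careful analysis of the translation'', and your item (i) is exactly that analysis made explicit. Your simultaneous positivization for item (ii) --- fixing a single positive type $S_d$ for each element, setting $\val'(b):=\{d \mid b\in S_d\}$ for all $b\in A$ at once, and noting that naive iteration over the predicates could reintroduce negative occurrences --- spells out precisely what the paper leaves implicit, at the same level of rigor as the paper's own proof of the single-predicate translation lemma.
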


\subsection{One-step continuity}\label{subsec:one-stepcont}

Consider a one-step logic $\llang_1(A)$ and formula $\varphi \in \llang_1(A)$.
We say that $\varphi$ is \emph{continuous in $a\in A$} if $\varphi$ is monotone in $a$ and additionally,
for every $(D,\val)$ and assignment $\ass:\fovar\to D$,
$$
\text{if } (D,\val),\ass \models \varphi \text{ then } \exists U \subseteq_\omega \val(a) \text{ such that } (D, \val[a \mapsto U]),\ass \models \varphi.
$$

\noindent It will be useful to give a syntactic characterization of continuity for several one-step logics.

\subsubsection{Continuous fragment of $\ofo$}

\begin{theorem}\label{thm:ofocont}
A formula of $\ofo(A)$ is continuous in ${a\in A}$ iff it is equivalent to a sentence given by
$$
\varphi ::= \psi \mid a(x) \mid \exists x.\varphi(x) \mid \varphi \land \varphi \mid \varphi \lor \varphi
$$
where
$\psi \in \ofo(A\setminus \{a\})$. We denote this fragment as $\cont{\ofo}{a}(A)$.
\end{theorem}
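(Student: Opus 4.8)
The plan is to follow exactly the two-step pattern used for the monotone fragment (Theorem~\ref{thm:ofomonot}): a soundness lemma stating that every formula in $\cont{\ofo}{a}(A)$ is semantically continuous in $a$, and a completeness result giving a translation $(-)^\tcont$ into $\cont{\ofo}{a}(A)$ that preserves truth precisely when the input is continuous.

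For soundness I would show, by induction on the construction of $\varphi\in\cont{\ofo}{a}(A)$, that $\varphi$ is continuous in $a$ (hence in particular monotone, since continuity subsumes monotonicity). The base cases are immediate: for $a(x)$ one shrinks $\val(a)$ to the singleton $\{\ass(x)\}$, and a formula $\psi\in\ofo(A\setminus\{a\})$ does not mention $a$ and is continuous by shrinking $\val(a)$ to $\varnothing$. For $\varphi_1\lor\varphi_2$ one keeps the witnessing disjunct; for $\exists x.\varphi'$ one shrinks the witness set obtained from $\varphi'$; and for $\varphi_1\land\varphi_2$ one takes the \emph{union} of the two finite witness sets and invokes monotonicity to see that both conjuncts still hold. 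The decisive structural point, exactly as in Lemma~\ref{lem:monofoismonot}, is that the grammar of $\cont{\ofo}{a}$ forbids universal quantification over subformulas containing $a$ (and negation of $a$): this is what would break continuity, as the non-continuous formula $\forall x.\,a(x)$ shows.

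For completeness I would define $(-)^\tcont$ on the monotone normal form, composing it with the translation $(-)^\tmono$ of Theorem~\ref{thm:ofomonot}: a continuous formula is monotone, so $\varphi\equiv\bigvee\mondbnfofo{\Sigma}{a}$. Writing $\hat{S}:=S\setminus\{a\}$ for the $a$-free reduct of a type, I would set, on a single disjunct,
\[
  (\mondbnfofo{\Sigma}{a})^\tcont \;:=\; \bigwedge_{S\in\Sigma}\exists x.\,\tau^a_S(x)\;\land\;\forall x.\bigvee_{S\in\Sigma}\tau^a_{\hat{S}}(x),
\]
keeping the existential (``$a$-positive'') part unchanged but projecting the universal part onto the $a$-free types $\{\hat{S}\mid S\in\Sigma\}$. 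Since each $\tau^a_{\hat{S}}$ is $a$-free, the universal conjunct is a legitimate $\psi\in\ofo(A\setminus\{a\})$ and every $\exists x.\tau^a_S(x)$ lies in the fragment, so $(\mondbnfofo{\Sigma}{a})^\tcont\in\cont{\ofo}{a}(A)$. The right-to-left direction of the biconditional is then immediate: $\varphi^\tcont$ is continuous by the soundness lemma, so $\varphi\equiv\varphi^\tcont$ forces $\varphi$ continuous.

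For the left-to-right direction I would first record the easy inclusion $\varphi\models\varphi^\tcont$, valid for any monotone $\varphi$: if $x$ (with full type $S_x$ and $a$-free type $\hat{S}_x$) satisfies the original clause $\bigvee_{S\in\Sigma}\tau^a_S$, the witnessing $S$ has $\hat{S}=\hat{S}_x$, so the projected (weaker) clause holds too. The real work, and the step I expect to be the main obstacle, is the converse $\varphi^\tcont\models\varphi$ under the hypothesis that $\varphi$ is continuous. Given $(D,\val)\models(\mondbnfofo{\Sigma}{a})^\tcont$ through some disjunct, I would \emph{enlarge} $a$ to $\val^{*}(a):=\{x\in D\mid \hat{S}_x\cup\{a\}\in\Sigma\}$; the projected universal constraint then guarantees every element is covered by a \emph{full} type in $\Sigma$, while the existential witnesses survive by monotonicity, so $(D,\val^{*})\models\mondbnfofo{\Sigma}{a}$ and hence $(D,\val^{*})\models\varphi$. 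The delicate point is to transfer satisfaction back from this enlarged model to the given $(D,\val)$: here I would use continuity of $\varphi$ to shrink $a$ down to a finite set and argue, using that changing $a$ never alters the realized $a$-free types, that $(D,\val)$ itself satisfies one of the original disjuncts. As in the monotone lemma it is cleanest to reduce first (by continuity of $\varphi^\tcont$ together with monotonicity of $\varphi$) to models in which $\val(a)$ is finite; the genuine bookkeeping difficulty is reconciling the finite witness set produced by continuity with the given valuation $\val$. A reading of the translation finally yields the corresponding normal form for the continuous fragment, in the style of Corollary~\ref{cor:ofopositivenf}.
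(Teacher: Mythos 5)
Your architecture (a soundness lemma by induction, then a truth-preserving translation on the monotone normal form) is exactly the paper's, and your soundness induction matches Lemma~\ref{lem:cofoiscont}. But your translation is genuinely different from the paper's, and the difference shifts the whole burden of proof onto precisely the step you leave open. The paper defines $(\mondbnfofo{\Sigma}{a})^\tcont$ with universal part $\forall x.\bigvee_{S\in\Sigma^{-}_{a}}\tau^a_S(x)$, where $\Sigma^{-}_{a}=\{S\in\Sigma\mid a\notin S\}$: this \emph{strengthens} the universal clause, so $\varphi^\tcont\models\varphi$ is trivial and the only hard direction is $\varphi\models\varphi^\tcont$ under continuity. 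Your projection onto $\{S\setminus\{a\}\mid S\in\Sigma\}$ \emph{weakens} the universal clause (the two sets differ exactly when some $S\in\Sigma$ has $a\in S$ but $S\setminus\{a\}\notin\Sigma$), so for you $\varphi\models\varphi^\tcont$ is trivial and the hard direction is $\varphi^\tcont\models\varphi$. That implication is in fact true, but your sketch does not prove it, and this is a genuine gap rather than bookkeeping. First, your $\val^*(a):=\{x\mid\hat S_x\cup\{a\}\in\Sigma\}$ is not an enlargement of $\val$ (it can also delete $a$), so "witnesses survive by monotonicity" is not literally right, though one can check directly that $(D,\val^*)\models\mondbnfofo{\Sigma}{a}$. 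The real failure is the back-transfer: the finite $U\subseteq_\omega\val^*(a)$ produced by continuity of $\varphi$ need not lie inside $\val(a)$, since $U$ may contain elements that acquired $a$ only by promotion; hence from $(D,\val^*[a\mapsto U])\models\delta_j$ you cannot conclude $(D,\val)\models\delta_j$, and your proposed reduction to finite $\val(a)$ does not remove this obstruction — a promoted element of $U$ may be forced, by the universal clause of $\delta_j$, into a type containing $a$ that it does not realize under $\val$.

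The missing device is the one the paper deploys for its own hard direction: since $\ofo$ has no equality, $(D,\val)\equiv_\fo(D\times\omega,\val_\pi)$, where every element gets infinitely many copies. Working in this blow-up, one adds $a$ only to the copies that need it, and any finite shrinking leaves infinitely many $a$-free copies of each element; this repairs the universal clause (for each $d$, some copy outside $U$ witnesses $\hat S_d\in\Sigma_j$, and then $\tau^a_{\hat S_d}$, which never forbids $a$, covers all copies of $d$ under $\val_\pi$). For the existential clause of $\delta_j$ one needs a further observation absent from your sketch: if a witness for $\tau^a_S$ with $a\in S$ lies in $U$ but was promoted, then necessarily $S=\hat S_d\cup\{a\}\in\Sigma$, so the existential part of your translation — which you deliberately kept intact — already supplies a genuine witness in $(D,\val)$. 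With these two points your route closes and becomes a legitimate alternative to the paper's (where the same $D\times\omega$ argument instead proves $\varphi\models\varphi^\tcont$); without them, the step you flag as "the genuine bookkeeping difficulty" is an actual hole, and it sits exactly where the paper's proof does its real work.
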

The theorem will follow from the next two lemmas.

\begin{lemma}\label{lem:cofoiscont}
If $\varphi \in \cont{\ofo}{a}(A)$ then $\varphi$ is continuous in $a$.
\end{lemma}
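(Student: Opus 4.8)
The plan is to prove the lemma by induction on the structure of $\varphi$, after two preparatory observations. First, recall that the continuity condition for $a$ has two parts: monotonicity in $a$, and the existence of a finite witness. Monotonicity comes for free: syntactically $\cont{\ofo}{a}(A) \subseteq \monot{\ofo}{a}(A)$, since the continuous fragment is obtained from the monotone one of Theorem~\ref{thm:ofomonot} simply by deleting the $\forall$-clause, so Lemma~\ref{lem:monofoismonot} already gives that every $\varphi \in \cont{\ofo}{a}(A)$ is monotone in $a$. Second, since the inductive case for $\exists$ forces us to look underneath a quantifier, I would strengthen the statement to speak about formulas with free individual variables evaluated under an assignment. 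Concretely, I would prove: for every formula $\varphi$ generated by the grammar of $\cont{\ofo}{a}(A)$, every one-step model $(D,\val)$ and every assignment $\ass : \fovar \to D$, if $(D,\val),\ass \models \varphi$ then there is a finite $U \subseteq_\omega \val(a)$ with $(D,\val[a\mapsto U]),\ass \models \varphi$.

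For the base cases: if $\varphi = \psi$ with $\psi \in \ofo(A\setminus\{a\})$, the predicate $a$ does not occur in $\psi$, so its truth value is independent of the interpretation of $a$; hence $U = \nada$ works. If $\varphi = a(x)$ and $(D,\val),\ass \models a(x)$, then $\ass(x) \in \val(a)$, so $U = \{\ass(x)\}$ is a finite subset of $\val(a)$ that still satisfies $a(x)$.

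For the inductive cases: if $\varphi = \varphi_1 \lor \varphi_2$, then some disjunct $\varphi_i$ already holds, and the finite set $U_i$ provided by the induction hypothesis for $\varphi_i$ witnesses the whole disjunction. If $\varphi = \exists x.\varphi'$, pick a witness $d \in D$ with $(D,\val),\ass[x\mapsto d] \models \varphi'$; applying the induction hypothesis to $\varphi'$ at the extended assignment $\ass[x\mapsto d]$ yields a finite $U \subseteq_\omega \val(a)$, and the same $d$ witnesses $\exists x.\varphi'$ in the shrunken model. The one case that actually uses content is conjunction $\varphi = \varphi_1 \land \varphi_2$: the induction hypothesis gives finite sets $U_1, U_2 \subseteq_\omega \val(a)$ for the two conjuncts, but these need not coincide, so I take $U := U_1 \cup U_2$, which is still finite and contained in $\val(a)$, and then invoke monotonicity in $a$ (legitimate since $\varphi_1,\varphi_2 \in \cont{\ofo}{a}(A) \subseteq \monot{\ofo}{a}(A)$) to lift each conjunct from $U_i$ up to the larger set $U$.

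The argument is routine, and the only conceptually significant point is exactly the reason $\forall$ is absent from the fragment: in the conjunction case we combine \emph{finitely many} finite supports into one finite support via monotonicity, which is harmless, whereas a universal quantifier $\forall x.\varphi'$ would require a \emph{single} finite $U$ working uniformly for all witnesses $d \in D$ simultaneously, and no such finite set need exist. Thus the main obstacle is not a calculation but recognizing that the interplay of the union-of-supports step with monotonicity is precisely what distinguishes the continuous fragment from the merely monotone one.
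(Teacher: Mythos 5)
Your proof is correct and follows essentially the same route as the paper's: strengthen the statement to open formulas under an assignment, get monotonicity from the monotone-fragment result, and induct, with the conjunction case handled by taking $U_1 \cup U_2$ and lifting each conjunct via monotonicity. Your closing remark on why $\forall x$ must be excluded is a nice addition, but the argument itself matches the paper's proof step for step.
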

\begin{proof}
First observe that $\varphi$ is monotone in $a$ by Theorem~\ref{thm:ofomonot}.
We show, by induction, that any one-step formula $\varphi$ in the fragment (which may not be a sentence) satisfies, for every one-step model $(D,\val:A\to\wp D)$, assignment ${\ass:\fovar\to D}$,
$$
\text{If } (D,\val),\ass \models \varphi \text{ then } \exists U \subseteq_\omega \val(a) \text{ such that } (D,\val[a\mapsto U]),\ass \models \varphi.
$$
\begin{enumerate}[$\bullet$]
\item If ${\varphi = \psi \in \ofo(A\setminus \{a\})}$ changes in the $a$ part of the valuation will make no difference and hence the condition is trivial. 

\item Case $\varphi = a(x)$: if $(D,\val),\ass \models a(x)$ then $\ass(x)\in \val(a)$. Clearly, $\ass(x) \in \val[a\mapsto \{\ass(x)\}](a)$ and hence $(D, \val[a\mapsto \{\ass(x)\}]),\ass \models a(x)$. 

\item Case $\varphi = \varphi_1 \lor \varphi_2$: assume $(D,\val),\ass \models \varphi$. Without loss of generality we can assume that $(D,\val),\ass \models \varphi_1$ and hence by induction hypothesis we have that there is $U \subseteq_\omega \val(a)$ such that $(D,\val[a\mapsto U]),\ass \models \varphi_1$ which clearly satisfies $(D,\val[a\mapsto U]),\ass \models \varphi$. 

\item Case $\varphi = \varphi_1 \land \varphi_2$: assume $(D,\val),\ass \models \varphi$. By induction hypothesis we have $U_1,U_2 \subseteq_\omega \val(a)$ such that $(D,\val[a\mapsto U_1]),\ass \models \varphi_1$ and $(D,\val[a\mapsto U_2]),\ass \models \varphi_2$. By monotonicity this also holds with $\val[a\mapsto U_1 \cup U_2]$ and therefore $(D,\val[a\mapsto U_1 \cup U_2]),\ass \models \varphi$. 

\item Case $\varphi = \exists x.\varphi'(x)$ and $(D,\val),\ass \models \varphi$. By definition there exists $d\in D$ such that $(D,\val),\ass[x\mapsto d] \models \varphi'(x)$. By induction hypothesis there exists $U \subseteq_\omega \val(a)$ such that $(D,\val[a\mapsto U]),\ass[x\mapsto d] \models \varphi'(x)$ and hence $(D,\val[a\mapsto U]),\ass \models \exists x.\varphi'(x)$.
\end{enumerate}
\end{proof}

\begin{lemma}
There exists a translation $(-)^\tcont:\monot{\ofo}{a}(A) \to \cont{\ofo}{a}(A)$ such that
a formula ${\varphi \in \monot{\ofo}{a}(A)}$ is continuous in $a$ if and only if $\varphi\equiv \varphi^\tcont$.
\end{lemma}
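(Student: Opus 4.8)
The plan is to define $(-)^{\tcont}$ on the basic normal form supplied by Corollary~\ref{cor:ofopositivenf}(i): by Theorem~\ref{thm:ofomonot} every $\varphi\in\monot{\ofo}{a}(A)$ is monotone in $a$, hence equivalent to some $\bigvee_j\mondbnfofo{\Sigma_j}{a}$, and I set, writing $\Sigma_0:=\{S\in\Sigma\mid a\notin S\}$ for each disjunct,
\[
\Big(\bigvee_j \mondbnfofo{\Sigma_j}{a}\Big)^{\tcont} \isdef \bigvee_j \Big(\bigwedge_{S\in\Sigma_j}\exists x.\,\tau^a_S(x)\ \land\ \forall x.\,\bigvee_{S\in(\Sigma_j)_0}\tau^a_S(x)\Big).
\]
That is, I keep every existential witness requirement but shrink the universal clause so that it ranges only over the $a$-free types of $\Sigma_j$. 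Each $\tau^a_S$ with $S\in\Sigma_0$ is $a$-free, so the universal conjunct is an $a$-free sentence $\psi\in\ofo(A\setminus\{a\})$; and each $\exists x.\tau^a_S(x)$ is either $a$-free or of the shape $\exists x.(a(x)\land\psi(x))$ with $\psi$ $a$-free. Hence $\varphi^{\tcont}\in\cont{\ofo}{a}(A)$.

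The right-to-left direction is then immediate: $\varphi^{\tcont}\in\cont{\ofo}{a}(A)$ is continuous in $a$ by Lemma~\ref{lem:cofoiscont}, and continuity is invariant under logical equivalence, so $\varphi\equiv\varphi^{\tcont}$ forces $\varphi$ to be continuous. For the left-to-right direction I prove the two inclusions separately. The inclusion $\varphi^{\tcont}\models\varphi$ holds unconditionally: per disjunct the existential part is untouched and the universal clause now ranges over $\Sigma_0\subseteq\Sigma$, so it is logically stronger, giving $(\mondbnfofo{\Sigma}{a})^{\tcont}\models\mondbnfofo{\Sigma}{a}$.

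The substantive inclusion is $\varphi\models\varphi^{\tcont}$ under the continuity hypothesis, and this is where the work lies. Fix $(D,\val)\models\varphi$; my goal is to exhibit a single disjunct $\Sigma$ all of whose existential witnesses are present at $(D,\val)$ and for which every $(A\setminus\{a\})$-type realised in $D$ already belongs to $\Sigma_0$, since these two facts together give $(D,\val)\models(\mondbnfofo{\Sigma}{a})^{\tcont}$. Let $\mondbnfofo{\Sigma^{\circ}}{a}$ be a disjunct true at $(D,\val)$. The main obstacle is that $\Sigma^{\circ}$ may contain a type $S\ni a$ all of whose witnesses carry $a$, so that the reduct $S\setminus\{a\}$ is realised in $D$ yet is not forced into $\Sigma^{\circ}_0$; for that disjunct the shrunk universal clause would fail. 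To defeat this I feed continuity an \emph{enriched} model: let $(D^{+},\val^{+})$ extend $(D,\val)$ by adding, for each $(A\setminus\{a\})$-type $R$ realised in $D$ by some element of $\val(a)$, infinitely many fresh copies realising $R$ and lying in the denotation of $a$. A short check shows $(D^{+},\val^{+})\models\varphi$: the new points duplicate types already witnessed $a$-positively in $D$, hence satisfy the universal clause of $\mondbnfofo{\Sigma^{\circ}}{a}$, whose existentials are still met by the original witnesses.

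Now I apply continuity to $(D^{+},\val^{+})$: there is a finite $U\subseteq_\omega\val^{+}(a)$ with $(D^{+},\val^{+}[a\mapsto U])\models\varphi$, say via a disjunct $\mondbnfofo{\Sigma}{a}$; this $\Sigma$ is the one I use. Let $P$ be the set of $(A\setminus\{a\})$-types realised in $D$ (equivalently in $D^{+}$). For each $R\in P$ the shrunk model still contains an $a$-free witness of type $R$ — a surplus copy if $R$ was realised $a$-positively in $D$ (there are infinitely many copies but $U$ is finite), and an original $a$-false element otherwise — so the universal clause of $\Sigma$ forces $R\in\Sigma_0$; thus $P\subseteq\Sigma_0$. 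Finally, because I only added copies of types already realised $a$-positively in $D$, every existential requirement of $\Sigma$ is witnessed back in $(D,\val)$: an $a$-free requirement $\exists x.\tau^a_S(x)$ ($S\in\Sigma_0$) holds since $S\in P$, and a requirement with $a\in S$, even if witnessed in $U$ by a new copy, traces to an original element of $\val(a)$ of the same reduct. Hence $(D,\val)\models(\mondbnfofo{\Sigma}{a})^{\tcont}$, so $\varphi\models\varphi^{\tcont}$, completing the left-to-right direction. (The analogous monotone and continuous characterisations for $\ofo$ used here are Theorem~\ref{thm:ofomonot} and Theorem~\ref{thm:ofocont}.)
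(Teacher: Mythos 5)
Your proposal is correct and takes essentially the same route as the paper: the translation is identical (keep the existentials, weaken the universal clause to the $a$-free types $\Sigma^{-}_{a}$), and your enriched model $(D^{+},\val^{+})$ plays exactly the role of the paper's inflated model $(D\times\omega,\val_\pi)$, where finiteness of the continuity witness $U$ guarantees an $a$-free representative of each relevant reduct type, forcing it into $\Sigma$ and hence into $\Sigma^{-}_{a}$. The only cosmetic difference is that the paper duplicates \emph{all} elements and invokes $(D,\val)\equiv_{\fo}(D\times\omega,\val_\pi)$ (available because $\ofo$ lacks equality), whereas you add fresh $a$-labelled clones only for reducts realised inside $\val(a)$ and verify preservation of $\varphi$ disjunct-by-disjunct by hand.
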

\begin{proof}
To define the translation we assume that $\varphi$ is in basic form $\bigvee \mondbnfofo{\Sigma}{a}$ where
$$
\mondbnfofo{\Sigma}{a} := \bigwedge_{S\in\Sigma} \exists x. \tau^a_S(x) \land \forall x. \bigvee_{S\in\Sigma} \tau^a_S(x)
$$
for some types $\Sigma \subseteq \wp A$. 
For the translation, let
$(\bigvee \mondbnfofo{\Sigma}{a})^\tcont := \bigvee \mondbnfofo{\Sigma}{a}^\tcont$ and
$$
\mondbnfofo{\Sigma}{a}^\tcont := \bigwedge_{S\in\Sigma} \exists x. \tau^a_S(x) \land \forall x. \bigvee_{S\in\Sigma^{-}_{a}} \tau^a_S(x)
$$
where $\Sigma^{-}_{a} := \{S\in \Sigma \mid a\notin S\}$.

\bigskip
From the construction it is clear that $\varphi^\tcont \in \cont{\ofo}{a}(A)$ and therefore the right-to-left direction of the lemma is immediate by Lemma~\ref{lem:cofoiscont}. For the left-to-right direction assume that $\varphi$ is continuous in $a$, we have to prove that $(D,\val) \models \varphi$ iff $(D,\val) \models \varphi^\tcont$, for every one-step model $(D,\val)$. We will take a slightly different but equivalent approach.

It is easy to prove that $(D,\val) \equiv_\fo (D\times \omega,\val_\pi)$ where $D\times\omega$ has countably many copies of each element in $D$ and $\val_\pi(a) := \{(d,k) \mid d\in \val(a), k\in\omega\}$.
Moreover, as $\varphi$ is continuous in $a$ there is $U \subseteq_\omega \val_\pi(a)$ such that $\val'_\pi := \val[a \mapsto U]$ satisfies $(D\times\omega,\val_\pi) \models \varphi$ iff $(D\times\omega,\val'_\pi) \models \varphi$.
Therefore, it is enough to prove that $(D\times\omega,\val'_\pi) \models \varphi$ iff $(D\times\omega,\val'_\pi) \models \varphi^\tcont$. 

\bigskip
\noindent \fbox{$\Rightarrow$}
Let $(D\times\omega,\val'_\pi) \models \dbnfofo{\Sigma}$, we show that $(D\times\omega,\val'_\pi) \models \dbnfofo{\Sigma}^\tcont$. The existential part of $\dbnfofo{\Sigma}^\tcont$ is trivially true. We have to show that every element of $(D\times\omega,\val'_\pi)$ realizes a $a$-positive type in $\Sigma^{-}_{a}$. Let $(d,k) \in D\times\omega$ and $T$ be the (full) type of $(d,k)$. If $a\notin T$ then trivially $T\in \Sigma^{-}_{a}$ and we are done. Suppose $a\in T$. Observe that in $D\times\omega$ we have infinitely many copies of $d\in D$. However, as $\val'_\pi(a)$ is finite, there must be some $(d,k')$ with type $T' := T\setminus\{a\}$.
For $\dbnfofo{\Sigma}$ to be true we must have $T'\in \Sigma$ and hence $T'\in \Sigma^{-}_{a}$. It is easy to see that $(d,k)$ realizes the $a$-positive type $T'$.

\bigskip
\noindent \fbox{$\Leftarrow$}
Let $(D\times\omega,\val'_\pi) \models \dbnfofo{\Sigma}^\tcont$, we show that $(D\times\omega,\val'_\pi) \models \dbnfofo{\Sigma}$. The existential part is again trivial. For the universal part just observe that $\Sigma^{-}_{a} \subseteq \Sigma$.
\end{proof}

Putting together the above lemmas we obtain Theorem~\ref{thm:ofocont}. Moreover, a careful analysis of the translation gives us the following corollary, providing normal forms for the continuous fragment of $\ofo$.

\begin{corollary}\label{cor:ofocontinuousnf}
	Let $\varphi \in \ofo(A)$, the following hold:
	\begin{enumerate}[(i)]
		\item The formula $\varphi$ is continuous in $a \in A$ iff it is equivalent to a formula in the basic form $\bigvee \chi^a_\Sigma$ for some types $\Sigma \subseteq \wp A$, where $\chi^a_\Sigma$ is given by
		$$
		\chi^a_\Sigma := \bigwedge_{S\in\Sigma} \exists x. \tau^a_S(x) \land \forall x. \bigvee_{S\in\Sigma^-_a} \tau^a_S(x) .
		$$
		\item If $\varphi$ is monotone in $A$ (i.e., $\varphi\in\ofo^+(A)$) then $\varphi$ is continuous in $a \in A$ iff it is equivalent to a formula in the basic form $\bigvee \chi^+_\Sigma$ for some types $\Sigma \subseteq \wp A$, where 
		$$
		\chi^+_\Sigma := \bigwedge_{S\in\Sigma} \exists x. \tau^+_S(x) \land \forall x. \bigvee_{S\in\Sigma^-_a} \tau^+_S(x) .
		$$
	\end{enumerate}
\end{corollary}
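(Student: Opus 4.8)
The plan is to read the two claimed normal forms directly off the continuity translation $(-)^\tcont$ constructed just above, combining it with the monotone normal forms of Corollary~\ref{cor:ofopositivenf}. Both parts are then a matter of identifying the syntactic shape of the translated formula and checking that it lies in the continuous fragment.

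For part (i) I first treat the right-to-left implication. A direct inspection shows that $\chi^a_\Sigma \in \cont{\ofo}{a}(A)$: each existential conjunct $\exists x.\tau^a_S(x)$ is built only from $a(x)$ and literals over $A\setminus\{a\}$, so it belongs to the fragment, while the universal conjunct $\forall x.\bigvee_{S\in\Sigma^-_a}\tau^a_S(x)$ is $a$-free (every $S\in\Sigma^-_a$ omits $a$) and hence counts as a formula $\psi\in\ofo(A\setminus\{a\})$. Consequently $\bigvee\chi^a_\Sigma\in\cont{\ofo}{a}(A)$, so by Lemma~\ref{lem:cofoiscont} it is continuous in $a$; since continuity is a semantic property, any $\varphi$ equivalent to it is continuous in $a$ as well.

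For the left-to-right implication, suppose $\varphi$ is continuous in $a$. By definition continuity entails monotonicity in $a$, so Corollary~\ref{cor:ofopositivenf}(i) yields an equivalent formula $\psi=\bigvee\mondbnfofo{\Sigma}{a}\in\monot{\ofo}{a}(A)$. Since $\psi\equiv\varphi$ is continuous in $a$, the continuity translation lemma established just above gives $\psi\equiv\psi^\tcont$, and by construction $\psi^\tcont=\bigvee\mondbnfofo{\Sigma}{a}^\tcont=\bigvee\chi^a_\Sigma$. Chaining the equivalences delivers $\varphi\equiv\bigvee\chi^a_\Sigma$, completing part (i).

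Part (ii) is obtained by running the same argument inside the fully positive fragment. When $\varphi\in\ofo^+(A)$ is moreover continuous in $a$, Corollary~\ref{cor:ofopositivenf}(ii) provides an equivalent positive normal form $\bigvee\posdbnfofo{\Sigma}$, and the translation applies verbatim with the positive types $\tau^+_S$ in place of the $a$-positive types $\tau^a_S$, its output being exactly $\bigvee\chi^+_\Sigma$ with the universal part again restricted to $\Sigma^-_a$. The converse is immediate, since $\chi^+_\Sigma\in\cont{\ofo}{a}(A)$ (the universal conjunct is $a$-free), whence $\bigvee\chi^+_\Sigma$ is continuous in $a$ by Lemma~\ref{lem:cofoiscont}. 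The only point deserving care—and the sole (minor) obstacle—is to confirm that the completeness argument of the translation lemma, which passes through the $\omega$-fold copy $(D\times\omega,\val_\pi)$ and exploits the finiteness of $\val'_\pi(a)$ to relocate a witness of type $T$ onto a copy realizing the $a$-free type $T\setminus\{a\}$, survives the passage to $\tau^+_S$; it does, because that argument only ever inspects membership in $\val(a)$ and the full type realized by an element, neither of which is sensitive to the syntactic difference between $\tau^a_S$ and $\tau^+_S$.
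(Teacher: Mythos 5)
Your proposal is correct and follows exactly the route the paper intends: the corollary is stated there as a ``careful analysis of the translation'' $(-)^\tcont$ combined with the monotone normal forms of Corollary~\ref{cor:ofopositivenf}, which is precisely the analysis you carry out, including the observation that $\chi^a_\Sigma \in \cont{\ofo}{a}(A)$ makes the right-to-left direction an instance of Lemma~\ref{lem:cofoiscont}. Your extra check for part (ii) --- that the $(D\times\omega,\val_\pi)$ relocation argument is insensitive to replacing $\tau^a_S$ by $\tau^+_S$, since copies of an element differ only in their membership in $\val(a)$ --- is exactly the point the paper leaves implicit, and it holds as you argue.
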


\subsubsection{Continuous fragment of $\olque$}

\begin{theorem}\label{thm:olquecont}
A formula of $\olque(A)$ is continuous in ${a\in A}$ iff it is equivalent to a sentence given by
$$
\varphi ::= \psi \mid a(x) \mid \exists x.\varphi(x) \mid \varphi \land \varphi \mid \varphi \lor \varphi \mid \wqu x.(\varphi,\psi)
$$
where $\psi \in \olque(A\setminus \{a\})$ and $\wqu x.(\varphi,\psi) := \forall x.(\varphi(x) \lor \psi(x)) \land \dqu x.\psi(x)$. We denote this fragment as $\cont{\olque}{a}(A)$.
\end{theorem}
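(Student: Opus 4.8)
The plan is to follow the two-step template used for $\ofo$ in Theorem~\ref{thm:ofocont}: first a soundness lemma, showing that every formula of $\cont{\olque}{a}(A)$ is continuous in $a$, and then a completeness lemma, exhibiting a translation $(-)^\tcont : \monot{\olque}{a}(A) \to \cont{\olque}{a}(A)$ for which $\varphi \equiv \varphi^\tcont$ holds exactly when $\varphi$ is continuous in $a$. The theorem then follows as before: continuity entails monotonicity by definition, so a continuous $\varphi \in \olque(A)$ is equivalent to a monotone normal form $\bigvee \mondbnfolque{\vlist{T}}{\Pi}{\Sigma}{a}$ by Corollary~\ref{cor:olquepositivenf}(i); applying $(-)^\tcont$ places it inside $\cont{\olque}{a}(A)$, while membership in the fragment gives continuity back through the soundness lemma.

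For the soundness lemma I would argue by induction on $\varphi \in \cont{\olque}{a}(A)$, reusing Lemma~\ref{lem:cofoiscont} verbatim for all connectives; monotonicity in $a$ is guaranteed since every operator of the fragment lies in $\monot{\olque}{a}(A)$ by Theorem~\ref{thm:olquemonot}, so only the new operator $\wqu x.(\chi,\psi)$ needs attention. Suppose $(D,\val),\ass \models \wqu x.(\chi,\psi)$, that is, $(D,\val),\ass \models \forall x.(\chi(x)\lor\psi(x))$ and $(D,\val),\ass \models \dqu x.\psi(x)$. As $\psi \in \olque(A\setminus\{a\})$ is $a$-free, the set $F$ of elements failing $\psi$ is finite; by the universal conjunct each $d\in F$ satisfies $\chi$, so by the induction hypothesis there is a finite $U_d \subseteq_\omega \val(a)$ with $(D,\val[a\mapsto U_d]),\ass[x\mapsto d]\models\chi$. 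Setting $U := \bigcup_{d\in F} U_d$, still finite, monotonicity of $\chi$ gives $\chi$ at every $d\in F$ under $\val[a\mapsto U]$, while $\psi$ is untouched by the change to $a$; hence $(D,\val[a\mapsto U]),\ass \models \wqu x.(\chi,\psi)$, establishing continuity.

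For the completeness lemma I would define $(-)^\tcont$ on a monotone disjunct $\mondbnfolque{\vlist{T}}{\Pi}{\Sigma}{a}$ as follows. Write $\Sigma^{-}_{a} := \{S\in\Sigma \mid a\notin S\}$ and $\Sigma^{+}_{a} := \Sigma\setminus\Sigma^{-}_{a}$. The translation keeps the finite existential witnesses of the types $T_i$ (finitely many, hence allowed to involve $a$), retains only the $a$-free infinitary block $\bigwedge_{S\in\Sigma^{-}_{a}}\qu y.\tau_S(y)$ as an $a$-free $\psi$-leaf, discards the non-continuous demands $\qu y.\tau^a_S(y)$ for $S\in\Sigma^{+}_{a}$, and folds the guarded universal constraint together with the cofiniteness constraint into one block $\wqu z.(\chi,\psi)$ with $\chi := \bigvee_{S\in\Pi\cup\Sigma}\tau^a_S$ (the $a$-positive disjunction covering the finitely many exceptional elements) and $\psi := \bigvee_{S\in\Sigma^{-}_{a}}\tau_S$ (the $a$-free disjunction covering the bulk). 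By construction $\varphi^\tcont \in \cont{\olque}{a}(A)$, so the soundness lemma already yields the right-to-left implication of the iff. For the left-to-right implication one assumes $\varphi$ continuous and must prove $\varphi\equiv\varphi^\tcont$, the guiding intuition being that after restricting $\val(a)$ to a finite set only finitely many elements can realise an $a$-positive type containing $a$, so every infinitary demand on $a$ is redundant and the disjuncts collapse onto their $\wqu$-shaped translations.

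The main obstacle is exactly this last step, and it is genuinely harder than in the first-order case. In Theorem~\ref{thm:ofocont} one passed to the $\om$-blow-up $(D\times\om,\val_\pi)$ and used $(D,\val)\equiv_\fo(D\times\om,\val_\pi)$; this device is unavailable here, since blowing up turns finite type-classes into infinite ones and $\olque$ detects precisely that distinction through $\qu$. Moreover neither implication $\varphi \leftrightarrow \varphi^\tcont$ holds unconditionally: $\varphi^\tcont$ drops the $a$-infinitary conjuncts, so it does not imply $\varphi$, and $\varphi$ carries demands that $\varphi^\tcont$ weakens, so it does not obviously imply $\varphi^\tcont$ either. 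The argument must therefore run entirely over the given domain, invoking the $\sim^\infty_k$-characterisation of $\olque$-equivalence from Lemma~\ref{lem:connolque} to control the counting of types, and then leaning on the continuity hypothesis itself to neutralise the infinitary $a$-demands: the crux is to show that, for a continuous $\varphi$, any disjunct whose $\Sigma^{+}_{a}$ is nonempty is redundant relative to the remaining disjuncts, so that replacing its infinitary-in-$a$ requirements by the bounded $\wqu$-pattern (and the $a$-free infinitary leaf) preserves the meaning of the whole disjunction. Managing the guarded universal part $\forall z.(\arediff{\vlist{x},z}\lthen\cdots)$ while re-expressing it through $\wqu$ is the fiddly bookkeeping surrounding this technical heart.
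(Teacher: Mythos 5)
Your first half is fine: the soundness lemma and its proof (handling $\wqu x.(\chi,\psi)$ by collecting the finitely many witnesses $U_d$ for the elements falsifying the $a$-free $\psi$, taking their finite union, and invoking monotonicity) is essentially identical to the paper's Lemma~\ref{lem:colqueiscont}. The gap is in the completeness half, and it is exactly where you yourself stall. You design a translation that \emph{weakens} each disjunct --- discarding the conjuncts $\qu y.\tau^a_S(y)$ for $S \in \Sigma$ with $a \in S$ and tightening the cofiniteness constraint to the $a$-free types --- and then concede that neither implication $\varphi \leftrightarrow \varphi^\tcont$ is clear, proposing to fight through $\sim^\infty_k$-machinery to show offending disjuncts ``redundant''. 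That crux is left unproven, and your translation is in fact defective as stated: when $\Sigma^{-}_{a} = \emptyset$ your $\psi$ is the empty disjunction, so $\wqu z.(\chi,\bot)$ forces a cofinitely empty (hence finite) domain, while the original disjunct forces an infinite one --- so the translated disjunct is not a sound weakening of the original, and no redundancy argument relative to the \emph{other} disjuncts can be run disjunct-by-disjunct as you set it up. (Also, your $\psi := \bigvee_{S\in\Sigma^{-}_{a}}\tau_S$ uses full types containing the literal $\lnot a(x)$, which is not $a$-free and so falls outside the grammar; you need $\tau^a_S$ there.)

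The paper's translation (Lemma~\ref{lem:olquectrans}) sidesteps all of this: it maps a disjunct $\mondbnfolque{\vlist{T}}{\Pi}{\Sigma}{a}$ to $\bot$ when $a \in \bigcup\Sigma$ and leaves it \emph{unchanged} otherwise. Two observations then close the argument in a few lines. First, a kept disjunct already lies in $\cont{\olque}{a}(A)$: regrouping the guarded universal part with $\dqu y.\bigvee_{S\in\Sigma}\tau^a_S(y)$ yields exactly a block $\wqu z.(\varphi'(\vlist{x},z),\psi(z))$ with $\psi(z) = \bigvee_{S\in\Sigma}\tau^a_S(z)$, which is $a$-free precisely because $a \notin \bigcup\Sigma$ --- an equivalence, not a weakening, so no semantic debt is incurred. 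Second, $\varphi^\tcont \Rightarrow \varphi$ is trivial ($\bot$-replacement only drops disjuncts), and for $\varphi \Rightarrow \varphi^\tcont$ continuity lets one pass to a finite restriction of $\val(a)$ still satisfying $\varphi$; there, any disjunct with $a \in \bigcup\Sigma$ is outright unsatisfiable, since $\qu y.\tau^a_S(y)$ with $a \in S$ forces $\val(a)$ to be infinite, so a kept disjunct must hold, and monotonicity of $\varphi^\tcont$ pulls the conclusion back to the original valuation. Your instinct that the $\om$-blow-up trick of Theorem~\ref{thm:ofocont} is unavailable is correct, but the heavy $\sim^\infty_k$ apparatus you reach for instead is unnecessary: the missing idea is simply that continuity makes the offending disjuncts \emph{false} (not merely redundant) after finite restriction, so they can be deleted rather than repaired.
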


Universal quantification is usually problematic for preserving continuity because of its potentially infinite nature. However, in the case of $\wqu x.(\varphi,\psi)$, the combination of both quantifiers ensures that all the elements are covered by $\varphi \lor \psi$ but only \emph{finitely many} are required to make $\varphi$ true (which contains $a\in A$). This gives no trouble for continuity in $a$.

The theorem will follow from the next two lemmas.

\begin{lemma}\label{lem:colqueiscont}
If $\varphi \in \cont{\olque}{a}(A)$ then $\varphi$ is continuous in $a$.
\end{lemma}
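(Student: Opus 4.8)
The plan is to mirror the proof of Lemma~\ref{lem:cofoiscont}, proceeding by induction on the complexity of $\varphi$, but carried out for arbitrary formulas of the fragment (not only sentences), so that the inductive hypothesis is available underneath the quantifiers. Concretely, I would prove that for every $\varphi \in \cont{\olque}{a}(A)$, every one-step model $(D,\val)$ and every assignment $\ass : \fovar \to D$,
\[
\text{if } (D,\val),\ass \models \varphi \text{ then there is } U \subseteq_\omega \val(a) \text{ with } (D,\val[a\mapsto U]),\ass \models \varphi.
\]
Before the induction I would record that every formula of $\cont{\olque}{a}(A)$ lies in $\monot{\olque}{a}(A)$: unfolding $\wqu x.(\varphi',\psi)$ as $\forall x.(\varphi'(x)\lor\psi(x)) \land \dqu x.\psi(x)$, both conjuncts are manifestly in the monotone fragment, since $\psi$ is $a$-free and (by induction) $\varphi'$ is monotone in $a$. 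Hence monotonicity in $a$ is guaranteed by Lemma~\ref{lem:monolqueismonot}; this will be needed in the inductive steps for conjunction and for $\wqu$.

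The base case $\psi \in \olque(A\setminus\{a\})$ and the cases $a(x)$, $\varphi_1 \lor \varphi_2$, $\varphi_1 \land \varphi_2$ and $\exists x.\varphi'(x)$ are handled exactly as in Lemma~\ref{lem:cofoiscont}: for an $a$-free $\psi$ any finite $U$ (e.g.\ $U=\varnothing$) works since changing $\val(a)$ makes no difference; for $a(x)$ one takes $U = \{\ass(x)\}$; for a disjunction one reuses the witness of the true disjunct; for a conjunction one takes the union $U_1 \cup U_2$ of the two witnesses and appeals to monotonicity; and for $\exists x$ one pushes the witness obtained from the induction hypothesis through the existential.

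The only genuinely new case — and the one I expect to be the crux — is $\varphi = \wqu x.(\varphi',\psi) = \forall x.(\varphi'(x)\lor\psi(x)) \land \dqu x.\psi(x)$ with $\psi \in \olque(A\setminus\{a\})$. Assume $(D,\val),\ass \models \varphi$. The second conjunct says that $F := \{\, d \in D \mid (D,\val),\ass[x\mapsto d] \not\models \psi(x) \,\}$ is \emph{finite}, and since $\psi$ does not mention $a$ this stays true under any reinterpretation of $a$. By the first conjunct, each $d \in F$ satisfies $(D,\val),\ass[x\mapsto d]\models\varphi'(x)$, so the induction hypothesis applied to $\varphi'$ yields a finite $U_d \subseteq_\omega \val(a)$ with $(D,\val[a\mapsto U_d]),\ass[x\mapsto d]\models\varphi'(x)$. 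Setting $U := \bigcup_{d\in F} U_d$, which is finite because $F$ is, I would check $(D,\val[a\mapsto U]),\ass\models\varphi$: the $\dqu$-conjunct is preserved because $\psi$ is $a$-free, and for the universal conjunct every $d\notin F$ is covered by $\psi(x)$ (again $a$-free), while every $d\in F$ is covered by $\varphi'(x)$ using $U_d \subseteq U$ together with monotonicity of $\varphi'$ in $a$. The essential point, and precisely the reason the $\wqu$ combinator preserves continuity, is that $\dqu x.\psi(x)$ confines the elements that must be witnessed by the $a$-containing part $\varphi'$ to the finite exceptional set $F$, so that a finite union of finite witness sets suffices; the potentially problematic universal quantifier is rendered harmless because it is discharged by the $a$-free $\psi$ on all but finitely many elements.
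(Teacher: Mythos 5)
Your proposal is correct and follows essentially the same route as the paper's proof: a loaded induction over arbitrary (non-sentence) formulas of the fragment, with the $\wqu x.(\varphi',\psi)$ case resolved by using the $\dqu$-conjunct to confine the exceptional elements to a finite set $F$, taking the finite union $U = \bigcup_{d\in F} U_d$ of the witnesses supplied by the induction hypothesis, and invoking monotonicity (via Lemma~\ref{lem:monolqueismonot}) to conclude. The only cosmetic difference is that the paper applies the induction hypothesis to the disjunction $\varphi'(x)\lor\psi(x)$ and then unions over the elements falsifying $\psi$, whereas you apply it to $\varphi'$ directly on $F$; the two are interchangeable.
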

\begin{proof}
Observe that monotonicity of $\varphi$ is guaranteed by Theorem~\ref{thm:olquemonot}.
We show, by induction, that any formula of the fragment (which may not be a sentence) satisfies, for every one-step model $(D,\val:A\to\wp D)$ and assignment ${\ass:\fovar\to D}$,
$$\text{If } (D,\val),\ass \models \varphi \text{ then } \exists U \subseteq_\omega \val(a) \text{ such that } (D,\val[a\mapsto U]),\ass \models \varphi.$$
We focus on the inductive case of the new quantifier. Let $\varphi' = \wqu x.(\varphi,\psi)$, i.e., 
$$\varphi' = \forall x.\underbrace{(\varphi(x) \lor \psi(x))}_{\alpha(x)} \land \underbrace{\dqu x.\psi(x)}_\beta.$$
Let $(D,\val),\ass \models \varphi'$. By induction hypothesis,
for every $\ass_d := \ass[x\mapsto d]$ which satisfies $(D,\val),\ass_d \models \alpha(x)$ there is $U_d \subseteq_\omega \val(a)$ such that $(D,\val[a\mapsto U_d]),\ass_d \models \alpha(x)$. The crucial observation is that because of $\beta$, 
only finitely many elements of $d$ make $\psi(d)$ false. Let $U := \bigcup \{U_d \mid (D,\val), \ass_d \not\models \psi(x) \}$. Note that $U$ is a finite union of finite sets, hence finite.
\begin{claimfirst}
	Let $\val_U := \val[a\mapsto U]$, we have $(D,\val_U),\ass \models \varphi'$.
\end{claimfirst}
\begin{pfclaim}
	It is clear that $(D,\val_U),\ass \models \beta$ because $\psi$ is $a$-free. To show that the first conjunct is true we have to show that $(D,\val_U),\ass_d \models \varphi(x) \lor \psi(x)$ for every $d\in D$. We consider two cases: (i) if $(D,\val),\ass_d \models \psi(x)$ we are done, again because $\psi$ is $a$-free; (ii) if the former is not the case then $U_d \subseteq U$; moreover, we knew that $(D,\val[a\mapsto U_d]),\ass_d \models \alpha(x)$ and by monotonicity of $\alpha(x)$ we can conclude that $(D,\val_U),\ass_d \models \alpha(x)$.
\end{pfclaim}
This finishes the proof of the lemma.
\end{proof}

\begin{lemma}\label{lem:olquectrans}
	There exists a translation $(-)^\tcont:\monot{\olque}{a}(A) \to \cont{\olque}{a}(A)$ such that
a formula $\varphi \in \monot{\olque}{a}(A)$ is continuous in $a$ if and only if $\varphi\equiv \varphi^\tcont$.
\end{lemma}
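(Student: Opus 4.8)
The plan is to put $\varphi$ into monotone normal form and then keep exactly those disjuncts that are compatible with continuity, discarding the rest. By Corollary~\ref{cor:olquepositivenf}(i) I may assume $\varphi=\bigvee_j\varphi_j$ with $\varphi_j=\mondbnfolque{\vlist{T}_j}{\Pi_j}{\Sigma_j}{a}$. Call $\varphi_j$ \emph{good} if $a\notin S$ for every $S\in\Sigma_j$, and \emph{bad} otherwise. The decisive observation is that the infinitary block $\mondbnfinf{\Sigma_j}{a}$ of a bad disjunct contains a conjunct $\qu y.\tau^a_S(y)$ with $a\in S$, and since then $\tau^a_S=\tau_S$ mentions $a$ positively, this asserts the existence of infinitely many points in the interpretation of $a$; such a demand cannot survive the shrinking of $a$ to a finite set, so a bad disjunct is never continuous. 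A good disjunct, by contrast, only ever needs the finitely many $a$-points provided by the witnesses $\vlist{x}$ and by the finite-multiplicity types in $\Pi_j$. For a generic good disjunct $\mondbnfolque{\vlist{T}}{\Pi}{\Sigma}{a}$ I fold the universal clause of $\mondbnfofoe{\vlist{T}}{\Pi\cup\Sigma}{a}$ together with the cofinitary clause of $\mondbnfinf{\Sigma}{a}$ into the single weak quantifier of Theorem~\ref{thm:olquecont}, setting
\[
\big(\mondbnfolque{\vlist{T}}{\Pi}{\Sigma}{a}\big)^\tcont \isdef \exists\vlist{x}.\Big(\arediff{\vlist{x}} \land \bigwedge_i \tau^a_{T_i}(x_i) \land \wqu z.\big(\arediff{\vlist{x},z}\lthen \bigvee_{S\in\Pi\cup\Sigma}\tau^a_S(z),\ \bigvee_{S\in\Sigma}\tau^a_S(z)\big)\Big) \land \bigwedge_{S\in\Sigma}\qu y.\tau^a_S(y),
\]
and then $\varphi^\tcont\isdef\bigvee_{j\text{ good}}(\varphi_j)^\tcont$ (the empty disjunction being $\bot$).

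Two syntactic facts drive the argument. First, $\varphi^\tcont\in\cont{\olque}{a}(A)$: in each good $(\varphi_j)^\tcont$ the name $a$ occurs only in the witness conjuncts $\tau^a_{T_i}(x_i)$ and in the first component of the weak quantifier, whereas its second component $\bigvee_{S\in\Sigma}\tau^a_S$ is $a$-free because $a\notin S$ for $S\in\Sigma$, which is precisely the shape allowed by the grammar of $\cont{\olque}{a}(A)$. Second, $(\varphi_j)^\tcont\equiv\varphi_j$ for every good $j$: the witness and $\qu$-conjuncts are literally those of $\varphi_j$, and since $\bigvee_{S\in\Sigma}\tau^a_S\models\bigvee_{S\in\Pi\cup\Sigma}\tau^a_S$ the universal part of $\wqu z$ collapses to the plain $\forall z.(\arediff{\vlist{x},z}\lthen\bigvee_{\Pi\cup\Sigma}\tau^a_S(z))$ of $\mondbnfofoe{\vlist{T}}{\Pi\cup\Sigma}{a}$, while the $\dqu z$-part of $\wqu z$ reproduces the cofinitary clause $\dqu y.\bigvee_{S\in\Sigma}\tau^a_S(y)$. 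Given these, the right-to-left implication is immediate: $\varphi^\tcont$ is continuous in $a$ by Lemma~\ref{lem:colqueiscont}, so $\varphi\equiv\varphi^\tcont$ forces $\varphi$ to be continuous as well.

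For the left-to-right implication I assume $\varphi$ continuous in $a$ and prove $\varphi\equiv\bigvee_{j\text{ good}}\varphi_j$; combined with $(\varphi_j)^\tcont\equiv\varphi_j$ for good $j$ this yields $\varphi\equiv\varphi^\tcont$. The inclusion $\bigvee_{\text{good}}\varphi_j\models\varphi$ is trivial, so let $(D,\val)\models\varphi$, say $(D,\val)\models\varphi_j$. If $j$ is good there is nothing to prove. If $j$ is bad, continuity of $\varphi$ provides a finite $U\subseteq_\omega\val(a)$ with $(D,\val[a\mapsto U])\models\varphi$, hence $(D,\val[a\mapsto U])\models\varphi_k$ for some $k$. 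As the interpretation $U$ of $a$ is now finite, no conjunct $\qu y.\tau^a_S(y)$ of $\mondbnfinf{\Sigma_k}{a}$ with $a\in S$ can hold, so $\Sigma_k$ contains no $a$-type and $k$ is good. Finally $\varphi_k$ is monotone in $a$ (Lemma~\ref{lem:monolqueismonot}) and $U\subseteq\val(a)$, so $(D,\val[a\mapsto U])\models\varphi_k$ lifts to $(D,\val)\models\varphi_k$, i.e.\ $(D,\val)\models\bigvee_{\text{good}}\varphi_j$.

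I expect the genuine content of the proof to sit in this last paragraph. The translation is sound only because, once $a$ is restricted to a finite set, every satisfied disjunct is automatically good, and the $a$-monotonicity of the normal form then transports satisfaction of that good disjunct back up to the original valuation; making the three devices cooperate — the $\qu$-conjuncts that witness infinitude of an $a$-type, the semantic continuity used to pass to a finite interpretation of $a$, and the monotonicity used to return — is the heart of the matter. The only real syntactic subtlety is that $\cont{\olque}{a}(A)$ offers no bare universal quantifier, so the plain $\forall z$ of the monotone normal form must be re-encoded through $\wqu$, which is exactly what the displayed definition achieves.
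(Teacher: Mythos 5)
Your proof is correct and follows essentially the same route as the paper: put $\varphi$ in the monotone basic form of Corollary~\ref{cor:olquepositivenf}, discard the disjuncts with $a\in\bigcup\Sigma$ (the paper replaces them by $\bot$), recast the surviving disjuncts via the $\wqu$ quantifier to land in $\cont{\olque}{a}(A)$, and for completeness use semantic continuity to pass to a finite interpretation of $a$, where only good disjuncts can hold. The only differences are cosmetic: you bake the $\wqu$ rewriting into the translation itself rather than performing it inside the proof, and you spell out the monotonicity step lifting satisfaction from $\val[a\mapsto U]$ back to $\val$, which the paper compresses into ``we may assume $\val(a)$ is finite.''
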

\begin{proof} We assume that $\varphi$ is in basic normal form, i.e., $\varphi = \bigvee \mondbnfolque{\vlist{T}}{\Pi}{\Sigma}{a}$.
For the translation let $(\bigvee \mondbnfolque{\vlist{T}}{\Pi}{\Sigma}{a})^\tcont := \bigvee \mondbnfolque{\vlist{T}}{\Pi}{\Sigma}{a}^\tcont$ and
$$
\mondbnfolque{\vlist{T}}{\Pi}{\Sigma}{a}^\tcont :=
\begin{cases}
	\bot &\text{ if } a\in \bigcup \Sigma\\
	\mondbnfolque{\vlist{T}}{\Pi}{\Sigma}{a} &\text{ otherwise}.
\end{cases}
$$

First we prove the right-to-left direction of the lemma. By Lemma~\ref{lem:colqueiscont} it is enough to show that $\varphi^\tcont \in \cont{\olque}{a}(A)$. We focus on the disjuncts of $\varphi^\tcont$. The interesting case is when $a\notin \bigcup \Sigma$. If we rearrange $\mondbnfolque{\vlist{T}}{\Pi}{\Sigma}{a}$ and define the formulas $\varphi', \psi$ as follows:
\begin{align*}
\mondbnfolque{\vlist{T}}{\Pi}{\Sigma}{a} \equiv \exists \vlist{x}.\Big(& \arediff{\vlist{x}} \land \bigwedge_i \tau^a_{T_i}(x_i)\ \land \\
& \forall z.(\underbrace{\lnot\arediff{\vlist{x},z} \lor \bigvee_{S\in \Pi} \tau^a_S(z)}_{\varphi'(\vlist{x},z)} \lor \underbrace{\bigvee_{S\in \Sigma} \tau^a_S(z)}_{\psi(z)})\ \land \\
& \dqu y.\underbrace{\bigvee_{S\in\Sigma} \tau^a_S(y)}_{\psi(y)} \Big) \land \bigwedge_{S\in\Sigma} \qu y.\tau^a_S(y),
\end{align*}
then we get that
$$
\mondbnfolque{\vlist{T}}{\Pi}{\Sigma}{a} \equiv \exists \vlist{x}.\Big(\arediff{\vlist{x}} \land \bigwedge_i \tau^a_{T_i}(x_i) \land \wqu z.(\varphi'(\vlist{x},z),\psi(z)) \Big) \land \bigwedge_{S\in\Sigma} \qu y.\tau^a_S(y)
$$
which, because $a\notin \bigcup \Sigma$, is in the required fragment.

For the left-to-right direction of the lemma we have to prove that $\varphi \equiv \varphi^\tcont$.

\bigskip
\noindent\fbox{$\Leftarrow$} Let $(D,\val) \models \varphi^\tcont$. The only difference between $\varphi$ and $\varphi^\tcont$ is that some disjuncts may have been replaced by $\bot$. Therefore this direction is trivial.

\bigskip
\noindent\fbox{$\Rightarrow$} Let $(D,\val) \models \varphi$. Because $\varphi$ is continuous in $a$ we may assume that $\val(a)$ is finite. Let $\mondbnfolque{\vlist{T}}{\Pi}{\Sigma}{a}$ be a disjunct of $\varphi$ such that $(D,\val) \models \mondbnfolque{\vlist{T}}{\Pi}{\Sigma}{a}$. If $a \notin \bigcup\Sigma$ we trivially conclude that $(D,\val) \models \varphi^\tcont$ because the disjunct remains unchanged. Suppose now that $a\in \bigcup\Sigma$, then there must be some $S\in\Sigma$ with $a\in S$. Because $(D,\val) \models \mondbnfolque{\vlist{T}}{\Pi}{\Sigma}{a}$ we have, in particular, that $(D,\val) \models \qu y.\tau^a_S(x)$ and hence $\val(a)$ must be infinite which is absurd.
\end{proof}

Putting together the above lemmas we obtain Theorem~\ref{thm:olquecont}. Moreover, a careful analysis of the translation gives us the following corollary, providing normal forms for the continuous fragment of $\olque$.

\begin{corollary}\label{cor:olquecontinuousnf}
Let $\varphi \in \olque(A)$, the following hold:
	\begin{enumerate}[(i)]
		\item The formula $\varphi$ is continuous in $a \in A$ iff it is equivalent to a formula in the basic form $\bigvee \mondbnfolque{\vlist{T}}{\Pi}{\Sigma}{a}$ for some types $\Pi,\Sigma \subseteq \wp A$ and $T_i \subseteq A$ such that $a\notin \bigcup\Sigma$.
		\item If $\varphi$ is monotone in every element of $A$ (i.e., $\varphi\in{\olque}^+(A)$) then $\varphi$ is continuous in $a \in A$ iff it is equivalent to a formula in the basic form $\bigvee \posdbnfolque{\vlist{T}}{\Pi}{\Sigma}$ for some types $\Pi,\Sigma \subseteq \wp A$ and $T_i \subseteq A$ such that $a\notin \bigcup\Sigma$.
	\end{enumerate}
\end{corollary}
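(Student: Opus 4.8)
The plan is to read both statements off directly from the monotone normal form (Corollary~\ref{cor:olquepositivenf}) together with the continuity translation of Lemma~\ref{lem:olquectrans}, since the corollary is essentially a restatement of what $(-)^\tcont$ computes on basic forms. For the left-to-right direction of part (i), suppose $\varphi$ is continuous in $a$. By definition continuity entails monotonicity, so $\varphi$ is monotone in $a$, and Corollary~\ref{cor:olquepositivenf}(i) supplies an equivalent formula $\psi = \bigvee \mondbnfolque{\vlist{T}}{\Pi}{\Sigma}{a}$ in monotone basic form. Since $\psi \equiv \varphi$ it is continuous in $a$ as well, so Lemma~\ref{lem:olquectrans} gives $\psi \equiv \psi^\tcont$. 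But $\psi^\tcont$ replaces every disjunct with $a \in \bigcup\Sigma$ by $\bot$ and leaves the others unchanged; discarding the $\bot$-disjuncts, $\psi^\tcont$ is exactly a disjunction of formulas $\mondbnfolque{\vlist{T}}{\Pi}{\Sigma}{a}$ all satisfying $a \notin \bigcup\Sigma$, which is the required shape.

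For the right-to-left direction of part (i), suppose $\varphi \equiv \bigvee \mondbnfolque{\vlist{T}}{\Pi}{\Sigma}{a}$ with $a \notin \bigcup\Sigma$ in every disjunct. The rewriting carried out in the right-to-left part of Lemma~\ref{lem:olquectrans} shows precisely that each such disjunct lies in the fragment $\cont{\olque}{a}(A)$ (via the $\wqu$ abbreviation). As this fragment is closed under disjunction, so is the whole formula, and Lemma~\ref{lem:colqueiscont} then yields that $\varphi$ is continuous in $a$.

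Part (ii) follows the same pattern, using Corollary~\ref{cor:olquepositivenf}(ii) in place of its part (i): if $\varphi$ is monotone in all of $A$, it is equivalent to a positive basic form $\bigvee \posdbnfolque{\vlist{T}}{\Pi}{\Sigma}$. Since a positive basic form is the special case of a monotone-in-$a$ basic form in which every predicate (not only $a$) occurs positively, the translation $(-)^\tcont$ restricts to it verbatim, again merely deleting the disjuncts with $a \in \bigcup\Sigma$ and leaving the surviving positive disjuncts untouched. Hence continuity in $a$ is equivalent to being equivalent to a positive basic form all of whose disjuncts satisfy $a \notin \bigcup\Sigma$.

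I do not expect a genuine obstacle, since the content is already packaged in the two lemmas and the monotone corollary. The only points needing care are: (a) checking that deleting the $\bot$-disjuncts leaves precisely a basic form of the advertised shape, including the degenerate case where every disjunct is deleted, which should be read as $\bot$ (vacuously continuous); and (b) confirming in part (ii) that positivity of the remaining predicates is preserved by $(-)^\tcont$, so that the output genuinely uses the positive types $\tau^+_S$ rather than the $a$-positive types $\tau^a_S$.
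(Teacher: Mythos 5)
Your proposal is correct and is precisely the ``careful analysis of the translation'' that the paper invokes without spelling out: you combine the monotone normal forms of Corollary~\ref{cor:olquepositivenf} with Lemma~\ref{lem:olquectrans} (whose translation $(-)^\tcont$ merely deletes the disjuncts with $a \in \bigcup\Sigma$) for the forward direction, and the $\wqu$-rearrangement inside the proof of Lemma~\ref{lem:olquectrans} together with Lemma~\ref{lem:colqueiscont} for the converse, exactly as the paper intends. Your two flagged points of care --- reading the empty disjunction as $\bot$, and checking that $(-)^\tcont$ restricts verbatim to the positive fragment so that part (ii) outputs genuine $\tau^+_S$-types --- are both handled correctly and match the paper's own stated desideratum that the continuity translation restricts to $\llang_1^+$.
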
 

\subsection{One-step co-continuity and Boolean duals}\label{subsec:one-stepcocont}

Consider a one-step logic $\llang_1(A)$ and formula $\varphi \in \llang_1(A)$.
We say that $\varphi$ is \emph{co-continuous in $a\in A$} if $\varphi$ is monotone in $a$ and,
for all $(D,\val)$ and $\ass:\fovar\to D$,
$$
\text{if } (D,\val),\ass \not\models \varphi \text{ then } \exists U \subseteq_\omega \val(a) \text{ such that } (D, \val[a \mapsto D\setminus U]),\ass \not\models \varphi.
$$

Observe that, already with the abstract definition of Boolean dual given in Definition~\ref{d:bdual1} we can prove the expected relationship between the notions of continuity and co-continuity.

\begin{proposition}\label{prop:contdualcocont}
	Let $\varphi \in \llang_1(A)$, $\varphi$ is continuous in $a\in A$ iff $\varphi^\delta$ is co-continuous in $a$.
\end{proposition}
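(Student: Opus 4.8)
The plan is to reduce the whole statement to the single defining identity of the Boolean dual, namely $(D,\val)\models\varphi \iff (D,\val^{c})\not\models\varphi^{\delta}$ (equivalently $(D,\val)\models\varphi^{\delta}\iff(D,\val^{c})\not\models\varphi$), and to exploit that complementation $\val\mapsto\val^{c}$ is an involutive bijection on the valuations over $D$. Before touching continuity I would record two bookkeeping facts. First, the dual relation is symmetric, so $(\varphi^{\delta})^{\delta}\equiv\varphi$; this lets me prove only one direction and obtain the other by applying it to $\varphi^{\delta}$ in place of $\varphi$. Second, $\varphi$ is monotone in $a$ iff $\varphi^{\delta}$ is: if $\val(a)\subseteq E$ and $(D,\val)\models\varphi^{\delta}$, then $(D,\val^{c})\not\models\varphi$, and since $D\setminus E\subseteq D\setminus\val(a)=\val^{c}(a)$ the valuation $\val^{c}[a\mapsto D\setminus E]$ assigns a smaller set to $a$; were it to satisfy $\varphi$, monotonicity of $\varphi$ would force $(D,\val^{c})\models\varphi$, a contradiction, so $(D,(\val[a\mapsto E])^{c})\not\models\varphi$, i.e.\ $(D,\val[a\mapsto E])\models\varphi^{\delta}$.

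The core computation is that $(\val[a\mapsto D\setminus U])^{c}=\val^{c}[a\mapsto U]$: on $a$ we have $D\setminus(D\setminus U)=U$, and on every other predicate the two complements agree. Using this together with the dual identity, the co-continuity clause for $\varphi^{\delta}$ — \emph{if $(D,\val)\not\models\varphi^{\delta}$ then there is a finite $U$ with $(D,\val[a\mapsto D\setminus U])\not\models\varphi^{\delta}$} — translates verbatim into \emph{if $(D,\val^{c})\models\varphi$ then there is a finite $U$ with $(D,\val^{c}[a\mapsto U])\models\varphi$}, where the $U$ produced is a finite subset of $\val^{c}(a)=D\setminus\val(a)$. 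Writing $\mu:=\val^{c}$ and letting $\val$ range over all valuations (so that $\mu$ does too, by the bijection), this last statement is precisely the continuity clause for $\varphi$: if $(D,\mu)\models\varphi$ then $(D,\mu[a\mapsto U])\models\varphi$ for some finite $U\subseteq_\omega\mu(a)$. Combining this with the monotonicity equivalence yields the biconditional as a chain of ``iff''s, and the symmetry of the dual takes care of the remaining direction.

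I expect the only delicate point to be the complement bookkeeping rather than any genuine difficulty: one must check that the finite witness delivered by continuity of $\varphi$ at $\val^{c}$ (a finite subset of $\val^{c}(a)=D\setminus\val(a)$) is exactly the finite set that co-continuity of $\varphi^{\delta}$ asks for against the co-finite value $D\setminus U$, and that the universal quantifier ``for all $\val$'' survives the substitution $\val\mapsto\val^{c}$. The free-variable assignment $\ass$ plays no role, since $a$ is a monadic predicate and complementation leaves $\ass$ untouched, so it can simply be carried along unchanged throughout.
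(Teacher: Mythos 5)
Your proof is correct and takes essentially the same route as the paper's: a chain of equivalences driven by the dual identity $(D,\val)\models\varphi \Leftrightarrow (D,\val^{c})\not\models\varphi^{\delta}$ together with the complementation identity $(\val[a\mapsto D\setminus U])^{c}=\val^{c}[a\mapsto U]$, with the converse direction dispatched by duality (the paper simply says ``the other direction is analogous''). If anything you are more careful than the paper, which neither verifies that monotonicity transfers under dualization (both continuity and co-continuity include it as a conjunct) nor notes that the finite witness produced lies in $\val^{c}(a)=D\setminus\val(a)$ rather than in $\val(a)$ as the paper's printed definition of co-continuity (apparently a slip, mirrored in the last line of its proof) would literally require.
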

\begin{proof}
Let $\varphi$ be continuous in $a$, we prove that $\varphi^\delta$ is co-continuous in $a$:
\begin{align*}
(D,\val) \not\models \varphi^\delta
& \text{ iff } (D,\val^c) \models \varphi & \tag{Definition~\ref{d:bdual1}} \\
& \text{ iff } \exists U \subseteq_\omega \val^c(a) \text{ such that } (D,\val^c[a\mapsto U]) \models \varphi & \tag{$\varphi$ continuous} \\
& \text{ iff } \exists U' \subseteq_\omega \val(a) \text{ such that } (D,\val[a\mapsto D\setminus U']) \not\models \varphi & \tag{Definition of $\val^c$}
\end{align*}
The proof of the other direction is analogous.
\end{proof}

To define a syntactic notion of co-continuity we first give a concrete definition of the dualization operator of Definition~\ref{d:bdual1} and then show that the one-step language $\olque$ is closed under Boolean duals.

\begin{definition}\label{DEF_dual} 
Let $\varphi \in {\olque}(A)$. 
The \emph{dual} $\varphi^{\delta} \in {\olque}(A)$ of $\varphi$ is defined 
as follows.
\begin{align*}
 (a(x))^{\delta} & :=  a(x) 
\\ (\top)^{\delta} & :=  \bot 
  & (\bot)^{\delta} & :=  \top 
\\  (x \approx y)^{\delta} & :=  x \not\approx y 
  & (x \not\approx y)^{\delta}& :=  x \approx y 
\\ (\varphi \wedge \psi)^{\delta} &:=  (\varphi)^{\delta} \vee (\psi)^{\delta} 
  &(\varphi \vee \psi)^{\delta}& :=  (\varphi)^{\delta} \wedge (\psi)^{\delta}
\\ (\exists x.\psi)^{\delta} &:=  \forall x.(\psi)^{\delta} 
  &(\forall x.\psi)^{\delta} &:=  \exists x.(\psi)^{\delta} 
\\ (\exists^{\infty} x.\psi)^{\delta} &:= \forall^{\infty} x.(\psi)^{\delta} 
  &(\forall^{\infty} x.\psi)^{\delta} &:=  \exists^{\infty} x.(\psi)^{\delta}
\end{align*} 
\end{definition}

\begin{remark}
	Observe that if $\varphi \in {\ofo}(A)$ then $\varphi^{\delta} \in {\ofo}(A)$ and that the operator preserves positivity of the predicates. That is, if $\varphi \in {\olque}^+(A)$ then $\varphi^{\delta} \in {\olque}^+(A)$ and the same occurs with $\ofo^+(A)$.
\end{remark}

The proof of the following Proposition is a routine check.

\begin{proposition}\label{prop:duals}
The sentences $\phi$ and $\phi^{\delta}$ are Boolean duals, for every $\phi 
\in \olque(A)$.
\end{proposition}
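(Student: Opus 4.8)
The plan is to proceed by a straightforward induction on the structure of $\phi$. Since the dualization of Definition~\ref{DEF_dual} is defined on all formulas, not only on sentences, the induction must be carried out for formulas with free individual variables; so the statement I would actually prove is that, for every $\phi \in \olque(A)$, every one-step model $(D,\val)$, and every assignment $\ass : \fovar \to D$,
\[
(D,\val),\ass \models \phi \quad\text{iff}\quad (D,\val^{c}),\ass \not\models \phi^{\delta}.
\]
Specializing to sentences (where the assignment plays no role) then gives exactly the Boolean-dual condition of Definition~\ref{d:bdual1}.

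For the base cases the only point worth noting is the atomic predicate clause: since $a(x)$ is its own dual, one uses that $\ass(x) \in \val(a)$ holds precisely when $\ass(x) \notin \val^{c}(a)$, which is where the complementation of the valuation does its work. The clauses for $\top,\bot$ and for $x \approx y$, $x \not\approx y$ are immediate, as these do not depend on the valuation and Definition~\ref{DEF_dual} already swaps them in De Morgan fashion. The propositional cases ($\wedge$ versus $\vee$) and the first-order quantifier cases ($\exists$ versus $\forall$) are the usual duality steps: push the outer negation through the dual connective or quantifier and apply the induction hypothesis to the immediate subformulas.

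The one step requiring genuine care --- the place where the argument is not purely formal De Morgan pushing --- is the case of the generalized quantifiers. For $\phi = \qu x.\psi$, with dual $\dqu x.\psi^{\delta}$, I would argue that $(D,\val^{c}),\ass \not\models \dqu x.\psi^{\delta}$ holds iff \emph{infinitely many} $d \in D$ satisfy $(D,\val^{c}),\ass[x\mapsto d] \not\models \psi^{\delta}$, using that $\dqu$ asserts precisely that only finitely many elements falsify its body; by the induction hypothesis applied pointwise, this is equivalent to there being infinitely many $d$ with $(D,\val),\ass[x\mapsto d] \models \psi$, that is, to $(D,\val),\ass \models \qu x.\psi$. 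The dual case $\phi = \dqu x.\psi$ is symmetric, resting on the fact that the negation of ``all but finitely many elements satisfy $\psi$'' is ``infinitely many elements falsify $\psi$''. Thus the crux is simply the correct handling of the interplay between ``infinitely many'' and ``cofinitely many'' under negation; all remaining cases are routine, which is why the statement can fairly be described as a routine check.
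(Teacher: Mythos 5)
Your proof is correct and is exactly the argument the paper intends: the paper dismisses Proposition~\ref{prop:duals} as ``a routine check,'' and your structural induction---strengthened to open formulas with assignments, with the only non-formal step being the observation that the negation of $\dqu x.\theta$ (``all but finitely many satisfy $\theta$'') is ``infinitely many falsify $\theta$''---is precisely that check carried out in full. The only cosmetic mismatch is that, like Definition~\ref{DEF_dual} itself, your induction follows the negation-normal-form grammar and so has no clause for $\neg$; this is consistent with the paper's implicit convention (and a clause $(\neg\psi)^{\delta}:=\neg(\psi^{\delta})$ would go through anyway).
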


We are now ready to give the syntactic definition of a co-continuous fragment for the one-step logics into consideration.

\begin{definition}\label{def:cocontfrag}
	Let $A$ be a set of names. The syntactic fragments of $\olque(A)$ and $\ofo(A)$ which are \emph{co-continuous} in $a\in A$ are given by
	\begin{align*}
		\cocont{\olque}{a}(A) &:= \{\varphi \mid \varphi^\delta \in \cont{\olque}{a}(A)\} &
		\cocont{\ofo}{a}(A) &:= \{\varphi \mid \varphi^\delta \in \cont{\ofo}{a}(A)\} .
	\end{align*}
\end{definition}

\begin{proposition}
	A formula $\varphi \in \olque(A)$ is co-continuous in $a\in A$ iff $\varphi \in \cocont{\olque}{a}(A)$.
\end{proposition}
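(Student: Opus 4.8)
The plan is to reduce this to the correspondence between continuity and co-continuity under Boolean dualization, combined with the syntactic characterization of the continuous fragment. The two ingredients I would invoke are Proposition~\ref{prop:contdualcocont}, stating that a one-step formula $\chi$ is continuous in $a$ exactly when $\chi^{\delta}$ is co-continuous in $a$, and Theorem~\ref{thm:olquecont}, stating that $\cont{\olque}{a}(A)$ captures continuity in $a$ up to logical equivalence. Before combining them I would record two elementary facts about the operator $(-)^{\delta}$ of Definition~\ref{DEF_dual}. First, it is an involution: $(\varphi^{\delta})^{\delta} = \varphi$ for every $\varphi \in \olque(A)$, which is immediate from a one-line inspection of the clauses. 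Second, it preserves logical equivalence: if $\varphi \equiv \psi$ then $\varphi^{\delta} \equiv \psi^{\delta}$. This follows from Proposition~\ref{prop:duals}, since the Boolean dual is determined up to equivalence by the formula together with the assignment $\val \mapsto \val^{c}$, which ranges over all valuations.

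For the direction from the fragment to the semantic property, I would take $\varphi \in \cocont{\olque}{a}(A)$, which by definition means $\varphi^{\delta} \in \cont{\olque}{a}(A)$. By Lemma~\ref{lem:colqueiscont} the formula $\varphi^{\delta}$ is then continuous in $a$, and applying Proposition~\ref{prop:contdualcocont} to $\chi := \varphi^{\delta}$ together with the involution property gives that $(\varphi^{\delta})^{\delta} = \varphi$ is co-continuous in $a$.

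For the converse, I would start from $\varphi$ co-continuous in $a$. Applying Proposition~\ref{prop:contdualcocont} to $\chi := \varphi^{\delta}$ and using the involution yields that $\varphi^{\delta}$ is continuous in $a$. By Theorem~\ref{thm:olquecont} there is a $\chi \in \cont{\olque}{a}(A)$ with $\varphi^{\delta} \equiv \chi$; dualizing and using that $(-)^{\delta}$ preserves equivalence gives $\varphi = (\varphi^{\delta})^{\delta} \equiv \chi^{\delta}$, and $\chi^{\delta} \in \cocont{\olque}{a}(A)$ directly from the definition of the co-continuous fragment. Hence $\varphi$ is equivalent to a member of $\cocont{\olque}{a}(A)$, which is the intended reading of the statement, exactly parallel to Theorem~\ref{thm:olquecont}.

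The only genuine subtlety — and the point I would be most careful about — is the semantic-versus-syntactic gap: the fragment $\cont{\olque}{a}(A)$ characterizes continuity only \emph{up to equivalence}, so the same caveat must be transported through the dual, and the biconditional in the statement should be understood modulo logical equivalence. Everything else is bookkeeping: the involution is a structural inspection, and equivalence-preservation of $(-)^{\delta}$ is a short argument from the defining property of Boolean duals. No fresh model-theoretic work is required, since all the analytic content already resides in Proposition~\ref{prop:contdualcocont} and Theorem~\ref{thm:olquecont}.
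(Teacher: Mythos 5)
Your proof is correct and follows exactly the route of the paper, whose own proof is a one-line citation of the same three ingredients (Proposition~\ref{prop:contdualcocont}, Theorem~\ref{thm:olquecont}, and Definition~\ref{def:cocontfrag}); you merely make explicit the bookkeeping the paper leaves implicit, namely that $(-)^{\delta}$ is a syntactic involution preserving logical equivalence. Your remark that the right-to-left direction must be read \emph{up to equivalence} is a point the paper glosses over, and you handle it correctly.
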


\begin{proof} This is a consequence of Proposition~\ref{prop:contdualcocont}, Theorem~\ref{thm:olquecont} and Definition~\ref{def:cocontfrag}.
\end{proof}

\section{Automata for WMSO}\label{sec:aut}

In this section we provide an automata-theoretic characterization of $\wmso$.
As argued in the introduction, $\yvWMSO$-automata will be defined as the
automata in $\yvAut(\olque)$ that satisfy two additional properties: weakness
and continuity.
We now briefly discuss these properties in a slightly more general setting,
starting with weakness.

\begin{definition}
\label{def:weak}
Let $\yvLo$ be a one-step language, and let $\bbA = \tup{A,\tmap,\pmap,a_I}$
be in $\yvAut(\yvLo)$.
Given two states $a,b$ of $\bbA$,
we say that there is a transition from $a$ to $b$, notation: $a \leadsto b$,
if $b$ occurs in $\De(a,c)$ for some $c \in C$.
We let the \emph{reachability} relation $\ord$ denote the reflexive-transitive
closure of the relation $\leadsto$.

A \emph{strongly connected $\ord$-component} ($\ord$-SCC) is a subset $M\subseteq A$ such that for every $a,b \in M$ we have $a \ord b$ and $b \ord c$. The SCC is called \emph{maximal} (MSCC) when $M\cup\{a\}$ ceases to be a SCC for any choice of $a \in A\setminus M$.

We say that $\Om$ is a \emph{weak} parity condition, and $\bbA$ is a
\emph{weak} parity automaton if we have
\begin{description}
\item[(weakness)] if $a \ord b$ and $b \ord a$ then $\pmap(a) = \pmap(b)$.
\end{description}
\end{definition}

\begin{remark}
Any weak parity automaton $\bbA$ is equivalent to a weak parity automaton
$\bbA'$ with $\pmap: A' \to \{0,1\}$. From now on we therefore consider only
weak parity automata with priorities $0$ and $1$.
\end{remark}

As explained in the introduction, the leading intuition is that weak parity automata are those unable to register non-trivial properties concerning the vertical `dimension' of input trees. Indeed on trees they characterize $\WFMSO$, that is, the fragment of $\MSO$ where the quantification is restricted to \emph{well-founded} subsets of trees (corresponding to the notion of \emph{noetherian} subset if we consider arbitrary transition systems). We refer to the literature on weak automata \cite{MullerSaoudiSchupp92} and on $\WFMSO$ \cite{DBLP:conf/lics/FacchiniVZ13,Zanasi:Thesis:2012} for more details.

We now turn to the second condition that we will be interested in,
viz., continuity. Intuitively, this property expresses a constraint on how much of the horizontal `dimension' of an input tree the automaton is allowed to process. It will be instrumental in showing that the class of automata that we are going to shape characterizes $\wmso$. The idea is that, as $\wmso$-quantifiers range over finite sets, the weakeness condition corresponds to those sets being `vertically' finite (i.e. included in well-founded subtrees), whereas the continuity condition corresponds to them being `horizontally' finite (i.e. included in finitely branching subtrees).

First we formulate our continuity condition abstractly in the setting of $\yvAut(\yvLo)$.
Given the semantics of the one-step language $\yvLo$, the (semantic) notion
of (co-)continuity applies to one-step formulas (see for instance
section~\ref{subsec:one-stepcont}). We can then formulate the following requirement on automata from $\yvAut(\yvLo)$:

\begin{description}
\item[(continuity)] let $a,b$ be states such that both $a\ord b$ and
$b \ord a$, and let $c\in C$;
    if ${\pmap(b)}=1$ then $\tmap(b,c)$ is continuous in $a$.
    If $\pmap(b)=0$, then $\tmap(b,c)$ is co-continuous in $a$.
\end{description}

For the automata used in this article we need to combine the constraints for the horizontal and vertical dimensions, yielding automata with both the weakness and continuity constraints.

\begin{definition}
A \emph{continuous-weak parity automaton} is an automaton $\aut \in \yvAut(\yvLo)$ addittionally satisfying both the \textbf{(weakness)} and \textbf{(continuity)} conditions.
We let $\yvcwAut(\llang_1)$ denote the class of such automata.
\end{definition}

Observe that, so far, the continuity condition is given semantically.
However, given that the one-step languages that we are interested in have
a \emph{syntactic characterization} of continuity (see for example Theorem~\ref{thm:olquecont})
we will give concrete definitions of these automata that take advantage of the mentioned characterizations.

\begin{definition}
A \emph{$\wmso$-automaton} $\aut = \tup{A,\Delta,\Omega,a_I}$ is an automaton $\aut \in \yvAut(\olque)$ such that for all states $a,b \in A$ with $a \ord b$ and $b\ord a$ the following conditions hold:
\begin{description}
	\itemsep 0 pt
	\item[(weakness)] $\pmap(a)=\pmap(b)$,
	\item[(continuity)] if $\pmap(a)$ is odd (resp. even) then, for each $c\in C$ we have
	   $\tmap(a,c) \in \cont{{\olque}^+}{b}(A)$ (resp. $\tmap(a,c) \in \cocont{{\olque}^+}{b}(A)$).
\end{description}
As the class of such automata coincides with $\yvcwAut(\olque)$ we use the same notation for it.
\end{definition}

We have now arrived at the main theorem of this section, which takes care of
one direction of Theorem~\ref{t:m1}. The main theorem of this section states that

\begin{theorem}
\label{t:wmsoauto}
There is an effective construction transforming a $\yvWMSO$-formula $\phi$
into a $\yvWMSO$-automaton $\bbA_{\phi}$ that is equivalent
to $\phi$ on the class of trees.
That is, for any tree $\bbT$,
\begin{equation}
\bbA_{\phi} \text{ accepts } \bbT \text{ iff } \bbT \models {\phi}.
\end{equation}
\end{theorem}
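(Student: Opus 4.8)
The plan is to induct on the structure of the $\yvWMSO$-formula $\phi$, constructing at each stage a $\yvWMSO$-automaton $\bbA_\phi \in \yvcwAut(\olque)$ whose accepted trees are exactly those satisfying $\phi$. For the three atomic cases $\here{p}$, $p \inc q$ and $R(p,q)$ I would write down explicit one- or two-state automata. For example, $p \inc q$ is handled by a single state $a$ with $\tmap(a,c) = \forall x.\,a(x)$ when the colour $c$ respects the implication $p\to q$ and $\tmap(a,c)=\bot$ otherwise, and with $\pmap(a)=0$; here $\forall x.\,a(x)$ lies in $\cocont{{\olque}^+}{a}(A)$, so the weakness and continuity constraints hold on the single SCC $\{a\}$. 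The cases $\here{p}$ and $R(p,q)$ are similar, the latter using an auxiliary ``probe'' state that inspects the colour of a chosen successor and then halts. In each case one reads directly off the acceptance game that \'Eloise wins from $(a_I,s_I)$ exactly when the atomic property holds, using the standing convention that trees are leafless so that the priority-$0$ infinite plays are winning for her.

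For the Boolean connectives I would invoke the closure properties already available at the one-step level. For $\phi=\lnot\psi$ I take $\bbA_\phi := \overline{\bbA_\psi}$; by Proposition~\ref{PROP_complementation} together with positional determinacy of parity games (Fact~\ref{THM_posDet_ParityGames}), $\overline{\bbA_\psi}$ accepts $\bbT$ iff $\bbA_\psi$ rejects $\bbT$ iff $\bbT\not\models\psi$. That $\overline{\bbA_\psi}$ is again a $\yvWMSO$-automaton follows because complementation leaves the transition graph, hence all SCC's, unchanged, preserves weakness since it shifts every priority by one, and swaps continuity with co-continuity in lockstep with the change of parity; this last point is exactly Proposition~\ref{prop:contdualcocont}, and the dualization operator preserves positivity. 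For $\phi=\psi_1\lor\psi_2$ I would take the disjoint union of $\bbA_{\psi_1}$ and $\bbA_{\psi_2}$ with a fresh initial state $a_I$ and transition $\tmap(a_I,c) := \tmap_1(a_I^1,c)\lor\tmap_2(a_I^2,c)$. As $a_I$ occurs in no transition it forms a trivial SCC, so weakness and continuity are inherited from the two summands, and a play from $(a_I,s_I)$ forces \'Eloise to commit to one disjunct, so acceptance matches $\bbT\models\psi_1$ or $\bbT\models\psi_2$.

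The genuinely hard case is the second-order quantifier $\phi=\exists p.\psi$, whose semantics demands a \emph{finite} set $X$ with $\bbT[p\mapsto X]\models\psi$. Here I would first move $\bbA_\psi$ to a non-deterministic normal form: projection requires that the guessed value of $p$ at each node be globally coherent across all plays, which an alternating run cannot guarantee, whereas a non-deterministic automaton carries a single run fixing one marking $X$. I expect producing such a normal form while \emph{preserving} both weakness and continuity to be the main obstacle. Granting it, the projection is precisely where the two structural constraints earn their keep, via the K\"onig-Lemma decomposition announced in the introduction: a subset of a tree is finite iff it sits inside a subtree that is both well-founded and finitely branching. I would arrange the projected automaton so that the region where $p$ is guessed true, together with its ancestors (the \emph{support} of $X$), is governed by states of odd priority whose transition maps are continuous in the support states. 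Continuity then ensures that only finitely many successors remain in the support, giving finite branching (the horizontal dimension), while the odd priority, through weakness, makes every infinite support-branch losing for \'Eloise and so forces well-foundedness (the vertical dimension); together these pin the support, hence $X$, down to a finite set. Outside the support $p$ is frozen to false and the automaton merely continues the $\psi$-run. Correctness of the $\exists p$-step would finally be established by matching \'Eloise's winning strategies with finite witnesses $X$ and accepting runs of $\bbA_\psi$ on $\bbT[p\mapsto X]$.
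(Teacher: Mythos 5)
Your atomic and Boolean cases are essentially the paper's: it gives the same explicit two-state automaton for $R(p,q)$ (citing the thesis for $p \inc q$), and it handles $\lnot$ and $\lor$ exactly as you do, via complementation through Boolean duals (Theorem~\ref{t:cl-cmp}, with dualization swapping continuity and co-continuity in step with the parity shift) and closure under union (Theorem~\ref{t:cl-dis}). Your K\"onig-lemma reading of the two constraints for the quantifier case is also the right intuition. The genuine gap is in the $\exists p.\psi$ case: you reduce it to a ``non-deterministic normal form preserving weakness and continuity'' and grant that step, but this is precisely the step carrying all the technical weight of the paper's proof, and in the form you ask for it is not available. The paper explicitly remarks that Walukiewicz's simulation theorem for $\mso$-automata preserves neither the weakness nor the continuity condition, and that even the two-sorted construction for weak $\mso$-automata fails here: when two macro-states $R,Q$ lie in a common SCC (all macro-states carrying priority $1$), continuity of the transition sentence in $Q$ can be destroyed by subformulas of the form $\qu x.\,Q(x)$ inherited from the $\olque$ transition map, which force $Q$ to be interpreted over infinitely many nodes.

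What the paper does instead is weaker and more delicate than non-determinization: the \emph{finitary construct} $\aut^{\f}$ (Definition~\ref{def:finitaryconstruct}) is two-sorted, consisting of a copy of the original alternating automaton together with macro-states from $\shA = \wp(A \times A)$, all at priority $1$; the run is required to be functional and finitary \emph{only in the macro-state sort}, and remains fully alternating elsewhere. Making this work requires the one-step translation $(\cdot)^{\f}$ of Definition~\ref{DEF_finitary_lifting}, which reroutes the $\qu/\dqu$ cardinality constraints away from lifted macro-state types onto ordinary $A$-types (via $\widetilde{\Sigma} = \{\Ran(S) \mid S \in \Sigma\}$), so that the translated sentences become \emph{functionally continuous} in $\shA$ (Lemma~\ref{LEM_cont}); this yields winning strategies that are functional and finitary in $\shA$ (Lemma~\ref{PROP_facts_finConstr}) --- exactly your finite ``support'', obtained without demanding any global coherence of the $A$-sorted part of the run. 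The projection automaton then guesses $p$ only at macro-state positions (Definition~\ref{DEF_fin_projection}), so finiteness of the witness $X$ falls out of the finitary strategy (Lemma~\ref{PROP_fin_projection}). In short, your odd-priority-plus-continuity mechanism for the support region is the correct idea, but your proposal is missing two things on which the granted step actually fails or succeeds: the realization that full non-determinism is neither achievable nor needed here, and the one-step translation that rescues continuity of macro-state transitions in the presence of $\qu$.
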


As usual, the proof of this theorem proceeds by induction on the complexity of
$\phi$.
For the inductive steps of this proof, we need to verify that the class of
\wmso-automata is closed under the boolean operations and finite projection.
Clearly, the latter closure property requires most of the work; we first
provide a simulation theorem that put $\wmso$-automata in a suitable shape
for the projection construction.
The proofs in the next two sections are (nontrivial) modifications of the
analogous results proved in~\cite{DBLP:conf/lics/FacchiniVZ13}.
It will be convenient to use the following terminology.

\begin{definition}
The \emph{tree language $\trees(\aut)$ recognized} by an automaton $\bbA$ is
defined as the collection of trees that are accepted by $\bbA$.
A class of trees is \emph{$\yvWMSO$-automaton recognizable} (or
\emph{recognizable}, if clear from context), if it is of the form
$\trees(\aut)$ for a $\yvWMSO$-automaton $\bbA$.
\end{definition}


\subsection{Simulation theorem}

In this section we show that any $\wmso$-automaton $\aut$ can be simulated by a ``two-sorted'' $\wmso$-automaton $\mb{A}^{\f}$. The leading intuition is that $\mb{A}^{\f}$ will consist of one copy of $\aut$ (based on a set of states $A$) and a variant of its powerset construction, which will be based both on states from $A$ and ``macro-states'' from $\pw (A \times A)$.\footnote{It is customary for powerset constructions on parity automata to encode macro-states as binary relations between states (from $\pw (A \times A)$) instead of plain sets (from $\pw A$). Such additional structure is needed to correctly associate with a run on macro-states the corresponding bundle of runs of the original automaton $\aut$. We refer to the standard literature on parity automata (e.g. \cite{Walukiewicz96,ALG02}) for further details.} Successful runs of $\mb{A}^{\f}$ will have the property of processing only a \emph{finite} amount of the input with $\mb{A}^{\f}$ being in a macro-state and all the rest with $\mb{A}^{\f}$ behaving exactly as $\aut$.

To achieve this result, we first need some preliminary definitions. The following is a notion of lifting for types on states that is instrumental in defining a translation to types on macro-states. The distinction between empty and non-empty subsets of $A$ is to make sure that empty types on $A$ are lifted to empty types on $\pw A$.

\begin{definition}\label{def:typelifting}
Let $A$ be a set of unary predicates.
Given a set $\Sigma \subseteq \wp A$, its \emph{lifting} is the set
$\lift{\Sigma} \subseteq \wp \wp A$ defined as follows:
\begin{eqnarray*}
\lift{\Sigma} & := & \{\{S\} \mid S \in \Sigma \wedge S \neq \emptyset\} \cup
    \{\emptyset \mid \emptyset \in \Sigma \}.
\end{eqnarray*}
\end{definition}

The next step is to define a translation on the sentences associated with the
transition function of the original $\wmso$-automaton, say with set of states $A$. Following the intuition given above, the idea is that we want to work with sentences that can be made true by assigning macro-states (from $\p(A \times A)$) to finitely many nodes in the model, and ordinary states (from $A$) to all the other nodes. Henceforth, we use the notation $\shA$ for the set $\p(A \times A)$.

\begin{definition}\label{DEF_finitary_lifting}
Let $\varphi \in {\olque}^+(A \times A)$ be a formula of shape $\posdbnfolque{\vlist{T}}{\Pi}{\Sigma}$ for some $\Pi,\Sigma \subseteq \shA$ and $\vlist{T} = \{T_1,\dots,T_k\} \subseteq \shA$. Let $\widetilde{\Sigma}\subseteq \wp A$ be $\widetilde{\Sigma} := \{\Ran(S) \mid S \in \Sigma\}$. We define $\varphi^{\f} \in {\olque}^+(A \cup \shA )$ as follows:
$$(\posdbnfolque{\vlist{T}}{\Pi}{\Sigma})^{\f} := \posdbnfolque{\lift{\vlist{T}}}{\lift{\Pi} \cup \lift{\Sigma}}{\widetilde{\Sigma}}. $$
Observe that each ${\tau}^{+}_{P}$ with $P \in \widetilde{\Sigma}$ appearing in $\varphi^{\f}$ is a (positive) $A$-type, as $P = \Ran(S) \subseteq A$ for some $S \in \Sigma$.
\end{definition}

\noindent Our desiderata on the translation $(\cdot)^{\f}$ concern the notions of \emph{continuity} and \emph{functionality}.

\begin{definition} Given a set $A$ of unary predicates and $B \subseteq A$, we say that a sentence $\varphi \in {\olque}^+(A)$ is \emph{functionally continuous in $B$} if, for every model $(D,\val \: A \to \p(D))$,
\begin{align*}
\text{if } (D,\val),\ass \models \varphi \text{ then } & \exists\ \val' \: A \to \p(D) \text{ such that } (D, \val'),\ass \models \varphi, \\
& \val'(a)\subseteq \val(a) \text{ for all } a \in A, \tag{$\val'$ is a restriction of $\val$}\\
 & \val'(b) \text{ is finite for all }b \in B \text{ and } \tag{continuity in $B$}\\
 & \val'(b)\cap \val'(a) = \emptyset \text{ for all } a \in A\setminus\{b\} \text{ and }b \in B\tag{functionality in $B$}.
\end{align*}
\end{definition}
In words, $\varphi$ is functionally continuous in $B$ if it is continuous in each $b \in B$ and, for each model $(D,\val)$ where $\varphi$ is true, there is a restriction $\val'$ of $\val$ which both witnesses continuity and does not assign any other $a \in A$ to the elements marked with some $b \in B$.

\begin{lemma}\label{LEM_cont}
Let $\varphi \in {\olque}^+(A \times A)$ and $\varphi^{\f}\in {\olque}^+(A\cup \shA )$ be given as in Definition~\ref{DEF_finitary_lifting}. Then $\varphi^{\f}$ is functionally continuous in $\shA$.
 \end{lemma}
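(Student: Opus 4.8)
The plan is to unwind Definition~\ref{DEF_finitary_lifting} and exploit the single structural feature that makes the translation work: all macro-state predicates (those in $\shA$) are confined to the ``finite'' positions of the basic form. Since Definition~\ref{DEF_finitary_lifting} applies to one disjunct, I may assume $\varphi = \posdbnfolque{\vlist{T}}{\Pi}{\Sigma}$, so that
\[
\varphi^{\f} = \posdbnfofoe{\lift{\vlist{T}}}{\lift{\Pi}\cup\lift{\Sigma}\cup\widetilde{\Sigma}} \land \posdbnfinf{\widetilde{\Sigma}}.
\]
The observation I would record first is that every type occurring as an existential witness ($\lift{T_i}$) or in the $\Pi$-slot ($\lift{\Pi}\cup\lift{\Sigma}$) is either a singleton macro-state $\{S\}$ with $S \in \shA$ or the empty type, whereas every type in the $\Sigma$-slot $\widetilde{\Sigma} = \{\Ran(S) \mid S \in \Sigma\}$ is a pure $A$-type, since $\Ran(S) \subseteq A$. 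Thus macro-states are never required to occur infinitely often, which is precisely what continuity in $\shA$ will need.

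Next I would fix a model with $(D,\val),\ass \models \varphi^{\f}$ and read off, from the semantics of the basic form, distinct witnesses $d_1,\dots,d_k$ realizing $\lift{T_1},\dots,\lift{T_k}$, together with the set $G := \{z \in D \mid z \text{ realizes no type of } \widetilde{\Sigma} \text{ in } (D,\val)\}$, which is finite by the $\dqu$-conjunct. Put $C := \{d_1,\dots,d_k\} \cup G$. I would then build the restriction $\val'$ as follows: strip every macro-state label off $D \setminus C$ while keeping all $A$-labels there; on $C$, assign each $d_i$ the single macro-state $T_i$ (and, if $T_i = \emptyset$, no label), and assign each remaining core element $e$ a single macro-state $S_e \in \Pi \cup \Sigma$ with $e \in \val(S_e)$ --- such $S_e$ exists because $e$ is distinct from the witnesses and lies in $G$, hence by the universal conjunct realizes a type in $\lift{\Pi}\cup\lift{\Sigma}$, i.e.\ a macro-state singleton (or the empty type, in which case $e$ is left unlabelled). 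Concretely $\val'(a) := \val(a)\setminus C$ for $a \in A$ and $\val'(P) := \{c \in C \mid c \text{ was assigned } P\}$ for $P \in \shA$. It is then immediate that $\val'$ is a restriction of $\val$, that each $\val'(P)$ is finite (being $\subseteq C$), and that functionality holds, since macro-labels live only on $C$ while the $A$-labels live on $D\setminus C$, and each core element carries at most one macro-state.

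The remaining work is to check $(D,\val'),\ass \models \varphi^{\f}$ conjunct by conjunct, and this is where the design of $(\cdot)^{\f}$ pays off. The existential witnesses are the $d_i$, which realize $\lift{T_i}$ under $\val'$ by construction; every $z \neq d_1,\dots,d_k$ realizes an allowed type, since if $z \in D\setminus C$ it keeps its $\widetilde{\Sigma}$-type (an $A$-type, untouched by deleting macro-labels) and if $z \in G$ it realizes its assigned macro-singleton in $\lift{\Pi}\cup\lift{\Sigma}$. For the $\posdbnfinf{\widetilde{\Sigma}}$ conjunct, each $\widetilde{\Sigma}$-type is realized at infinitely many points of $D\setminus C$ under $\val$, and because $C$ is finite and these types are $A$-types, all but finitely many of those witnesses survive in $\val'$; dually, the $\dqu$-clause holds because every element of $D\setminus C$ still realizes a $\widetilde{\Sigma}$-type and $C$ is finite. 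I expect the only genuinely delicate step to be exactly this verification: confirming that stripping the macro-labels off the infinite part does not break the universal and $\dqu$ requirements --- which works precisely because $\widetilde{\Sigma}$ was defined via ranges into $A$, so the infinitely realized types are $A$-types preserved under the restriction. Everything else is bookkeeping, and the general (disjunctive) case follows by applying the argument to whichever disjunct is true.
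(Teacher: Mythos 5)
Your proof is correct and follows essentially the same route as the paper's: both arguments partition the domain using the syntactic shape of $\varphi^{\f}$ (witnesses of the lifted types, the finite set of elements realizing no $\widetilde{\Sigma}$-type forced by the $\dqu$-conjunct, and the remaining elements carrying pure $A$-types), strip macro-state labels outside that finite core, and re-verify each conjunct directly, with functionality coming from the singleton shape of the lifted types. The only differences are cosmetic --- you keep all $A$-labels on $D\setminus C$ rather than trimming to exactly one type per element, and you treat the empty-type case explicitly --- neither of which affects the argument.
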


\begin{proof}
We first unfold the definition of $\varphi^{\f}$ as follows:
\begin{align*}
\varphi^{\f} =\ &
\underbrace{
    \exists \vlist{x}.\big(\arediff{\vlist{x}} \land \bigwedge_{0 \leq i \leq n} \tau^+_{\lift{T}_i}(x_i)
}_{\psi_1}
\land \underbrace{
    \forall z.(\arediff{\vlist{x},z} \lthen \bigvee_{S\in \lift{\Pi} \cup \lift{\Sigma} \cup \widetilde{\Sigma}} \tau^+_S(z))\big)
}_{\psi_2}
\land
\\ & \underbrace{
    \bigwedge_{P\in\widetilde{\Sigma}} \qu y.{\tau}^{+}_P(y)
}_{\psi_3} \land
 \underbrace{
    \dqu y.\bigvee_{P\in\widetilde{\Sigma}} {\tau}^{+}_P(y)
}_{\psi_4} .
\end{align*}
Observe that $\psi_1 \land \psi_2$ is just $\mondbnfofoe{\lift{\vlist{T}}}{\lift{\Pi} \cup \lift{\Sigma} \cup \widetilde{\Sigma}}{+}$. Now suppose that $(D,\val \: (A \cup \shA ) \to \p(D))$ is a model where $\varphi^{\f}$ is true. This amounts to the truth of subformulas $\psi_1$, $\psi_2$, $\psi_3$ and $\psi_4$ whose syntactic shape yields information on the types of elements of $D$. In particular, we can define a partition of $D$ into subsets $D_1$, $D_2$, $D'_2$ as follows:
\begin{itemize}
  \item As $\psi_1$ is true, we can pick $n$ distinct elements $s_1,\dots,s_n$ of $D$ such that $s_i$ witnesses the positive type $\lift{T}_i$, 
   that is, $s_i \in \val(S)$ for each $S \in \lift{T}_i$. We define $D_1 := \{s_1,\dots,s_n\}$.
  \item  As $\psi_2$ is true, we can cover all the elements not in $D_1$ with two disjoint sets $D_2$ and $D'_2$ given as follows. The set $D_2$ is defined to contain all the elements not in $D_1$ witnessing a type ${\tau}^{+}_P(z)$ with $P \in \widetilde{\Sigma}$. The set $D'_2$ is just the complement of $D_1 \cup D_2$: by syntactic shape of $\psi_2$, all elements of $D'_2$ witness a positive type ${\tau}^{+}_S$ with
  $S \in \lift{\Pi} \cup \lift{\Sigma}$.
  \item The truth of the subformula $\psi_4$ yields the information that the set $D_1 \cup D'_2$ is finite. If $\widetilde{\Sigma}$ is non-empty, the truth of $\psi_3$ implies that the set $D_2$ is infinite.
 \end{itemize}
This partition uniquely associates with each $s \in D$ a type ${\tau}^{+}_S$ witnessed by $s$ and thus a set of unary predicates $S_s := S \subseteq A \cup \shA$. We can then define a valuation $\val'$ assigning to each element $s$ of $D$ exactly the set $S_s$.

We now check the properties of $\val'$. As the partition inducing $\val'$ follows the syntactic shape of $\varphi^{\f}$, one can observe that $\val'$ is a restriction of $\val$ and $(D,\val')$ makes $\varphi^{\f}$ true. By definition of the partition, $\val'$ assigns unary predicates from $\shA$ only to elements in the finite set $D_1 \cup D'_2$, meaning that $\varphi^{\f}$ is continuous in $\shA$. Furthermore, $\val'$ assigns at most one unary predicate from $\shA$ to each element of $D_1 \cup D'_2$, because $\lift{\vlist{T}} \cup \lift{\Pi} \cup \lift{\Sigma}$ is defined as the lifting of $\vlist{T} \cup \Pi \cup \Sigma$. It follows that $\varphi^{\f}$ is also functional in $\shA$. Since the same restriction $\val'$ yields both properties, $\varphi^{\f}$ is functionally continuous in $\shA$.
\end{proof}

\begin{remark} As $\varphi^{\f}$ is of shape $\posdbnfolque{\lift{\vlist{T}}}{\lift{\Pi} \cup \lift{\Sigma}}{\widetilde{\Sigma}}$ with $R \not\in \bigcup\widetilde{\Sigma}$ for each $R \in \shA$, by application of Corollary \ref{cor:olquecontinuousnf} we would immediately get that $\varphi^{\f}$ is continuous in each $R \in \shA$. However, we do not use this observation in proving Lemma \ref{LEM_cont} and propose instead a more direct argument, allowing to show both continuity and functionality at once.
\end{remark}

The next definition is standard (see e.g.  \cite{Walukiewicz96,Ven08}) as an intermediate step to define the transition function of the powerset construct for parity automata.

\begin{definition}\label{DEF_delta star} Let $\mb{A} = \tup{A,\Delta,\Omega,a_I}$ be a $\wmso$-automaton. Fix $a \in A$ and $c \in C$. The sentence $\Delta^{\star}(a,c)$ is defined as
\begin{eqnarray*}
        \Delta^{\star}(a,c) &:=& \Delta(a,c)[b \mapsto (a,b) \mid b \in A].
      \end{eqnarray*}
\end{definition}

 Next we combine the previous definitions to characterize the transition function associated with the macro-states.

\begin{definition}\label{PROP_DeltaPowerset}
Let $\aut = \tup{A,\Delta,\Omega,a_I}$ be a $\wmso$-automaton. Let $c \in C$ be a label and $Q \in \shA$ a binary relation on $A$. By Corollary \ref{cor:olquepositivenf}, for some $\Pi,\Sigma \subseteq \shA$ and $T_i \subseteq A \times A$, there is a sentence $\Psi_{Q,c} \in {\olque}^+(A\times A)$ in the basic form $\bigvee \posdbnfolque{\vlist{T}}{\Pi}{\Sigma}$ such that
\begin{eqnarray*}
  \bigwedge_{a \in \Ran(Q)} \Delta^{\star}(a,c) &\equiv& \Psi_{Q,c}.
\end{eqnarray*}
By definition $\Psi_{Q,c}$ is of the form $\bigvee_{i}\varphi_i$, with each $\phi_{i}$ of shape $\posdbnfolque{\vlist{T}}{\Pi}{\Sigma}$. We put $\shDe(Q,c) := \bigvee_{i}\varphi_i^{\f}$, where the translation $(\cdot)^{\f}$ is given as in definition \ref{DEF_finitary_lifting}. Observe that $\shDe(Q,c)$ is of type ${\olque}^+(A \cup \shA)$.
\end{definition}

We have now all the ingredients to define our two-sorted automaton.

\begin{definition}\label{def:finitaryconstruct}
Let $\aut = \tup{A,\Delta,\Omega,a_I}$ be a {\wmso-automaton}. We define the \emph{finitary construct over $\mb{A}$} as the automaton $\aut^{\f} = \tup{A^{\f},\Delta^{\f},\Omega^{\f},a_I^{\f}}$ given by
\begin{eqnarray*}
        A^{\f} &:=& A \cup \shA \\
        a_I^{\f} &:=& \{(a_I,a_I)\}\\
        \Delta^{\f}(a,c) &:=& \Delta(a,c)\\
        \Delta^{\f}(R,c) &:=& \shDe(R,c) \vee \bigwedge_{a \in \Ran(R)} \Delta(a,c)\\
        \Omega^{\f}(a) &:=& \Omega(a)\\
        \Omega^{\f}(R) &:=& 1.
      \end{eqnarray*}
\end{definition}

The underlying idea of Definition \ref{def:finitaryconstruct} is the same of the \emph{two-sorted construction} (\emph{cf.} \cite[Def.~3.7]{Zanasi:Thesis:2012}, \cite{DBLP:conf/lics/FacchiniVZ13}) for weak $\mso$-automata. In both cases we want that macro-states process just a \emph{well-founded} portion of any accepted tree: this is guaranteed by associating all macro-states with the odd parity value $1$. However, for the finitary construction we aim at the stronger condition that such a portion is \emph{finite}. To achieve this, the key difference with the two-sorted construction is in the use of the translation $(\cdot)^{\f}$ to define $\shDe$: as $\Delta$ may be specified using quantifiers $\qu$ and $\dqu$, it serves the purpose of tracking the cardinality constraints of the original $\wmso$-automaton and ensure that they are not lifted to constraints on macro-states.

The next proposition establishes the desired properties of the finitary
construct. To this aim, we first introduce the notions of functional and finitary strategy.

\begin{definition}\label{def:StratfunctionalFinitary}
Given a $\wmso$-automaton $\bbA = \tup{A,\tmap,\pmap,a_I}$ and transition system $\bbT$, a strategy $f$ for \eloise in $\mathcal{A}(\bbA,\model)$ is \emph{functional in $B \subseteq A$} (or simply functional, if $B=A$) if for each node $s$ in $\bbT$ there is at most one $b \in B$ such that $(b,s)$ is a reachable position in an $f$-guided match. Also $f$ is \emph{finitary} in $B$ if there are only finitely many nodes $s$ in $\bbT$ for which a position $(b,s)$ with $b \in B$ is reachable in an $f$-guided match.
\end{definition}


\begin{lemma}\label{PROP_facts_finConstr} Let $\mb{A}$ be a $\wmso$-automaton and $\mb{A}^{\f}$ its finitary construct over $\mb{A}$. The following holds:
\begin{enumerate}
  \itemsep 0 pt
  \item $\mb{A}^{\f}$ is a $\wmso$-automaton. \label{point:finConstrAut}
  \item For any $\mb{T}$, if $\exists$ has a winning strategy in
  the game $\mathcal{A}(\aut^{\f},\model)@(a_I^{\f},s_I)$, then she has a winning strategy in the same game which is both functional and finitary in $\shA$. \label{point:finConstrStrategy}
  \item $\mb{A} \equiv \mb{A}^{\f}$. \label{point:finConstrEquiv}
  \end{enumerate}
\end{lemma}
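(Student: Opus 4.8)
The plan is to reduce all three items to three structural facts about the finitary construct $\aut^{\f}$ of Definition~\ref{def:finitaryconstruct}: ordinary states never reach into macro-states, all macro-states carry the odd priority $1$, and the macro-transitions $\shDe(R,c)$ are functionally continuous in $\shA$ by Lemma~\ref{LEM_cont}. For item~(1), note first that $A^{\f}=A\cup\shA$ is finite and the transition map is positive, since $\Delta^{\f}(a,c)=\Delta(a,c)\in{\olque}^+(A)$ while $\Delta^{\f}(R,c)=\shDe(R,c)\lor\bigwedge_{a\in\Ran(R)}\Delta(a,c)\in{\olque}^+(A^{\f})$. For the relation $\ord$ (Definition~\ref{def:weak}): from $a\in A$ only states of $A$ occur in $\Delta^{\f}(a,c)=\Delta(a,c)$, so $\leadsto$ never leaves $A$ and no macro-state shares an $\ord$-SCC with an ordinary state; moreover $\aut^{\f}$ restricted to $A$ has exactly the transitions, priorities and SCCs of $\aut$, so \textbf{(weakness)} and \textbf{(continuity)} for pairs of $A$-states are inherited from $\aut$. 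For a pair of macro-states in one SCC, \textbf{(weakness)} is immediate as both have priority $1$; since $1$ is odd we must exhibit continuity of $\Delta^{\f}(R,c)$ in each macro-state $R'$ of that SCC. The disjunct $\bigwedge_{a\in\Ran(R)}\Delta(a,c)$ lies in ${\olque}^+(A)$ and so does not mention $R'$, hence is a base case of $\cont{{\olque}^+}{R'}$; and $\shDe(R,c)$ is a disjunction of formulas in basic form $\posdbnfolque{\lift{\vlist{T}}}{\lift{\Pi}\cup\lift{\Sigma}}{\widetilde{\Sigma}}$ with $\bigcup\widetilde{\Sigma}\subseteq A$, so $R'\notin\bigcup\widetilde{\Sigma}$ and Corollary~\ref{cor:olquecontinuousnf} gives $\shDe(R,c)\in\cont{{\olque}^+}{R'}$. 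As this fragment is closed under $\lor$, $\Delta^{\f}(R,c)$ is continuous in $R'$, proving item~(1).

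For item~(2), fix a positional winning strategy $g$ for $\exists$ (Fact~\ref{THM_posDet_ParityGames}). Let $g'$ agree with $g$ at all $A$-positions and modify it only at a macro-position $(R,s)$: if $g$'s valuation there witnesses $\bigwedge_{a\in\Ran(R)}\Delta(a,c)$, let $g'$ keep only its $A$-part (still a model of that disjunct); otherwise it witnesses $\shDe(R,c)$ and $g'$ plays the restriction $\val'$ furnished by functional continuity (Lemma~\ref{LEM_cont}), which is functional and finite in $\shA$. In every case $g'(R,s)$ is a restriction of $g(R,s)$ still satisfying $\Delta^{\f}(R,c)$, so in particular each macro-state assigned by $g'$ to a successor is also assigned by $g$. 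Consequently any infinite path of $g'$-reachable macro-positions could be followed by $\forall$ in a $g$-guided match, producing an infinite match confined to the odd-priority macro-states, which $\exists$ loses, contradicting that $g$ is winning. Hence no $g'$-match stays among macro-states forever: every $g'$-match has a finite macro-prefix and then runs among $A$-states, where $g'$ coincides with $g$ from a winning position, so $g'$ is winning. The same observation shows the tree of $g'$-reachable macro-positions is well-founded; being finitely branching (via $\val'$ only finitely many successors receive a macro-state), it is finite by König's lemma, so $g'$ is finitary in $\shA$. Finally $g'$ is functional in $\shA$ by induction on $\model$: the root carries only $\{(a_I,a_I)\}$, macro-successors of a node stem solely from the (inductively unique) macro-position over it, and the positional functional valuation $\val'$ there assigns at most one macro-state to each child.

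For item~(3) I argue both inclusions. If $\aut$ accepts $\model$, then at the initial position $(\{(a_I,a_I)\},s_I)$, where $\Ran(\{(a_I,a_I)\})=\{a_I\}$, the disjunct $\Delta(a_I,\sigma(s_I))$ is available; $\exists$ plays there her $\aut$-winning valuation (assigning only $A$-states), after which the game is literally $\mathcal{A}(\aut,\model)@(a_I,s_I)$ since $\Delta^{\f},\Omega^{\f}$ restrict to $\Delta,\Omega$ on $A$, so she wins and $\aut^{\f}$ accepts. Conversely, assume $\aut^{\f}$ accepts $\model$ and take the functional and finitary winning strategy $g'$ of item~(2). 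Its macro-play lives on a finite subtree $U$ of $\model$ carrying one macro-state per node, and the pairs $(a,b)$ occurring in these macro-states, produced by $\Delta^{\star}(a,c)=\Delta(a,c)[b\mapsto(a,b)]$ of Definition~\ref{DEF_delta star}, record for each ordinary state $a$ of the current bundle the targets $b$ of its thread. Reading off these threads turns the macro-play on $U$, glued with the $A$-play of $g'$ below the frontier of $U$, into a strategy for $\exists$ in $\mathcal{A}(\aut,\model)@(a_I,s_I)$, in which $\forall$'s choices correspond to choices of thread. Because $U$ is finite, every resulting match visits macro-states only finitely often, so its infinitely occurring priorities are exactly those of an $A$-tail of a $g'$-match; since $g'$ is winning these tails are won, and hence so is the reconstructed $\aut$-match.

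The substantial step is this converse direction of item~(3): verifying that the finite macro-run really encodes a bundle of $\aut$-runs and that splicing the finite macro-part onto the ordinary $A$-part preserves both legality of $\exists$'s moves and the parity outcome. This is precisely where the design of $\shDe$ through $(\cdot)^{\star}$ and the translation $(\cdot)^{\f}$ matters: the slot $\widetilde{\Sigma}$ sends the infinitely many ``committed'' successors to ordinary $A$-states while only finitely many successors remain in macro-mode, so the correctness argument is a finitary adaptation of the standard powerset/two-sorted simulation (cf.~\cite{DBLP:conf/lics/FacchiniVZ13,Zanasi:Thesis:2012}), the genuinely new ingredient being the bookkeeping, guaranteed by Lemma~\ref{LEM_cont} and item~(2), that keeps the macro-processed part finite.
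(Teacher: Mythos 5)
Your overall plan coincides with the paper's: item~(1) splits SCCs into pure-$A$ and pure-$\shA$ cases exactly as in the paper (your route to continuity of $\shDe(R,c)$ via Corollary~\ref{cor:olquecontinuousnf} and $R'\notin\bigcup\widetilde{\Sigma}$ is even the alternative the paper itself records in the remark after Lemma~\ref{LEM_cont}), and your item~(2) is the paper's restriction-of-valuations argument with the K\"onig's-lemma step made explicit where the paper leaves it implicit. However, item~(2) as written has a concrete flaw: you let $g'$ \emph{agree with $g$ at all $A$-positions} and modify it only at macro-positions. Nothing forces a winning strategy $g$ to be parsimonious at an $A$-position $(a,s)$: the move there is a valuation $\val:A^{\f}\to\wp(R[s])$, and since $\Delta^{\f}(a,c)=\Delta(a,c)$ is a sentence of ${\olque}^+(A)$, its truth is indifferent to the $\shA$-part of $\val$, so $g$ may gratuitously assign macro-states to successors, after which $\forall$ can jump to a macro-position \emph{from an $A$-position}. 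The observation that $a\not\leadsto R$ for $a\in A$, $R\in\shA$ concerns occurrence of predicates in transition formulas, not reachability in the acceptance game. This invalidates, as stated, your claims that every $g'$-match has a finite macro-prefix followed by a purely $A$-state tail, that the macro-carrying nodes form a tree with at most one macro-state per node, and hence both functionality and finitariness of $g'$. The repair is exactly the paper's move, which you omitted: at $A$-positions let $g'$ play the restriction of $g$'s valuation to $A$, which still satisfies $\Delta(a,\tscolors(s))$ since no predicate from $\shA$ occurs in it.

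On item~(3), the forward direction matches the paper. For the converse you deviate mildly: you invoke item~(2) to get a functional, finitary strategy and then decode the $(a,b)$-threads from the finite macro-subtree, whereas the paper runs a direct shadow-match argument maintaining the invariant that the $\mathcal{A}(\aut,\model)$-position $(a,s)$ satisfies $a\in\Ran(Q)$ for the current macro-position $(Q,s)$, using only the observation that macro-states have odd priority and so occur in finitely many rounds; it never needs item~(2). Your route is viable, but, as you yourself concede, it stops precisely where the paper's work begins: decoding a thread requires exhibiting, for each $a\in\Ran(Q)$, a legal valuation satisfying $\Delta(a,\tscolors(s))$, and this is nontrivial because of the $\qu$/$\dqu$ constraints. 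The paper's explicit definition of $\val_{a,s}$ --- collecting the nodes marked with any $Q'$ such that $b\in\Ran(Q')$ \emph{together with} the nodes marked directly with $b\in A$, the latter being indispensable to keep the cardinality conditions that $\shDe$ inherits from $\bigwedge_{a\in\Ran(Q)}\Delta^{\star}(a,c)$ satisfied --- is exactly the verification your sketch defers; flagging it as ``the substantial step'' does not discharge it. So: items~(1) is correct, item~(2) needs the one-line repair above, and item~(3)'s converse is a correct plan whose essential lemma-level content is asserted rather than proved.
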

\begin{proof} 
We address each point separately.
\begin{enumerate}
  \item We need to show that $\mb{A}^{\f}$ is weak and respects the continuity condition. For this purpose, we fix the following observation:
      \begin{itemize}
      \item[($\star$)] by definition of $\Delta^{\f}$, for any macro-state $R \in \shA$ and state $a \in A$, it is never the case that $a \preceq R$.
      \end{itemize}
      This means that, when considering a strongly connected component of $\mb{A}^{\f}$, we may assume that all states involved are either from $A$ or from $\shA$.

      In order to prove our claim, let $q_1,q_2 \in A^{\f}$ be two states of $\mb{A}^{\f}$ such that $q_1 \preceq q_2$ and $q_2 \preceq q_1$. By observation ($\star$), we can distinguish the following two cases:
      \begin{enumerate}[(\roman*)]
        \item if $q_1$ and $q_2$ are states from $A$, then for $i \in \{1,2\}$ the value of $\Omega^{\f}(q_i)$ and $\Delta^{\f}(q_i,c)$ is defined respectively as $\Omega(q_i)$ and $\Delta(q_i,c)$ like in the $\wmso$-automaton $\mb{A}$. It follows that they satisfy both the continuity and weakness condition.
        \item Otherwise, $q_1$ and $q_2$ are macro-states in $\shA$. For the weakness condition, observe that all macro-states in $\aut^{\f}$ have the same parity value. For the continuity condition, suppose that $q_1$ occurs in $\Delta^{\f}(q_2,c)$ for some $c \in C$. By definition $q_1$ can only appear in the disjunct $\shDe(q_2,c) = \bigvee_{i}\varphi_i^{\f}$ of $\Delta^{\f}(q_2,c)$. By Lemma \ref{LEM_cont}, we know that each $\varphi_i^{\f}$ is continuous in $\shA$. Then in particular $\Delta^{\f}(q_2,c)$ is continuous in $q_1$. By definition $\Omega^{\f}(q_1) =1$ is odd, meaning that the continuity condition holds. The case in which $q_2$ appears in $\Delta^{\f}(q_1,c)$ is just symmetric.
      \end{enumerate}
  \item  Let $f$ be a winning strategy for $\exists$ in $\mathcal{A}(\mb{A}^{\f},\model)@(a_I^{\f},s_I)$. We define a strategy $f'$ for $\exists$ in the same game as follows:
      \begin{enumerate}[label=(\alph*),ref=\alph*]
        \item on basic positions of the form $(a,s) \in A\times T$, let $\val$ be the valuation suggested by $f$. We let the valuation suggested by $f'$ be the restriction $\val'$ of $\val$ to $A$. Observe that, as no predicate from $A^{\f}\setminus A =\shA$ occurs in $\Delta^{\f}(a,\V(s)) = \Delta(a,\V(s))$, then $\val'$ also makes that sentence true in $\R{s}$.
        \label{point:stat2point1}
        \item for basic positions of the form $(R,s) \in \shA \times T$, let $\val_{R,s}$ be the valuation suggested by $f$. As $f$ is winning, $\Delta^{\f}(R,\V(s))$ is true in the model $\val_{R,s}$. If this is because the disjunct $\bigwedge_{a \in \Ran(R)} \Delta(a,\V(s))$ is made true, then we can let $f'$ suggest the restriction to $A$ of $\val_{R,s}$, for the same reason as in \eqref{point:stat2point1}. Otherwise, the disjunct $\shDe(R,\V(s)) = \bigvee_{i}\varphi_i^{\f}$ is made true. This means that, for some $i$,
             $$(R[s], \val_{R,s}) \models \varphi_i^{\f}.$$
             By Lemma \ref{LEM_cont} $\varphi_i^{\f}$ is functionally continuous in $\shA$, meaning that we have a restriction $\val_{R,s}'$ of $\val_{R,s}$ that verifies $\varphi_i^{\f}$, assigns finitely many nodes to predicates from $\shA$ and associates with each node at most one predicate from $\shA$. We let $\val_{R,s}'$ be the suggestion of $f'$ from position $(R,s)$.
      \end{enumerate}
      The strategy $f'$ defined as above is immediately seen to be
      surviving for $\exists$. It is also winning, because the set of
      basic positions on which $f'$ is defined is a subset of the one
      of the winning strategy $f$. By this observation it also follows that any $f'$-conform match visits basic positions of the form $(R,s) \in \shA \times C$ only finitely many times, as those have odd parity. By definition, the valuation suggested by $f'$ only assigns finitely many nodes to predicates in $\shA$ from positions of that shape, and no nodes from other positions. It follows that $f'$ is finitary in $\shA$. Functionality in $\shA$ also follows immediately by definition of $f'$.
  \item The proof is entirely analogous to the one presented in \cite[Prop. 3.9]{Zanasi:Thesis:2012} for the two-sorted construction. For the direction from left to right, it is immediate by definition of $\mb{A}^{\f}$ that a winning strategy for $\exists$ in $\mc{G} = \mathcal{A}(\aut,\model)@(a_I,s_I)$ is also winning for $\exists$ in $\mc{G}^{\f} = \mathcal{A}(\mb{A}^{\f},\model)@(a_I^{\f},s_I)$.

      For the direction from right to left, let $f$ be a winning strategy for $\exists$ in $\mc{G}^{\f}$. The idea is to define a strategy $f'$ for $\exists$ in stages, while playing a match $\pi'$ in $\mc{G}$. In parallel to $\pi'$, a shadow match $\pi$ in $\mc{G}^{\f}$ is maintained, where $\exists$ plays according to the strategy $f$. For each round $z_i$, we want to keep the following relation between the two matches:
\smallskip
\begin{center}
\fbox{\parbox{12cm}{
Either
\begin{enumerate}[label=(\arabic*),ref=\arabic*]
  \item basic positions of the form $(Q,s) \in \shA \times T$ and $(a,s) \in A \times T$ occur respectively in $\pi$ and $\pi'$, with $a \in \Ran(Q)$,
\end{enumerate}
or
\begin{enumerate}[label=(\arabic*),ref=\arabic*]
  \item[(2)] the same basic position of the form $(a,s) \in A \times T$ occurs in both matches.
\end{enumerate}
}}\hspace*{0.3cm}($\ddag$)
\end{center}
\smallskip
The key observation is that, because $f$ is winning, a basic position of the form $(Q,s) \in \shA \times T$ can occur only for finitely many initial rounds $z_0,\dots,z_n$ that are played in $\pi$, whereas for all successive rounds $z_n,z_{n+1},\dots$ only basic positions of the form $(a,s) \in A \times T$ are encountered. Indeed, if this was not the case then either $\exists$ would get stuck or the minimum parity occurring infinitely often would be odd, since states from $\shA$ have parity $1$.

It follows that enforcing a relation between the two matches as in ($\ddag$) suffices to prove that the defined strategy $f'$ is winning for $\exists$ in $\pi'$. For this purpose, first observe that $(\ddag).1$ holds at the initial round, where the positions visited in $\pi'$ and $\pi$ are respectively $(a_I,s_I) \in A \times T$ and $(\{(a_I,a_I)\},s_I) \in A^{\f} \times T$. Inductively, consider any round $z_i$ that is played in $\pi'$ and $\pi$, respectively with basic positions $(a,s) \in A \times T$ and $(q,s) \in A^{\f} \times T$. In order to define the suggestion of $f'$ in $\pi'$, we distinguish two cases.
\begin{itemize}
  \item First suppose that $(q,s)$ is of the form $(Q,s) \in
  \shA\times T$. By ($\ddag$) we can assume that $a$ is in $\Ran(Q)$. Let $\val_{Q,s} :A^{\f} \rightarrow \p(\R{s})$ be the valuation suggested by $f$, verifying the sentence $\Delta^{\f}(Q,\V(s))$. We distinguish two further cases, depending on which disjunct of $\Delta^{\f}(Q,\V(s))$ is made true by $\val_{Q,s}$.
      \begin{enumerate}[label=(\roman*), ref=\roman*]
        \item If $(\R{s},\val_{Q,s})\models \bigwedge_{b \in \Ran(Q)} \Delta(b,\V(s))$, then we let $\exists$ pick the restriction to $A$ of the valuation $\val_{Q,s}$. \label{point:valuation1}
        \item If $(\R{s},\val_{Q,s})\models \shDe(Q,\V(s))$, we let $\exists$ pick a valuation $\val_{a,s}:A \rightarrow \p (\R{s})$ defined by putting, for each $b \in A$:
            \begin{align*}
               \val_{a,s}(b)\ :=\ \bigcup_{b \in \Ran(Q')} &\{t \in \R{s} \mid t \in \val_{Q,s}(Q')\} \\
               \cup\ \ \ \ \ & \{t \in \R{s} \mid t \in \val_{Q,s}(b)\} .
            \end{align*} \label{point:valuation2}
      \end{enumerate}
      It can be readily checked that the suggested move is admissible for $\exists$ in $\pi$, i.e. it makes $\Delta(a,\V(s))$ true in $\R{s}$. For case \eqref{point:valuation2}, one has to observe how $\shDe$ is defined in terms of $\Delta$. In particular, the nodes assigned to $b$ by $\val_{Q,s}$ have to be assigned to $b$ also by $\val_{a,s}$, as they may be necessary to fulfill the condition, expressed with $\qu$ and $\dqu$, that infinitely many nodes witness (or that finitely many nodes do not witness) some type.

      We now show that $(\ddag)$ holds at round $z_{i+1}$. If \eqref{point:valuation1} is the case, any next position $(b,t)\in A \times T$ picked by player $\forall$ in $\pi'$ is also available for $\forall$ in $\pi$, and we end up in case $(\ddag .2)$. Suppose instead that \eqref{point:valuation2} is the case. Given the choice $(b,t) \in A \times T$ of $\forall$, by definition of $\val_{a,s}$ there are two possibilities. First, $(b,t)$ is also an available choice for $\forall$ in $\pi$, and we end up in case $(\ddag .2)$ as before. Otherwise, there is some $Q' \in \shA$ such that $b$ is in $\Ran(Q')$ and $\forall$ can choose $(Q',t)$ in the shadow match $\pi$. By letting $\pi$ advance at round $z_{i+1}$ with such a move, we are able to maintain $(\ddag .1)$ also in $z_{i+1}$.
  \item In the remaining case, inductively we are given the same basic position $(a,s) \in A\times T$ both in $\pi$ and in $\pi'$. The valuation $\val$ suggested by $f$ in $\pi$ verifies $\Delta^{\f}(a,\V(s)) = \Delta(a,\V(s))$, thus we can let the restriction of $\val$ to $A$ be the valuation chosen by $\exists$ in the match $\pi'$. It is immediate that any next move of $\forall$ in $\pi'$ can be mirrored by the same move in $\pi$, meaning that we are able to maintain the same position --whence the relation $(\ddag.1)$-- also in the next round.
\end{itemize}
In both cases, the suggestion of strategy $f'$ was a legitimate move for $\exists$ maintaining the relation $(\ddag)$ between the two matches for any next round $z_{i+1}$. It follows that $f'$ is a winning strategy for $\exists$ in $\mc{G}$.
\end{enumerate}
\end{proof}


\begin{remark}
While the finitary construction is a variant of the two-sorted one given
in~\cite{DBLP:conf/lics/FacchiniVZ13}, it is
worth noticing that the latter would have not been suitable for our purposes.
Indeed, suppose to define the two-sorted construct $\mb{A}^{2S}$ over a
$\wmso$-automaton $\mb{A}$, analogously to the case of weak $\MSO$-automata.
Then $\mb{A}^{2S}$ will generally \emph{not} be a $\wmso$-automaton.
The problem lies in the \textbf{(continuity)} condition: since all macro-states in
$\mb{A}^{2S}$ have parity $1$, whenever two of them, say $R$ and $Q$, are such that $R \preceq Q$ and $Q \preceq R$, then the sentence $\Delta^{2S}(R,c)$ should be continuous in $Q$. But this is not necessarily the case, since the truth of $\Delta^{2S}(R,c)$ may depend upon the truth of a subformula of the form $\exists^{\infty}x.Q(x)$, requiring $Q$ to be interpreted over infinitely many nodes. (This problem is overcome in the finitary construction by using the translation $(\cdot)^{\f}$ to define $\Delta^{\f}$.)

As a consequence, we cannot use the two-sorted construction to show that
$\wmso$-automata are closed under noetherian projection
(\cite[Def. 3]{DBLP:conf/lics/FacchiniVZ13}).
This observation is coherent with the fact that $\yvWFMSO$ is \emph{not} a fragment of $\yvWMSO$. Similarly, the simulation theorem for $\mso$-automata \cite{Walukiewicz96} preserves neither the \textbf{(weakness)} nor the \textbf{(continuity)} condition and thus it cannot show
closure under (arbitrary) projection for $\wmso$-automata.\end{remark} 

\subsection{Closure Properties}

In this subsection we prove that $\yvWMSO$-automata are closed under the 
operations corresponding to the connectives of $\yvMSO$, that is: union, 
complementation and projection with respect to finite sets.
We start with the latter.


\subsubsection{Closure under Finitary Projection}

\begin{definition}\label{DEF_fin_projection}
Let $\aut = \tup{A, \Delta, \Omega, a_I}$ be a $\wmso$-automaton on alphabet $\p(\prop \cup \{p\})$. Let $\aut^{\f}$
denote its finitary construct.
We define the automaton ${{\exists}_F p}.\mb{A} = \langle A^{\f}, a_I^{\f},
\DeltaProj, \Omega^{\f}\rangle$ on alphabet $\p\prop$ by putting
\begin{eqnarray*}
  \DeltaProj(a,c) &:=& \Delta^{\f}(a,c)\\
  \DeltaProj(R,c) &:=& \Delta^{\f}(R,c) \vee \Delta^{\f}(R,c\cup\{p\}).
\end{eqnarray*}
The automaton ${{\exists}_F p}.\mb{A}$ is called the \emph{finitary projection
construct of $\mb{A}$ over $p$}.
\end{definition}

Our projection construction corresponds to a suitable closure operation on tree languages, modeling the semantics of $\wmso$ existential quantification.

\begin{definition}\label{def:tree_finproj} Let $p$ be a propositional letter and $L$ a tree language of $\p (\prop\cup\{p\})$-labeled trees. The \emph{finite projection} of $L$ over $p$ is the language ${\exists}_F p.L$ of $C$-labeled trees defined as
\begin{equation*}
    {\exists}_F p.L = \{\model \mid \text{there is a $p$-variant } \model[p\mapsto S] \text{ of } \model \text{ such that } \model[p\mapsto S] \in L \text{ and } S \text{ is finite} \}.
\end{equation*}\hfill
\end{definition}

\begin{lemma}\label{PROP_fin_projection}
For each $\wmso$-automaton $\aut$ on alphabet $\p (\prop \cup \{p\})$,
we have that
$$\trees({{\exists}_F p}.\mb{A}) \ \equiv\
{{\exists}_F p}.\trees(\mb{A}).
$$
\end{lemma}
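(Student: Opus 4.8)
The plan is to reduce everything to the finitary construct and then match winning strategies in the two acceptance games. Since $\mb{A} \equiv \mb{A}^{\f}$ by Lemma~\ref{PROP_facts_finConstr}, we have ${\exists}_F p.\trees(\mb{A}) = {\exists}_F p.\trees(\mb{A}^{\f})$, so it suffices to prove $\trees({\exists}_F p.\mb{A}) = {\exists}_F p.\trees(\mb{A}^{\f})$. The conceptual bridge is that the only difference between the games $\mathcal{A}({\exists}_F p.\mb{A}, \model)$ and $\mathcal{A}(\mb{A}^{\f}, \model[p\mapsto S])$ occurs at macro-state positions $(R,s) \in \shA \times T$: in the former, $\exists$ may freely satisfy either disjunct of $\DeltaProj(R, \tscolors(s)) = \Delta^{\f}(R, \tscolors(s)) \vee \Delta^{\f}(R, \tscolors(s) \cup \{p\})$, which corresponds precisely to guessing whether $p$ holds at $s$, whereas at ordinary positions $(a,s) \in A \times T$ both games use $\Delta^{\f}(a, \tscolors(s)) = \Delta(a, \tscolors(s))$ and never mention $p$. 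I would first record that ${\exists}_F p.\mb{A}$ is again a $\wmso$-automaton and that, since its macro-transitions are disjunctions of the functionally continuous sentences $\Delta^{\f}(R, \cdot)$ (Lemma~\ref{LEM_cont}) and its macro-states carry parity $1$, the proof of the second point of Lemma~\ref{PROP_facts_finConstr} applies verbatim: any winning strategy for $\exists$ may be assumed positional and both functional and finitary in $\shA$. For such a strategy a straightforward induction on depth, using positionality and the disjointness clause of functional continuity, shows that the nodes of $\model$ split into \emph{macro-nodes}, always visited at a single macro-state, and \emph{ordinary-nodes}, only ever visited at states from $A$.

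For the inclusion $\trees({\exists}_F p.\mb{A}) \subseteq {\exists}_F p.\trees(\mb{A}^{\f})$, fix such a strategy $f$ for $\exists$ in $\mathcal{A}({\exists}_F p.\mb{A}, \model)$ and set
$$
S := \{\, s \in T \mid s \text{ is a macro-node and } f \text{ chooses the disjunct } \Delta^{\f}(R,\tscolors(s)\cup\{p\}) \text{ at } (R,s) \,\}.
$$
By functionality each macro-node carries a unique macro-state $R$, so $S$ is well defined, and by finitariness $S$ is finite. I then claim $f$ is also winning in $\mathcal{A}(\mb{A}^{\f}, \model[p\mapsto S])$: the two games have identical boards, moves and parities, and $f$'s valuations remain admissible because at a macro-node the colour of $\model[p\mapsto S]$ is $\tscolors(s)\cup\{p\}$ exactly when $s \in S$, matching the disjunct chosen by $f$, while at an ordinary node we have $s \notin S$ (ordinary-nodes are disjoint from the macro-nodes, which contain $S$), so the colour is $\tscolors(s)$ and the transition $\Delta^{\f}(a, \tscolors(s))$ is common to both automata. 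Hence $\model[p\mapsto S] \in \trees(\mb{A}^{\f})$ with $S$ finite, i.e.\ $\model \in {\exists}_F p.\trees(\mb{A}^{\f})$.

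For the converse inclusion, start from a finite $S$ with $\model[p\mapsto S] \in \trees(\mb{A}^{\f})$ and a positional, functional-and-finitary winning strategy $g$ whose set of macro-nodes is $M$. If $S \subseteq M$ the argument above runs in reverse: at each macro-position $(R,s)$ the disjunction in $\DeltaProj(R,\tscolors(s))$ lets $\exists$ reproduce the move $g$ makes against colour $\tscolors(s)\cup\{p\}$ (when $s\in S$) or $\tscolors(s)$ (when $s\notin S$), and below $M$ nothing changes, so $g$ yields a winning strategy in $\mathcal{A}({\exists}_F p.\mb{A}, \model)$. The remaining, and genuinely harder, case is $S \not\subseteq M$, where $g$ may truly use $p$ at an ordinary-node. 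Here the plan is to \emph{enlarge the macro-phase so that it covers $S$}: let $U$ be the closure of $S$ under taking ancestors, which is finite since $S$ is finite and every node has finite depth; keep $\mb{A}^{\f}$ in macro-states throughout $U$ — permitted because $\shDe$, being functionally continuous in $\shA$, can assign macro-states to any finite set of successors — guess $p$ at a node $s \in U$ exactly according to membership in $S$, and drop to ordinary states at the frontier of $U$. Since $S \subseteq U$, below the frontier $p$ is uniformly false, so $\model[p\mapsto S]$ and $\model$ agree there and $\exists$ can continue by the ordinary ($\mb{A}$-)part of the winning play.

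The correctness of this last manoeuvre is the main obstacle: one must argue that the macro-states traversing the finite region $U$ faithfully represent the bundle of $\mb{A}$-runs, so that the states occurring in the ranges of the frontier macro-states are precisely those from which the original winning strategy continues, and that keeping only a finite rather than well-founded portion in macro-states still produces a winning run. This is exactly the correctness of the powerset simulation underlying the finitary construct, and is handled as in the analogous noetherian-projection lemma of~\cite{DBLP:conf/lics/FacchiniVZ13,Zanasi:Thesis:2012}, the only new ingredient being that the macro-region is kept finite. Combining the two inclusions gives $\trees({\exists}_F p.\mb{A}) = {\exists}_F p.\trees(\mb{A}^{\f}) = {\exists}_F p.\trees(\mb{A})$, as required.
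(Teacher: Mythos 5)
Your left-to-right direction is sound, and in one respect cleaner than the paper's: after importing Lemma~\ref{PROP_facts_finConstr}(2) you observe that $\mathcal{A}({\exists}_F p.\aut,\model)$ and $\mathcal{A}(\aut^{\f},\model[p\mapsto S])$ share positions and parities, so the same strategy $f$ stays admissible and winning, whereas the paper defines the variant via the set $X_p$ and defers acceptance to a ``routine adaptation'' of the noetherian case. Your macro-/ordinary-node dichotomy is indeed provable by induction on depth (the root is a macro-node; by the functionality clause of functional continuity a child of a macro-node is marked either with exactly one macro-state and nothing else, or with ordinary states only; children of ordinary nodes receive only ordinary states, since $\DeltaProj(a,c)=\Delta(a,c)$ mentions no macro-states), though you should define $S$ by which disjunct the suggested valuation \emph{makes true}, as the paper does, rather than which disjunct $\exists$ ``chooses''.

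The genuine gap is in the right-to-left direction, exactly where you flag ``the main obstacle'' and then discharge it by appeal to the noetherian-projection lemma. That lemma lives in the one-step language $\ofoe$; here the macro-state sentences $\shDe(Q,c)$ carry $\qu$/$\dqu$ conjuncts, and your claim that $\exists$ may ``keep $\aut^{\f}$ in macro-states throughout $U$ \ldots because $\shDe$, being functionally continuous in $\shA$, can assign macro-states to any finite set of successors'' does not follow: Lemma~\ref{LEM_cont} says that \emph{some} winning valuation can be pruned so that only finitely many children are macro-marked, not that $\exists$ can macro-mark a \emph{prescribed} finite set (the children lying in $U$) while still meeting the cardinality constraints that $\shDe$ inherits from $\bigwedge_{a\in\Ran(Q)}\Delta(a,c)$. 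This is precisely the step the paper proves by hand: working with a bundle of shadow matches of $\aut$ (not $\aut^{\f}$), it defines the macro-phase valuation on ordinary predicates $q'\in A$ as the union over the bundle of the sets $\val_{a,s}(q')$ \emph{restricted to children $t$ whose subtree is $p$-free}, and then argues that, $\model'$ being a finite $p$-variant, only finitely many children are excluded, so the $\qu$/$\dqu$ requirements of $\shDe(Q,\cdot)$ remain satisfiable while the non-$p$-free children are reserved for macro-states. Without this one-step verification your enlarged-macro-phase manoeuvre is unproven; and the paper's closing remark that the two-sorted (noetherian) construction genuinely fails for $\wmso$-automata — $\Delta^{2S}$ need not be continuous in macro-states when $\qu$ occurs — shows that ``handled as in the noetherian case, the only new ingredient being finiteness'' cannot be taken at face value: the interaction with the infinity quantifiers is the whole reason the finitary construct and the translation $(\cdot)^{\f}$ exist. (Minor: your case split $S\subseteq M$ versus $S\not\subseteq M$ is unnecessary, since the bundle construction over the ancestor closure $U$ subsumes both; your $U$ does, correctly, encode the paper's sibling condition in its invariant ($\ddag$.1).)
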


\begin{proof}
What we need to show is that for any tree $\model$:
\begin{eqnarray*}
  {{\exists}_F p}.\mb{A} \text{ accepts } \mathbb{T} & \text{ iff }& \text{there is a finite $p$-variant }\model' \\
   & & \text{of }\mathbb{T}\text{  such that }\aut\text{  accepts }\model'.
\end{eqnarray*}
For direction from left to right, we first observe that the properties stated by Lemma~\ref{PROP_facts_finConstr} hold for ${{\exists}_F p}.\mb{A}$ as well, since the latter is defined in terms of $\mb{A}^{\f}$. Then we can assume that the given winning strategy $f$ for $\exists$ in $\mc{G_{\exists}} = \mc{A}({{\exists}_F p}.\mb{A},\model)@(a_I^{\f},s_I)$ is functional and finitary in $\shA$. Functionality allows us to associate with each node $s$ either none or a unique state $Q_s \in \shA$ (\emph{cf.} \cite[Prop. 3.12]{Zanasi:Thesis:2012}). We now want to isolate  the nodes that $f$ treats ``as if they were labeled with $p$''. For this purpose, let $\val_{s}$ be the valuation suggested by $f$ from a position $(Q_s,s) \in \shA \times T$. As $f$ is winning, $\val_{s}$ makes $\DeltaProj(Q,\tscolors(s))$ true in $\R{s}$. We define a $p$-variant $\model'$ of $\model$ by coloring with $p$ all nodes in the following set:
 \begin{equation}\label{eq:X_p}
   X_p\ :=\ \{s \in T\mid (\R{s},\widetilde{\val}_{s}) \models \Delta^{\f}(Q_s,\tscolors(s)\cup\{p\})\}.
\end{equation}
The fact that the strategy of $\exists$ is finitary in $\shA$ guarantees that $X_p$ is finite, whence $\model'$ is a finite $p$-variant. The argument showing that $\mb{A}^{\f}$ (and thus also $\mb{A}$, by Lemma \ref{PROP_facts_finConstr}(1)) accepts $\model'$ is a routine adaptation of the analogous proof for the noetherian projection of weak $\mso$-automata, for which we refer to \cite[Prop. 3.12]{Zanasi:Thesis:2012}.
\medskip

For the direction from right to left, let $\model'$ be a finite $p$-variant of
$\model$, with labeling function $\tscolors'$, and $g$ a winning strategy for $\exists$ in $\mc{G} = \mathcal{A}(\aut,\model')@(a_I,s_I)$. Our goal is to define a strategy $g'$ for $\exists$ in $\mc{G_{\exists}}$. As usual, $g'$ will be constructed in stages, while playing a match $\pi'$ in $\mc{G_{\exists}}$. In parallel to $\pi'$, a \emph{bundle} $\mc{B}$ of $g$-guided shadow matches in $\mc{G}$ is maintained, with the following condition enforced for each round $z_i$ (\emph{cf.} \cite[Prop.~ 3.12]{Zanasi:Thesis:2012}) :
\smallskip
\begin{center}
\fbox{\parbox{13cm}{
\begin{enumerate}
  \item If the current (i.e. at round $z_i$) basic position in $\pi'$ is of the form $(Q,s) \in \shA \times T$, then for each $a \in\Ran(Q)$ there is an $g$-guided (partial) shadow match $\pi_a$ at basic position $(a,s) \in A\times T$ in the current bundle $\mc{B}_i$. Also, either $\model'_s$ is not $p$-free (i.e., it does contain a node $s'$ with $p \in \tscolors'(s')$) or $s$ has some sibling $t$ such that $\model'_t$ is not $p$-free.
  \item Otherwise, the current basic position in $\pi'$ is of the form $(a,s) \in A \times T$ and $\model'_s$ is $p$-free (i.e., it does not contain any node $s'$ with $p \in \tscolors'(s')$). Also, the bundle $\mc{B}_i$ only consists of a single $g$-guided match $\pi_a$ whose current basic position is also $(a,s)$.
\end{enumerate}
}}\hspace*{0.3cm}($\ddag$)
\end{center}
\smallskip
We briefly recall the idea behind condition ($\ddag$). Point ($\ddag.1$) describes the part of match $\pi'$ where it is still possible to encounter nodes which are labeled with $p$ in $\model'$. As $\DeltaProj$ only takes the letter $p$ into account when defined on macro-states in $\shA$, we want $\pi'$ to visit only positions of the form $(R,s) \in \shA \times T$ in that situation. Anytime we visit such a position $(R,s)$ in $\pi'$, the role of the bundle is to provide one $g$-guided shadow match at position $(a,s)$ for each $a \in \Ran(R)$.
Then $g'$ is defined in terms of what $g$ suggests from those positions.

 Point ($\ddag.2$) describes how we want the match $\pi'$ to be
 played on a $p$-free subtree: as any node that one might encounter has the same label in $\model$ and $\model'$,
it is safe to let ${{\exists}_F p}.\mb{A}$ behave as $\aut$ in such situation. Provided that the two matches visit the same basic positions, of the form $(a,s)\times A \times T$, we can let $g'$ just copy $g$.

The key observation is that, as $\model'$ is a \emph{finite} $p$-variant of $\model$, nodes labeled with $p$ are reachable only for finitely many rounds of $\pi'$. This means that, provided that ($\ddag$) hold at each round, ($\ddag.1$) will describe an initial segment of $\pi'$, whereas ($\ddag.2$) will describe the remaining part. Thus our proof that $g'$ is a winning strategy for $\exists$ in $\mc{G}_{\exists}$ is concluded by showing that ($\ddag$) holds for each stage of construction of $\pi'$ and $\mc{B}$.

\medskip

For this purpose, we initialize $\pi'$ from position $(\shai,s) \in \shA\times T$ and the bundle $\mc{B}$ as $\mc{B}_0 = \{\pi_{a_I}\}$, with $\pi_{a_I}$ the partial $g$-guided match consisting only of the position $(a_I,s)\in A\times T$. The situation described by ($\ddag .1$) holds at the initial stage of the construction.
Inductively, suppose that at round $z_i$ we are given a position $(q,s) \in A^{\f} \times T$ in $\pi^{\f}$ and a bundle $\mc{B}_i$ as in ($\ddag$). To show that ($\ddag$) can be maintained at round $z_{i+1}$, we distinguish two cases, corresponding respectively to situation ($\ddag.1$) and ($\ddag.2$) holding at round $z_i$.
\begin{enumerate}[label = (\Alph*), ref = \Alph*]
  \item If $(q,s)$ is of the form $(Q,s) \in \shA \times T$, by inductive hypothesis we are given with $g$-guided shadow matches $\{\pi_a\}_{a \in \Ran(Q)}$ in $\mc{B}_i$. For each match $\pi_a$ in the bundle, we are provided with a valuation $\val_{a,s}: A \rightarrow \p (\R{s})$ making $\Delta(a,\tscolors'(s))$ true. Then we further distinguish the following two cases.
\begin{enumerate}[label = (\roman*), ref = \roman*]
  \item \label{point:TsNotPFree} Suppose first that $\model'_s$ is not $p$-free. We let the suggestion $\val' \: A^{\f} \to \p (\R{s})$ of $g'$ from position $(Q,s)$ be defined as follows:
       \begin{align*}
       \val'(q')\ :=\ \begin{cases}
               \bigcap\limits_{\substack{(a,b) \in q',\\ a \in \Ran(Q)}}\{t\ \in \R{s} \mid t \in \val_{a,s}(b)\}               & q' \in \shA \\[2em]
               \bigcup\limits_{a \in \Ran(Q)} \{t\ \in \R{s} \mid t \in \val_{a,s}(q') \text{ and }\model'.t\text{ is $p$-free}\}              & q' \in A.
           \end{cases}
       \end{align*}
       The definition of $\val'$ on $q' \in \shA$ is standard (\emph{cf.}~\cite[Prop. 2.21]{Zanasi:Thesis:2012}) and guarantees a correspondence between the states assigned by the markings $\{\val_{a,s}\}_{a \in \Ran(Q)}$ and the macro-states assigned by $\val'$. The definition of $\val'$ on $q' \in A$ aims at fulfilling the conditions, expressed via $\qu$ and $\dqu$, on the number of nodes in $\R{s}$ witnessing (or not) some $A$-types. Those conditions are the ones that $\shDe(Q,\tscolors'(s))$ --and thus also $\Delta^{\f}(Q,\tscolors'(s))$-- ``inherits'' by $\bigwedge_{a \in \Ran(R)} \Delta(a,\tscolors'(s))$, by definition of $\shDe$. Notice that we restrict $\val'(q')$ to the nodes $t \in \val_{a,s}(q')$ such that $\model'.t$ is $p$-free. As $\model'$ is a \emph{finite} $p$-variant, only \emph{finitely many} nodes in $\val_{a,s}(q')$ will not have this property. Therefore their exclusion, which is crucial for maintaining condition ($\ddag$) (\emph{cf.}~case \eqref{point:ddag2CardfromMacro} below), does not influence the fulfilling of the cardinality conditions expressed via $\qu$ and $\dqu$ in $\shDe(Q,\tscolors'(s))$.

       On the base of these observations, one can check that $\val'$ makes $\shDe(Q,\tscolors'(s))$--and thus also $\Delta^{\f}(Q,\tscolors'(s))$--true in $\R{s}$. In fact, to be a legitimate move for $\exists$ in $\pi'$, $\val'$ should make $\DeltaProj(Q,\tscolors(s))$ true: this is the case, for $\Delta^{\f}(Q,\tscolors'(s))$ is either equal to $\Delta^{\f}(Q,\tscolors(s))$, if $p \not\in \tscolors'(s)$, or to $\Delta^{\f}(Q,\tscolors(s)\cup\{p\})$ otherwise. In order to check that we can maintain $(\ddag)$, let $(q',t) \in A^{\f} \times T$ be any next position picked by $\forall$ in $\pi'$ at round $z_{i+1}$. As before, we distinguish two cases:
       \begin{enumerate}[label = (\alph*), ref = \alph*]
         \item If $q'$ is in $A$, then, by definition of $\val'$, $\forall$ can choose $(q',t)$ in some shadow match $\pi_a$ in the bundle $\mc{B}_i$. We dismiss the bundle --i.e. make it a singleton-- and bring only $\pi_a$ to the next round in the same position $(q',t)$. Observe that, by definition of $\val'$, $\model'.t$ is $p$-free and thus ($\ddag.2$) holds at round $z_{i+1}$. \label{point:ddag2CardfromMacro}
         \item Otherwise, $q'$ is in $\shA$. The new bundle $\mc{B}_{i+1}$ is given in terms of the bundle $\mc{B}_i$: for each $\pi_a \in \mc{B}_i$ with $a\in \Ran(Q)$, we look if for some $b \in \Ran(q')$ the position $(b,t)$ is a legitimate move for $\forall$ at round $z_{i+1}$; if so, then we bring $\pi_a$ to round $z_{i+1}$ at position $(b,t)$ and put the resulting (partial) shadow match $\pi_b$ in $\mc{B}_{i+1}$. Observe that, if $\forall$ is able to pick such position $(q',t)$ in $\pi'$, then by definition of $\val'$ the new bundle $\mc{B}_{i+1}$ is non-empty and consists of an $g$-guided (partial) shadow match $\pi_b$ for each $b \in \Ran(q')$. In this way we are able to keep condition ($\ddag.1$) at round $z_{i+1}$.
       \end{enumerate}
    \item Let us now consider the case in which $\model'_s$ is $p$-free. We let $g'$ suggest the valuation $\val'$ that assigns to each node $t \in \R{s}$ all states in $\bigcup_{a \in \Ran(Q)}\{b \in A\ |\ t \in \val_{a,s}(b)\}$. It can be checked that $\val'$ makes $\bigwedge_{a \in \Ran(Q)} \Delta(a,\tscolors'(s))$ -- and then also $\Delta^{\f}(Q,\tscolors'(s))$ -- true in $\R{s}$. As $p \not\in \tscolors(s)=\tscolors'(s)$, it follows that $\val'$ also makes $\DeltaProj(Q,\tscolors(s))$ true, whence it is a legitimate choice for $\exists$ in $\pi'$. Any next basic position picked by $\forall$ in $\pi'$ is of the form $(b,t) \in A \times T$, and thus condition ($\ddag.2$) holds at round $z_{i+1}$ as shown in (i.a). 
  \end{enumerate}
  \item In the remaining case, $(q,s)$ is of the form $(a,s) \in A \times T$ and by inductive hypothesis we are given with a bundle $\mc{B}_i$ consisting of a single $f$-guided (partial) shadow match $\pi_a$ at the same position $(a,s)$. Let $\val_{a,s}$ be the suggestion of $\exists$ from position $(a,s)$ in $\pi_a$. Since by assumption $s$ is $p$-free, we have that $\tscolors'(s) = \tscolors(s)$, meaning that $\DeltaProj(a,\tscolors(s))$ is just $\Delta(a,\tscolors(s)) = \Delta(a,\tscolors'(s))$. Thus the restriction $\val'$ of $\val$ to $A$ makes $\Delta(a,\tscolors'(t))$ true and we let it be the choice for $\exists$ in $\tilde{\pi}$. It follows that any next move made by $\forall$ in $\tilde{\pi}$ can be mirrored by $\forall$ in the shadow match $\pi_a$.
\end{enumerate}

\end{proof}


\subsubsection{Closure under Boolean operations}

In this section we will show that the class of $\wmso$-automaton recognizable
tree languages is closed under the Boolean operations.
Start with closure under union, we just mention the following result, without
providing the (completely routine) proof.

\begin{theorem}
\label{t:cl-dis}
Let $\bbA_{0}$ and $\bbA_{1}$ be $\yvWMSO$-automata. 
Then there is a $\yvWMSO$-automaton $\bbA$ such that $\trees(\bbA)$ is the 
union of $\trees(\bbA_{0})$ and $\trees(\bbA_{1})$.
\end{theorem}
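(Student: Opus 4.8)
The plan is to use the standard disjoint-union construction for parity automata, taking care that it preserves the weakness and continuity conditions. Let $\bbA_0 = \tup{A_0,\Delta_0,\Omega_0,a_{I,0}}$ and $\bbA_1 = \tup{A_1,\Delta_1,\Omega_1,a_{I,1}}$ be the given $\yvWMSO$-automata; renaming states if necessary I may assume that $A_0$ and $A_1$ are disjoint. I would set $A := A_0 \uplus A_1 \uplus \{a_I\}$ for a fresh initial state $a_I$, keep $\Omega\rst{A_i} := \Omega_i$, put $\Omega(a_I) := 0$, and define the transition map by
$$
\Delta(a,c) := \begin{cases}
 \Delta_0(a,c) & a \in A_0,\\
 \Delta_1(a,c) & a \in A_1,\\
 \Delta_0(a_{I,0},c) \lor \Delta_1(a_{I,1},c) & a = a_I.
\end{cases}
$$
Since each one-step sentence $\Delta_i(a,c)$ lies in $\olque^+(A_i) \subseteq \olque^+(A)$, and a disjunction of positive sentences is positive, $\Delta$ maps into $\olque^+(A)$ as required.

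The first step is to verify that $\bbA = \tup{A,\Delta,\Omega,a_I}$ is a genuine $\yvWMSO$-automaton. The crucial observation is that the reachability structure decomposes: a state in $A_i$ mentions in its transitions only states of $A_i$, so no transition crosses between $A_0$ and $A_1$, and the fresh state $a_I$ occurs in no transition formula whatsoever. Consequently every $\ord$-SCC of $\bbA$ is either an $\ord$-SCC of $\bbA_0$, an $\ord$-SCC of $\bbA_1$, or the singleton $\{a_I\}$. For the SCCs lying inside $A_0$ or $A_1$, the \textbf{(weakness)} and \textbf{(continuity)} conditions hold because they held in the corresponding original automaton. For the singleton $\{a_I\}$, weakness is immediate, and continuity is trivial as well: $a_I$ does not occur in $\Delta(a_I,c)$, and a sentence in which a predicate does not occur is both continuous and co-continuous in that predicate. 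Hence $\bbA \in \yvcwAut(\olque)$.

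The second step is correctness, i.e.\ $\trees(\bbA) = \trees(\bbA_0) \cup \trees(\bbA_1)$, which I would read off the acceptance game $\mathcal{A}(\bbA,\bbT)@(a_I,s_I)$. At the (unique) visit to a position with first coordinate $a_I$, player $\exists$ must choose a valuation $\val : A \to \wp(R[s_I])$ satisfying $\Delta_0(a_{I,0},c)\lor\Delta_1(a_{I,1},c)$; without loss of generality she satisfies one disjunct, say $\Delta_0(a_{I,0},c)$, and since that sentence only constrains predicates in $A_0$ she may take $\val$ to assign $\emptyset$ to every state outside $A_0$. From that move onward every basic position has its first coordinate in $A_0$, and because $A_0$-states branch only to $A_0$-states, the remainder of the play is literally a play of $\mathcal{A}(\bbA_0,\bbT)$ from $(a_{I,0},s_I)$; the initial position carrying $a_I$ is visited only once and is therefore irrelevant to the parity condition. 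This shows that $\exists$ wins $\mathcal{A}(\bbA,\bbT)@(a_I,s_I)$ iff she wins $\mathcal{A}(\bbA_0,\bbT)@(a_{I,0},s_I)$ or $\mathcal{A}(\bbA_1,\bbT)@(a_{I,1},s_I)$, which is exactly the union.

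The construction is routine and I expect no genuine obstacle. The only point needing a moment's care is confirming that adjoining the disjunctive initial state cannot create a new nontrivial SCC or spoil continuity, which is precisely the observation that $a_I$ is fresh and occurs in no transition formula.
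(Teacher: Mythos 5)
The paper deliberately omits a proof of this theorem, calling it ``completely routine,'' and your disjoint-union construction with a fresh initial state whose transition is the disjunction $\Delta_0(a_{I,0},c) \lor \Delta_1(a_{I,1},c)$ is precisely the standard argument the authors intend. Your verification is correct on the one point that actually needs checking: since $a_I$ occurs in no transition formula, every $\ord$-SCC of $\bbA$ is an SCC of $\bbA_0$, of $\bbA_1$, or the trivial singleton $\{a_I\}$, and an $a_I$-free positive sentence lies in both $\cont{{\olque}^+}{a_I}(A)$ and $\cocont{{\olque}^+}{a_I}(A)$ (the fragments admit $\psi \in \olque(A\setminus\{a\})$ as a base case, and dualization preserves $a_I$-freeness), so both \textbf{(weakness)} and \textbf{(continuity)} hold and the game-theoretic equivalence goes through as you describe.
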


In order to prove closure under complementation, we crucially use that the 
one-step language $\olque$ is closed under Boolean duals 
(cf.~Proposition~\ref{prop:duals}).

\begin{theorem}
\label{t:cl-cmp}
Let $\bbA$ be an $\yvWMSO$-automaton.
Then the automaton $\overline{\aut}$ defined in Definition~\ref{d:caut} is a
$\yvWMSO$-automaton recognizing the complement of $\trees(\bbA)$.
\end{theorem}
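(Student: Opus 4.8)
The plan is to prove the statement in two parts: that $\overline{\bbA}$ recognizes the complement of $\trees(\bbA)$, and that $\overline{\bbA}$ is again a $\yvWMSO$-automaton.

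For the first part I would simply invoke the general theory already in place. Since $\olque$ is closed under Boolean duals (Proposition~\ref{prop:duals}) and dualization preserves positivity (the Remark following Definition~\ref{DEF_dual}), the transition map $\yvdual{\De}(a,c) = (\De(a,c))^{\delta}$ lands in ${\olque}^{+}(A)$, so $\overline{\bbA} \in \yvAut(\olque)$ and Proposition~\ref{PROP_complementation} applies: for every tree $\bbT$, the automaton $\overline{\bbA}$ accepts $\bbT$ iff $\bbA$ rejects $\bbT$. As the acceptance games are parity games they are determined (Fact~\ref{THM_posDet_ParityGames}), so $\bbA$ rejects $\bbT$ precisely when $\bbA$ does not accept $\bbT$. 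Hence $\trees(\overline{\bbA})$ is exactly the complement of $\trees(\bbA)$.

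The substance of the argument is the second part. The key preliminary observation is that Boolean dualization does not alter which monadic predicates occur in a one-step sentence: inspecting Definition~\ref{DEF_dual}, every clause leaves atoms $a(x)$ fixed and merely swaps the connectives and quantifiers. Consequently $a \leadsto b$ holds in $\overline{\bbA}$ iff it holds in $\bbA$, so $\overline{\bbA}$ and $\bbA$ share the same reachability relation $\ord$ and the same strongly connected components. Weakness is then immediate: for $a,b$ with $a \ord b$ and $b \ord a$ we have $\pmap(a) = \pmap(b)$ in $\bbA$, whence $\yvdual{\Om}(a) = 1 + \pmap(a) = 1 + \pmap(b) = \yvdual{\Om}(b)$. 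For continuity, fix $a,b$ in a common SCC and note $\yvdual{\Om}(a) = 1 + \pmap(a)$ has parity opposite to $\pmap(a)$. If $\yvdual{\Om}(a)$ is odd then $\pmap(a)$ is even, so the continuity condition for $\bbA$ gives $\tmap(a,c) \in \cocont{{\olque}^{+}}{b}(A)$; by Definition~\ref{def:cocontfrag} this says $(\tmap(a,c))^{\delta} \in \cont{{\olque}^{+}}{b}(A)$, and since $(\tmap(a,c))^{\delta} = \yvdual{\De}(a,c)$ is the transition of $\overline{\bbA}$, the required continuity holds. If instead $\yvdual{\Om}(a)$ is even then $\pmap(a)$ is odd, so $\tmap(a,c) \in \cont{{\olque}^{+}}{b}(A)$; using that $(\cdot)^{\delta}$ is an involution (clear from Definition~\ref{DEF_dual}) we get $(\yvdual{\De}(a,c))^{\delta} = \tmap(a,c) \in \cont{{\olque}^{+}}{b}(A)$, which by Definition~\ref{def:cocontfrag} means $\yvdual{\De}(a,c) \in \cocont{{\olque}^{+}}{b}(A)$, as required.

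The only delicate point — the main thing to get right — is the interaction of two orthogonal switches: the parity flip $\pmap \mapsto 1 + \pmap$ exchanges even and odd SCCs, while Boolean dualization exchanges the continuous and co-continuous fragments (Proposition~\ref{prop:contdualcocont}, reflected syntactically in Definition~\ref{def:cocontfrag}). The proof goes through precisely because these two exchanges cancel, so that the continuity requirement imposed on $\overline{\bbA}$ at a state of one parity is discharged by exactly the condition $\bbA$ already satisfies at the complementary parity. Everything else is routine bookkeeping.
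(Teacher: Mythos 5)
Your proof is correct and follows the same route as the paper, whose own proof is only a two-line sketch: it delegates complement recognition to Proposition~\ref{PROP_complementation} and notes that the weakness and continuity conditions are preserved because complementation's parity flip and the duality between the continuous and co-continuous fragments cancel out. You have simply filled in the details the paper leaves implicit (determinacy to equate rejection with non-acceptance, invariance of $\ord$ under dualization, and the case analysis via Definition~\ref{def:cocontfrag} and the involutivity of $(\cdot)^{\delta}$), all of which check out.
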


\begin{proof}
Since we already know that $\overline{\bbA}$ accepts exactly the transition
systems that are rejected by $\bbA$, we only need to check that 
$\overline{\bbA}$ is indeed a $\yvWMSO$-automaton.
But this is straightforward: for instance, continuity can be checked by 
observing the self-dual nature of this property.
\end{proof}

\subsection{Proof of Theorem \ref{t:wmsoauto}}

\begin{proof} The proof is by induction on $\varphi$.
\begin{itemize}
  \item For the base case $\varphi = p \inc q$, the corresponding 
  $\wmso$-automaton is provided in \cite[Ex. 2.6]{Zanasi:Thesis:2012}. 
  For the base case $\varphi = R(p,q)$, we give the corresponding 
  $\wmso$-automaton $\aut_{R(p,q)} = \tup{A,\Delta,\Omega,a_I}$ below:
\begin{eqnarray*}
        A &:=& \{a_0,a_1\}\\
        a_I &:=& a_0\\
  \Delta(a_0,c) &:=& \left\{
	\begin{array}{ll}
           \exists x. a_1(x) \wedge \forall y. a_0(y) & \mbox{if }p \in c 
	\\ \forall x\ (a_0(x)) & \mbox{otherwise}
	\end{array}
\right. \\
  \Delta(a_1,c) &:=& \left\{
	\begin{array}{ll}
        \top & \mbox{if }q \in c \\
        \bot & \mbox{otherwise}
	\end{array}
\right. \\
    \Omega(a_0) &:=& 0\\
    \Omega(a_1) &:=& 1.
\end{eqnarray*}
Note that the $\mso$-automaton for $R(p,q)$ provided in 
\cite[Ex. 2.5]{Zanasi:Thesis:2012} is \emph{not} a $\wmso$-automaton, as the 
continuity property does not hold.

\item
For the Boolean cases, where $\varphi = \psi_1 \vee \psi_2$ or $\phi = \neg\psi$
we refer to the closure properties of recognizable tree languages, see 
Theorem~\ref{t:cl-dis} and Theorem~\ref{t:cl-cmp}, 
respectivel.
  
\item 
For the case $\varphi = \exists p. \psi$, consider the following chain of
equivalences, where $\aut_{\psi}$ is given by the inductive hypothesis and 
${{\exists}_F p}.\aut_{\psi}$ is constructed according to 
Definition \ref{DEF_fin_projection}:
\begin{alignat*}{2}
{{\exists}_F p}.\aut_{\psi} \text{ accepts }\mb{T} 
   & \text{ iff }
     \aut_{\psi} \text{ accepts } \mb{T}[p \mapsto X], 
     \text{ for some } X \sse_{\om} T
   & \quad\text{(Lemma~\ref{PROP_fin_projection})}
\\ & \text{ iff }
     \mb{T}[p \mapsto X] \models \psi,
     \text{ for some } X \sse_{\om} T
   & \quad\text{(induction hyp.)}
\\ & \text{ iff }
    \mb{T} \models \exists p. \psi
   & \quad\text{(semantics $\wmso$)}
\end{alignat*}
\end{itemize}
\end{proof}

\section{From $\yvWMSO$-automata to $\yvWMSO$}\label{sec:aut-to-formula_wmso}

In what follows, we verify that WMSO-automata capture exactly the expressive power of WMSO on the class of tree models. Since we already proved the direction from formulas into automata (Theorem~\ref{t:wmsoauto}), we just have to verify that there is a sound translation going in the other direction.
For this purpose, we first introduce a fixpoint extension of first-order logic.

\subsection{Fixpoint extension of  first-order logic}

Let our first-order signature be composed of a set $\prop$ of monadic predicates (denoted with capital latin letters) and an unique binary predicate $R$.
%
%
%
%
Analogously to the modal $\mu$-calculus, the fixpoint extension of $\lque(\prop)$ is defined by adding a fixpoint construction clause.

\begin{definition}
The fixed point logic $\mlque(\prop)$ is given by:
$$
\varphi ::= q(x) \mid R(x,y) \mid x \foeq y \mid \exists x.\varphi \mid \qu x.\varphi \mid \lnot\varphi \mid \varphi \land \varphi \mid \mu p.\varphi(p,x)
$$
where $p,q\in\prop$, $x,y\in\fovar$; moreover $p$ occurs only positively in $\varphi(p,x)$ and $x$ is the only free variable in $\varphi(p,x)$.
\end{definition}



The semantics of the fixpoint formula $\mu p. \phi(p, x)$is the expected one. Given a model $\model$ and $s \in T$,  $\model \models \mu p. \phi(p, s)$ iff $s$ is in the least fixpoint of the  operator $F_\phi:\wp(T)\to \wp(T)$ defined as $F_\phi(S) := \{t \in T \mid \model[p \mapsto S] \models \phi(p, t) \}$.

Formulas of $\mlque$ may be also classified according to their alternation depth as it happens for the modal $\mu$-calculus.
The alternation-free fragment of $\mlque$ is thence defined as the collection of $\mlque$-formulas $\phi$
without nesting of greatest and least fixpoint operators, i.e. such that, for any two subformulas $\mu p.\psi_1(p,y)$ and $\nu q. \psi_2(q,z)$, predicates $p$ and $q$ do not occur free respectively in $\psi_2(q,z)$ and $\psi_1(p,y)$.

\begin{definition}
Given $p \in \prop$, we say that $\varphi \in \mlque(\prop)$ is
\begin{itemize}
\item \emph{monotone in the predicate $p$} iff for every LTS $\model$ and assignment $\ass$, \[ \text{if }\model, \ass \models \varphi \text{ and $\tsval(p) \subseteq E$, then }\model[p \mapsto E], g\models \phi\]

\item \emph{continuous in the predicate $p$} iff for every LTS $\model$ and assignment $\ass$ there exists some finite $S \subseteq_\omega \tsval(p)$ such that
$$
\model, \ass \models \varphi \quad\text{iff}\quad \model[p \mapsto S], \ass \models \varphi .
$$
\end{itemize}
\end{definition}

In the next definition, we provide a definition of the continuous fragment of $\mlque$, reminiscent of the one defined in Theorem~\ref{thm:olquecont}.
\begin{definition}
Let $\mathsf{Q}\subseteq \prop$ be a set of monadic predicates. The fragment $\cont{\mlque}{\mathsf{Q}}(\prop)$ is defined by the following rules:
$$
\varphi ::= \psi \mid q(x) \mid \exists x.\varphi(x) \mid \varphi \land \varphi \mid \varphi \lor \varphi \mid \wqu x.(\varphi,\psi) \mid \mu p. \phi'(p, x)
$$
where $q \in \mathsf{Q}$, $\psi \in \mlque(\prop\setminus \mathsf{Q})$, $p \in \prop \setminus \mathsf{Q}$, $\wqu x.(\varphi,\psi) := \forall x.(\varphi(x) \lor \psi(x)) \land \dqu x.\psi(x)$ and $\phi'(p,x)$ is a formula with only $x$ free such that $\phi'(p,x) \in \cont{\mlque}{\mathsf{Q} \cup\{p\}}(\prop)$.

\end{definition}


\begin{lemma}\label{lem:colqueiscont_mu}
If $\varphi \in \cont{\mlque}{\mathsf{Q}}(\prop)$ then $\varphi$ is continuous in (each predicate from) $\mathsf{Q}$.
\end{lemma}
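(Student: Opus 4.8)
The plan is to prove the statement by induction on the structure of $\varphi$, following the grammar of $\cont{\mlque}{\mathsf{Q}}(\prop)$, in close analogy with the one-step result of Lemma~\ref{lem:colqueiscont}. First I would record the auxiliary observation that no predicate of $\mathsf{Q}$ is ever negated in the fragment (negation occurs only inside the $\mathsf{Q}$-free building blocks $\psi \in \mlque(\prop\setminus\mathsf{Q})$), so a routine induction shows that every $\varphi \in \cont{\mlque}{\mathsf{Q}}(\prop)$ is monotone in each $q\in\mathsf{Q}$. Monotonicity reduces the required equivalence to its nontrivial direction: it suffices to show that whenever $\model,\ass\models\varphi$ there is a finite $S\subseteq_\omega\tsval(q)$ with $\model[q\mapsto S],\ass\models\varphi$ (the converse being immediate, since $S\subseteq\tsval(q)$ and $\varphi$ is monotone in $q$), while when $\model,\ass\not\models\varphi$ one simply takes $S=\nada$.

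The base cases ($\psi$ being $\mathsf{Q}$-free, and $q(x)$) and the inductive cases for $\lor$, $\land$ and $\exists x$ are routine and proceed exactly as in Lemma~\ref{lem:colqueiscont}, collecting the finite witnesses supplied by the induction hypothesis and closing them under finite unions, using monotonicity. The case $\wqu x.(\varphi',\psi) = \forall x.(\varphi'(x)\lor\psi(x)) \land \dqu x.\psi(x)$ is handled as in the one-step proof: since $\psi$ is $\mathsf{Q}$-free and $\dqu x.\psi(x)$ forces all but finitely many elements to satisfy $\psi$, only finitely many witnesses $t$ fail $\psi(t)$ and must therefore satisfy $\varphi'(t)$; applying the induction hypothesis to these finitely many instances of $\varphi'$ and taking the finite union of the resulting finite sets yields the desired witness, all remaining elements being handled by $\psi$, whose truth is unaffected by changing the interpretation of $q$.

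The genuinely new --- and main --- case is the fixpoint $\mu p.\phi'(p,x)$, where $\phi'\in\cont{\mlque}{\mathsf{Q}\cup\{p\}}(\prop)$, so that by the induction hypothesis $\phi'$ is continuous both in $p$ and in each $q\in\mathsf{Q}$. Fixing $q\in\mathsf{Q}$ and writing $X_0 := \tsval(q)$, I would consider for each set $X$ the operator $\Phi_X(P) := \{t \mid \model[q\mapsto X, p\mapsto P],\ass[x\mapsto t]\models\phi'(p,x)\}$, which is monotone in both $X$ and $P$ by positivity of $q$ and $p$ in $\phi'$, and which satisfies $\model[q\mapsto X],\ass\models\mu p.\phi'$ iff $\ass(x)\in\mathrm{LFP}(\Phi_X)$. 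Continuity of $\phi'$ in $p$ says precisely that each $\Phi_X$ is Scott-continuous, whence $\mathrm{LFP}(\Phi_X)=\bigcup_{n<\omega}\Phi_X^n(\nada)$ by the Kleene iteration. The heart of the argument is then an inner induction on $n$ establishing that for every $t\in\Phi_{X_0}^n(\nada)$ there is a finite $S\subseteq_\omega X_0$ with $t\in\Phi_S^n(\nada)$; membership $\ass(x)\in\mathrm{LFP}(\Phi_{X_0})$ then lands in some $\Phi_{X_0}^n(\nada)$, and the inner claim delivers the required finite witness $S$.

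For the inner step, from $t\in\Phi_{X_0}^{n+1}(\nada)=\Phi_{X_0}(\Phi_{X_0}^n(\nada))$ I would first apply continuity of $\phi'$ in $p$ to extract a \emph{finite} $P'\subseteq_\omega\Phi_{X_0}^n(\nada)$ still making $\phi'$ true at $t$, and then apply continuity of $\phi'$ in $q$ to the model with $p$ interpreted as $P'$ to extract a finite $S_0\subseteq_\omega X_0$; the inner induction hypothesis applied to the finitely many elements of $P'$ supplies finite sets $S_1,\dots,S_k\subseteq_\omega X_0$ placing each of them in the corresponding $\Phi_{S_i}^n(\nada)$, and $S := S_0\cup\cdots\cup S_k$ then works after two appeals to monotonicity (of $\Phi$ in its parameter $X$ and of $\phi'$ in $p$), giving $t\in\Phi_S^{n+1}(\nada)$. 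The main obstacle is exactly this fixpoint case: one must combine continuity in the fixpoint variable $p$ with continuity in the parameter $q$ and pull finite witnesses through all finite stages of the Kleene approximation, which is why both continuity hypotheses packaged in the fragment $\cont{\mlque}{\mathsf{Q}\cup\{p\}}$ are indispensable.
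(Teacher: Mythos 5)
Your proof is correct and takes essentially the same route as the paper: monotonicity first, then induction on complexity with the non-fixpoint cases inherited verbatim from the one-step Lemma~\ref{lem:colqueiscont}, and the fixpoint case handled via Scott-continuity of the induced operator and its Kleene approximants. The only difference is presentational: where the paper delegates the $\mu p.\phi'(p,x)$ case to \cite[Lemma 1]{Fontaine08}, you spell out that argument in full --- and your inner induction, pulling finite witnesses for both $p$ and $q$ through the finite stages $\Phi^n(\varnothing)$ and closing under finite unions via monotonicity, is exactly the argument being cited and is sound.
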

\begin{proof} First, notice that If $\varphi \in \cont{\mlque}{\mathsf{Q}}(\prop)$ then $\varphi$ is monotone  in (each predicate from) $\mathsf{Q}$. 
The proof goes then by induction on the complexity of $\varphi$. For the all the cases except the fixpoint one, the proof is the same as the one for Lemma~\ref{lem:colqueiscont}. For $\phi=\mu p. \phi'(p, x)$, with $\phi'(p,x) \in \cont{\mlque}{\mathsf{Q} \cup\{p\}}(\prop)$, the argument is the same as in~\cite[Lemma 1]{Fontaine08}.
\end{proof}


As for the modal $\mu$-calculus, we define the fragment $\clque$ of $\mlque$ as the one where the use of the least fixed point operator is restricted to the continuous fragment. 

\begin{definition}
The fragment $\clque(\prop)$ of $\mlque(\prop)$ is given by the following restriction of the fixpoint operator to the contiuous fragment:
{\small%
$$
\varphi ::= q(x) \mid R(x,y) \mid x \foeq y \mid \exists x.\varphi \mid \qu x.\varphi \mid \lnot\varphi \mid \varphi \land \varphi \mid \mu p.\varphi'(p,x)
$$}%
where $p,q\in\prop$, $x,y\in\fovar$; and $\varphi'(p,x) \in \cont{\mlque}{\{p\}}(\prop)$ is such that $p$ occurs only positively in $\varphi'$ and $x$ is the only free variable in $\varphi'$.
\end{definition}

We now recall a useful property of fixpoint and continuity. Let $\phi(p,x)$ a formula with only $x$ free.
Given a LTS $\model$, for every ordinal $\alpha$, we define by induction the following sets:
\begin{itemize}
	\itemsep 0 pt
	\item $\phi^0(\emptyset):= \emptyset$,
	\item $\phi^{\alpha+1}(\emptyset):= \{ s \in T \mid \model[p \mapsto \phi^\alpha(\emptyset)] \models \phi(p, s)\}$,
	\item $\phi^{\lambda}(\emptyset):= \bigcup_{\alpha < \lambda} \phi^{\alpha}(\emptyset)$, with $\lambda$ limit.
\end{itemize}
If $\phi$ is monotone in $p$, it is possible to show that $\phi^{\beta+1}(\emptyset)= \phi^{\beta}(\emptyset)$, for some ordinal $\beta$. Moreover, the set $\phi^{\beta}(\emptyset)$ is the least fixpoint of $F_\phi$ (cf. for instance \cite{ArnoldN01}).

A formula $\phi(p, x)$ is said to be \emph{constructive} in $p$ if its least fixpoint is reached in at most $\omega$ steps, i.e., if for every model $\model$, the least fixpoint of $F_\phi$ equals to $\bigcup_{\alpha < \omega} \phi^{\alpha}(\emptyset)$. From a local perspective, this means that a formula $\phi(p, x)$ constructive in $p$ if for every model $\model$,  every node $s \in T$, whenever $\mu p. \phi(p,x)$ is true at $s$, then $s$ belongs to some finite approximant $\phi^{i+1}(\emptyset)$ of the least fixpoint of $F_\phi$.
The next proposition is easily verified:

\begin{proposition}\label{prop:constructivity}
Let $\phi(p,x)$ be a $\mlque$-formula with only $x$ free. If $\phi(p,x)$ is continuous in $p$, then for every LTS $\model$, and every node $s \in T$, there is $i < \omega$ such that
\[\model \models \mu p. \phi(p,s) \text{ iff } s \in \phi^{i+1}(\emptyset).\]
\end{proposition}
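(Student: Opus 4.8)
The plan is to reduce the statement to the single claim that the ordinal approximants of the least fixpoint stabilise already at stage $\omega$. First I would recall that continuity of $\phi$ in $p$ entails monotonicity of $\phi$ in $p$ (as observed just before Lemma~\ref{lem:colqueiscont_mu}), so that by transfinite induction the sets $\phi^{\alpha}(\emptyset)$ form an increasing chain and, for the least ordinal $\beta$ with $\phi^{\beta+1}(\emptyset) = \phi^{\beta}(\emptyset)$, the set $\phi^{\beta}(\emptyset)$ is exactly the least fixpoint of $F_\phi$. Since $\phi^0(\emptyset) = \emptyset$, once I show that $\phi^{\omega}(\emptyset)$ is itself a fixpoint of $F_\phi$ the equivalence follows: $\model \models \mu p.\phi(p,s)$ holds iff $s \in \phi^{\omega}(\emptyset) = \bigcup_{n < \omega}\phi^{n}(\emptyset)$, and any such membership is witnessed at some stage $i+1$ with $i < \omega$ (as $n=0$ is impossible).

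The core of the argument is therefore to prove $\phi^{\omega+1}(\emptyset) \subseteq \phi^{\omega}(\emptyset)$, the reverse inclusion being immediate from monotonicity. So I would take $s \in \phi^{\omega+1}(\emptyset)$, which by definition of the approximants means $\model[p \mapsto \phi^{\omega}(\emptyset)] \models \phi(p,s)$, and apply continuity of $\phi$ in $p$ to the model $\model[p \mapsto \phi^{\omega}(\emptyset)]$ under the assignment sending $x$ to $s$. Since here $\tsval(p) = \phi^{\omega}(\emptyset)$, this yields a finite set $S \subseteq_\omega \phi^{\omega}(\emptyset)$ with $\model[p \mapsto S] \models \phi(p,s)$.

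Next I would exploit that $\phi^{\omega}(\emptyset) = \bigcup_{n < \omega}\phi^{n}(\emptyset)$ is an increasing union: every element of the finite set $S$ lies in some approximant $\phi^{n}(\emptyset)$, so by finiteness of $S$ (and the fact that the chain is increasing) there is a single $n < \omega$ with $S \subseteq \phi^{n}(\emptyset)$. Monotonicity of $\phi$ in $p$ then upgrades $\model[p \mapsto S] \models \phi(p,s)$ to $\model[p \mapsto \phi^{n}(\emptyset)] \models \phi(p,s)$, that is, $s \in \phi^{n+1}(\emptyset) \subseteq \phi^{\omega}(\emptyset)$. This closes the inclusion and shows that $\phi^{\omega}(\emptyset)$ is the least fixpoint of $F_\phi$, completing the reduction set up in the first paragraph.

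The step that does the real work — and the one I expect to be the only non-routine point — is the passage from the finite continuity witness $S$ to a single finite approximant level $\phi^{n}(\emptyset)$ containing it. It is exactly here that finiteness of $S$ combines with the increasing-chain structure of the approximants; this is the semantic counterpart of the classical Kleene constructivity argument, and it is precisely why continuity in $p$ (rather than mere monotonicity) is what the hypothesis must supply.
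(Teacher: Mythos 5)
Your argument is correct. The paper does not actually prove this proposition --- it is introduced with ``The next proposition is easily verified'' --- so there is no official argument to compare against; what you give is precisely the standard Kleene-style verification the authors are gesturing at: continuity at stage $\omega+1$ yields a finite witness $S \subseteq_\omega \phi^{\omega}(\emptyset)$, finiteness of $S$ plus the increasing chain pushes $S$ into a single approximant $\phi^{n}(\emptyset)$, and monotonicity closes the circle, showing $\phi^{\omega}(\emptyset)$ is the least fixpoint. This is moreover the same mechanism (``continuity gives a finite witness inside an approximant'') that the paper then deploys explicitly in its proof of Proposition~\ref{prop:cor_constructivity}, so your proof is fully in the spirit of the text; and your handling of the biconditional is fine, since in the negative case every $i<\omega$ works, each $\phi^{i+1}(\emptyset)$ being contained in the least fixpoint. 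One sourcing slip worth fixing: the monotonicity you invoke does not come from the observation at Lemma~\ref{lem:colqueiscont_mu}, which concerns the \emph{syntactic} fragment $\cont{\mlque}{\mathsf{Q}}(\prop)$, and the paper's semantic definition of continuity for $\mlque$ (the ``iff with a finite $S$'' clause, with no monotonicity conjunct) does not literally entail it. Monotonicity is nevertheless available: for $\mu p.\phi(p,x)$ to be well-formed, $p$ must occur only positively in $\phi$, and the paper's ambient convention (cf.\ the one-step definition in Section~\ref{subsec:one-stepcont}, where continuity is defined as monotonicity plus the finite-witness property) builds monotonicity into continuity anyway --- so this is a citation slip, not a gap.
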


From the fact that sets $\phi^{i+1}(\emptyset)$ are essentially defined as finite unfoldings and the previous Proposition~\ref{prop:constructivity}, we obtain the following.\fcwarning{More intuition on this?}

\begin{proposition}\label{prop:cor_constructivity}
Let $\phi(p,x)$ be a $\mlque$-formula with only $x$ free and such that $\phi(p,x)$ is continuous in $p$. Let $\model$ be a LTS, and $s \in T$. Then
$\model \models \mu p. \phi(p,s)$ iff there is a finite set $p^\model \subseteq_\omega T$ such that $s\in p^\model$ and $\model[p\mapsto p^\model] \models \phi(p,t)$  for every $t \in p^\model$.
\end{proposition}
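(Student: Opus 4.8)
The plan is to read both implications off Proposition~\ref{prop:constructivity}, which applies because $\phi$ is continuous, hence constructive, in $p$: it gives $\model\models\mu p.\phi(p,s)$ exactly when $s\in\phi^{i+1}(\emptyset)$ for some $i<\omega$. So it suffices to match membership in a \emph{finite approximant} of the least fixpoint with the existence of a finite set $p^\model\ni s$ that is closed under the one-step operator, i.e.\ with $\model[p\mapsto p^\model]\models\phi(p,t)$ for every $t\in p^\model$. Throughout I would use the \emph{rank} $r(t)$ of a fixpoint element $t$, defined as the least $n$ with $t\in\phi^{n}(\emptyset)$; note $r(t)\geq 1$, and since $\phi^{r(t)}(\emptyset)=F_\phi(\phi^{r(t)-1}(\emptyset))$, membership $t\in\phi^{r(t)}(\emptyset)$ means $\model[p\mapsto\phi^{r(t)-1}(\emptyset)]\models\phi(p,t)$.

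For the direction from left to right, suppose $s\in\phi^{i+1}(\emptyset)$. For each fixpoint element $t$ of rank $k=r(t)$, continuity of $\phi$ in $p$ applied to the model $\model[p\mapsto\phi^{k-1}(\emptyset)]$ yields a finite set $F_t\subseteq_\omega\phi^{k-1}(\emptyset)$ with $\model[p\mapsto F_t]\models\phi(p,t)$; every element of $F_t$ is again a fixpoint element, of rank strictly below $k$ (and $F_t=\emptyset$ when $k=1$). I would then take $p^\model$ to be the least set containing $s$ and closed under $t\mapsto F_t$, produced by the iteration $W_0:=\{s\}$ and $W_{m+1}:=W_m\cup\bigcup_{t\in W_m}F_t$. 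Because each $F_t$ contributes only elements of strictly smaller rank, the maximal rank among freshly added elements decreases at every step, so the chain stabilises after at most $r(s)$ steps, and since each $F_t$ is finite the resulting $p^\model$ is finite and contains $s$. Finally, for $t\in p^\model$ we have $F_t\subseteq p^\model$ by construction together with $\model[p\mapsto F_t]\models\phi(p,t)$, so monotonicity of $\phi$ in $p$ (which holds as $p$ occurs positively in $\phi$) upgrades this to $\model[p\mapsto p^\model]\models\phi(p,t)$, establishing the closure property.

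For the converse the natural plan is again to route through the approximants: from a finite witness $p^\model$ I would extract a rank function witnessing that each of its elements is \emph{grounded}, i.e.\ justified by strictly earlier ones, and then prove by induction on that rank, using monotonicity, that $p^\model\subseteq\bigcup_{n<\omega}\phi^{n}(\emptyset)$, so that $s\in\mu p.\phi$ by Proposition~\ref{prop:constructivity}. The main obstacle lies precisely in this grounding step. The bare closure condition exhibits $p^\model$ only as a \emph{post}-fixpoint, $p^\model\subseteq F_\phi(p^\model)$, and an arbitrary post-fixpoint need not sit inside the least fixpoint: for $\phi(p,x)=\exists y.(R(x,y)\wedge p(y))$ on a model with a reflexive point $t$, the singleton $\{t\}$ is closed yet $t\notin\mu p.\phi$. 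What must be invoked to break such vicious cycles is continuity itself, which forces every justification to be replaced by a strictly rank-decreasing one; this is exactly what the left-to-right construction delivers, and it matches the authors' reading of the $\phi^{i+1}(\emptyset)$ as finite unfoldings. Thus the careful point of the proof — and the step I expect to be the hardest — is making the well-foundedness of the relevant witness explicit, rather than relying on the unqualified post-fixpoint property.
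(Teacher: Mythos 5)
Your left-to-right argument is correct and is essentially the paper's own proof in different notation: the paper also starts from Proposition~\ref{prop:constructivity}, applies continuity once per element to shrink each justifying approximant $\phi^{j-1}(\emptyset)$ to a finite set, and organises these finite sets into a finite, rank-decreasing tree of labels $X^m_j$ whose union, by monotonicity, is the desired $p^\model$; your iteration $W_0=\{s\}$, $W_{m+1}=W_m\cup\bigcup_{t\in W_m}F_t$ with the rank bound is the same construction presented as a closure rather than a tree.

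The converse, however, is a genuine gap in your proposal --- and, as your own counterexample shows, an unfixable one, because the right-to-left implication of Proposition~\ref{prop:cor_constructivity} is false as stated. The condition ``$s\in p^\model$ and $\model[p\mapsto p^\model]\models\phi(p,t)$ for all $t\in p^\model$'' says no more than that $p^\model$ is a finite \emph{post}-fixpoint of $F_\phi$ containing $s$, and continuity does not rule out ungrounded finite post-fixpoints: your formula $\phi(p,x)=\exists y.(R(x,y)\wedge p(y))$ lies in the syntactic fragment $\cont{\mlque}{\{p\}}$, hence is continuous in $p$ by Lemma~\ref{lem:colqueiscont_mu}, yet at a reflexive point $t$ the set $\{t\}$ satisfies the right-hand side while $\mu p.\phi$ denotes $\emptyset$; even simpler, $\phi(p,x)=p(x)$ refutes the implication on every LTS, trees included. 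So your closing hope that ``continuity itself'' breaks the vicious cycles cannot be realised: continuity constrains how $F_\phi$ depends on infinite sets, it does not impose well-foundedness of justifications, and no rank function can be extracted from the bare closure property.

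What your analysis actually uncovers is a defect in the paper rather than merely in your attempt: the paper's proof of this direction is the single sentence that the smallest finite set with these closure properties ``is the least fixpoint of $F_\varphi$'', which is precisely the unqualified post-fixpoint reasoning you rightly distrust (by Knaster--Tarski, post-fixpoints bound the \emph{greatest} fixpoint, not the least). The equivalence can only be saved under extra hypotheses --- e.g.\ on tree models for formulas in which $p$ occurs only guarded under $R$, where acyclicity forces every finite post-fixpoint to be well-founded along the tree order --- and as stated the proposition (and with it the clause $\mgFOETr{\mu p.\varphi}$ and Theorem~\ref{thm:guard_wmso}, both asserted for arbitrary LTSs) needs such a repair. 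In short: your forward direction stands and matches the paper; your converse is not a proof, but the obstacle you isolate is real and lies in the statement itself.
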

 \begin{proof}
 For the direction from left to right, assume that $\model \models \mu p. \phi(p,s)$. By Proposition~\ref{prop:constructivity}, we know that  there is $i< \omega$ such that $\model[p \mapsto \phi^i(\emptyset)] \models \phi(p, s)$. The set $\phi^i(\emptyset)$ need not to be finite. However,
 using this information, we are going construct a finite tree whose nodes $t$ are labelled by finite sets $X^m_j$, where $m$ is a node of $\model$ and $j \leq i$, satisfying the following condition:
 \begin{enumerate}
\item  if $t$ is the root, then $t$ is labelled by $X_i^s$,
\item  if $t$ is labelled by $X_j^m=\{s_1, \dots, s_\ell\}$ and $j>0$, then $t$ has $\ell$  children and for every $s_i \in X_j^m$ there is an unique child $t'$ of $t$ labelled by $X_{j-1}^{n_i}$ where $m$ is a node,
\item for every node $t$ of the tree, if $t$ is labelled by $X_j^m$, then it holds that $X_j^m \subseteq \phi^{j}(\emptyset)$.
\end{enumerate}
If we verify that $\model[p\mapsto p^\model] \models \phi(p,s)$ holds by taking as $p^\model$ the union of all labels of the nodes of the constructed tree, we can conclude for the proof of this direction.

As starting point of the inductive construction, we start by the empty tree.  Recall that we know that  $\model[p \mapsto \phi^i(\emptyset)] \models \phi(p, s)$. Since $\phi(p,x)$ is continuous in $p$, there is a finite set $X^s_i \subseteq \phi^i(\emptyset)$ such that $\model[p \mapsto X^s_i] \models \phi(p, s)$. We then add a root to our tree and label it by $X^s_i$.
 Assume that at a leaf $s$ of our tree is labelled by $X^m_j$, for some $j < i$. If $X^m_j$ is empty, than we stop, else we proceed as follows. We know that $X^m_j\subseteq \phi^{j}(\emptyset)$. This means that $\model[p \mapsto \phi^{j-1}(\emptyset)] \models \phi(p, r)$, for every $r \in X_j^m$. By continuity, for each such $r$, there is a finite set $X^m_{j-1} \subseteq  \phi^{j-1}(\emptyset)$ such that $\model[p \mapsto X^m_{j-1}(\emptyset)] \models \phi(p, r)$. For each $r \in X^m_j$ we thus add a child to $m$ and label it with $X^r_{j-1}$. By definition of $\phi^{i+1}(\emptyset)$, the tree is finite. Let $X$ be the union of all labels of the constructed tree. $X$ is finite, and by monotonicity of $\phi(p,x)$ we have that for every $m \in X \cup \{s\}$, $\model[p \mapsto X \cup \{s\}] \models \phi(p,m)$.

For the other direction, it's enough to notice that the smallest finite set $p^\model \subseteq T$ such that $\model[p\mapsto p^\model] \models \phi(p,s)$ and $\model[p\mapsto p^\model] \models \phi(p,m)$ for all $m \in p^\model$ is the least fixpoint $F_\varphi$. 
 \end{proof}

\noindent Proposition~\ref{prop:cor_constructivity} naturally suggests the following translation $\mgFOETr{-}:\mlque(\prop)\to\wmso(\prop)$,

\begin{itemize}
	\itemsep 0 pt
	\item $\mgFOETr{p(x)}=p(x)$,
	\item $\mgFOETr{R(x,y)}=R(x,y)$
	\item $\mgFOETr{x\foeq y}= (x \foeq y)$
	\item $\mgFOETr{\varphi \land \psi}=\mgFOETr{\varphi} \land \mgFOETr{\psi}$,
	\item $\mgFOETr{\lnot \varphi}= \lnot \mgFOETr{\varphi}$,
	\item $\mgFOETr{\exists x. \varphi}= \exists x. \mgFOETr{\varphi}$,
	\item $\mgFOETr{\qu x. \varphi}= \forall p.\exists x. (\lnot p(x) \land \mgFOETr{\varphi})$,
	\item $\mgFOETr{\mu p. \varphi(p,x)}= \exists p ( p(x) \land \forall y ( p(y) \to \mgFOETr{\varphi(p,y) }))$.
\end{itemize}
%
The following theorem 
is then an immediate corollary of Proposition~\ref{prop:cor_constructivity}.

\begin{theorem}\label{thm:guard_wmso}
For every alternation-free formula $\phi$ in the $\clque$, every LTS $\model$, and assignment $\ass$, we have $\model, \ass \models \varphi$ iff $\model, \ass \models \mgFOETr{\varphi}$.
%
%
\end{theorem}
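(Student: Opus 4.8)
The plan is to prove the statement by induction on the structure of $\varphi$, establishing simultaneously for every LTS $\model$ and assignment $\ass$ that $\model,\ass \models \varphi$ iff $\model,\ass \models \mgFOETr{\varphi}$. The atomic cases $q(x)$, $R(x,y)$ and $x \foeq y$ are immediate, since $\mgFOETr{-}$ acts as the identity on them; the Boolean cases $\lnot\varphi$, $\varphi\land\psi$ and the first-order existential case $\exists x.\varphi$ follow at once from the induction hypothesis, because the translation commutes with these connectives. These steps are entirely routine and I would dispatch them quickly.

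The first genuinely new case is $\qu x.\varphi$, whose translation is $\forall p.\exists x.(\lnot p(x) \land \mgFOETr{\varphi})$. Here I would invoke the elementary fact that a set $X \subseteq T$ is infinite exactly when every finite $F \subseteq T$ satisfies $X \setminus F \neq \emptyset$. Since in $\wmso$ the second-order quantifier $\forall p$ ranges only over \emph{finite} subsets of $T$ (Definition~\ref{def:wmso}), the formula $\forall p.\exists x.(\lnot p(x)\land \mgFOETr{\varphi})$ asserts precisely that for every finite $P$ there is an element outside $P$ satisfying $\mgFOETr{\varphi}$, which by the induction hypothesis is equivalent to $\varphi$. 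By the characterization above this holds iff $\{d \mid \model,\ass[x\mapsto d]\models\varphi\}$ is infinite, i.e. iff $\model,\ass\models \qu x.\varphi$.

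The main case, and the principal obstacle, is the fixpoint case $\mu p.\varphi'(p,x)$, translated as $\exists p.\big(p(x)\land \forall y.(p(y)\to \mgFOETr{\varphi'(p,y)})\big)$. Because $\varphi$ lies in the alternation-free fragment of $\clque$, the body $\varphi'$ belongs to $\cont{\mlque}{\{p\}}(\prop)$ and hence is continuous in $p$ by Lemma~\ref{lem:colqueiscont_mu}; this is exactly the hypothesis required to apply Proposition~\ref{prop:cor_constructivity}. That proposition yields $\model,\ass \models \mu p.\varphi'(p,x)$ (equivalently, $\ass(x)$ lies in the least fixpoint of $F_{\varphi'}$) iff there is a finite $P \subseteq_\omega T$ with $\ass(x) \in P$ and $\model[p\mapsto P] \models \varphi'(p,t)$ for every $t \in P$. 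Applying the induction hypothesis to the subformula $\varphi'$, evaluated over the $p$-variant $\model[p\mapsto P]$, replaces each $\varphi'(p,t)$ by $\mgFOETr{\varphi'(p,t)}$, and reading the resulting condition through the finite-set semantics of the $\wmso$ quantifier $\exists p$ gives exactly the satisfaction of $\exists p.(p(x)\land\forall y.(p(y)\to \mgFOETr{\varphi'(p,y)}))$. Chaining these equivalences closes the case.

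The delicate point throughout is that the soundness of translating a least fixpoint into a finite existential second-order quantifier rests entirely on continuity of the fixpoint body, via Proposition~\ref{prop:cor_constructivity}: the finite-set quantification of $\wmso$ can only capture a fixpoint reached in finitely many unfoldings. The restriction to the alternation-free fragment of $\clque$ is what guarantees that every fixpoint subformula met in the induction is a continuous least fixpoint, so that Proposition~\ref{prop:cor_constructivity} may be applied uniformly at each fixpoint step and no genuine greatest-fixpoint behaviour, which $\wmso$ could not express, ever arises. With this observation in place the induction goes through and the theorem follows.
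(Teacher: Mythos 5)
Your proof is correct and follows essentially the same route as the paper's (very terse) proof: induction on the structure of $\varphi$, with the fixpoint case discharged by Proposition~\ref{prop:cor_constructivity} together with the induction hypothesis, where continuity of the body comes from Lemma~\ref{lem:colqueiscont_mu}; your explicit treatment of the $\qu$ case via the finite-set semantics of the \wmso quantifiers merely spells out what the paper leaves implicit. One cosmetic remark: continuity of the body $\varphi'$ in $p$ is guaranteed directly by the grammar of $\clque$, which forces $\varphi' \in \cont{\mlque}{\{p\}}(\prop)$, rather than by alternation-freeness as your last paragraph suggests.
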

\begin{proof}
The proof goes by induction on the complexity of $\varphi$, the only critical step being the least fixpoint operator one. But this follows by applying Proposition \ref{prop:cor_constructivity} and the induction hypothesis.
\end{proof}


\subsection{Translating automata into formulas}
We are now ready to prove the second main result of the paper.

\begin{theorem}\label{thm:wmso_autofor}
There is an effective procedure that given an automaton in $\yvcwAut({\olque})$, returns an equivalent WMSO-formula.
\end{theorem}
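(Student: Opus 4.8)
The plan is to factor the construction through the fixpoint logic $\clque$: first turn the automaton $\aut \in \yvcwAut(\olque)$ into an equivalent \emph{alternation-free} formula of $\clque$, and then push that formula into $\wmso$ using the translation $\mgFOETr{-}$ and Theorem~\ref{thm:guard_wmso}. This mirrors the route used for weak $\mso$-automata in~\cite{DBLP:conf/lics/FacchiniVZ13,Zanasi:Thesis:2012}, the one-step language now being $\olque$ rather than $\ofoe$.

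For the first step I would associate with each state $a \in A$ a monadic variable $X_a$ and, for each colour $c$, relativize the one-step sentence $\Delta(a,c) \in \olque^+(A)$ to the successors of the current node. Concretely, one sends a one-step formula over $A$ to an $\mlque$-formula with one free variable $x$ by guarding every quantifier with $R(x,\cdot)$ (so $\exists y$ becomes $\exists y.(R(x,y) \land \cdots)$, $\qu y$ becomes $\qu y.(R(x,y) \land \cdots)$, and dually $\forall y$ becomes $\forall y.(R(x,y) \lthen \cdots)$ and $\dqu y$ becomes $\dqu y.(R(x,y) \lthen \cdots)$) and by replacing each atom $b(y)$ with $X_b(y)$. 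Writing $\widehat{\Delta}(a,c)(x)$ for this relativization, the automaton is captured by the system $X_a(x) \equiv \bigvee_{c \in C}\big(\mathsf{col}_c(x) \land \widehat{\Delta}(a,c)(x)\big)$, where $\mathsf{col}_c(x)$ states that the colour of $x$ is $c$. One checks, by the standard unfolding argument relating the acceptance game to the evaluation game of the fixpoint formula, that the simultaneous fixpoint of this system at $X_{a_I}$ is equivalent to $\aut$. Eliminating the fixpoint variables one by one, ordering the states along the reachability preorder $\ord$ and assigning a least fixpoint to each odd $\ord$-SCC and a greatest fixpoint to each even one, yields a single $\mlque$-formula $\xi_\aut$.

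The heart of the argument is to verify that $\xi_\aut$ lies in the alternation-free fragment of $\clque$. Alternation-freeness is a direct consequence of \textbf{(weakness)}: since $\ord$-SCCs carry a constant priority and $\ord$ is acyclic across distinct SCCs, no $\mu$-variable can occur free under a $\nu$ on which it mutually depends, nor conversely. Membership in the continuous fragment is where the \textbf{(continuity)} condition and the one-step analysis of Section~\ref{sec:onestep} do the work: when $\pmap(a)$ is odd, continuity guarantees $\Delta(a,c) \in \cont{{\olque}^+}{b}(A)$ for every $b$ in the SCC of $a$, so by Theorem~\ref{thm:olquecont} and Corollary~\ref{cor:olquecontinuousnf} the SCC-atoms $b(x)$ occur in $\Delta(a,c)$ only through $\exists$, $\land$, $\lor$ and the guarded quantifier $\wqu x.(\varphi,\psi)$; relativizing preserves exactly this shape, so the body of the corresponding $\mu$-fixpoint is continuous in its fixpoint variables and hence lies in $\cont{\mlque}{\mathsf{Q}}$. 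For even SCCs the co-continuity condition gives, after dualization, a least-fixpoint body in the continuous fragment, so the greatest fixpoint $\nu p.\psi = \lnot\mu p.\lnot\psi[\lnot p/p]$ is again a $\clque$-formula.

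Finally, with $\xi_\aut$ an alternation-free $\clque$-formula equivalent to $\aut$, Theorem~\ref{thm:guard_wmso} yields $\xi_\aut \equiv \mgFOETr{\xi_\aut}$, and $\mgFOETr{\xi_\aut} \in \wmso(\prop)$ by definition of the translation; composing the two equivalences gives a $\wmso$-formula equivalent to $\aut$, and every step is effective. The main obstacle I anticipate is the bookkeeping of the first step: proving that the simultaneous-fixpoint formula genuinely captures the acceptance game (the standard but delicate correspondence between automaton runs and fixpoint approximants), and checking that relativization commutes with the syntactic continuity characterization so that \emph{simultaneous} continuity in all states of an SCC---not merely in one at a time---is retained, which is what forces the body into $\cont{\mlque}{\mathsf{Q}}$ for the whole SCC $\mathsf{Q}$ at once.
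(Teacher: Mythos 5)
Your proposal is correct and takes essentially the same route as the paper's own (sketched) proof: read the automaton as a system of equations in $\lque$-formulas, solve it by the standard SCC-ordered elimination so that the \textbf{(weakness)} and \textbf{(continuity)} conditions force the solution $\xi_\aut$ into the alternation-free fragment of $\clque$, and then obtain a $\wmso$-formula via the translation $\mgFOETr{-}$ and Theorem~\ref{thm:guard_wmso}. The points you flag as delicate (the acceptance-game/fixpoint correspondence, and that relativization and Beki\'c-style elimination preserve the syntactic continuous shape of Theorem~\ref{thm:olquecont} simultaneously in all states of an SCC) are precisely the details the paper's proof sketch leaves implicit.
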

\begin{proofsketch}
The argument   is
 essentially a refinement of the standard proof showing that any automaton in $\yvAut(\ofo)$ can be translated into an equivalent $\mu$-formula
$\xi_\aut$ (cf. e.g. \cite{Ven08}).
The idea is the following. We see a $\yvWMSO$-automaton as a system of equations expressed in terms of $\lque$-formulas: each state corresponds to a monadic predicate variable and the parity of a state corresponds to the least and greatest fixpoint that we seek for the associated variable, etc. One then solves this system of equations via the same inductive procedure used to obtain the formula of the modal $\mu$-calculus from the system associated with a  $\yvAut(\ofo)$-automaton (see e.g. \cite{ArnoldN01} for a description of the solution procedure). Because of the (weakness) and (continuity) conditions on the starting $\wmso$-automaton $\aut$, it is thence possible to verify that the resulting fixpoint formula $\xi_\aut$ belongs to $\clque$.
\end{proofsketch}

\begin{remark}
As a corollary of the automata characterization on trees of \wmso, we obtain the equivalence on this class of structures between \wmso and $\clque$. This consequence should be compared to the analogous result obtained by Walukiewicz in~\cite{Walukiewicz96} for FPL (fixpoint extension of $\foe$) and MSO on trees.
\end{remark}

\section{Modal Characterization of WMSO}\label{sec:char}

In this section we are going to prove Theorem \ref{t:m1}.
Our proof of the first item of the theorem crucially involves automata.
In the previous section we saw that on trees, $\yvWMSO$ effectively corresponds
to the automata class $\yvcwAut(\olque)$.
 We will now relate this class to the one of
parity automata based on $\ofo$ and satisfying similar weakness and continuity conditions.

\begin{definition}
A \emph{$\contAFMC$-automaton} $\aut = \tup{A,\Delta,\Omega,a_I}$ is an automaton $\aut \in \yvAut(\ofo)$ such that for all states $a,b \in A$ with $a \ord b$ and $b\ord a$ the following conditions hold:
\begin{description}
	\itemsep 0 pt
	\item[(weakness)] $\pmap(a)=\pmap(b)$,
	\item[(continuity)] if $\pmap(a)$ is odd (resp. even) then, for each $c\in C$ we have
	   $\tmap(a,c) \in \cont{\ofo^+}{b}(A)$ (resp. $\tmap(a,c) \in \cocont{\ofo^+}{b}(A)$).
\end{description}
As the class of such automata coincides with $\yvcwAut(\ofo)$ we use the same name to denote it.
\end{definition}

As the key technical result of our paper, in subsection~\ref{pinvariant-fragment}
we will provide a construction $(-)^{\bullet}: \yvcwAut(\olque) \to 
\yvcwAut(\ofo)$, such that for all $\bbA$ and $\bbT$ we have
\begin{equation}
\label{eq:crux}
\bbA^{\bullet} \text{ accepts } \bbT \text{ iff } \bbA \text{ accepts 
} \bbT^{\om},
\end{equation}
where $\bbT^{\om}$ is the $\om$-unravelling of $\bbT$.
As we shall see, the map $(-)^{\bullet}$ is completely determined at the 
one-step level, that is, by some model-theoretic connection between 
$\olque$ and $\ofo$.

The second fact, to be discussed in
subsection \ref{aut-to-formula}, is that for each $\contAFMC$-automaton $\bbA$  we can effectively construct an equivalent $\yvF$-formula $\xi_{\bbA}$.

On the basis of the above observations we show that those results are enough to prove Theorem~\ref{t:m1}(i) as follows:

\begin{proofof}{Theorem~\ref{t:m1}}
\textbf{(1)} Given a \wmso-formula $\phi$, let $\phi^{\bullet} \isdef
\xi_{\aut_{\phi}^{\bullet}}$.
We verify that $\phi$ is bisimulation invariant iff $\phi$ and $\phi^{\bullet}$
are equivalent.
The direction from right to left is immediate by the observation that
$\phi^{\bullet}$ is a formula in $\MC$.
The opposite direction follows from the following chain of equivalences:
\begin{align*}
\model \models \phi
  & \text{ iff } \bbT^{\om} \models \phi
  & \tag{$\phi$ bisimulation invariant}
\\ & \text{ iff }  \bbA_{\phi} \text{ accepts } \bbT^{\om}
  & \tag{$ \phi \equiv \aut_{\phi}$ on trees}
\\ & \text{ iff } \bbA_{\phi}^{\bullet} \text{ accepts } \bbT
& \tag{\ref{eq:crux}}
\\ & \text{ iff }  \bbT \models \xi_{ \aut_{\phi}^{\bullet}}
& \tag{$\aut_{\phi}^{\bullet}\equiv \xi_{ \aut_{\phi}^{\bullet}}$}
\end{align*}
\textbf{(2)} For the second part of Theorem \ref{t:m1},
We 
first define, for every first-order variable $x$, a translation $ST_x$ from
the $\mu$-calculus into the set of $\mlque$-formulas with only $x$ free:

\begin{itemize}
\itemsep 0 pt
\item $ST_x(p)=p(x)$
\item $ST_x(\varphi \land \psi)=ST_x(\varphi) \land ST_x(\psi)$,
\item $ST_x(\varphi \lor \psi)=ST_x(\varphi) \lor ST_x(\psi)$,
\item $ST_x(\lnot \varphi)= \lnot ST_x(\varphi)$,
\item $ST_x(\Diamond \varphi)=\exists y (R(x,y) \land ST_y(\varphi)$),
\item $ST_x(\mu p. \varphi)= \mu p. ST_x(\varphi)$,
\end{itemize}
Clearly, every formula of the $\yvF$-fragment of the $\mu$-calclus is mapped to a logically equivalent formula of the $\yvF$-fragment of $\mlque$. Let
 $(-)_{\bullet}:\yvF\to\wmso$ defined as the composite $\mgFOETr{-} \circ ST_x$. By Theorem \ref{thm:guard_wmso}
we obtain that $\psi \equiv \psi_{\bullet}$, for all $\psi \in
\yvF$.
\end{proofof}

\subsection{One-step translations}\label{pinvariant-fragment}

In this subsection we will define a construction that transforms an arbitrary automaton
$\bbA$ in $\yvcwAut(\olque)$ into an automaton $\bbA^{\bullet}$ in 
$\yvcwAut(\ofo)$, such that $\bbA$ and $\bbA^{\bullet}$ are related as 
in~\eqref{eq:crux}.
This construction is completely determined by the following translation at the
one-step level.

\begin{definition}
Using the fact
that by Corollary~\ref{cor:olquepositivenf}, any formula in ${\olque}^+(A)$ is 
equivalent to a disjunction of formulas of the form 
$\posdbnfolque{\vlist{T}}{\Pi}{\Sigma}$, we define the translation 
$(-)^{\bullet} : {\olque}^+(A) \to \ofo^+(A)$ as follows.
We set
\[
\Big( \posdbnfolque{\vlist{T}}{\Pi}{\Sigma} \Big)^{\bullet} \isdef
\bigwedge_{i} \exists x_i. \tau^+_{T_i}(x_i) \land \forall x. \bigvee_{S\in\Sigma} \tau^+_S(x)
\]
and for $\al = \bigvee_{i} \al_{i}$ we define $\al^{\bullet} \isdef \bigvee 
\al_{i}^{\bullet}$.
\end{definition}

The key property of this translation is the following.

\begin{proposition}
\label{p-1P}
For every one-step model $(D,V)$ and every $\al \in {\olque}^+(A)$ we have
\begin{equation}
\label{eq-1cr}
(D,V) \models \alpha^{\bullet} \text{ iff } (D\times \om,V_\pi) \models \alpha,
\end{equation}
where $V_{\pi}$ 
 is the induced valuation given by 
$V_{\pi}(a) \isdef \{ (d,k) \mid d \in V(a), k\in\omega\}$.
\end{proposition}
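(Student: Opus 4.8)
The plan is to reduce \eqref{eq-1cr} to a single basic-form disjunct and then to prove that $\al$ and its translation $\al^{\bullet}$ already coincide on the $\om$-saturated model $(D\times\om,V_\pi)$. First I would invoke Corollary~\ref{cor:olquepositivenf} to write $\al$ as a disjunction $\bigvee_i \al_i$ of disjuncts $\al_i = \posdbnfolque{\vlist{T}}{\Pi}{\Sigma}$; since by definition $\al^{\bullet}=\bigvee_i\al_i^{\bullet}$ and satisfaction on both sides of \eqref{eq-1cr} distributes over disjunction, it is enough to treat one disjunct $\al=\posdbnfolque{\vlist{T}}{\Pi}{\Sigma}$. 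Here I would retain one feature of the normal form built in Theorem~\ref{thm:bfolque}: each infinite type $S\in\Sigma$ already occurs among the witness types $\vlist{T}$ (the construction lists several copies of every type realised at least $k$ times, and $\Sigma$-types are realised infinitely often). This coherence $\Sigma\sse\{\,T_i\,\}$ is exactly what the forward direction will need.

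Next I would transport the left-hand side of \eqref{eq-1cr} onto the $\om$-copied model. Since $\al^{\bullet}\in\ofo^+(A)\sse\ofo(A)$ and, as already recorded in Section~\ref{sec:onestep}, $(D,V)\equiv_{\fo}(D\times\om,V_\pi)$, we get $(D,V)\models\al^{\bullet}$ iff $(D\times\om,V_\pi)\models\al^{\bullet}$. It therefore suffices to show that $\al$ and $\al^{\bullet}$ are equivalent on the single model $M:=(D\times\om,V_\pi)$.

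The engine of this equivalence is a type census for $M$: because $V_\pi$ merely copies $V$, an element $(d,k)$ realises $\tau^+_S$ in $M$ exactly when $d$ realises it in $(D,V)$, so every positive type realised in $M$ is realised by infinitely many elements and every other type by none. Three consequences make the quantifiers of the basic form degenerate on $M$: (i) $\qu y.\tau^+_S(y)$ holds iff $\exists y.\tau^+_S(y)$ holds, i.e.\ iff $S$ is realised; (ii) for a disjunction $\psi$ of positive types, $\dqu y.\psi(y)$ holds iff $\forall y.\psi(y)$ holds, since any witness of $\neg\psi$ would recur infinitely often; and (iii) any prescribed finite block of pairwise distinct witnesses can be found, so the $\arediff{\cdot}$-constraint inside $\posdbnfofoe{\vlist{T}}{\Pi\cup\Sigma}$ is free and its universal clause $\forall z.(\arediff{\vlist{x},z}\lthen\cdots)$ is subsumed once every element already realises a type in $\Sigma$. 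Running $M\models\al$ through (i)--(iii) reduces it to the conjunction of: each $T_i$ is realised, each $S\in\Sigma$ is realised, and every element realises some type in $\Sigma$. The coherence $\Sigma\sse\{\,T_i\,\}$ makes the middle clause redundant, so this is equivalent to ``each $T_i$ realised and every element realises some $S\in\Sigma$'', which is precisely $M\models\al^{\bullet}$; combining with the $\equiv_{\fo}$ transfer gives \eqref{eq-1cr}.

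The step I expect to be most delicate is the simultaneous collapse, in (ii)--(iii), of the three distinct finiteness mechanisms of the basic form — the finite witness block isolated by $\arediff{\cdot}$, the cofinite $\dqu$-clause, and the finitely-occurring types in $\Pi$ — into the single unqualified $\forall x$ of $\al^{\bullet}$; the census lemma is what equates ``finitely many exceptions'' with ``no exceptions'' on $M$ and lets the $\Pi$-component vanish. The second point needing care is the forward direction's demand that every $S\in\Sigma$ be realised, which is the reason the normal-form coherence $\Sigma\sse\{\,T_i\,\}$ must be carried through the reduction rather than discarded.
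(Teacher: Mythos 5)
Your proof is correct, and it takes a genuinely different (and in one respect more careful) route than the paper's. The paper argues directly on a single disjunct $\al=\posdbnfolque{\vlist{T}}{\Pi}{\Sigma}$, verifying both implications by hand between $(D,V)$ and $(D\times\om,V_\pi)$: the copies $(d_i,i)$ of the $T_i$-witnesses settle the existential clauses, and the converse direction refutes a $d$ failing every $\Sigma$-type by observing that all its $\om$-many copies would then violate the $\dqu$-clause --- essentially your census argument run inline, once, in each direction. You instead transfer $\al^{\bullet}\in\ofo^+(A)$ across the elementary equivalence $(D,V)\equiv_{\fo}(D\times\om,V_\pi)$ (a fact the paper itself records and uses in Subsection~\ref{subsec:one-stepcont} for the $\ofo$ continuity translation) and then prove $\al\equiv\al^{\bullet}$ on the single saturated model, where the census collapses $\qu$ to $\exists$, $\dqu$ to $\forall$, and the $\arediff{\cdot}$-machinery and $\Pi$ to nothing; this localizes all finiteness phenomena in one lemma and makes the role of $\om$-saturation transparent. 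What your route additionally buys is precision exactly where the paper is terse: its forward direction dismisses the conjunct $\bigwedge_{S\in\Sigma}\qu y.\tau^+_S(y)$ as ``similar but easier'', yet satisfying it requires each $S\in\Sigma$ to be realized in $(D,V)$, which does \emph{not} follow from $\al^{\bullet}$ for an arbitrary triple $(\vlist{T},\Pi,\Sigma)$ --- take $\vlist{T}=\Pi=\emptyset$ and $\Sigma=\{\emptyset,\{a\}\}$, so that $\al^{\bullet}\equiv\top$ while $\al$ demands infinitely many $a$'s, and the equivalence fails on any $(D,V)$ with $V(a)=\emptyset$. The paper is implicitly relying on the canonical normal form of Theorem~\ref{thm:bfolque} (preserved by Corollary~\ref{cor:olquepositivenf}), in which every $\Sigma$-type occurs among the listed witness types; your explicit coherence condition $\Sigma\sse\{T_i\}$ is precisely that missing hypothesis, and you are right both that it must be carried through the reduction and that it is available for every disjunct to which the translation $(-)^{\bullet}$ is actually applied.
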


\begin{proof}
Clearly it suffices to prove \eqref{eq-1cr} for formulas of the form
$\al = \posdbnfolque{\vlist{T}}{\Pi}{\Sigma}$.
\smallskip

\noindent\fbox{$\Rightarrow$} 
Assume $(D,\val) \models \alpha^{\bullet}$, we will show that 
$(D\times \omega,\val_\pi) \models \posdbnfolque{\vlist{T}}{\Pi}{\Sigma}$.
Let $d_i$ be such that $\tau_{T_i}^+(d_i)$ in $(D,\val)$. 
It is clear that the $(d_i,i)$ provide \emph{distinct} elements satisfying 
$\tau_{T_i}^+((d_i,i))$ in $(D\times\omega,\val_{\pi})$ and therefore the 
first-order existential part of $\alpha$ is satisfied. 
With a similar but easier argument it is straightforward that the existential 
generalized quantifier part of $\alpha$ is also satisfied.
%
For the universal parts of $\posdbnfolque{\vlist{T}}{\Pi}{\Sigma}$ it is enough to observe that, because of the universal part of $\alpha^\bullet$, \emph{every} $d\in D$ realizes a positive type in $\Sigma$. By construction, the same applies to $(D\times\omega,\val_{\pi})$, 
therefore this takes care of both universal quantifiers.
\medskip
		
\noindent\fbox{$\Leftarrow$} 
Assuming that $(D\times \omega,\val_\pi) \models 
\posdbnfolque{\vlist{T}}{\Pi}{\Sigma}$,
we will show that $(D,\val) \models \alpha^\bullet$. 
The existential part of $\alpha^{\bullet}$ is trivial. 
For the universal part we have to show that every element of $D$ realizes the 
positive part of a type in $\Sigma$. 
Suppose not, and let $d\in D$ be such that $\lnot\tau_S^+(d)$ for all $S\in 
\Sigma$. 
Then we have $(D\times\omega,\val_\pi) \not\models \tau_S^+((d,k))$ for all $k$.
That is, there are infinitely many elements not realizing the positive part of 
any type in $\Sigma$. 
Hence we have $(D\times\omega,\val_\pi) \not\models \dqu y.\bigvee_{S\in\Sigma} 
\tau_S^+(y)$. 
Absurd, because that is part of $\posdbnfolque{\vlist{T}}{\Pi}{\Sigma}$.
\end{proof}

As a consequence of Proposition~\ref{p-1P} we obtain the following.

\begin{definition}
Given an automaton $\bbA = \tup{A,\De,\Om,a_{I}}$ in $\yvAut(\olque)$, define 
the automaton $\bbA^{\bullet} \isdef \tup{A,\De^{\bullet},\Om,a_{I}}$ in 
$\yvAut(\ofo)$ by putting, for each $(a,c) \in A \times C$:
\[
\De^{\bullet}(a,c) \isdef (\De(a,c))^{\bullet}.
\]
\end{definition}

\begin{proposition}
For any automaton $\bbA = \tup{A,\De,\Om,a_{I}}$ in $\yvAut(\olque)$, and any
model $\bbT$, $\bbA$ and $\bbT$ satisfy \eqref{eq:crux}.
\end{proposition}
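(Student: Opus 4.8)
The plan is to compare the two acceptance games $\mathcal{A}(\bbA^{\bullet},\bbT)$ and $\mathcal{A}(\bbA,\bbT^{\om})$ by means of the projection $\pi\colon|\bbT^{\om}|\to|\bbT|$ that sends each node of the $\om$-unravelling to the last node of the path it represents. The only facts about $\bbT^{\om}$ that I would use are that $\pi$ preserves colours, i.e. $\tscolors(t)=\tscolors(\pi(t))$, and that for every node $t$ its successor set $R_{\om}[t]$ in $\bbT^{\om}$ is in colour-preserving bijection with $R[\pi(t)]\times\om$. Reading a valuation on $R_{\om}[t]$ as a valuation on $R[\pi(t)]\times\om$, this is exactly the setting of Proposition~\ref{p-1P}. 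Note also that a basic position $(a,t)$ of the unravelled game carries the same state $a$, hence the same priority $\Om(a)$, as its projection $(a,\pi(t))$; so any two matches of the two games that pass through the same sequence of states are won by the same player, and it suffices to keep the two matches in step on states (finite matches cause no extra trouble, since legal moves will correspond on both sides).

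First I would do the direction from left to right. Assuming a winning strategy $f$ for $\exists$ in $\mathcal{A}(\bbA^{\bullet},\bbT)$, I play a match of $\mathcal{A}(\bbA,\bbT^{\om})$ while maintaining an $f$-guided shadow match of $\mathcal{A}(\bbA^{\bullet},\bbT)$, under the invariant that whenever the real match sits at $(a,t)$ the shadow sits at $(a,\pi(t))$. There $f$ proposes a valuation $V$ on $R[\pi(t)]$ with $(R[\pi(t)],V)\models(\De(a,\tscolors(t)))^{\bullet}$, and by Proposition~\ref{p-1P} the induced valuation $V_\pi$ satisfies $\De(a,\tscolors(t))$ on $R[\pi(t)]\times\om\cong R_{\om}[t]$, so $V_\pi$ is a legal move for $\exists$ in the real game. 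Any reply $(b,u)$ of $\forall$ then has $u\in V_\pi(b)$, i.e. $\pi(u)\in V(b)$, so $\forall$ can answer $(b,\pi(u))$ in the shadow and the invariant is restored. The real match visits the same states as an $f$-guided match and is therefore won by $\exists$.

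The hard part will be the converse, and it is here that monotonicity of one-step formulas is decisive. Starting from a winning strategy $g$ for $\exists$ in $\mathcal{A}(\bbA,\bbT^{\om})$, I would play $\mathcal{A}(\bbA^{\bullet},\bbT)$ with a $g$-guided shadow in the unravelled game, under the invariant that a real position $(a,s)$ is shadowed by some $(a,t)$ with $\pi(t)=s$. The obstacle is that the valuation $W$ proposed by $g$ on $R_{\om}[t]\cong R[\pi(t)]\times\om$ need not be of the uniform shape $V_\pi$, and so cannot simply be pushed down to $\bbT$. The remedy is to play instead the projected valuation $V\isdef\pi_{*}W$, defined by $s'\in V(b)$ iff $(s',k)\in W(b)$ for some $k$. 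Then $W(b)\subseteq V_\pi(b)$ for every $b$, and since $\De(a,\tscolors(t))\in{\olque}^{+}(A)$ is monotone, $(R[\pi(t)]\times\om,W)\models\De(a,\tscolors(t))$ yields $(R[\pi(t)]\times\om,V_\pi)\models\De(a,\tscolors(t))$; Proposition~\ref{p-1P} then gives $(R[\pi(t)],V)\models(\De(a,\tscolors(t)))^{\bullet}$, so $V$ is a legal move in the real game. The projection is tailored so that $\forall$'s replies lift: if $\forall$ plays $(b,s')$ with $s'\in V(b)$, then by construction some copy $(s',k)\in W(b)$, and $\forall$ may answer $(b,(s',k))$ in the shadow, preserving the invariant.

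Putting the two directions together establishes \eqref{eq:crux}, which is the assertion of the proposition. I expect the genuine difficulty to lie entirely in the asymmetry just described: lifting a strategy from $\bbT$ to $\bbT^{\om}$ is immediate through $V\mapsto V_\pi$, whereas pushing one down forces us to project $\forall$'s choices and to repair the resulting loss of uniformity by appealing to monotonicity of $\olque^{+}$-formulas. No use of the weakness or continuity conditions on $\bbA$ is required, in accordance with the statement being made for all of $\yvAut(\olque)$.
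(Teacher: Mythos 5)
Your proof is correct and takes essentially the same approach as the paper: the paper's own proof is just the remark that the proposition follows from a ``fairly routine comparison'' of the acceptance games $\mathcal{A}(\bbA^{\bullet},\bbT)$ and $\mathcal{A}(\bbA,\bbT^{\om})$ (with details deferred to~\cite{Venxx}), and your shadow-match argument---driven at the one-step level by Proposition~\ref{p-1P}, with the projected valuation $\pi_{*}W$ and monotonicity of ${\olque}^{+}(A)$-formulas handling the push-down direction---is exactly that comparison carried out in full. You are also right that the weakness and continuity conditions play no role, since the statement concerns all of $\yvAut(\olque)$.
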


\begin{proof}
The proof of this proposition is based on a fairly routine comparison of the 
acceptance games $\mathcal{A}(\bbA^{\bullet},\bbT)$ and 
$\mathcal{A}(\bbA,\bbT^{\om})$.
In a slightly more general setting, the details of this proof can be found 
in~\cite{Venxx}.
\end{proof}
\medskip

It remains to be checked that the construction $(-)^{\bullet}$, which has
been defined for arbitrary automata in $\yvAut(\olque)$, transforms 
$\yvWMSO$-automata into automata in the right class, viz., $\yvcwAut(\ofo)$.

\begin{proposition}
Let $\bbA \in \yvAut(\olque)$.
If $\bbA \in \yvcwAut(\olque)$, then $\bbA^{\bullet} \in \yvcwAut(\ofo)$.
\end{proposition}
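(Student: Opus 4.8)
The plan is to verify, for $\bbA^{\bullet} = \tup{A,\De^{\bullet},\Om,a_I}$, the three defining requirements of a $\contAFMC$-automaton: that it is a well-formed $\ofo$-automaton, that it is weak, and that it meets the continuity condition. The first is immediate, since by construction $\De^{\bullet}(a,c) = (\De(a,c))^{\bullet} \in \ofo^+(A)$ while the state set, parity map and initial state are inherited unchanged from $\bbA$.

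The key preliminary observation is that the translation can only shrink the set of occurring predicates. Indeed, on a disjunct $\posdbnfolque{\vlist{T}}{\Pi}{\Sigma}$ the image $\bigwedge_i\exists x_i.\tau^+_{T_i}(x_i) \land \forall x.\bigvee_{S\in\Sigma}\tau^+_S(x)$ mentions only predicates from $\bigcup_i T_i \cup \bigcup\Sigma$, whereas the source formula additionally mentions the predicates in $\bigcup\Pi$. Hence $A_{\alpha^{\bullet}} \sse A_{\alpha}$ for every $\alpha \in \olque^+(A)$, so the transition relation $\leadsto$ of $\bbA^{\bullet}$ is contained in that of $\bbA$, and the same containment lifts to the reachability preorders, $\ord^{\bullet} \sse \ord$. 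Since $\Om$ is untouched, weakness of $\bbA^{\bullet}$ is then immediate: if $a \ord^{\bullet} b$ and $b \ord^{\bullet} a$ then also $a \ord b$ and $b \ord a$, whence $\Om(a)=\Om(b)$ by weakness of $\bbA$.

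For the continuity condition, fix states $a,b$ in a common $\ord^{\bullet}$-cluster; by the containment just noted they lie in a common $\ord$-cluster of $\bbA$, so the continuity condition of $\bbA$ applies. Consider first $\Om(a)$ odd, where $\De(a,c)$ is continuous in $b$. By Corollary~\ref{cor:olquecontinuousnf}(ii) we may present $\De(a,c)$ as a disjunction of formulas $\posdbnfolque{\vlist{T}}{\Pi}{\Sigma}$ with $b \notin \bigcup\Sigma$, and computing $(-)^{\bullet}$ on this presentation yields a disjunction of formulas $\bigwedge_i\exists x_i.\tau^+_{T_i}(x_i) \land \forall x.\bigvee_{S\in\Sigma}\tau^+_S(x)$. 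Because $b \notin \bigcup\Sigma$, the universal conjunct lies in $\ofo^+(A\setminus\{b\})$ and $b$ occurs only inside the existential conjuncts, which are built from atoms by conjunction and existential quantification; each disjunct thus lies in the grammar of $\cont{\ofo^+}{b}(A)$, and so does their disjunction. Hence $\De^{\bullet}(a,c)$ is continuous in $b$, as required.

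For $\Om(a)$ even, where $\De(a,c)$ is co-continuous in $b$, I would reduce to the odd case via the self-dual nature of the property already used in the proof of Theorem~\ref{t:cl-cmp}. The enabling lemma is that $(-)^{\bullet}$ commutes with the Boolean dual, i.e. $(\alpha^{\delta})^{\bullet} \equiv (\alpha^{\bullet})^{\delta}$ for every $\alpha \in \olque^+(A)$. This I would derive from Proposition~\ref{p-1P} together with the identity $(V_\pi)^{c} = (V^{c})_\pi$: one checks that $(D,V)\models(\alpha^\delta)^{\bullet}$ holds iff $(D\times\om, V_\pi)\models\alpha^\delta$, iff $(D\times\om,(V^c)_\pi)\not\models\alpha$, iff $(D,V^c)\not\models\alpha^{\bullet}$, iff $(D,V)\models(\alpha^{\bullet})^{\delta}$. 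Granting this, co-continuity of $\De(a,c)$ in $b$ means (Proposition~\ref{prop:contdualcocont}) that $(\De(a,c))^\delta \in \olque^+(A)$ is continuous in $b$, so the odd-case argument applied to $(\De(a,c))^\delta$ gives that $((\De(a,c))^\delta)^{\bullet}$ is continuous in $b$; the commutation lemma rewrites this as $(\De^{\bullet}(a,c))^{\delta}$, whence $\De^{\bullet}(a,c)$ is co-continuous in $b$ by Proposition~\ref{prop:contdualcocont} once more, i.e. $\De^{\bullet}(a,c) \in \cocont{\ofo^+}{b}(A)$. The main obstacle is precisely establishing this commutation lemma cleanly; once it is in place the even case is a mechanical dualization of the odd one and every remaining step is routine.
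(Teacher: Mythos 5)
Your proof is correct and follows the same route as the paper, whose entire argument is a one-sentence appeal to the one-step-level observation that $(-)^{\bullet}$ maps $\cont{{\olque}^+}{b}(A)$ into $\cont{\ofo^+}{b}(A)$ --- exactly your odd case, carried out via Corollary~\ref{cor:olquecontinuousnf} and the grammar of Theorem~\ref{thm:ofocont}. Your additional details (the observation that the translation drops the $\Pi$-types, so occurrences shrink and $\ord^{\bullet}\sse\ord$, giving weakness; and the commutation lemma $(\alpha^{\delta})^{\bullet}\equiv(\alpha^{\bullet})^{\delta}$ via Proposition~\ref{p-1P} for the co-continuous case) are sound elaborations of steps the paper leaves implicit, not a different method.
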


\begin{proof}
This proposition can be verified by a straightforward inspection, at the 
one-step level, that if a formula $\al \in {\olque}^+(A)$ belongs to the fragment 
$\cont{{\olque}^+}{a}(A)$, then its translation $\al^{\bullet}$ lands in 
the fragment $\cont{\ofo^+}{a}(A)$.
\end{proof}

\begin{remark}{\rm
As a corollary of the previous two propositions we find that 
\begin{itemize}
	\itemsep 0 pt
	\item $\yvAut(\ofo) \equiv \yvAut(\ofo)/{\bis}$, and
	\item $\yvcwAut(\ofo) \equiv \yvcwAut(\ofo)/{\bis}$.
\end{itemize}
In fact, we are dealing here with an instantiation of a more general phenomenon 
that is essentially coalgebraic in nature.
In~\cite{Venxx} it is proved that if $\yvLo$ and $\yvLo'$ are two one-step
languages that are connected by a translation $(-)^{\bullet}: \yvLo' \to 
\yvLo$ satisfying a condition similar to \eqref{eq-1cr}, then we find that 
$\yvAut(\yvLo)$ corresponds to the bisimulation-invariant fragment of 
$\yvAut(\yvLo')$: $\yvAut(\yvLo) \equiv \yvAut(\yvLo')/{\bis}$.
This subsection can be generalized to prove similar results relating
$\yvwAut(\yvLo)$ to $\yvwAut(\yvLo')$, and $\yvcwAut(\yvLo)$ to 
$\yvcwAut(\yvLo')$.
}\end{remark}

\subsection{From automata in $\yvcwAut(\ofo)$ to formulas in $\yvF$}\label{aut-to-formula}

%
%

In this subsection we focus on the following theorem.

\begin{theorem}\label{t:autofor}
There is an effective procedure that, given an automaton $\bbA$ in
$\yvcwAut(\ofo)$, returns an equivalent formula $\xi_{\bbA}$ of the fragment 
$\contAFMC$ of the modal $\mu$-calculus.
\end{theorem}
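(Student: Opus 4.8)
The plan is to adapt the standard translation of parity automata based on $\ofo$ into the modal $\mu$-calculus (see e.g. \cite{Ven08,ArnoldN01}) and then to check that the \textbf{(weakness)} and \textbf{(continuity)} conditions force the output into $\contAFMC$. First I would turn $\bbA = \tup{A,\Delta,\Omega,a_I}$ into a system of fixpoint equations, one per state, using a propositional variable $x_a$ for each $a \in A$. By Corollary~\ref{cor:ofopositivenf}, every transition formula $\Delta(a,c) \in \ofo^+(A)$ is equivalent to a disjunction of basic forms $\posdbnfofo{\Sigma}$; I translate each such sentence to a modal formula over $\{x_b : b \in A\}$ by setting $(\posdbnfofo{\Sigma})^{m} := \bigwedge_{S\in\Sigma}\dia(\bigwedge_{b\in S}x_b) \wedge \Box\bigvee_{S\in\Sigma}(\bigwedge_{b\in S}x_b)$ and commuting with disjunction. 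The equation for $a$ is then $x_a \approx \bigvee_{c\in C}(\pi_c \wedge (\Delta(a,c))^{m})$, where $\pi_c := \bigwedge_{p\in c}p \wedge \bigwedge_{p\notin c}\lnot p$ fixes the colour of the current node.

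Second, I would solve this system by Gauss elimination (the Bekić principle), eliminating the variables blockwise along the DAG of $\ord$-SCCs in an order consistent with the priority structure, and binding each variable by $\mu$ or $\nu$ according to whether its $\Omega$-value is odd or even; the solution for $x_{a_I}$ is the desired $\xi_\bbA$. That $\xi_\bbA$ is equivalent to $\bbA$ on every LTS is exactly the content of the standard construction, which I would invoke verbatim.

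Third --- the heart of the matter --- I would verify $\xi_\bbA \in \contAFMC$. \emph{Alternation-freeness} follows from \textbf{(weakness)}: all states of a fixed $\ord$-SCC share the same priority (and priorities may be taken in $\{0,1\}$), so each block is eliminated as a pure $\mu$- or pure $\nu$-block, and since distinct SCCs never $\ord$-depend on one another cyclically, substituting one solved block into another never reintroduces a bound variable of the opposite type; hence no genuine nesting of $\mu$ and $\nu$ arises. For the \emph{continuity} requirement I would first note that the one-step translation sends the continuous fragment to the continuous fragment: if $\Delta(a,c)\in\cont{\ofo^+}{b}(A)$ then, by Corollary~\ref{cor:ofocontinuousnf}, its basic form has its universal conjunct ranging only over types omitting $b$, so $x_b$ occurs in $(\Delta(a,c))^{m}$ only under $\dia$, placing it in $\cont{\MC}{x_b}$. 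Iterating the continuity normal form over all states $b$ of an odd SCC $M$ makes the $\Box$-part $M$-free, so every equation body of $M$ lies in $\cont{\MC}{\{x_b : b\in M\}}$ (the $\pi_c$ and the already-solved lower blocks contribute only $M$-free subformulas $\psi$); by the grammar clause $\mu p.\alpha$ with $\alpha\in\cont{\MC}{\qprop\cup\{p\}}$ the whole $\mu$-block stays inside $\cont{\MC}{M}$, which by Proposition~\ref{prop:FVcont} is what membership in $\contAFMC$ demands. Dually, \textbf{(continuity)} makes each even SCC co-continuous in its own variables, so via Boolean duality and Proposition~\ref{prop:contdualcocont} each $\nu$-block is expressible using negation and a continuous least-fixpoint block, which is precisely the form allowed by the definition of $\contAFMC$ (whose only fixpoint primitive is the continuous $\mu$).

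The step I expect to be the main obstacle is the bookkeeping hidden in the third paragraph: showing that $\cont{\MC}{\qprop}$ is closed under the substitutions and fixpoint formations produced by the Bekić decomposition, so that binding some block variables by $\mu$ and substituting the results into the remaining equations of the same block preserves continuity in the variables still to be bound. This amounts to a substitution lemma for the continuous fragment --- together with the observation that continuity in each block variable \emph{separately} yields continuity in the block \emph{jointly} for these positive one-step formulas --- and it is this closure, rather than the (routine) soundness of the automaton-to-formula construction, that carries the real weight of the argument.
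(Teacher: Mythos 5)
Your proposal matches the paper's proof in all essentials: the same one-step translation of the positive basic forms $\posdbnfofo{\Sigma}$ into $\bigwedge_{S\in\Sigma}\Diamond(\bigwedge S)\land\Box\bigvee_{S\in\Sigma}(\bigwedge S)$ prefixed by the colour description, the same blockwise elimination along the DAG of $\ord$-SCCs (the paper phrases it as an induction on the tree of SCCs using $(\prop,X)$-automata and substitution $\xi_M[a_1\mapsto\xi_1,\dots,a_\ell\mapsto\xi_\ell]$, which is your Beki\'c elimination in different clothing), with \textbf{(weakness)} yielding alternation-freeness and \textbf{(continuity)} forcing same-SCC variables under $\Diamond$ only, so each block is solved inside the continuous fragment, the even case being handled dually. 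The substitution-closure bookkeeping you flag as the main obstacle is likewise left at the level of a "key observation" in the paper, so your proposal is correct and essentially the paper's argument.
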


\begin{proof}
The argument  is a refinement of the standard proof showing that any automaton 
$\aut$ from $\yvAut(\ofo)$ can be translated into an equivalent $\mu$-formula 
$\xi_\aut$ (cf. e.g. \cite{Ven08}), and it is essentially a special case of the argument proving Theorem \ref{thm:wmso_autofor}.

From now on, we always assume that a formula $\tmap(a,c)$ is in normal form. 
Following~\cite{Ven08}, we introduce another type of automata, called $(\prop,X)$-automata, which operate on $\p{(\prop \cup X)}$-trees.
They differ from automata whose one-step language is defined over predicates in $(A \cup X)$ in that\footnote{Parity automata based on $\ofo$ are thus simply $(\prop,\emptyset)$-automata.}
\begin{itemize}
\item Monadic predicate letters from $X$ can occur in the scope of a 
quantifier and only there, meaning that $(\prop,X)$-automata have transition $\tmap(a,c) \in \ofo^+(A\cup X)$
\item The transition function is uniquely determined by the restriction of the coloring to $\prop$, that is, for every $a \in A$, and $c_1, c_2 \in C$, if $c_1 \cap \prop = c_2 \cap \prop$ then $\tmap(a, c_1)= \tmap(a, c_2)$.
\end{itemize}
We also assume that
for every $x \in X$ there is a unique $a \in A$ and an unique $c \in C$ such that $x$ occurs in $\tmap(a,c)$.
The notion of acceptance is defined as expected, the only difference with being that during the acceptance game \'Eloise has to provide a valuation only for predicates in $A$ making formula given by the transition function true. 
It is then enough to prove the following claim.
\begin{claimfirst}\label{c:1}
There is an effective procedure that, given a $(\prop,X)$-automaton $\aut$ gives an equivalent  formula $\xi_{\aut} \in \contAFMC$ in which all occurrences of variables in $X$ are positive.
\end{claimfirst}
\begin{pfclaim} 
Without loss of generality, we can assume that:
\begin{itemize}
\itemsep 0 pt
\item Every (maximal) strongly connected component (SCC) in the graph of $\ord$ has an unique entrance point,
\item The directed acyclic graph (DAG) of the SCCs of $\ord$ is a tree, and more specifically,
\item 
$\{c \in A \mid a \leadsto c, c \prec a\}  \cap \{c \in A \mid b \leadsto c, c \prec b\}  = \emptyset$ whenever $a,b$ are in the same SCC, with $a\neq b$.
\end{itemize}

Given a $(\prop,X)$-automaton $\aut$, we are now going to define a function $\delta_\aut: A \to \ML (A \cup X \cup \prop)$
that assigns to each state $a$ of $\aut$ a modal formula $\delta_\aut(a)$ over  $A \cup X \cup \prop$ representing all possible transitions from $a$ in the modal language with the property that if $b \in A$ is in the same $\ord$-cycle of $a$ and $\pmap(a)=1$, then  $\delta_\aut(a)$ is continuous in $b$. Dually for $\pmap(a)=0$.

Let $c \in C$, and assume $\Delta(a,c)$ is in positive basic form $\bigvee \posdbnfofo{\Sigma}$. We define a first translation $TR_1$ taking as argument $\Delta(a,c)$ and giving as result a formula from the (guarded fragment of) first-order logic over $A \cup X \cup \prop$ as follows. 
With every disjunct
$$
\posdbnfofo{\Sigma} = \bigwedge_{S\in\Sigma} \exists x. \tau^+_{S}(x) \land \forall z.( \bigvee_{S\in \Sigma} \tau^+_S(z)),
$$
%
we associate the formula

$$
TR_1(\posdbnfofo{\Sigma}) := \bigwedge_{S\in\Sigma} \exists y. (R(x,y) \land \tau^+_{S}(y)) \land \forall z.( R(x,z) \to \bigvee_{S\in \Sigma} \tau^+_S(z)).
$$
\fcwarning{Why do we need to go through this?}The formula $TR_1(\posdbnfofo{\Sigma})$ is bisimulation invariant, and it is equivalent to the modal formula

$$
TR_2(\posdbnfofo{\Sigma}) := \bigwedge_{S\in\Sigma}  \Diamond(\bigwedge S) \land \Box \bigvee_{S\in \Sigma} (\bigwedge S).
$$
\fcwarning{Maybe better to define this directly}
Let $TR_3(\Delta(a,c))= \bigwedge (c \cap \prop) \land \bigwedge_{p \in \prop \setminus c} \lnot p \land \bigvee TR_2(\posdbnfofo{\Sigma})$.
 The modal formula $\delta_\aut(a)$ is then defined as

 \[
 \bigvee_{c \in C} TR_3(\Delta(a,c)).
 \]
By construction we have, for every $\model$,
%
    \begin{eqnarray*}
    \model[x\mapsto s_I] \models \bigvee_{c \in C} \big (\tau_{(c \cap \prop)}(y) \land \bigvee TR_1(\posdbnfofo{\Sigma})\big) & \text{iff} &\model \mmodels \delta_\aut(a).
    \end{eqnarray*}
%
A modal automaton over $\prop$ is an automaton $ \tup{A, \delta, \pmap, a_I}$ such that $\delta : A \to \ML^+ (A)$, where $\ML^+ (A)$ is the set of all modal formulas over propositions $A \cup \prop$ such that elements from $A$ appear only positively.
The acceptance game associated with such an automaton and a tree $\model$ is determined by the (symmetric) acceptance game defined according to the rules of Table~\ref{symmetric_modal_game}.
This means that we can equivalently see the automaton $\aut$ as a modal automaton $\tup{A, \delta_\aut, \pmap, a_I}$ whose transition function satisfies the weakness and continuity conditions. Thus, from now on we see $\aut$ as the equivalent modal automaton we have just described.

\begin{table}[h]
  \centering
\begin{tabular}{|l|c|l|c|}
 \hline
  Position & Player & Admissible moves & Parity\\
   \hline
  $(a,s) \in A \times S$ & $\exists$ & $\{(\delta_\aut(a),s)\}$ & $\pmap(a)$\\
  $(\psi_1 \vee \psi_2,s)$ & $\exists$ & $\{(\psi_1,s),(\psi_2,s) \}$ & $-$ \\
  $(\psi_1 \wedge \psi_2,s)$ & $\forall$ & $\{(\psi_1,s),(\psi_2,s) \}$ & $-$ \\
  $(\Diamond\varphi,s)$ & $\exists$ & $\{(\varphi,t)\ |\ t \in R[s] \}$ & $-$ \\
  $(\Box\varphi,s)$ & $\forall$ & $\{(\varphi,t)\ |\ t \in R[s] \}$ & $-$ \\
  $(\lnot p,s) \in \prop \times S$ and $p \notin \tscolors(s)$ & $\forall$ & $\emptyset$ & $-$\\
  $(\lnot p,s) \in \prop \times S$ and $p \in \tscolors(s)$ & $\exists$ & $\emptyset$ & $-$\\
  $(p,s) \in \prop \times S$ and $p \in \tscolors(s)$ & $\forall$ & $\emptyset$ & $-$\\
  $(p,s) \in \prop \times S$ and $p \notin \tscolors(s)$ & $\exists$ & $\emptyset$ & $-$\\

  \hline
\end{tabular}
 \caption{(Symmetric) acceptance game for modal automata}
 \label{symmetric_modal_game}
\end{table}

For the construction of $\xi_\aut$, we proceed by induction on the tree height of the DAG $t$ of SCC. If the height is 1, that is the DAG is a single point graph, we reason as follows.
 We have two cases to consider: either the SCC is trivial (i.e. it consists of a single non looping node), or not.
In the first case, $A=\{a_I\}$ and $\aut$ is equivalent to $\xi_\aut:=\delta_\aut(a_I)$.

For the second case, let us assume that $\pmap(a_I)=1$, the case when it is $0$ being, mutatis mutandis, the same.
Let $A=\{a_0, \dots, a_\ell\}$, and $a_I=a_\ell$.
Since $\aut$ is a weak modal automaton satisfying the continuity condition, given $a,b \in A$, if $b$ occurs in $\delta_\aut(a)$, then $b$ is only in the scope of $\Diamond$ operator. 
We can now see the automaton $\bbA$
as a system of modal equations, and by applying the standard inductive procedure \emph{solves} this system of equations and
construct the least fixpoint formula $\xi_\aut$ equivalent to $\aut$, 

%
Here the key observation is that the weakness and continuity conditions on
strongly connected components of the automaton ensure that when we execute
a single step in solving the system of equations, we 
may work within the 
(syntactically) continuous fragment of the modal $\mu$-calculus.
From this, we can deduce that $\xi_\aut \in \contAFMC$. Clearly the procedure preserves the polarity of each $x \in X$, meaning that all variables in $X$ are positive in $\xi_\aut$.

For the induction step, assume the successors of the root of $t$ are $(n_1, \dots, n_\ell)$. For each $i \in \{1,\dots,\ell\}$, let $a_i$ be the entrance point of the SCC of $n_i$, and let $\aut_{i}$ be the automaton $\aut$ but having as an initial state $a_i$. If we do not consider the states that are not reachable by $a_i$, the DAG of the SCC of $\aut_{i}$ is $t.{n_i}$ (the subtree of $t$ starting at $n_i$).
Let $Y=\{a_1, \dots, a_\ell\}$, and $M$ be the set of states of $\aut$ that belong to the root of $t$. We assume $X \cap Y = \emptyset$. The structure
\[
\aut_M := \tup{M, \delta_\aut|_{M}, \pmap|_{M}, a_I}
\] is a $(\prop, X \cup Y)$-automaton.

The inductive hypothesis applies to automata $\aut_M, \aut_{1}, \dots, \aut_{\ell}$. Thus we obtain fixpoint formulas $\xi_M, \xi_1, \dots, \xi_\ell$, the former taking free variables in $\prop \cup X \cup Y$, all the remaining in $\prop \cup X$, equivalent to $\aut_M, \aut_1, \dots, \aut_\ell$ respectively.
Notice that:
\begin{itemize}
\item by induction hypothesis, $\xi_M, \xi_1, \dots, \xi_\ell \in \contAFMC$, every variable $x \in X \cup Y$ is positive in each of those formulas (if $x$ occurs in it),
\item by construction, if $x$ is free in $\xi_i$, then $x$ is not bounded in $\xi_{M}$.
\end{itemize}
 We can therefore deduce that
 $\xi_\aut= \xi_{M}[a_1\mapsto\xi_{1}, \dots, a_\ell\mapsto\xi_{\ell}] \in \contAFMC$, and that each variable from $X$ occurs positively in $\xi_\aut$.

 We verify that $\model \mmodels \xi_{M}[a_1\mapsto\xi_{1}, \dots, a_\ell\mapsto\xi_{\ell}]$ iff $\model \in \trees(\aut)$, for every model $\model$.
But this follows by the following two facts:
\begin{enumerate}
\item $\model\mmodels \xi_{M}[a_1\mapsto\xi_{1}, \dots, a_\ell\mapsto\xi_{\ell}]$ iff $\model[a_1\mapsto \ext{\xi_1}^{\model}, \dots, a_\ell\mapsto \ext{\xi_\ell}^{\model}  ] \mmodels \xi_M$,
\item $\model \in \trees(\aut)$  iff $\model[a_1\mapsto \ext{\aut_1}^{\model}, \dots, a_\ell\mapsto \ext{\aut_\ell}^{\model}  ] \in \trees(\aut_M)$
\end{enumerate}
where 
$\ext{\aut_i}^{\model} := \{ s \in \model \mid \model.s \in \trees(\aut_i) \}$. 
\end{pfclaim}
This finishes the proof of Theorem. 
\end{proof}



\section{Conclusions}

\paragraph{Overview.} In this work we have presented three main contributions.
First, we proved that the bisimulation-invariant fragment of \WMSO is the
fragment $\contAFMC$ of $\MC$ 
where the application of the fixpoint operator $\mu p$ is restricted to
formulas that are continuous in $p$.
Our result sheds light on the relationship between $\MSO$, \WMSO and $\MC$.
In particular, it provides a positive answer to the question whether
$\wmso / {\bis} \leq \AFMC$ on trees of arbitrary branching degree, left open
in~\cite{DBLP:conf/lics/FacchiniVZ13}.
This may also be read as the statement that the formulas that separate
$\wmso$ from $\mso$ are not bisimulation invariant (and hence, irrelevant in
the sense mentioned in the introduction).

To achieve this result, we shaped $\WMSO$-automata, a special kind of parity automata satisfying additional \emph{continuity} and \emph{weakness} conditions, with transition map given by the monadic logic $\olque$. Our second main contribution was to show that they characterize \WMSO on tree models.


As our third main contribution we gave a detailed model-theoretic analysis of the monotone and continuous fragments of $\olque$. We provide strong normal forms and syntactic characterizations that, besides being of independent interest, are critical for the development of the aforementioned results.

\paragraph{Future work.}  
A first line of research is directly inspired by the methods employed in this work. $\yvWMSO$-automata and $\yvF$-automata are essentially obtained by imposing conditions on the appropriate one-step logic $\llang_1$ and transition structure of automata belonging to $\yvAut(\llang_1)$. 
Following this approach, one could take aim at the automata-theoretic counterpart of other fragments of the modal $\mu$-calculus, like PDL, CTL or CTL$^*$. 

Another direction of investigation is based on the observation that, from a
topological point of view, all $\yvWMSO$-definable properties are Borel.
Since we do not have examples of  Borel $\yvMSO$-definable properties that
are not $\yvWMSO$-definable, is tempting to conjecture that $\yvWMSO$ is the
Borel fragment of $\yvMSO$ and analogously for $\yvF$ and $\MC$.

\paragraph{Acknowledgements.}
The second author is supported by the \emph{Expressiveness of Modal Fixpoint Logics} project realized within the 5/2012 Homing Plus programme of the Foundation for Polish Science, co-financed by the European Union from the Regional Development Fund within the Operational Programme Innovative Economy (``Grants for Innovation''). The fourth author is supported by the project ANR 12IS02001 PACE.


\bibliographystyle{plain}
\bibliography{main-wmso}

\end{document}